\documentclass{article}

\usepackage[english]{babel}
\usepackage[utf8]{inputenc}
\usepackage[T1]{fontenc}
\usepackage{amsmath}
\usepackage{amssymb}
\usepackage{mathtools}
\usepackage{amsthm}
\usepackage{framed}
\usepackage{algorithm}
\usepackage{algpseudocode}
\usepackage{enumerate}
\usepackage{bbm}
\usepackage[authordate-trad,
    backend=biber,noibid]{biblatex-chicago}
  \usepackage{diagbox}
  \usepackage{tikz}
  \usepackage{listings}
\usepackage{xcolor}
\usepackage[title]{appendix}

 \usepackage{csquotes}
\usepackage{graphicx}
\newcommand{\N}{\mathbb{N}}

\newcommand{\R}{\mathbb{R}}

\newcommand{\VaR}{\operatorname{VaR}}

\newcommand{\ES}{\operatorname{ES}}
\newcommand{\Dis}{\operatorname{Dis}}
\newcommand{\CTM}{\operatorname{CTM}}
\DeclareMathOperator*{\argmax}{arg\,max}
\DeclareMathOperator*{\argmin}{arg\,min}
\newtheorem{theorem}{Theorem}[section]
\newtheorem{lemma}[theorem]{Lemma}
\newtheorem{corollary}[theorem]{Corollary}
\theoremstyle{definition}
\newtheorem{definition}[theorem]{Definition}
\newtheorem{model}[theorem]{Model}
\newtheorem{konvention}[theorem]{Convention}
\newtheorem{beispiel}[theorem]{Example}

\newtheorem{bemerkung}[theorem]{Remark}
\usepackage{hyperref}

\providecommand{\keywords}[1]
{
  \small	
  \textbf{\textit{Keywords---}} #1
}

\title{On E-Backtesting: Generalizations and Sample Size Determination}
\author{
Dennis Oestmann$^{1}$\thanks{Email: \texttt{deoe@uni-bremen.de}} , 
Thorsten Dickhaus$^{2}$\thanks{Email: \texttt{dickhaus@uni-bremen.de}} \\
\small $^{1,2}$ Institute for Statistics, University of Bremen, 28359, Bremen, Germany
}
\date{\today}

\begin{document}

\maketitle 

\begin{abstract}
We present an approach for determining sample sizes required to detect underestimations of the expected shortfall with a prescribed power when applying the recently proposed e-backtesting procedure. We consider scenarios in which the value-at-risk at level $p$ is always estimated correctly, while the difference between the true expected shortfall and the value-at-risk is underestimated by a given factor $r$. We show that exploiting the structure of the backtest e-statistic 
proposed for backtesting the expected shortfall at level $p$ enables the derivation of approximate lower bounds for the required sample sizes by considering a sequence of independent and identically distributed Bernoulli-distributed random variables.
We also discuss potential limitations of this approximation and compare the resulting sample size requirements with those obtained in practical applications using Monte Carlo simulations. Furthermore, we present generalizations of the e-backtesting procedure, in particular to risk measures which constitute Bayes pairs.   
\end{abstract}

\keywords{Bayes pair; expected shortfall; financial risk management;\\ value-at-risk}

\section{Introduction}
In many real-world scenarios, one encounters improbable, but severely bad
scenarios. With the term risk, we refer to a quantification of the magnitude of these possible scenarios. One important example of this is given by financial risk: In this context, one is faced with possible large financial losses that might endanger a financial institution or even the entire financial system. It is thus of high interest for both society and financial institutions to assess and control these risks. This field is being referred to as (financial) risk management (see, for instance, \cite{mcneil2015quantitative}).  

The main regulatory authority for financial institutions is the Basel Committee on Banking Supervision (BCBS) which publishes guidelines, commonly referred to as Basel Accords, for proper risk management to be followed by the financial institutions. One pillar of the Basel Accords is the minimum capital requirement, which involves (among other risks) the assessment of the market risk, which describes the risk of large losses in a given portfolio due to changes in the stock market. The assessment of the market risk should be based on specific risk measures, in previous BCBS guidelines on the value-at-risk, in newer guidelines on the expected shortfall (see \cite{BCBS2019d457}). The guidelines further require the financial institutions to backtest their models used to analyze these risks with respect to the danger of underestimating the true risk. This means applying the model to historical data and comparing the projected risks based on the historical time points with the actual observed losses. While there are standard methods to backtest the value-at-risk (see \cite{BCBS2013}), suggested methods to backtest the expected shortfall often have noticeable downsides: For example, they often rely on restrictive model assumptions, only work for a fixed sample size, or only attain their target level asymptotically. Recently, \textcite{Hauptquelle} introduced a backtesting method based on the (relatively novel) concept of sequential e-values. Their ``e-backtesting'' method does not have the downsides stated before. In the present work, we first describe and analyze the e-backtesting method, and then we present possible generalizations. We also investigate issues regarding the applicability of the method in practice. In particular, we explore the question of appropriate sample sizes in order to detect underestimations with a prescribed power. 

The remainder of this work is structured as follows.
Section \ref{sec2} introduces basic notions and definitions.
In Section \ref{sec3}, we derive the framework of the e-backtesting method, which is based on backtest e-statistics. We also explain a  way to construct such backtest e-statistics for so-called Bayes pairs that fulfill a certain boundedness condition. 
We will see that the e-processes used for our method largely depend on so-called betting processes. Therefore, we will present different ways to construct these betting processes in Section \ref{sec4}, and we will analyze the resulting betting processes theoretically and in simulations. 
Section \ref{sec5} concerns the question of an appropriate threshold for the described e-backtesting procedure to be used for statistical testing. 
In Section \ref{sec6}, we discuss how sample sizes should be chosen when applying the described backtesting procedure in practice. For this, we investigate common types and sources of underestimations of risk and run numerical experiments to examine the rejection probabilities of the resulting sequential tests for different thresholds and sample sizes. 
We also detail a more theoretical argument to determine the sample size, which is based on a simple application of Jensen's inequality to the e-power of the described e-process.
We conclude with a discussion in Section \ref{sec8}.
For supplementary reading, Appendix \ref{sec7} provides general results on the (non-)existence of backtest e-statistics.

\section{Notation and preliminaries}\label{sec2}

\begin{konvention}
Throughout the remainder, we use the following notations:
\begin{itemize}
  \item 
  We denote $\N:=\{1,2,\ldots\}$ and $\N_0:=\{0,1,\ldots\}$.
  \item 
  For $q\in \N$, $\mathcal{M}_q$ denotes the set of distribution functions with existing $q$th moment. With $\mathcal{M}_0$ we denote the set of all distribution functions.
  \item 
  For two distribution functions $F,G$, we say $F\leq G$ if and only if $F(x)\geq G(x)$ for all $x\in \R$.
  \item 
  With $\Phi$ and $\phi$, we denote the cumulative distribution function (cdf) and the probability density function (pdf), respectively, of a standard normal random variable $Z\sim\mathcal{N}(0,1)$.
  \item 
  With $L_0$ we denote the space of all random variables defined on a probability space $(\Omega, \mathcal{A},\mathbb{P})$ and with $L_0^+$ the space of all non-negative random variables.
\end{itemize}
\end{konvention}

\subsection{E-variables, e-values, and e-processes}

This section introduces the concepts of e-variables, e-values, and e-processes. In this, we mostly follows the presentation in \textcite{RamdasBuch}.
\begin{definition}\label{E-Variable}
  Let $(\mathcal{X},\mathcal{A},\mathcal{P})$ be a statistical model and $\mathcal{P}_0\subset \mathcal{P}$. Further, let $E:\mathcal{X}\to [0,\infty]$ be a random variable. 
  \begin{enumerate}[(i)]
    \item 
    If $\mathbb{E}_\mathbb{P}[E]\leq 1$ for all $\mathbb{P}\in \mathcal{P}_0$, we call $E$ an e-variable for $\mathcal{P}_0$, and for any observed $x\in \mathcal{X}$ we call $E(x)$ an e-value.
      \item 
      If $\mathbb{E}_\mathbb{P}[E]= 1$ for all $\mathbb{P}\in \mathcal{P}_0$ we call $E$ an exact e-variable for $\mathcal{P}_0$.
  \end{enumerate} 
\end{definition}

\begin{definition}\label{e-power}
  Let $(\mathcal{X},\mathcal{A},\mathcal{P})$ be a statistical model, $\mathcal{P}_0\subset \mathcal{P}$ and $E:\mathcal{X}\to [0,\infty]$ be an e-variable for $\mathcal{P}_0$. For any $Q\in \mathcal{P}\backslash \mathcal{P}_0$, we call $\mathbb{E}^Q[\log E]$ the e-power of $E$ under $Q$.
\end{definition}

In our analysis, we will mostly look at sequences of e-values that constitute e-processes in discrete time. Letting the set of considered time indices be denoted by $T$, we call $T$ discrete if 
$T=\N_0$,  $T=\N$, or $T=\{0,\ldots,n\}$ for some $n\in \N$.

\begin{definition}\label{E-process}
Let $T$ be discrete and $\mathcal{P}$ be a set of probability measures such that $(\Omega,\mathcal{F},(\mathcal{F}_n)_{n\in T},\mathbb{P})$ is a filtered probability space for each $\mathbb{P}\in \mathcal{P}$. Further, let $\mathcal{P}_0\subset \mathcal{P}$. A sequence $(E_n)_{n\in T}$ of e-variables that is adapted to $(\mathcal{F}_n)_{n\in T}$ and fulfills $\mathbb{E}^\mathbb{P}[E_\tau]\leq 1$ for all $\mathbb{P}\in \mathcal{P}_0$ and stopping times $\tau$ is called an e-process for $\mathcal{P}_0$.
\end{definition} 
One of the most important results with regard to e-processes is Ville's inequality; see \textcite{Ville1939}. This result is, in principle, a generalization of Markov's inequality 
and allows for an easy construction of sequential hypothesis tests from e-variables for individual observations. To this end, the concept of test supermartingales is required.

\begin{definition}
 Let $T$ be discrete and $\mathcal{P}$ be a set of probability measures such that $(\Omega,\mathcal{F},(\mathcal{F}_n)_{n\in T},\mathbb{P})$ is a filtered probability space for each $\mathbb{P}\in \mathcal{P}$. Further, let $\mathcal{P}_0\subset \mathcal{P}$. A stochastic process $X=(X_n)_{n\in T}$ adapted to $(\mathcal{F}_n)_{n\in T}$ is called a test supermartingale for $\mathcal{P}_0$ if, for all $\mathbb{P}\in \mathcal{P}_0$, we have $X_n\geq 0$ $\mathbb{P}$-almost surely for all $n\in T$, $X$ is a supermartingale under $\mathbb{P}$ and $\mathbb{E}^\mathbb{P}[X_0]\leq 1$. 
\end{definition}
\begin{konvention}
In the following, we will always assume the existence of a discrete $T$, a set $\mathcal{P}$ of probability measures such that $(\Omega,\mathcal{F},(\mathcal{F}_n)_{n\in T},\mathbb{P})$ is a filtered probability space for each $\mathbb{P}\in \mathcal{P}$ as well as a subset $\mathcal{P}_0\subset \mathcal{P}$, without always restating these conditions.
\end{konvention}
\begin{lemma}[see Section 7.3 in \cite{RamdasBuch}]
Let $X=(X_n)_{n\in T}$ be a test supermartingale for $\mathcal{P}_0$. Then $X$ is an e-process for $\mathcal{P}_0$. 
\end{lemma}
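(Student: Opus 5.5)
To show that a test supermartingale $X$ for $\mathcal{P}_0$ is an e-process, we fix $\mathbb{P}\in\mathcal{P}_0$ and check the requirements of Definition \ref{E-process} one at a time. Adaptedness is part of the definition of a test supermartingale. Each $X_n$ is an e-variable because $X_n\geq 0$ almost surely and the supermartingale property, iterated from time $0$, gives $\mathbb{E}^\mathbb{P}[X_n]\leq \mathbb{E}^\mathbb{P}[X_0]\leq 1$. The remaining task is the stopping-time condition $\mathbb{E}^\mathbb{P}[X_\tau]\leq 1$ for every stopping time $\tau$. We read $X_\tau$ for an unbounded $\tau$ (allowing $\tau=\infty$) as $\liminf_n X_{n}$ on $\{\tau=\infty\}$. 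This limit exists almost surely by the supermartingale convergence theorem, since $X$ is nonnegative.

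The first step handles bounded stopping times via optional stopping. The stopped process $X^\tau=(X_{\tau\wedge n})_{n\in T}$ is again a nonnegative supermartingale. To see this, write
\[
X_{\tau\wedge (n+1)}-X_{\tau\wedge n}=\mathbbm{1}_{\{\tau>n\}}(X_{n+1}-X_n).
\]
The indicator $\mathbbm{1}_{\{\tau>n\}}$ is $\mathcal{F}_n$-measurable and bounded, so conditioning on $\mathcal{F}_n$ gives a nonpositive expectation. Integrability is not an issue, because $0\leq X_{\tau\wedge n}\leq \sum_{k\leq n} X_k$. Consequently $\mathbb{E}^\mathbb{P}[X_{\tau\wedge n}]\leq \mathbb{E}^\mathbb{P}[X_0]\leq 1$ for every $n$. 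If $T=\{0,\ldots,n\}$ is finite, every stopping time is bounded, and taking $n$ to be the final time finishes the proof.

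The second step passes to general $\tau$ when $T=\N_0$ or $T=\N$, and this is the main obstacle. Here $X_{\tau\wedge n}\to X_\tau$ almost surely. On $\{\tau<\infty\}$ this holds because the sequence is eventually constant. On $\{\tau=\infty\}$ it holds by the almost sure convergence of the nonnegative supermartingale. Since $X\geq 0$, Fatou's lemma gives
\[
\mathbb{E}^\mathbb{P}[X_\tau]\leq \liminf_{n\to\infty}\mathbb{E}^\mathbb{P}[X_{\tau\wedge n}]\leq 1 .
\]
Nonnegativity is essential here: no uniform integrability is available, so dominated convergence cannot be used, but Fatou's lemma needs only a lower bound. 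If the convention is that only almost surely finite stopping times count, the convergence argument on $\{\tau=\infty\}$ is unnecessary and Fatou's lemma applies directly. As $\mathbb{P}\in\mathcal{P}_0$ was arbitrary, $X$ is an e-process for $\mathcal{P}_0$.
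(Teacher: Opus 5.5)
Your argument is correct and is the standard one: optional stopping for bounded stopping times via the stopped supermartingale $X_{\tau\wedge n}$, then passage to arbitrary (possibly infinite) stopping times using almost sure convergence of nonnegative supermartingales and Fatou's lemma. The paper gives no proof of its own and only cites Section 7.3 of Ramdas and Wang, where the result presumably rests on this same optional stopping argument, so your write-up is essentially the proof being invoked.
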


\begin{bemerkung}
  One can also show that, for every e-process $E$, there exists a family of stochastic processes $(M^\mathbb{P})_{\mathbb{P}\in \mathcal{P}_0}$ such that for all $\mathbb{P}\in \mathcal{P}_0$ the process  $M^\mathbb{P}$ is a test supermartingale under $\mathbb{P}$ and $E\leq M^\mathbb{P}$ $\mathbb{P}$-almost surely. 
\end{bemerkung}
\begin{theorem}[Ville's inequality]\label{VilleUngleichung}
  Let $(X_n)_{n\in T}$ be a non-negative  supermartingale. Then, it holds for all $\alpha>0$
  \begin{align*}
    \mathbb{P}\left[\sup_{n\in T}X_n\geq \frac{1}{\alpha}\right]\leq \alpha \mathbb{E}[X_0] \, .
  \end{align*}
\end{theorem}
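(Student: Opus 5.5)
We prove Ville's inequality by reducing it to Markov's inequality at a stopping time, followed by a limiting argument over finite horizons. Fix $\alpha>0$ and set $c:=1/\alpha$. First treat a finite horizon $N\in T$ and define the stopping time
\begin{align*}
\tau_N:=\min\{n\in T,\ n\leq N:\ X_n\geq c\}\wedge N,
\end{align*}
with the convention $\min\emptyset=N$. Since $X$ is adapted, $\tau_N$ is a bounded stopping time. On the event $A_N:=\{\max_{n\leq N}X_n\geq c\}$ we have $X_{\tau_N}\geq c$.

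Next we apply optional stopping for bounded stopping times, which holds for any supermartingale in discrete time: $\mathbb{E}[X_{\tau_N}]\leq \mathbb{E}[X_0]$. This step follows from writing
\begin{align*}
X_{\tau_N}=X_0+\sum_{k=1}^{N}\mathbbm{1}_{\{\tau_N\geq k\}}(X_k-X_{k-1}).
\end{align*}
The event $\{\tau_N\geq k\}$ lies in $\mathcal{F}_{k-1}$, so each summand has nonpositive expectation by the supermartingale property. Integrability holds because each $X_k$ is integrable.

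Since $X\geq 0$, Markov's inequality then gives
\begin{align*}
c\,\mathbb{P}[A_N]\leq \mathbb{E}[X_{\tau_N}\mathbbm{1}_{A_N}]\leq \mathbb{E}[X_{\tau_N}]\leq \mathbb{E}[X_0],
\end{align*}
that is, $\mathbb{P}[A_N]\leq \alpha\,\mathbb{E}[X_0]$. If $T$ is finite, this is already the claim.

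For $T=\N_0$ or $T=\N$, we take $N\to\infty$. The events $A_N$ increase to $\{\sup_{n}X_n\geq c\}$, because a supremum $\geq c$ that is attained in the limit but not at any finite time is not covered by $\bigcup_N A_N$. This limiting step is the main obstacle. To handle it, we apply the finite-horizon bound with $c'<c$ in place of $c$. The sets $\bigcup_N\{\max_{n\leq N}X_n\geq c'\}$ contain $\{\sup_n X_n\geq c\}$, so continuity from below yields $\mathbb{P}[\sup_n X_n\geq c]\leq \mathbb{E}[X_0]/c'$. Letting $c'\uparrow c$ gives the stated bound. When $T=\N$, the same argument runs with the first index playing the role of $0$.
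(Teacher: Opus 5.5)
Your proof is correct. It is the standard argument: bounded optional stopping via the predictable-indicator decomposition, Markov's inequality applied at $X_{\tau_N}$, then a limit in $N$. The paper gives no proof of this theorem; it states it as a known result with a citation to Ville (1939), so there is no route in the paper to compare yours against.

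One sentence in your last paragraph is self-contradictory as written. What you mean is that the events $A_N$ increase to $\{\exists n\in T: X_n\geq c\}$. That event can be strictly smaller than $\{\sup_n X_n\geq c\}$ when the supremum equals $c$ without being attained. This is exactly why your passage to $c'<c$ is needed, and that step is fine.

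The paper only ever uses the event $\{\exists n: X_n\geq 1/\alpha\}$, in Corollary \ref{SupermartingaletoSequential}. For that event, your finite-horizon bound plus continuity from below already suffices, without the $c'$ device.

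Your reading of the case $T=\N$, shifting the index so that the first time plays the role of $0$, is the sensible interpretation. The statement's reference to $X_0$ presupposes $0\in T$.
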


Ville's Inequality has the following direct  remarkable consequence.
\begin{corollary}\label{SupermartingaletoSequential}
  Let $(X_n)_{n\in T}$ be a test supermartingale for $\mathcal{P}_0$ and $\alpha \in (0,1)$. Then the binary process $(\phi_n)_{n\in T}$ defined by $\phi_n=\mathbbm{1}_{X_t\geq 1/\alpha}$ has the property
  \begin{align*}
    \sup_{\mathbb{P}\in \mathcal{P}_0}\mathbb{P}(\exists n\in \N: \phi_n=1)\leq \alpha \, .
  \end{align*}
  \end{corollary}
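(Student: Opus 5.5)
The corollary follows from Ville's inequality (Theorem \ref{VilleUngleichung}) applied separately under each null measure, followed by taking the supremum over $\mathcal{P}_0$. (The indicator in the statement is meant to be $\mathbbm{1}_{X_n\geq 1/\alpha}$; the subscript $t$ is a typo for $n$.)

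Fix an arbitrary $\mathbb{P}\in\mathcal{P}_0$. By the definition of a test supermartingale for $\mathcal{P}_0$, under $\mathbb{P}$ the process $X$ is a supermartingale with $X_n\geq 0$ $\mathbb{P}$-almost surely for every $n\in T$, and $\mathbb{E}^\mathbb{P}[X_0]\leq 1$. So $X$ is a non-negative supermartingale under $\mathbb{P}$.

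Next, rewrite the rejection event. Since $\phi_n\in\{0,1\}$ and $\phi_n=1$ exactly when $X_n\geq 1/\alpha$, we have
\begin{align*}
\{\exists n\in T: \phi_n=1\} = \{\exists n\in T: X_n\geq 1/\alpha\}\subseteq \Bigl\{\sup_{n\in T}X_n\geq \tfrac{1}{\alpha}\Bigr\}.
\end{align*}
The statement writes $n\in\N$. If $T$ is a finite set $\{0,\ldots,n\}$, we read $\phi_n$ for $n\notin T$ as $0$, and the inclusion still holds.

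Ville's inequality then gives
\begin{align*}
\mathbb{P}(\exists n: \phi_n=1)\leq \mathbb{P}\Bigl[\sup_{n\in T}X_n\geq \tfrac{1}{\alpha}\Bigr]\leq \alpha\,\mathbb{E}^\mathbb{P}[X_0]\leq \alpha .
\end{align*}
This bound holds for every $\mathbb{P}\in\mathcal{P}_0$, so taking the supremum over $\mathcal{P}_0$ yields the claim.

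None of these steps is a real obstacle. The only points needing care are that the almost-sure non-negativity is enough to apply Ville's inequality, and that the index range in the statement is matched correctly to $T$.
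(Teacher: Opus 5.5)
Your proof is correct and is essentially the paper's own argument. The paper gives no separate proof and presents the corollary as a direct consequence of Ville's inequality, which is exactly what you do: you apply Theorem \ref{VilleUngleichung} under each fixed $\mathbb{P}\in\mathcal{P}_0$, use $\mathbb{E}^{\mathbb{P}}[X_0]\leq 1$, and then take the supremum, with sensible handling of the $X_t$ typo and the mismatch between $\N$ and $T$.
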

  Thus, if a test supermartingale for $\mathcal{P}_0\subset \mathcal{P}$ is available, one can derive a level $\alpha$ test for the null hypothesis $\mathcal{P}_0\subset \mathcal{P}$ by observing the test supermartingale and rejecting the null hypothesis as soon as one observes an e-value larger than $1 / \alpha$. The resulting test then belongs to the class of sequential level-$\alpha$ tests, which are defined as follows.
  \begin{definition}\label{sequentialTests}
  A stochastic process $(\phi_n)_{n\in \N}\in \{0,1\}^\N$ adapted to $(\mathcal{F}_n)_{n\in T}$ such that
    \begin{align*}
   \sup_{\mathbb{P}\in \mathcal{P}_0}\mathbb{P}(\exists n\in \N: \phi_n=1)\leq \alpha \, ,   
    \end{align*}
    is called a \text{level-$\alpha$ sequential test} for $\mathcal{P}_0$.
  \end{definition}
Corollary \ref{SupermartingaletoSequential} describes how one can construct sequential tests from test supermartingales. However, we can also derive e-processes from sequential tests such that the sequential test is equivalent to thresholding the corresponding e-process:
\begin{lemma}
 Let $(\phi_n)_{n\in T}\in \{0,1\}^T$ be a sequential level-$\alpha$ test for $\mathcal{P}_0$ with $\alpha\in (0,1)$. Then, the process $(X_n)_{n\in T}$ defined by $X_n=\phi_n/\alpha$ is an e-process for $\mathcal{P}_0$.
\end{lemma}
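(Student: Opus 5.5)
The plan is to verify the two requirements of Definition \ref{E-process} directly from the definition of a sequential level-$\alpha$ test (Definition \ref{sequentialTests}), without invoking Ville's inequality or any supermartingale structure. The key observation is that $X_n=\phi_n/\alpha$ takes only the values $0$ and $1/\alpha$. Hence for any random time the expectation of the stopped process is just $1/\alpha$ times the probability that the test has rejected at that time. That probability is controlled by the probability of ever rejecting.

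First, adaptedness and measurability. Since $(\phi_n)_{n\in T}$ is adapted to $(\mathcal{F}_n)_{n\in T}$ and $x\mapsto x/\alpha$ is Borel measurable, $(X_n)_{n\in T}$ is adapted. Each $X_n$ is non-negative with values in $[0,\infty]$, as Definition \ref{E-variable} requires.

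Second, and this is the main step, fix $\mathbb{P}\in\mathcal{P}_0$ and a stopping time $\tau$. I would first note that $X_\tau$ is $\mathcal{F}$-measurable, by writing it as $\sum_{n\in T}\mathbbm{1}_{\{\tau=n\}}X_n$. I will then use the pointwise event inclusion
\begin{align*}
\{\phi_\tau=1\}\subseteq\{\exists n\in T:\phi_n=1\} ,
\end{align*}
which gives
\begin{align*}
\mathbb{E}^\mathbb{P}[X_\tau]=\frac{1}{\alpha}\,\mathbb{P}(\phi_\tau=1)\leq \frac{1}{\alpha}\,\mathbb{P}(\exists n\in T:\phi_n=1)\leq \frac{1}{\alpha}\cdot\alpha=1 .
\end{align*}
The last inequality is the defining property of a level-$\alpha$ sequential test, taken before the supremum over $\mathcal{P}_0$.

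Finally, I would take $\tau\equiv n$ to be deterministic. This shows $\mathbb{E}^\mathbb{P}[X_n]\le 1$, so each $X_n$ is itself an e-variable for $\mathcal{P}_0$, and $(X_n)_{n\in T}$ is therefore an e-process.

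The only point needing care is a technical one: how $X_\tau$ is interpreted on $\{\tau=\infty\}$ when $T$ is infinite. I would adopt the convention $X_\infty:=0$, or equivalently $X_\infty:=\limsup_n X_n$. In either case $\{\phi_\tau=1\}$ is still contained in the event that the test rejects at some finite time, so the same bound applies. The index-set mismatch between $T$ and $\N$ in Definition \ref{sequentialTests} is handled by reading the existential quantifier over $T$.
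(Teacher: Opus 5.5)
Your proof is correct and matches the paper's argument: $\mathbb{E}^{\mathbb{P}}[X_\tau]=\mathbb{P}(\phi_\tau=1)/\alpha\le\mathbb{P}(\exists n:\phi_n=1)/\alpha\le 1$, and taking deterministic stopping times shows each $X_n$ is an e-variable. Your extra remarks fill in details the paper leaves implicit: adaptedness, the convention for $X_\tau$ on $\{\tau=\infty\}$, and reading the existential quantifier over $T$ rather than $\N$ (which is needed when $0\in T$).
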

\begin{proof}
  Obviously $X_n\geq 0$ for all $n\in T$. Further, we have for any stopping time $\tau$ and $\mathbb{P}\in\mathcal{P}_0$ by definition of $\phi$:
  \begin{align*}
    \mathbb{E}^\mathbb{P}[X_\tau]=\frac{\mathbb{E}^\mathbb{P}[\phi_\tau]}{\alpha}=\frac{\mathbb{P}(\phi_\tau=1)}{\alpha}\leq \frac{\mathbb{P}(\exists n \in\N: \phi_n=1)}{\alpha}\leq 1
  \end{align*}
  Since this also holds for all constant stopping times $\tau:=n\in T$, all $X_n$ are e-variables and $(X_n)_{n\in \N}$ is an e-process.
\end{proof}

\subsection{Risk Measures}\label{subsec-risk-measures} 
In this section, we examine different measures of risk. Those are, in general, functionals mapping from a predefined set of random variables (in this context, also referred to as \textbf{loss variables}) to the real line, thereby representing the (financial) risk associated with this variable. If the loss variable in question represents losses of a financial asset, this risk should determine the amount of capital a controller needs to reserve in order to cover for large financial losses. 

The two most commonly used risk measures are the value-at-risk and the expected shortfall,
which are defined as follows.
\begin{definition}\label{VarundES}
  Let $L$ be a real-valued random variable with cdf $F$. 
  \begin{enumerate}[(i)]
    \item 
    For $\alpha\in (0,1)$, the \text{value-at-risk (VaR)} of $L$ at level $\alpha$ is defined as
    \begin{align*}
      \VaR_\alpha(L):=F^{-1}(\alpha)=\inf\{x\in \R| F(x)\geq \alpha\} \, .
    \end{align*} 
    \item 
    If $F\in \mathcal{M}_1$, we define, for $\alpha \in (0,1)$, the expected shortfall (ES) of $L$ at level $\alpha$ as
    \begin{align*}
      \ES_\alpha(L):= \frac{1}{1-\alpha}\int_\alpha^1 \VaR_{\zeta}(L)d\zeta \, .
    \end{align*}
  \end{enumerate}
  \end{definition}
Sometimes, instead of the expected shortfall, one also considers the conditional value-at-risk (see, for example, \cite{mcneil2015quantitative}):
\begin{definition}\label{CVar}
  Let $L$ be a real-valued random variable with cdf $F\in \mathcal{M}_1$ and $\alpha\in (0,1)$. Then, the conditional value-at-risk at level $\alpha$ is defined as
 \begin{align*}
      \operatorname{CVaR}_\alpha(L):=\mathbb{E}[L|L> \VaR_{\alpha}(L)]\, .
    \end{align*} 
  \end{definition}
  \begin{bemerkung}
  For a real-valued and continuous random variable, the expected shortfall at level $\alpha$ is equal to the conditional value-at-risk.
  \end{bemerkung}
  
In practice, coherent risk measures are of special importance. 

\begin{definition}\label{CoherentRiskMeasure}
Let $M$ be a convex cone of random variables such that it contains all constant random variables. Further, let $\rho:M\to \R$ be a risk measure.
\begin{enumerate}[(i)]
  \item 
  If $\rho(L+c)=\rho(L)+c$ for all $L\in M$ and $c\in \R$, we call $\rho$ translationally invariant. 
  \item 
  If $\rho(L_1+L_2)\leq \rho(L_1)+\rho(L_2)$ for all $L_1,L_2\in M$, we call $\rho$ subadditive.
  \item 
  If $\rho(\lambda L)=\lambda \rho(L)$ for all $L\in M$ and $\lambda>0$, we call $\rho$ positively homogeneous. 
  \item 
  If $\rho(L_1)\leq \rho(L_2)$ for all $L_1,L_2\in M$ such that $L_1\leq L_2$ almost surely, we call $\rho$ monotone.
  \item 
  If $\rho$ is translationally invariant, subadditive, positively homogeneous and monotone, we call $\rho$ coherent.
\end{enumerate} 
\end{definition}

\begin{bemerkung} $ $
\begin{itemize}
    \item[(i)]
Sometimes, in place of subadditivity and positive homogeneity, the property of convexity is also considered (see \cite{follmer2010convex}). This means that 
\begin{align*}
  \rho(\lambda L_1+(1-\lambda)L_2)\leq \lambda \rho(L_1)+(1-\lambda)\rho(L_2) \qquad \forall L_1,L_2\in M, \lambda \in [0,1] \, .
\end{align*} 
While any subadditive and positively homogeneous $\rho$ is obviously also convex, the back direction is in general not true. An important example of this is given by the class of entropic risk measures. These are defined as $\rho_\theta:M\to \R, L \to \frac{1}{\theta}\log(\mathbb{E}[e^{\theta L}])$ for some $\theta>0$ and an appropriate convex cone $M$, e.g. $L^\infty$ (see also, for example, \cite{detlefsen2005conditional}). Monotonicity of these risk measures is obvious, and translation invariance follows from a simple calculation. They are also convex, which can be derived by utilizing Hölder's inequality for $\frac{1}{\lambda}$ and $\frac{1}{1-\lambda}$. However, they are not positively homogeneous and thus not coherent. This can be seen by considering a standard normally distributed random variable $Z$. For some $\theta,\lambda>0, \lambda \neq 1$ we have
\begin{align*}
  \rho_\theta(\lambda Z)=\frac{1}{\theta} \log(\mathbb{E}[e^{\theta\lambda Z}])=\frac{\theta\lambda^2 }{2}\neq \frac{\theta\lambda}{2}=\frac{\lambda}{\theta}\log(\mathbb{E}[e^{\theta Z}])=\lambda \rho_\theta(Z) \, .
\end{align*}

\item[(ii)]
The different properties, which a coherent risk measure needs to fulfill, have clear interpretations if the loss variables examined are associated with financial losses of given assets: Translation invariance ensures that reserving capital $c$ equal to the risk of a loss variable $L$ leads to the resulting loss variable $L-c$ having risk $0$. Subadditivity means that diversification of a portfolio never increases the risk according to $\rho$. Positive homogeneity implies an increase of one's share in a given asset by some factor leads to an increase of the risk by the same factor. Monotonicity ensures that assets associated with higher losses are assigned a higher risk.  
\end{itemize}
\end{bemerkung}
Relating to monotonicity, a first result on the previous definitions that is also helpful in practice is the following:
\begin{lemma}\label{MonotonicityLemma}
  Let $M$ be a convex cone of random variables such that it contains all constant random variables. Further, let $\rho:M\to \R$ be a subadditive and positively homogeneous risk measure. Then, $\rho$ is monotone if and only if $\rho(L)\leq 0$ for all $L\leq 0$.
\end{lemma}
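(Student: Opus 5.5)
The proof splits into two directions.

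\emph{Necessity.} Suppose $\rho$ is monotone and $L\leq 0$ almost surely. Since $M$ contains all constants, the zero variable lies in $M$, so monotonicity gives $\rho(L)\leq\rho(0)$. It remains to show $\rho(0)=0$, and positive homogeneity does this: with $\lambda=2$ we get $\rho(0)=\rho(2\cdot 0)=2\rho(0)$, hence $\rho(0)=0$. Therefore $\rho(L)\leq 0$.

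\emph{Sufficiency.} Assume $\rho(L)\leq 0$ whenever $L\leq 0$. Take $L_1,L_2\in M$ with $L_1\leq L_2$ almost surely. The plan is to write $L_1=(L_1-L_2)+L_2$ and apply subadditivity, which gives
\begin{align*}
\rho(L_1)\leq \rho(L_1-L_2)+\rho(L_2)\leq \rho(L_2),
\end{align*}
where the last step uses $L_1-L_2\leq 0$ and the hypothesis.

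The only delicate point is checking that $L_1-L_2\in M$. A convex cone is closed under nonnegative combinations but not, in general, under subtraction, so this does not follow from the cone property alone. Two ways to secure it:
\begin{itemize}
\item read the lemma under the standard convention that $M$ is a linear space, for example $L^\infty$ or $L^1$, as in the entropic example;
\item or observe that the risk-measure setting implicitly needs $-L\in M$, and add this as an assumption.
\end{itemize}
Once $L_1-L_2\in M$ is granted, the algebra above completes the proof. This closure issue is the main obstacle; the rest is immediate.
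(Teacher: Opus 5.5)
Your argument is the paper's proof step for step: $\rho(0)=0$ from positive homogeneity plus monotonicity gives one direction, and $\rho(L_1)\leq\rho(L_1-L_2)+\rho(L_2)\leq\rho(L_2)$ gives the other. The closure point you raise is genuine, the paper passes over it silently, and it cannot be dropped: take $\Omega=\{\omega_1,\omega_2\}$ with equal masses, the convex cone $M=\{L: L(\omega_1)\leq L(\omega_2)\}$ (which contains all constants), and the linear functional $\rho(L)=2L(\omega_1)-L(\omega_2)=2(L(\omega_1)-L(\omega_2))+L(\omega_2)$, which is $\leq 0$ for every $L\leq 0$ in $M$; yet, writing $L=(L(\omega_1),L(\omega_2))$, we have $(0,1)\leq(0,2)$ in $M$ with $\rho(0,1)=-1>-2=\rho(0,2)$, so your linear-space reading, which covers the paper's applications on $L^p$, is the right repair.
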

\begin{proof}
  First, assume that $\rho$ is monotone and let $L\in M,L\leq 0$. Since $\rho(0)=\rho(\lambda 0)=\lambda \rho(0)$ for all $\lambda>0$ by positive homogeneity, we derive $\rho(0)=0$. By monotonicity, we then have $\rho(L)\leq \rho(0)=0$. \\ 

  Next, assume $\rho(L)\leq 0$ for all $L\leq 0$. Let $L_1,L_2\in M, L_1\leq L_2$. Then $\rho(L_1-L_2)\leq 0$ by assumption. Subadditivity then yields
  \begin{align*}
    \rho(L_1)=\rho(L_1-L_2+L_2)\leq \rho(L_1-L_2)+\rho(L_2)\leq \rho(L_2) \, , 
  \end{align*}
  which proves monotonicity.
\end{proof}

Next, we examine as an example if the risk measures introduced in the last subsection are coherent:
\begin{theorem} $ $
\begin{itemize}
    \item[(i)]  Let $\alpha \in (0,1)$ and $\rho:L_0\to \R, L\to \VaR_\alpha(L)$. Then, $\rho$ is, in general, \textbf{not} subadditive and thus not coherent.
\item[(ii)]
Let $\alpha \in (0,1)$ and $M:=\{L\in L_0:L\sim F\in \mathcal{M}_1\}$. Then, $\rho:M\to \R, L\to \ES_\alpha(L)$ is coherent.
\end{itemize}
\end{theorem}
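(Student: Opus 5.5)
For part (i) I would give an explicit counterexample to subadditivity, since a single violating pair $L_1,L_2$ suffices. The plan is to take two independent "defaultable" losses. Let $L_1,L_2$ be i.i.d. with $\mathbb{P}(L_i=1)=\delta$ and $\mathbb{P}(L_i=0)=1-\delta$, where $\delta$ is chosen with $1-\delta\geq \alpha$ (so that $\VaR_\alpha(L_i)=0$) but $(1-\delta)^2<\alpha$. Such a $\delta$ exists for every $\alpha\in(0,1)$, since we need $1-\sqrt{\alpha}<\delta\leq 1-\alpha$ and $\sqrt{\alpha}>\alpha$. Then $\mathbb{P}(L_1+L_2\leq 0)=(1-\delta)^2<\alpha$. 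By Definition \ref{VarundES} (i) this gives $\VaR_\alpha(L_1+L_2)\geq 1>0=\VaR_\alpha(L_1)+\VaR_\alpha(L_2)$. So subadditivity fails and $\rho$ is not coherent. (Translation invariance, homogeneity and monotonicity of $\VaR_\alpha$ hold but are not needed here.)

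For part (ii) I would first note that $M$ is a convex cone containing the constants, since integrability is preserved under sums and positive scalings. The proof then uses the substitution $\zeta\mapsto$ quantile representation.

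\textbf{Translation invariance and positive homogeneity.} For $\lambda>0$ and $c\in\R$, the cdf of $\lambda L+c$ is $F((x-c)/\lambda)$. Hence $\VaR_\zeta(\lambda L+c)=\lambda\VaR_\zeta(L)+c$ for every $\zeta$, and integrating over $\zeta\in(\alpha,1)$ gives both properties directly.

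\textbf{Monotonicity.} By Lemma \ref{MonotonicityLemma}, it suffices once subadditivity is known to check that $\ES_\alpha(L)\leq 0$ for $L\leq 0$. This is immediate because all quantiles of $L$ are then $\leq 0$. Alternatively, monotonicity follows from $F_{L_1}\geq F_{L_2}$ pointwise, which gives $\VaR_\zeta(L_1)\leq\VaR_\zeta(L_2)$.

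\textbf{Subadditivity (the main obstacle).} Quantiles are not additive, so I would pass to a representation of $\ES$ that is visibly subadditive. I plan the dual (Rockafellar–Uryasev type) representation
\begin{align*}
\ES_\alpha(L)=\sup\left\{\mathbb{E}[LW]\,:\,0\leq W\leq \tfrac{1}{1-\alpha},\ \mathbb{E}[W]=1\right\}.
\end{align*}
Given this, subadditivity follows at once, because a supremum of linear functionals is sublinear: $\mathbb{E}[(L_1+L_2)W]\leq \sup\mathbb{E}[L_1W]+\sup\mathbb{E}[L_2W]$.

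To prove the representation, I would use the quantile transform $L\overset{d}{=}F^{-1}(U)$ with $U$ uniform; on a non-atomic space one can choose $U$ with $L=F^{-1}(U)$ almost surely. Taking $W=\frac{1}{1-\alpha}\mathbbm{1}_{U>\alpha}$ gives $\mathbb{E}[LW]=\ES_\alpha(L)$, so the supremum is at least $\ES_\alpha(L)$.

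For the reverse inequality, take any admissible $W$. Then
\begin{align*}
\mathbb{E}[LW]-\ES_\alpha(L)=\mathbb{E}\bigl[(L-q)(W-\tfrac{1}{1-\alpha}\mathbbm{1}_{U>\alpha})\bigr]\leq 0,
\end{align*}
where $q=\VaR_\alpha(L)$. The equality uses $\mathbb{E}[W]=1=\mathbb{E}[\tfrac{1}{1-\alpha}\mathbbm{1}_{U>\alpha}]$, so subtracting the constant $q$ changes nothing. The integrand is pointwise nonpositive. On $\{U>\alpha\}$ we have $L\geq q$ and $W-\frac{1}{1-\alpha}\leq 0$. On $\{U\leq\alpha\}$ we have $L\leq q$ and $W\geq 0$.

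The delicate points are ties at atoms of $F$ and the need for the space to support a uniform $U$. I would handle the ties by working with this explicit $U$ (or by enlarging the space, which does not affect the laws involved). Since $\ES$ depends only on the distribution, this is legitimate. Combining this with the monotonicity step completes coherence.
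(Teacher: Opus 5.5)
Your proposal is correct. The paper gives no argument of its own for this theorem; it only cites Section 6.1.2 and Proposition 6.9 of McNeil, Frey and Embrechts (2005). Yours is therefore a self-contained alternative.

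In (i), your independent Bernoulli pair uses the same mechanism as the defaultable-bond portfolios in the cited section. Your window $1-\sqrt{\alpha}<\delta\le 1-\alpha$ even shows that subadditivity fails at every level $\alpha\in(0,1)$.

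In (ii), the textbook (as I recall its argument) derives subadditivity from an order-statistics representation. There, $\ES_\alpha$ is the almost-sure limit of averages of the $[n(1-\alpha)]$ largest order statistics of an i.i.d.\ sample, and a sum of the largest order statistics is subadditive. You use instead the dual representation over densities $0\le W\le 1/(1-\alpha)$. This puts $\ES_\alpha$ exactly in the form \eqref{repcoherent}, with $\mathcal{P}_0$ the measures whose density with respect to $\mathbb{P}$ is bounded by $1/(1-\alpha)$. All four coherence properties then follow as in the proof of Lemma \ref{EVariableCoherent}, so your separate monotonicity and invariance steps are not even needed.

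The price is that the representation must hold on one common space for $L_1$, $L_2$ and $L_1+L_2$. A single product extension by an independent uniform $V$ (the distributional transform) supplies the required $U$'s for all three simultaneously; that is what your remark about enlarging the space has to mean.

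You can avoid the extension altogether by using the $\sigma(L)$-measurable maximizer $W^*=\frac{1}{1-\alpha}\bigl(\mathbbm{1}_{\{L>q\}}+\beta\mathbbm{1}_{\{L=q\}}\bigr)$, with $\beta\in[0,1]$ chosen so that $\beta\,\mathbb{P}(L=q)=1-\alpha-\mathbb{P}(L>q)$. Your sign argument goes through unchanged because $L-q$ vanishes on $\{L=q\}$. The identity $\mathbb{E}[LW^*]=\ES_\alpha(L)$ is the Acerbi--Tasche formula, which depends on $F$ alone.
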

\begin{proof}
Part (i) is shown in Section 6.1.2 of \textcite{mcneil2005quantitative}, and Part (ii) is shown in Proposition 6.9 of \textcite{mcneil2005quantitative}. 
\end{proof}

Coherent risk measures have a surprising connection to e-variables by means of the following lemma:
\begin{lemma}[see Corollary 16.4 in \cite{RamdasBuch}]\label{EVariableCoherent}
  Let $(\mathcal{X},\mathcal{A},\mathcal{P})$ be a statistical model and $\mathcal{P}_0\subset \mathcal{P}$. Define $M:=\{L\in L_0^+:\mathbb{E}_\mathbb{P}[L]<\infty \quad  \forall \mathbb{P}\in \mathcal{P}\} $. Then, there exists a coherent risk measure $\rho:M \to \R$ such that $\rho(L)\leq 1$ if and only if $L$ is an e-variable for $\mathcal{P}_0$.
\end{lemma}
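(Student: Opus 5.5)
The plan is to take the worst-case expectation over the null,
\begin{align*}
  \rho(L):=\sup_{\mathbb{P}\in\mathcal{P}_0}\mathbb{E}_\mathbb{P}[L]\, ,
\end{align*}
which is the natural choice since ``$L$ is an e-variable for $\mathcal{P}_0$'' means exactly $\mathbb{E}_\mathbb{P}[L]\leq 1$ for every $\mathbb{P}\in\mathcal{P}_0$. With this definition the equivalence $\rho(L)\leq 1 \iff L$ is an e-variable for $\mathcal{P}_0$ holds by construction, because a supremum is at most $1$ exactly when every term is. Two side remarks: if $\mathcal{P}_0=\emptyset$, we replace the supremum by some fixed $\mathbb{P}^*\in\mathcal{P}_0$ convention, or simply note that the statement is then degenerate. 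Also, elements of $M$ are nonnegative, so they take values in $[0,\infty)$, consistent with Definition \ref{E-Variable}.

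Most of the work is checking that $\rho$ is real-valued and coherent on $M$. Real-valuedness is where I expect the main obstacle. Each $\mathbb{E}_\mathbb{P}[L]$ is finite by the definition of $M$, but a supremum over infinitely many $\mathbb{P}\in\mathcal{P}_0$ may equal $+\infty$. I would first try to handle this by allowing $\rho$ to take values in $(-\infty,\infty]$; the equivalence is unaffected, since $\rho(L)=\infty$ implies $\rho(L)>1$, which is consistent with $L$ failing to be an e-variable. If a genuinely real-valued $\rho$ is required, I would restrict to the subcone $\{L\in M:\sup_{\mathbb{P}\in\mathcal{P}_0}\mathbb{E}_\mathbb{P}[L]<\infty\}$. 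This subcone contains all e-variables and all nonnegative constants, and it remains a convex cone. The exact convention would be chosen to match Corollary 16.4 in \textcite{RamdasBuch}.

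A second subtlety is that $M$ consists of nonnegative variables, so $L+c$ for negative $c$ may leave $M$. Translation invariance should therefore be stated for those $c\in\R$ with $L+c\in M$. The same caveat applies to the requirement that $M$ contain all constants: only nonnegative constants lie in $M$, so the claim concerns the cone of nonnegative integrable variables.

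With these points settled, the coherence axioms each follow directly from properties of expectation combined with elementary properties of suprema:
\begin{itemize}
  \item[(i)] \emph{Translation invariance:} $\mathbb{E}_\mathbb{P}[L+c]=\mathbb{E}_\mathbb{P}[L]+c$ for every $\mathbb{P}$, and a common additive constant can be pulled out of the supremum.
  \item[(ii)] \emph{Positive homogeneity:} for $\lambda>0$, $\sup_{\mathbb{P}}\lambda\,\mathbb{E}_\mathbb{P}[L]=\lambda\sup_{\mathbb{P}}\mathbb{E}_\mathbb{P}[L]$.
  \item[(iii)] \emph{Subadditivity:} $\sup_{\mathbb{P}}\bigl(\mathbb{E}_\mathbb{P}[L_1]+\mathbb{E}_\mathbb{P}[L_2]\bigr)\leq \sup_{\mathbb{P}}\mathbb{E}_\mathbb{P}[L_1]+\sup_{\mathbb{P}}\mathbb{E}_\mathbb{P}[L_2]$.
  \item[(iv)] \emph{Monotonicity:} if $L_1\leq L_2$ almost surely under every $\mathbb{P}$, then $\mathbb{E}_\mathbb{P}[L_1]\leq\mathbb{E}_\mathbb{P}[L_2]$ for each $\mathbb{P}$, and taking suprema preserves the inequality. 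Alternatively, monotonicity follows from Lemma \ref{MonotonicityLemma}, since $\rho(L)\leq 0$ whenever $L\leq 0$.
\end{itemize}
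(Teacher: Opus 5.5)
Your proposal is correct and follows the paper's proof essentially verbatim. It uses the same worst-case expectation $\rho(L)=\sup_{\mathbb{P}\in\mathcal{P}_0}\mathbb{E}_\mathbb{P}[L]$, reads off the equivalence directly, and obtains the coherence axioms from linearity and monotonicity of expectation plus the fact that a supremum of sums is at most the sum of suprema. Your caveats are legitimate points the paper passes over: if some $L\in M$ has $\sup_{\mathbb{P}\in\mathcal{P}_0}\mathbb{E}_\mathbb{P}[L]=\infty$, any real-valued coherent $\rho$ would give $\rho(\lambda L)=\lambda\rho(L)\le 1$ for small $\lambda>0$ although $\lambda L$ is no e-variable, so extended values or a restricted domain are genuinely needed. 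Likewise, $M\subset L_0^+$ contains no negative constants, so translation invariance only makes sense for shifts that stay in $M$. One small flaw: your alternative derivation of monotonicity via Lemma \ref{MonotonicityLemma} does not apply on $M$, since $L_1-L_2\le 0$ lies outside $M$ unless it vanishes. Your direct argument for monotonicity is fine.
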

\begin{proof}
  Let $\rho:M\to \R$ be defined as 
  \begin{align}
    \rho(L):=\sup_{\mathbb{P}\in \mathcal{P}_0}\mathbb{E}_\mathbb{P}[L] \qquad \forall L \in M\, . \label{repcoherent}
  \end{align}
  Obviously, $\rho(L)\leq 1$ precisely if $L$ is an e-variable. It remains to show that $\rho$ is coherent: Translation invariance and positive homogeneity follow directly from the linearity of the expectation. Monotonicity is implied by monotonicity of the expectation. For subadditivity, let $L_1,L_2\in M$. Then, we have
  \begin{align*}
    \rho(L_1+L_2)=\sup_{\mathbb{P}\in \mathcal{P_0}}(\mathbb{E}_\mathbb{P}[L_1+L_2])\leq \sup_{\mathbb{P}\in \mathcal{P_0}}\mathbb{E}_\mathbb{P}[L_1]+\sup_{\mathbb{P}\in \mathcal{P_0}}\mathbb{E}_\mathbb{P}[L_2]=\rho(L_1)+\rho(L_2) \, .
  \end{align*}
\end{proof}
\begin{bemerkung}
Related to the previous lemma, one can even show that, under some conditional constraints, all coherent risk measures have a representation like in Equation \eqref{repcoherent}. For more details, see Theorems 4.16 and 4.22 in \cite{FolStoc2025} and Theorem 16.3 in \cite{RamdasBuch}.  
\end{bemerkung}

\section{Backtest E-Statistics}\label{sec3}
This section is concerned with constructing e-variables to backtest risk models for a potential underestimation of risk. For this, we first define our general setup.
\begin{definition}\label{BacktestESTatistic}
  Let $\mathcal{M}$ be a set of distribution functions and $\mathcal{P}\subset \mathcal{M}$. Further, let $\psi:\mathcal{M}\to \R^d$.
  \begin{enumerate}[(i)]
    \item 
    A measurable function $e:\R\times \psi(\mathcal{M})\to [0,\infty]$ which fulfills
    \begin{align*}
      \int_\R e(x,\psi(F))dF(x)\leq 1, \qquad \forall F\in \mathcal{P} \, ,
    \end{align*}
    is called a $\mathcal{P}$-point e-statistic for $\psi$. 
    \item 
    Assume that $\psi=(\rho,\phi)$ with $\rho:\mathcal{M}\to \R$.  A measurable function $e:\R\times \psi(\mathcal{M})\to [0,\infty]$ which fulfills
    \begin{align*}
      \int_\R e(x,r,\phi(F))dF(x)\leq 1 \qquad \forall F\in \mathcal{P} \text{ and } r\geq \rho(F) \, ,
    \end{align*}
    is called a $\mathcal{P}$-one-sided e-statistic for $\psi=(\rho,\phi)$.
  \end{enumerate}
\end{definition}
Those e-statistics can be easily used to construct e-variables as in Definition \ref{E-Variable}.
\begin{corollary}\label{EstatEvalue}
Let $\mathcal{M}$ be a set of distribution functions and $\mathcal{P}\subset \mathcal{M}$. Further, let $X\sim F$ be a random variable for some (unknown) $F\in M$  and $\psi:\mathcal{M}\to \R^d$.
  \begin{enumerate}[(i)]
    \item 
    Let $e: \R\times \psi(\mathcal{M})\to [0,\infty]$ be a $\mathcal{P}$-point e-statistic for $\psi$ in the sense of Definition \ref{BacktestESTatistic}.(i). Then, for any $z\in \psi(\mathcal{M})$, $e(X, z)$ is an e-variable for the null hypothesis 
    \begin{align*}
      F\in \{G\in \mathcal{P}|\psi(G)=z\} \, .
    \end{align*}
    \item 
Let $e: \R\times \psi(\mathcal{M})\to [0,\infty]$ be a $\mathcal{P}$-one-sided e-statistic for $\psi=(\rho,\phi)$ in the sense of Definition \ref{BacktestESTatistic}.(ii). Then, for any $r\in \rho(\mathcal{M})$ and $z\in \phi(\mathcal{M})$, $e(X,r ,z)$ is an e-variable for the null hypothesis 
\begin{align*}
      F\in \{G\in \mathcal{P}|\rho(G)\leq r \textrm{ and }\phi(G)=z\} \, .
    \end{align*}
  \end{enumerate}
\end{corollary}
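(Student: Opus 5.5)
The plan is to verify Definition \ref{E-Variable} directly for every distribution in each null hypothesis. By the definition of an e-variable, this means checking two things: that $e(X,\cdot)$ is a nonnegative (extended) random variable, and that its expectation under each null distribution is at most one.

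\emph{Measurability.} Since $e$ is measurable on $\R\times\psi(\mathcal{M})$ with values in $[0,\infty]$, for fixed $z$ the section $x\mapsto e(x,z)$ is Borel measurable. Hence $e(X,z)$ is a $[0,\infty]$-valued random variable. The same holds in the one-sided case for fixed $(r,z)$.

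\emph{Part (i).} Fix $z\in\psi(\mathcal{M})$ and take any $G\in\mathcal{P}$ with $\psi(G)=z$. Suppose $X\sim G$. The change-of-variables formula gives
\begin{align*}
\mathbb{E}_G[e(X,z)]=\int_\R e(x,z)\,dG(x)=\int_\R e(x,\psi(G))\,dG(x)\leq 1 ,
\end{align*}
where the last inequality is the defining property of a $\mathcal{P}$-point e-statistic in Definition \ref{BacktestESTatistic}.(i). Because $G$ was an arbitrary element of the null hypothesis, $e(X,z)$ is an e-variable for it.

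\emph{Part (ii).} Fix $r$ and $z$, and take any $G\in\mathcal{P}$ with $\rho(G)\leq r$ and $\phi(G)=z$. Then
\begin{align*}
\mathbb{E}_G[e(X,r,z)]=\int_\R e(x,r,\phi(G))\,dG(x)\leq 1 .
\end{align*}
The inequality follows from Definition \ref{BacktestESTatistic}.(ii), applied with this $r\geq\rho(G)$.

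The only point that needs care is that $(r,z)$ must lie in the domain of $e$. We can only evaluate $e(X,r,z)$ if $(r,z)$ belongs to the set on which $e$ is defined, which I read as containing all pairs $(r,\phi(G))$ with $r\geq\rho(G)$. If the null hypothesis is empty, the statement holds vacuously. Otherwise, the null distribution $G$ itself shows that the pair arises from the set on which the one-sided condition is imposed. Beyond this, no obstacle is expected: the result is a direct unpacking of the definitions.
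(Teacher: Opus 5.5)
Your proposal is correct and is essentially the paper's argument: the paper gives no separate proof and treats the corollary as an immediate unpacking of Definition~\ref{BacktestESTatistic} via $\mathbb{E}_G[e(X,\cdot)]=\int_\R e(x,\cdot)\,dG(x)\leq 1$ for every $G$ in the respective null set. Your remark on the domain of $e$ in part (ii) reasonably handles an imprecision in the paper itself, since a pair $(r,\phi(G))$ with $r>\rho(G)$ need not lie in $\psi(\mathcal{M})$.
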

Next, we define two helpful properties of e-statistics.
\begin{definition}
 Let $\mathcal{M}$ be a set of distribution functions and $\mathcal{P}\subset \mathcal{M}$. Further, let $\psi=(\rho,\phi):\mathcal{M}\to \R\times\R^{d-1}$ and $e:\R\times \psi(\mathcal{M})\to \R$ be a $\mathcal{P}$-one-sided e-statistic for $\psi$.
 \begin{enumerate}[(i)]
  \item 
  If, for all $(r,z)\in \psi(\mathcal{P})$ and $F\in \mathcal{P}$ with $\rho(F)>r$, it holds
  \begin{align*}
    \int_\R e(x,r,z) dF(x) >1 \, ,
  \end{align*} 
  we call $e$ a ($\mathcal{P}$)-backtest e-statistic for $\psi$.
  \item  
  If $e$ is a backtest e-statistic for $\psi$ and $e$ is decreasing in $r$ for all $x\in \R$ and $z\in \phi(\mathcal{M})$, we call $e$ monotone.
 \end{enumerate}  
\end{definition} 
An e-variable derived from a backtest e-statistic for $\psi=(\rho,\phi)$ according to Corollary \ref{EstatEvalue}.(ii) has the important property that rejecting a null hypothesis where $\rho$ is underestimated is never less likely than rejecting a true null hypothesis, regardless of a possible misspecification of $\phi$. The property of monotonicity ensures that the power of the corresponding test is increasing in the underestimation of $r$. \\

A first, simple example of a backtest e-statistic is given by considering the case that $\rho$ equals the variance and $\phi$ equals the expected value.
\begin{beispiel}\label{ESVaRBeispiel}
Let  $\psi:=(\rho,\phi):\mathcal{M}_2\to \R^2$ with $(\rho(F),\phi(F)):=(\operatorname{Var}(X),\mathbb{E}[X])$ for $X\sim F$. Then, $e:\R\times \R_{\geq 0}\times\R\to \R, e(x,r,z)=\frac{(x-z)^2}{r}$ for $r>0$ and $e(x,0,z)=\infty \mathbbm{1}_{x\neq z}$ is a monotone $\mathcal{M}_2$-backtest e-statistic for $\psi=(\operatorname{Var},\mathbb{E})$.
\end{beispiel}
\begin{proof}
  Let $r>0, X\sim F\in \mathcal{M}_2$ with $\operatorname{Var}(X) \leq r$. Then, it holds
  \begin{align*}
    \int_\R \frac{(x-\mathbb{E}[X])^2}{r}dF(x)=\frac{\operatorname{Var}(X)}{r}\leq 1 \, . 
  \end{align*}
  Thus, $e$ is a $\mathcal{M}_2$-one-sided e-statistic for $\psi$. Now, let $z\in \R$ and $X\sim F\in \mathcal{M}_2$ with $\operatorname{Var}(X) > r$. Then, we have
  \begin{align*}
    \int_\R \frac{(x-z)^2}{r}dF(x)&=\int_\R \frac{(x-\mathbb{E}[X])^2+2(x-\mathbb{E}[X])(\mathbb{E}[X]-z)+(\mathbb{E}[X]-z)^2}{r}dF(x) \\
    &= \frac{\operatorname{Var}(X)}{r}+0+\frac{(E[X]-z)^2}{r}> 1 \, .
  \end{align*}
  Thus, $e$ is also a $\mathcal{M}_2$-backtest e-statistic. Monotonicity of $e$ is obvious.
\end{proof}

\subsection{Bayes pairs}
In this section, we will derive a general way to construct monotone backtest $e$-statistics. This involves the notions of Bayes pairs, which were first described by \textcite{https://doi.org/10.1111/mafi.12313} and are defined as follows.
\begin{definition}\label{BayesPairs}
  Let $\mathcal{M}$ be a set of distribution functions and $\psi=(\rho,\phi):\mathcal{M}\to \R\times \R^{d-1}$. If there exists a function $L:\R^{d}\to \R$ such that, for all $F\in \mathcal{M}$, $\min_{z\in \R^{d-1}} \int_\R L(z,x)dF(x)$ exists and it holds
  \begin{align*}
    \phi(F)\in \argmin_{z\in \R^{d-1} }\int_\R L(z,x)dF(x) \textrm{ and } \rho(F)=\min_{z\in \R^{d-1}} \int_\R L(z,x)dF(x) \, ,
  \end{align*}
we call $(\rho, \phi)$ a \text{Bayes pair} w.r.t. $\mathcal{M}$ and loss function $L$. 
\end{definition}

As a first example of a Bayes pair, we look at the variance and expectation of random variables with existing second moments:
\begin{lemma}\label{VarEBeispiel}
Let $X$ be a real-valued random variable with cdf $F\in \mathcal{M}_2$. Then, it holds
\begin{align*}
  \mathbb{E}[X]&=\argmin_{z\in\R} \mathbb{E}[(X-z)^2] \, , \\
\operatorname{Var}(X)&=\min_{z\in\R} \mathbb{E}[(X-z)^2] \, .
\end{align*}
Thus, $(\operatorname{Var},\mathbb{E})$ constitutes a Bayes pair w.r.t. $M_2$ and loss function $L(z,x)=(x-z)^2$. 
\end{lemma}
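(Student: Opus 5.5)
The plan is to expand the quadratic loss around the mean, exactly as in the proof of Example \ref{ESVaRBeispiel}. Fix $X\sim F\in\mathcal{M}_2$ and write $\mu:=\mathbb{E}[X]$, which is finite since $F\in\mathcal{M}_2\subset\mathcal{M}_1$. For arbitrary $z\in\R$ we write $X-z=(X-\mu)+(\mu-z)$, square, and take expectations. Because $\mathbb{E}[X^2]<\infty$, all terms are integrable, so linearity of the expectation applies. The cross term $2(\mu-z)\mathbb{E}[X-\mu]$ vanishes, and this gives the bias--variance identity
\begin{align*}
  \mathbb{E}[(X-z)^2]=\operatorname{Var}(X)+(\mu-z)^2 \qquad \forall z\in\R \, .
\end{align*}

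From this identity both claims follow at once. The map $z\mapsto \operatorname{Var}(X)+(\mu-z)^2$ is a parabola with nonnegative second term. It is therefore bounded below by $\operatorname{Var}(X)$, and it attains this value exactly at $z=\mu$. Hence the minimum over $z\in\R$ exists and equals $\operatorname{Var}(X)$. The argmin is the singleton $\{\mu\}$, which justifies writing it as an equality with $\mathbb{E}[X]$.

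For the Bayes pair statement, we check Definition \ref{BayesPairs} with $d=2$ and $L(z,x)=(x-z)^2$. There we have $\int_\R L(z,x)\,dF(x)=\mathbb{E}[(X-z)^2]$. For every $F\in\mathcal{M}_2$ the minimum exists, $\phi(F)=\mathbb{E}[X]$ lies in the argmin, and $\rho(F)=\operatorname{Var}(X)$ equals the minimum, as shown above.

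No step is a genuine obstacle. The only point needing care is integrability: the second-moment assumption is what makes $\mathbb{E}[(X-z)^2]$ finite for every $z$ and legitimizes splitting the expectation of the expanded square term by term.
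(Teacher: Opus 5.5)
Your proof is correct and is essentially the paper's argument. The paper expands $\mathbb{E}[(X-z)^2]=\mathbb{E}[X^2]-2\mathbb{E}[X]z+z^2$ and finds the unique minimizer via $f'(z)=0$ and $f''=2>0$. You instead complete the square to $\operatorname{Var}(X)+(\mu-z)^2$, which gives the minimum value and its uniqueness in one step without calculus; it is the same identity the paper uses in the proof of Example \ref{ESVaRBeispiel}.
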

\begin{proof}
  Let $f:\R\to \R, z\to \mathbb{E}[(X-z)^2]=\mathbb{E}[X^2]-2\mathbb{E}[X]z+z^2$. Differentiating $f$ and setting the derivative to $0$ yields $f'(z)=0\iff z=\mathbb{E}[X]$. Since $f''(z)=2 >0$, $\mathbb{E}[X]$ is the unique minimizer of $f$. By definition, we then have $\min_{z\in\R} \mathbb{E}[(X-z)^2]=\mathbb{E}[(X-\mathbb{E}[X])^2]=\operatorname{Var}(X)$. 
\end{proof}

Under one additional prerequisite, we can always derive a monotone backtest e-statistic from a Bayes pair.
\begin{lemma}\label{BayesPairToEstatistic}
 Let $\mathcal{M}$ be a set of distribution functions and $\psi=(\rho, \phi)$ be a \text{Bayes pair} w.r.t. $\mathcal{M}$ and loss function $L$. If, for every $z\in \R^{d-1}$, $L(z,\cdot)$ is lower bounded by $c(z)\in \R$, the function $e:\R\times\psi(\mathcal{M})\to \R$ with
 \begin{align*}
  e(x,r,z)=(L(z,x)-c(z))/(r-c(z)) \, ,
 \end{align*} 
 is a monotone $\mathcal{M}$-backtest e-statistic for $(\rho,\phi)$.
\end{lemma}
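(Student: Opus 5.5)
Since $\psi=(\rho,\phi)$ is a Bayes pair w.r.t.\ $\mathcal{M}$ and $L$, we have $\rho(F)=\int_\R L(\phi(F),x)\,dF(x)\le \int_\R L(z,x)\,dF(x)$ for every $z\in\R^{d-1}$. The plan is to reduce all three required properties (one-sided e-statistic, backtest property, monotonicity) to this inequality together with the lower bound $L(z,\cdot)\ge c(z)$. The first thing to settle is the domain. We need $r>c(z)$ so that $e(x,r,z)=(L(z,x)-c(z))/(r-c(z))$ is well defined and nonnegative. For $(r,z)=\psi(G)$, $G\in\mathcal{M}$, we get $r=\int L(z,x)\,dG(x)\ge c(z)$. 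I will treat the boundary case $r=c(z)$ as in Example \ref{ESVaRBeispiel}: set $e=\infty\cdot\mathbbm{1}_{L(z,x)>c(z)}$, with $0/0:=0$. Alternatively, one assumes $r>c(z)$ and states this convention explicitly. With this convention, $e\ge 0$ holds and measurability is inherited from $L$.

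\emph{One-sided e-statistic.} Let $F\in\mathcal{M}$, $r\ge\rho(F)$, and $z=\phi(F)$; per Definition \ref{BacktestESTatistic}(ii) the second argument is $\phi(F)$. By linearity,
\begin{align*}
\int_\R e(x,r,\phi(F))\,dF(x)=\frac{\rho(F)-c(\phi(F))}{r-c(\phi(F))}\le 1,
\end{align*}
since $0\le\rho(F)-c\le r-c$. In the degenerate case $r=c$, we get $\rho(F)=c$, so $L(\phi(F),X)=c$ almost surely and the integral is $0$.

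\emph{Backtest property.} Let $(r,z)\in\psi(\mathcal{P})$ and $F$ with $\rho(F)>r$. The Bayes pair inequality gives $\int L(z,x)\,dF(x)\ge\rho(F)>r$, so
\begin{align*}
\int_\R e(x,r,z)\,dF(x)=\frac{\int L(z,x)\,dF(x)-c(z)}{r-c(z)}>1
\end{align*}
when $r>c(z)$. When $r=c(z)$, the strict inequality $\int L\,dF>c(z)$ forces $\mathbb{P}(L(z,X)>c(z))>0$, hence the integral is $\infty>1$.

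\emph{Monotonicity.} For fixed $x$ and $z$, the numerator $L(z,x)-c(z)\ge 0$ does not depend on $r$, and the denominator $r-c(z)>0$ is increasing in $r$. Hence $e$ is decreasing in $r$, including the jump to the boundary value.

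The main obstacle is not analytic. It is the bookkeeping around the boundary case $r=c(z)$ and the integrability of $L(z,\cdot)$ under $F$. For the latter I will note that $\int L(z,x)\,dF(x)\in(-\infty,\infty]$ is always defined because $L(z,\cdot)$ is bounded below, so the inequalities remain valid even when the integral is infinite.
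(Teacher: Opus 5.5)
Your proof is correct and follows the paper's argument essentially step for step. Nonnegativity comes from $L(z,\cdot)\ge c(z)$ and $r\ge c(z)$. The bound $\int e\,dF=(\rho(F)-c)/(r-c)\le 1$ comes from the Bayes-pair identity, the strict inequality for $\rho(F)>r$ comes from minimality of $\rho(F)$, and monotonicity comes from the denominator. Your explicit treatment of the degenerate case $r=c(z)$, and of possibly infinite integrals, is a small improvement: the paper's proof silently divides by $\rho(F)-c(\phi(F))$, which can be zero (e.g.\ for constant losses).
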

\begin{proof}
  Let $(r,z)\in \psi(\mathcal{M})$. By definitions of $c(z)$ and $\rho$, we have $L(z,x),r\geq c(z)$ and thus $e(x,r,z)\geq 0$ for all $x\in \R$. Now, let $F\in \mathcal{M}$ with $\rho(F)\leq r$. Then, we have
  \begin{align*}
    \int_\R e(x,r,\phi(F))dF(x)&=\int_\R \frac{L(\phi(F),x)-c(\phi(F))}{r-c(\phi(F))}dF(x)\\
    &\leq \frac{\int_R L(\phi(F),x)dF(x)-c(\phi(F))}{\rho(F)-c(\phi(F))} \\
    &=\frac{\rho(F)-c(\phi(F))}{\rho(F)-c(\phi(F))}=1 \, ,
  \end{align*} 
  where the second-to-last equality is due to $(\rho,\phi)$ being a Bayes pair.

   Next, let $\rho(F)>r$. Then,
   \begin{align*}
    \int_\R e(x,r,z)dF(x)= \int_\R \frac{L(z,x)-c(z)}{r-c(z)}dF(x)\geq \frac{\rho(F)-c(z)}{r-c(z)}>1 \, ,
   \end{align*}
   where the second-to-last inequality is due to the definition of $\rho$.

   Finally, monotonicity of $e$ in $r$ is again obvious.
\end{proof}
Looking back at our example derived from Lemma \ref{VarEBeispiel}, we get that $e^{\operatorname{Var}}(x,r,z):=(x-z)^2 / r$ is a monotone backtest e-statistic for $(\operatorname{Var},\mathbb{E})$ since $(x-z)^2\geq 0$ always holds true, confirming our results from Example \ref{ESVaRBeispiel}. 

Our methodology for constructing monotone backtest e-statistics from Bayes pairs can also be easily applied to the value-at-risk and the expected shortfall by means of the following lemma which is due to \textcite{ROCKAFELLAR20021443}.
\begin{lemma}[Theorem 10 in \cite{ROCKAFELLAR20021443}]\label{BayesPairESVaR}
  Let $X$ be a real-valued random variable with cdf $F\in \mathcal{M}_1$. Then, it holds for all $\alpha\in (0,1)$
 \begin{align*}
  \VaR_\alpha(X)\in \argmin_{z\in \R} \left\{ z+\frac{1}{1-\alpha}\mathbb{E}[(X-z)_+]\right\} \, , \\ 
  \ES_\alpha(X)=\min_{z\in \R} \left\{ z+\frac{1}{1-\alpha}\mathbb{E}[(X-z)_+]\right\} \, .
 \end{align*}
\end{lemma}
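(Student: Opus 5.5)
We will study the function $g(z):=z+\frac{1}{1-\alpha}\mathbb{E}[(X-z)_+]$, which is finite for every $z\in\R$ because $F\in\mathcal{M}_1$. The plan is in three steps: show that $g$ is convex with explicit one-sided derivatives; use these to identify $\VaR_\alpha(X)$ as a minimizer; and then evaluate $g$ at that point and match the value with Definition \ref{VarundES}(ii).

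\textbf{Convexity and derivatives.} Convexity of $g$ is immediate, since $z\mapsto (x-z)_+$ is convex for each $x$. For the one-sided derivatives we use $(x-z-h)_+-(x-z)_+ = -h\,\mathbbm{1}_{x>z}+o(h)$ together with dominated convergence; the difference quotients are bounded by $1$. This gives
\begin{align*}
g'_+(z)=1-\frac{1-F(z)}{1-\alpha}=\frac{F(z)-\alpha}{1-\alpha},\qquad g'_-(z)=\frac{F(z-)-\alpha}{1-\alpha}.
\end{align*}

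\textbf{Minimizer.} Set $q:=\VaR_\alpha(X)=\inf\{x:F(x)\ge\alpha\}$. Right-continuity of $F$ gives $F(q)\ge\alpha$. For $x<q$ we have $F(x)<\alpha$, hence $F(q-)\le\alpha$. Therefore $g'_-(q)\le 0\le g'_+(q)$. For a convex function on $\R$, this is exactly the condition that $q$ is a global minimizer. This proves the first claim and shows that the minimum exists.

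\textbf{Value of the minimum.} It remains to show $g(q)=\frac{1}{1-\alpha}\int_\alpha^1\VaR_\zeta(X)\,d\zeta$. We use the quantile representation $X\overset{d}{=}F^{-1}(U)$ with $U\sim\mathcal{U}(0,1)$, and note that $F^{-1}(u)>q$ implies $u>\alpha$. Then
\begin{align*}
\mathbb{E}[(X-q)_+]=\int_0^1 (F^{-1}(u)-q)_+\,du=\int_\alpha^1 (F^{-1}(u)-q)\,du-\int_\alpha^1 (F^{-1}(u)-q)_-\,du .
\end{align*}
For $u>\alpha$ we have $F^{-1}(u)\ge q$ by monotonicity of the quantile function, so the last integral vanishes. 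Hence $g(q)=q+\frac{1}{1-\alpha}\int_\alpha^1 F^{-1}(u)\,du - q=\ES_\alpha(X)$.

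The main obstacle is the careful treatment of atoms and flat pieces of $F$. In the derivative step we must distinguish $\mathbbm{1}_{X>z}$ from $\mathbbm{1}_{X\ge z}$, which is why the left and right derivatives involve $F(z-)$ and $F(z)$ respectively. In the last step we must argue that $F^{-1}(u)\ge q$ for $u>\alpha$ even when $F$ jumps across the level $\alpha$. Both points follow from right-continuity of $F$ and the definition of the generalized inverse, but they need to be checked explicitly.
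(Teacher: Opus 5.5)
Your proof is correct: the one-sided derivatives, the criterion $g'_-(q)\le 0\le g'_+(q)$ at $q=\VaR_\alpha(X)$, and the quantile-transform evaluation of $g(q)$ all check out. One cosmetic point: with the convention $y_-=\max\{-y,0\}$ used in the proof of Lemma~\ref{PosEPower}, one has $y_+=y+y_-$, so the middle sign in your last display should be $+$. This is immaterial, since that integral vanishes.

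The paper gives no proof of its own; it simply imports Theorem~10 of Rockafellar and Uryasev. Their treatment of the minimizer runs along the same lines as yours: convexity plus the left and right derivatives $(F(z-)-\alpha)/(1-\alpha)$ and $(F(z)-\alpha)/(1-\alpha)$. They state the full argmin set $\{z:F(z-)\le\alpha\le F(z)\}$, the interval of $\alpha$-quantiles with left endpoint $\VaR_\alpha(X)$, and your derivative formulas give this equally well.

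Where you genuinely differ is the value of the minimum. Rockafellar and Uryasev define CVaR as the mean of the ``$\alpha$-tail distribution'' $\zeta\mapsto(F(\zeta)-\alpha)_+/(1-\alpha)$ and compare $g(\VaR_\alpha(X))$ with that. Citing them for the paper's Definition~\ref{VarundES}(ii) therefore tacitly needs one more step: this tail distribution is the law of $F^{-1}(V)$ with $V\sim\mathcal{U}(\alpha,1)$, so its mean is $\frac{1}{1-\alpha}\int_\alpha^1\VaR_\zeta(X)\,d\zeta$. Your computation of $\mathbb{E}[(X-q)_+]$ via $X\overset{d}{=}F^{-1}(U)$ works directly with the quantile-integral definition. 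It therefore makes the lemma self-contained relative to the paper's own definitions.
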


Lemma \ref{BayesPairESVaR} yields that the value-at-risk and expected shortfall, both at level $\alpha$, constitute a Bayes pair w.r.t. $\mathcal{M}_1$ and loss function $L(z,x)=z+\frac{(x-z)_+}{1-\alpha}$. Since $L(z,x)\geq z$ for all $z,x \in \R$, Lemma \ref{BayesPairToEstatistic} provides us with a monotone backtest e-statistic for $(\ES_\alpha,\VaR_\alpha)$ given by
\begin{align*}
  e^{\ES}_\alpha(x,r,z):=\frac{(x-z)_+}{(1-\alpha)(r-z)} \, .
\end{align*}

\subsection{Minimization of \texorpdfstring{$L^p$}{Lp} loss}
Lemma \ref{BayesPairToEstatistic} can also be used for the construction of Bayes pairs for a given loss function satisfying the conditions of the lemma as well as deriving a monotone backtest e-statistic for the resulting risk measures. In this subsection, we study an example of this by looking at the $L^p$ loss.
\begin{definition}\label{LpLoss}
Let $p>0$ and $X\in L_p$ be a random variable with existing $p$th moment.
\begin{enumerate}[(i)]
  \item 
  We call $\mathbb{E}[|X|^p]$ the $L^p$ loss of $X$.
  \item 
  Let $z\in \R$. We call $\mathbb{E}[|X-z|^p]$ the generalized $L^p$ loss of $X$ w.r.t. $z$. 
  \item 
  We call $f:\R \to \R, z\to \mathbb{E}[|X-z|^p]$, the $L^p$ loss function of $X$. 
\end{enumerate}
\end{definition}
To apply our theory on Bayes pairs, we next consider minimizers of $L^p$ loss functions.
\begin{definition}\label{LpMinimizers}
  Let $p>0$ and $X\in L_p$ be a random variable with existing $p$th moment.
  \begin{enumerate}[(i)]
    \item 
  We call any $z_0\in \argmin_{z\in \R} \mathbb{E}[|X-z|^p]$ an $L^p$ minimizer of $X$ and denote the smallest such minimizer by $\operatorname{CTM}_p(X)$ (central tendency measure). 
  \item 
  We call $ \operatorname{Dis}_p(X):=\min_{z\in \R} \mathbb{E}[|X-z|^p]$ the $L^p$ dispersion of $X$. 
  \end{enumerate}
\end{definition}
Such $L^p$ minimizers and $L^p$ dispersions provide a generalization of expectations and variances to higher powers. Indeed, if $X\in L_2$, we have $\operatorname{CTM}_2(X)=\mathbb{E}[X]$ and $\operatorname{Dis}_2(X)=\operatorname{Var}(X)$. 
Obviously, $\operatorname{CTM}_p$ and $\operatorname{Dis}_p$ constitute a Bayes pair, which, since $|x-z|^p\geq 0$ always holds true, can be backtested by Lemma \ref{BayesPairToEstatistic} using the monotone backtest e-statistic
\begin{align*}
  e_{\operatorname{Dis}_p}(x,r,z):=\frac{|x-z|^p}{r} \, .
\end{align*} 
We are also interested in the $p$-th root of $\operatorname{Dis}_p$ for reasons explained in the following paragraphs. Since taking roots is a monotone transformation, backtesting $\operatorname{Dis}_p$ is essentially equivalent to backtesting $\operatorname{Dis}_p^{1/p}$. 
To this end, we examine whether $\operatorname{Dis}_p^{1/p}$ fulfills wishful properties of risk measures discussed in Subsection \ref{subsec-risk-measures}. 
\begin{lemma}\label{DispersionSubLinear}
Let $p>0$ and $\rho:L^p\to \R, L\to \operatorname{Dis}^{1/p}_p(L)$. Then, $\rho$ is positively homogeneous and subadditive if $p\geq 1$.
\end{lemma}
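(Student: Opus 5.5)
The plan is to express $\rho$ as a minimum of $L^p$ norms and then use the elementary properties of the norm $\|Y\|_p:=(\mathbb{E}[|Y|^p])^{1/p}$. Since $t\mapsto t^{1/p}$ is increasing on $[0,\infty)$, taking the $p$th root commutes with the minimum. Hence
\begin{align*}
  \rho(L)=\operatorname{Dis}_p^{1/p}(L)=\Bigl(\min_{z\in\R}\mathbb{E}[|L-z|^p]\Bigr)^{1/p}=\min_{z\in\R}\|L-z\|_p \, ,
\end{align*}
where the minimum is attained at $z=\operatorname{CTM}_p(L)$ by Definition \ref{LpMinimizers}.

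\textbf{Positive homogeneity (any $p>0$).} Let $\lambda>0$. The map $z\mapsto \lambda z$ is a bijection of $\R$, and $\mathbb{E}[|\lambda L-\lambda z|^p]=\lambda^p\,\mathbb{E}[|L-z|^p]$. Substituting $z'=\lambda z$ gives $\operatorname{Dis}_p(\lambda L)=\lambda^p\operatorname{Dis}_p(L)$. Taking $p$th roots yields $\rho(\lambda L)=\lambda\rho(L)$. This step uses no convexity and so holds for all $p>0$.

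\textbf{Subadditivity ($p\geq 1$).} Let $L_1,L_2\in L^p$ and let $z_1,z_2$ be minimizers for $L_1$ and $L_2$ respectively, for instance $z_i=\operatorname{CTM}_p(L_i)$. The point $z_1+z_2$ is an admissible (not necessarily optimal) choice for $L_1+L_2$. Applying Minkowski's inequality, which is valid exactly when $p\geq 1$, we get
\begin{align*}
  \rho(L_1+L_2)\leq \|(L_1-z_1)+(L_2-z_2)\|_p\leq \|L_1-z_1\|_p+\|L_2-z_2\|_p=\rho(L_1)+\rho(L_2) \, .
\end{align*}

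\textbf{Main obstacle.} The only delicate point is the standing assumption, built into Definition \ref{LpMinimizers}, that the minimum exists. If one wants to avoid it, work with the infimum throughout. Positive homogeneity is unaffected. For subadditivity, pick $z_1,z_2$ within $\varepsilon$ of the respective infima and let $\varepsilon\to 0$.

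Existence of the minimum can also be justified directly. The function $z\mapsto\|L-z\|_p$ is continuous, by dominated convergence or by the triangle inequality when $p\geq1$. It is also coercive, because $\|L-z\|_p\geq |z|-\|L\|_p$ for $p\geq 1$. A continuous coercive function on $\R$ attains its minimum.

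Finally, the restriction $p\geq 1$ is needed only for Minkowski's inequality. For $p<1$, $\|\cdot\|_p$ fails the triangle inequality, so subadditivity is not claimed there.
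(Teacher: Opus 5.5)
Your proof is correct and matches the paper's argument essentially step for step: positive homogeneity via the substitution $z\mapsto\lambda z$ inside the minimum, and subadditivity by inserting the admissible point $\operatorname{CTM}_p(L_1)+\operatorname{CTM}_p(L_2)$ into the minimization for $L_1+L_2$ and then applying Minkowski's inequality. Your extra remarks on existence of the minimizer (continuity plus coercivity, or alternatively working with infima) make explicit what the paper takes for granted in Definition~\ref{LpMinimizers}.
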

\begin{proof}
  For establishing positive homogeneity, let $L\in L^p$ and $\lambda>0$ and notice 
  \begin{align*}
  \min_{z\in \R} \mathbb{E}[|\lambda L-z|^p]=\min_{z\in \R} \mathbb{E}[|\lambda L-\lambda z|^p] = \lambda^p \min_{z\in \R} \mathbb{E}[|L- z|^p] \, .
  \end{align*}
  Thus, $\Dis^{1/p}_p(\lambda X)=\lambda \Dis^{1/p}_p(X)$.

  For establishing subadditivity, let $p\geq 1$ and $K,L \in L^p$. Using the definitions of $\Dis_p$ and $\CTM_p$, we derive
  \begin{align*}
  \Dis_p^{1/p}(K+L)=(\min_{z\in \R} \mathbb{E}[|K+L-z|^p])^{1/p} &\leq (\mathbb{E}[|K-\CTM(K)+L-\CTM(L)|^p])^{1/p} \\ 
  &\leq \Dis_p^{1/p}(K)+\Dis_p^{1/p}(L) \, ,
  \end{align*}
  where the last inequality is due to the Minkowski inequality.
\end{proof}
Considering translation invariance and monotonicity, $\operatorname{Dis}^{1/p}_p(L)$ is obviously not a translation invariant risk measure as $\operatorname{Dis}^{1/p}_p(L+c)=\operatorname{Dis}^{1/p}_p(L)$ for all $L\in L^p$ and $c\in \R$. It is also not monotone by Lemma \ref{MonotonicityLemma},  since $\operatorname{Dis}_p^{1/p}(L)>0$ for every non-constant, possible negative $L\in L^p$. 
One possible remedy to at least solve the problem of the risk measure not being translation invariant lies in adding a sublinear functional, e.\ g., the expectation, to $\operatorname{Dis}^{1/p}$, i.\ e., to consider $\mathbb{E}[L]+\operatorname{Dis}_p^{1/p}(L)$. However, this solution does only solve the problem of monotonicity for $p\leq 1$.
\begin{lemma}\label{DisperionMonotone}
Let $p>0$ and $\rho:L^{\max\{1,p\}}\to \R, L\to \mathbb{E}[L]+\operatorname{Dis}^{1/p}_p(L)$. Then, $\rho$ is monotone if and only if $p\leq 1$.
\end{lemma}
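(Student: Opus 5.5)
Since $\rho$ is subadditive and positively homogeneous whenever $p\geq 1$ (by linearity of the expectation together with Lemma \ref{DispersionSubLinear}), one could try Lemma \ref{MonotonicityLemma}. But that lemma needs subadditivity, which can fail for $p<1$. So I would argue both directions directly, using translation invariance $\rho(L+c)=\rho(L)+c$. Translation invariance holds because $\Dis_p$ is unchanged by shifts.

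\emph{Sufficiency ($p\leq 1$).} For $p\leq 1$ the map $t\mapsto t^{1/p}$ is convex on $[0,\infty)$, so Jensen's inequality gives
\begin{align*}
\Dis_p^{1/p}(L)\leq \left(\mathbb{E}[|L-z|^p]\right)^{1/p}\leq \mathbb{E}[|L-z|] \quad \text{for every } z\in\R .
\end{align*}
\begin{itemize}
\item[(a)] The idea is to pick $z$ cleverly. Let $L_1\leq L_2$ and set $D:=L_2-L_1\geq 0$.
\item[(b)] I would first reduce to $L_2=0$, i.e.\ show $\rho(L)\leq \rho(0)=0$ for $L\leq 0$. This reduction is the delicate point, because without subadditivity it does not follow formally.
\item[(c)] For the reduced case, take $z=0$. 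Then $\Dis_p^{1/p}(L)\leq \mathbb{E}[|L|]=-\mathbb{E}[L]$, hence $\rho(L)\leq 0$.
\item[(d)] For general $L_1\leq L_2$, I would first try a direct comparison: choose $z=\CTM_p(L_2)$ and aim for $\Dis_p(L_1)\leq \mathbb{E}[|L_2-z|^p]+\ldots$.
\item[(e)] Fallback: use the quasi-triangle inequality $|a+b|^p\leq |a|^p+|b|^p$, valid for $p\leq 1$. It gives $\Dis_p(L_1)\leq \Dis_p(L_2)+\mathbb{E}[D^p]$, hence
\begin{align*}
\Dis_p^{1/p}(L_1)\leq \left(\Dis_p(L_2)+\mathbb{E}[D^p]\right)^{1/p}.
\end{align*}
We then need this to be at most $\Dis_p^{1/p}(L_2)+\mathbb{E}[D]$.
\item[(f)] The bound in (e) does not follow directly, since $(a+b)^{1/p}\geq a^{1/p}+b^{1/p}$ for $p\leq 1$. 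I would therefore handle it by a more careful argument, e.g.\ monotonicity of $t\mapsto \rho(L_2-tD)$ via a derivative or one-sided-derivative computation in $t$.
\end{itemize}
I expect (d)–(f), the general case without subadditivity, to be the main obstacle. It may be that the statement is intended via Lemma \ref{MonotonicityLemma} only at $p=1$, where the Minkowski inequality gives subadditivity, and otherwise directly. At $p=1$ everything works cleanly, because $\Dis_1^{1/1}=\mathbb{E}|L-\mathrm{median}|$.

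\emph{Necessity ($p>1$).} Construct a counterexample with $L\leq 0$ and $\rho(L)>0$. Monotonicity fails for such $L$, since $\rho(L)>0=\rho(0)$. Take the Bernoulli-type loss
\begin{align*}
L=-\mathbbm{1}_{A}, \qquad \mathbb{P}(A)=q .
\end{align*}
\begin{itemize}
\item[(i)] Then $\mathbb{E}[L]=-q$.
\item[(ii)] The dispersion is $\Dis_p(L)=\min_z\{q|1+z|^p+(1-q)|z|^p\}$.
\item[(iii)] For $p>1$ the first-order condition gives the minimizer explicitly. Setting $a=q^{1/(p-1)}$ and $b=(1-q)^{1/(p-1)}$,
\begin{align*}
\Dis_p(L)=\frac{q(1-q)}{\left(a+b\right)^{p-1}} .
\end{align*}
\item[(iv)] As $q\to 0$, $a\to 0$ and $b\to 1$, so $\Dis_p(L)\approx q$ and $\Dis_p^{1/p}(L)\approx q^{1/p}$.
\item[(v)] Since $1/p<1$, $q^{1/p}\gg q$ for small $q$. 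Hence $\rho(L)=-q+\Dis_p^{1/p}(L)>0$ for small $q$, and monotonicity is violated.
\end{itemize}
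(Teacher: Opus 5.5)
Your necessity argument for $p>1$ is correct. It is essentially the paper's: the paper's $L_\epsilon$ equals $\epsilon^{-1}(-\mathbbm{1}_A)$ with $\mathbb{P}(A)=\epsilon$, so by positive homogeneity it is your example. You only replace the paper's ``both summands cannot be at most $1$'' argument by the explicit minimizer. The $\approx$ in (iv)--(v) becomes rigorous via $(a+b)^{p-1}\le 2^{p-1}$, i.e.\ $\Dis_p(L)\ge 2^{1-p}q(1-q)$.

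\textbf{The gap.} The sufficiency direction for $0<p<1$ is left open, and it cannot be closed, because the statement is false there. You are right that Lemma \ref{MonotonicityLemma} needs subadditivity, which Lemma \ref{DispersionSubLinear} only provides for $p\ge 1$. The paper's proof invokes Lemma \ref{MonotonicityLemma} for all $p\le 1$ anyway, after your step (c), which is exactly its computation. So the paper's proof only covers $p=1$.

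\textbf{Counterexample for every $p\in(0,1)$.} Let $A,B$ be disjoint events with $\mathbb{P}(A)=q<\tfrac12$ and $\mathbb{P}(B)=s\le\tfrac12-q$. Put $L_2=\mathbbm{1}_A$ and $L_1=L_2-b\mathbbm{1}_B\le L_2$ with $b\in(0,1)$. For $p<1$, the map $z\mapsto\mathbb{E}|L-z|^p$ is concave between consecutive atoms of a discrete $L$ and monotone outside its range, so its minimum is attained at an atom. Checking the atoms gives $\Dis_p(L_2)=q$ and $\Dis_p(L_1)=q+sb^p$, both attained at $z=0$. By convexity of $x\mapsto x^{1/p}$,
\[
\rho(L_1)-\rho(L_2)=-sb+(q+sb^p)^{1/p}-q^{1/p}\ge sb^p\Bigl(\tfrac1p\, q^{1/p-1}-b^{1-p}\Bigr)>0
\]
for $b$ small. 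Concretely, $p=\tfrac12$, $q=s=0.1$, $b=0.01$ gives $\rho(L_2)=0.1+0.1^2=0.11<0.1111=0.099+0.11^2=\rho(L_1)$.

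\textbf{Why it fails.} Lowering mass $s$ at the minimizing atom by $b$ costs only $sb$ in the mean but adds $sb^p\gg sb$ to $\Dis_p$. When $\Dis_p(L_2)>0$, the outer power $1/p$ does not damp this increase. It does damp it when $L_2$ is constant, which is why your reduced case (c) works. For the same reason your plan in (f) must fail: the one-sided derivative of $t\mapsto\Dis_p^{1/p}(L_2-tD)$ at $t=0$ is $+\infty$ here.

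\textbf{What is true.} The correct statement is that $\rho$ is monotone if and only if $p=1$, and your remark on $p=1$ is right. One route is Lemma \ref{MonotonicityLemma} with the Minkowski subadditivity and your step (c). Alternatively, argue directly: $x\mapsto x+|x-z|$ is nondecreasing, so $\mathbb{E}[L]+\mathbb{E}|L-z|$ is monotone in $L$ for each $z$, and hence so is its minimum over $z$.
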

\begin{proof}
  First, assume $p\leq 1$. Let $L\leq 0$. By the definition of $\Dis_p$ and Jensen's inequality applied to concave functions, we derive
  \begin{align*}
  \operatorname{Dis}^{1/p}_p(L)= (\min_{z\in \R} \mathbb{E}[|L-z|^p])^{1/p} \leq \mathbb{E}[(-L)^p]^{1/p}\leq -\mathbb{E}[L]\, .
  \end{align*}
  Thus, $\mathbb{E}[L]+\operatorname{Dis}^{1/p}_p(L)\leq 0$. Monotonicity thus follows from Lemma \ref{MonotonicityLemma}. \\ 

  Next, assume $p>1$. Consider for $\epsilon\in (0,1)$ the random variable $L_\epsilon$ with $\mathbb{P}(L_\epsilon=-1/\epsilon)=\epsilon$ and $\mathbb{P}(L_\epsilon=0)=1-\epsilon$. Obviously, $\mathbb{E}[L]=-1$ and for $-z_\epsilon\in (-1/\epsilon,0)$, we have
  \begin{align*}
    \mathbb{E}[|L_\epsilon+z_\epsilon|^p]^{1/p} =\left((1-\epsilon)z_\epsilon^p+\epsilon \left(\frac{1}{\epsilon}-z_\epsilon\right)^p\right)^{1/p} \, .
  \end{align*}
  In order for this term to be smaller than $-\mathbb{E}[L]$, both $(1-\epsilon)z_\epsilon^p$ and $\epsilon \left(\frac{1}{\epsilon}-z_\epsilon\right)^p$ have to be smaller than $1$. The former condition leads to $z_\epsilon\leq ((1/(1-\epsilon))^{1/p}\xrightarrow{\epsilon \to 0} 1$ and the latter condition leads to $z_\epsilon \geq \frac{1}{\epsilon}- \left(\frac{1}{\epsilon}\right)^{1/p}\xrightarrow{\epsilon\to 0 } \infty$ since $p>1$. Thus, both conditions cannot be fulfilled for a sufficiently small $\epsilon$ where we then have $\mathbb{E}[L_\epsilon]+\Dis_p^{1/p}(L_\epsilon)\geq 0$. The assertion then follows again from Lemma \ref{MonotonicityLemma}.
\end{proof}
From Lemmas \ref{DispersionSubLinear} and \ref{DisperionMonotone}, we conclude that the risk measure $\mathbb{E}[\cdot]+ \Dis_p^{1/p}(\cdot)$ is only coherent for $p=1$. Still, one might also consider this risk measure for $p>1$ for the reasons mentioned before. However, one needs to remember that this risk measure might assign high risks to loss variables with heavy left tails. 

\subsection{E-processes based on backtest e-statistics}\label{sec33}
Our next goal is to construct e-processes from backtest e-statistics. For this purpose, let $M$ be a set of distribution functions, $\mathcal{P}\subset M$, $\psi=(\rho,\phi):\mathcal{M}\to \R\times \R^{d-1}$, and let $e:\R\times \psi(\mathcal{M})\to [0,\infty]$ be a $\mathcal{P}$-one-sided e-statistic for $(\rho,\phi)$. Now, assume there exists a stochastic process $(X_n)_{n\in T}$ for a discrete $T$ adapted to a filtration $(\mathcal{F}_n)_{n\in T}$. We may think of $(X_n)_{n\in T}$ as a sequence of realized losses. Furthermore, assume that we want to test the null hypothesis:
\begin{align*}
  H_0: (X_n|\mathcal{F}_{n-1})\sim F_n\in \mathcal{P} \textrm{ and } \rho(F_n)\leq r_n \textrm{ and } \phi(F_n)= z_n \quad \forall n\in \N,
\end{align*}
where $r_n$ and $z_n$ are forecasts for $\rho(F_n)$ and $\phi(F_n)$ which are predictable, i.e. $\mathcal{F}_{n-1}$-measurable. If $\rho$ represents a risk measure, the null hypothesis, in essence, means that the risk forecast $r_n$ never underestimates the true risk and the additional forecast $z_n$ of the auxiliary information $\phi$ is always correct.

After stating our null hypothesis, we construct our e-process by means of the following lemma.
\begin{lemma}\label{ProcessisMartingale}
  Under the conditions described before, let $Y_n:=e(X_n,r_n,z_n)$ for $n\in T$ and let $\mathbf{\lambda}=(\lambda_n)_{n\in \N}$ be a predictable process. Then the process $(M_n)_{n\in T}$ defined by $M_0=1$ and
  \begin{align*}
    M_{n+1} \equiv M_{n+1}(\mathbf{\lambda})=(1-\lambda_{n+1}+\lambda_{n+1} Y_{n+1})M_n(\mathbf{\lambda}) \qquad \forall n \in T\backslash\{0\}\, ,
  \end{align*}
  is a test supermartingale for $H_0$ and thus an e-process.
\end{lemma}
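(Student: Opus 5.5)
We will verify the defining properties of a test supermartingale directly and then invoke the lemma from Section 7.3 of \textcite{RamdasBuch} quoted above, which states that every test supermartingale is an e-process. We fix an arbitrary data-generating law in $H_0$, so that for every $n$ the conditional law of $X_n$ given $\mathcal{F}_{n-1}$ is some $F_n\in\mathcal{P}$ with $\rho(F_n)\leq r_n$ and $\phi(F_n)=z_n$. We must also pin down the range of $\lambda$. The statement only says ``predictable'', but non-negativity requires $\lambda_n\in[0,1]$, because $Y_n$ may vanish. We will therefore read the lemma with $\lambda_n$ taking values in $[0,1]$, as is standard in the betting construction. If $\lambda_n$ were allowed to exceed $1$, non-negativity could fail whenever $Y_n=0$, so this restriction is necessary and we will state it explicitly.

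The first step covers adaptedness, non-negativity and the initial value. Adaptedness follows by induction. $M_n$ is a product of the factors $1-\lambda_k+\lambda_k Y_k$ for $k\leq n$. Each $\lambda_k$ is $\mathcal{F}_{k-1}$-measurable, and $Y_k=e(X_k,r_k,z_k)$ is $\mathcal{F}_k$-measurable because $e$ is measurable, $X_k$ is adapted, and $r_k,z_k$ are predictable. Non-negativity holds because each factor is a convex combination of $1$ and $Y_k\in[0,\infty]$, so it is non-negative. Finally, $M_0=1$ gives $\mathbb{E}[M_0]=1\leq 1$.

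The second step, which is the core, is the supermartingale property. We compute
\begin{align*}
\mathbb{E}[M_{n+1}\mid\mathcal{F}_n]=M_n\bigl(1-\lambda_{n+1}+\lambda_{n+1}\mathbb{E}[Y_{n+1}\mid\mathcal{F}_n]\bigr),
\end{align*}
pulling out $M_n$ and $\lambda_{n+1}$ since both are $\mathcal{F}_n$-measurable. Because $r_{n+1}$ and $z_{n+1}$ are $\mathcal{F}_n$-measurable, the conditional expectation of $Y_{n+1}$ equals $\int_\R e(x,r_{n+1},z_{n+1})\,dF_{n+1}(x)$ evaluated at these frozen values. Under $H_0$ we have $z_{n+1}=\phi(F_{n+1})$ and $r_{n+1}\geq\rho(F_{n+1})$, so Definition \ref{BacktestESTatistic}(ii) bounds this integral by $1$. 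Since $\lambda_{n+1}\geq 0$, it follows that $\mathbb{E}[M_{n+1}\mid\mathcal{F}_n]\leq M_n(1-\lambda_{n+1}+\lambda_{n+1})=M_n$.

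The main technical obstacle is justifying this conditional step, namely replacing the conditional expectation of a measurable function of $(X_{n+1},r_{n+1},z_{n+1})$ by an integral against the regular conditional distribution $F_{n+1}$ with $\mathcal{F}_n$-measurable arguments held fixed. This is the standard disintegration or freezing lemma for non-negative measurable functions, and we will cite it. Non-negativity lets us avoid integrability issues, because conditional expectations of non-negative variables are always defined in $[0,\infty]$. It also lets us multiply by $M_n$ even when $M_n$ is not integrable a priori; the inequality $\mathbb{E}[M_n]\leq 1$ then follows by induction. Together these steps show that $M$ is a test supermartingale for $H_0$, and the cited lemma yields that it is an e-process.
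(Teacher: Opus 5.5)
Your proof is correct and follows the same route as the paper's. Both arguments use non-negativity of the factors, pull out the $\mathcal{F}_n$-measurable terms $M_n$ and $\lambda_{n+1}$, and bound $\mathbb{E}[Y_{n+1}\mid\mathcal{F}_n]$ by $1$ via the one-sided e-statistic property (the paper phrases this step through Corollary \ref{EstatEvalue}), while your explicit restriction $\lambda_n\in[0,1]$, the freezing-lemma step and the inductive bound $\mathbb{E}[M_n]\leq 1$ supply details the paper leaves implicit.
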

\begin{proof}
Since $e$ is a $\mathcal{P}$-one-sided e-statistic for $(\rho,\phi)$, we have by Corollary \ref{EstatEvalue} that, for all $n\in T$, $Y_{n+1}|\mathcal{F}_n$ is an e-variable for the null hypothesis 
\begin{align*}
  Y_{n+1}|\mathcal{F}_{n}\sim F_{n+1} \in \mathcal{P}, \rho(F_{n+1})\leq r_{n+1} \textrm{ and } \phi(F_{n+1})=z_{n+1},
\end{align*}
which always holds under $H_0$. Thus, for all $n\in T$, we have $Y_n\geq 0$, which also implies $M_n\geq 0$ via induction. Further, notice that for all $\mathbb{P}\in H_0$ and $n\in T$ the relationship 
\begin{align*}
 \mathbb{E}^\mathbb{P}[M_{n+1}|\mathcal{F}_{n}]=(1-\lambda_{n+1}+\lambda_{n+1} \mathbb{E}^\mathbb{P}[Y_{n+1}|\mathcal{F}_{n}]) M_{n}\leq M_{n}
\end{align*}
holds true, since $Y_{n+1}|\mathcal{F}_{n}$ is an e-variable under $H_0$. Therefore, $(M_n)_n$ is a supermartingale under $\mathbb{P}$. Lastly, we have $M_0=1$ by definition.
\end{proof}
A possible backtesting procedure can then be described as follows: For a fixed level $\alpha$, we observe the test supermartingale $(M_n)_{n\in \N}$ and reject $H_0$ as soon as some $M_n$ exceeds the value $1 / \alpha$. By Ville's inequality (Theorem \ref{VilleUngleichung}), this test procedure always keeps the type-1 error probability below $\alpha$.
\begin{bemerkung}
  Since Ville's inequality holds for arbitrary stopping rules, one can also observe the process $(M_n)_{n\in \N}$ up to an arbitrary stopping time $\tau$ and report the ``observed significance level''
  \begin{align*}
    \alpha_{\text{obs}}:=\min\left\{1,\frac{1}{\max_{1\leq n\leq \tau}\{ M_n\}}\right\}.
  \end{align*}
  Rejecting $H_0$ then has a type-1 error probability of at most $\alpha_{\text{obs}}$. This remarkable property of $e$-processes is also being referred to as them being \text{anytime valid}.
\end{bemerkung}

In practice, applying a threshold of $1 / \alpha$ often leads to the resulting sequential level-$\alpha$ tests becoming very conservative. Therefore, other thresholds are also considered. This issue is discussed in more detail in Section \ref{sec6}.

\section{The betting process}\label{sec4}
In the previous section, we proposed the e-process $(M_n)_{n\in \N}$ for backtesting risk measures. This process depends on the betting process $(\lambda_n)_{n\in \N}$. As argued in the previous section, for all $\alpha\in(0,1)$, a sequential level-$\alpha$ test based on the e-process $(M_n)_{n\in \N}$ will always keep the type-1 error probability of the test below $\alpha$ if the significance threshold  $1 / \alpha$ is applied. Thus, we can choose the betting process in order to maximize the power of the resulting test. For this, we aim at optimizing the e-power of the e-variables $(1-\lambda_{n}+\lambda_{n} e(x_n,r_n,z_n))$ for all time steps $n\in\N$ under a probability distribution $Q$ belonging to the alternative. First, we note that for a backtest e-statistic $e$, if $X_n\sim F$ and $\rho(F)>r$, we can always find a  $\lambda\in [0,1]$ such that $(1-\lambda_{n}+\lambda_{n} e(x_n,r_n,z_n))$ has positive e-power by means of the following very general lemma.
\begin{lemma}\label{PosEPower}
  Let $E$ be any real-valued non-negative random variable. Then, it holds
  \begin{align*}
    \mathbb{E}[E]>1 \iff \exists \lambda\in (0,1]:\mathbb{E}[\log(1-\lambda+\lambda E)]>0\, .
  \end{align*}
\end{lemma}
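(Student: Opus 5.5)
The plan is to study the function $g(\lambda):=\mathbb{E}[\log(1-\lambda+\lambda E)]$ on $[0,1]$ and prove the two directions separately.

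\emph{($\Leftarrow$).} This direction is Jensen's inequality. Suppose $\mathbb{E}[\log(1-\lambda+\lambda E)]>0$ for some $\lambda\in(0,1]$. Since $\log$ is concave,
\begin{align*}
0<\mathbb{E}[\log(1-\lambda+\lambda E)]\leq \log(1-\lambda+\lambda\mathbb{E}[E]).
\end{align*}
Hence $\lambda(\mathbb{E}[E]-1)>0$, and so $\mathbb{E}[E]>1$. If $\mathbb{E}[E]=\infty$ there is nothing to prove. Integrability is not a problem: the positive part of $\log(1-\lambda+\lambda E)$ is bounded by $\lambda E$, so the expectation is well defined in $[-\infty,\infty)$ whenever $\mathbb{E}[E]<\infty$.

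\emph{($\Rightarrow$).} Here the strategy is to show that $g(0)=0$ and that the right derivative of $g$ at $0$ is positive. For $\lambda\in(0,1)$, write
\begin{align*}
\frac{g(\lambda)}{\lambda}=\mathbb{E}\left[\frac{\log(1+\lambda(E-1))}{\lambda}\right].
\end{align*}
Pointwise, the integrand increases to $E-1$ as $\lambda\downarrow 0$: the map $\lambda\mapsto \log(1+\lambda u)/\lambda$ is nonincreasing in $\lambda$ by concavity, for every $u\geq -1$. Moreover, for $\lambda\leq 1/2$ the integrand is bounded below by $\log(1-\lambda)/\lambda\geq -2\log 2$, since $E\geq 0$. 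The monotone convergence theorem therefore applies to the integrands shifted by the constant $2\log 2$. It gives $g(\lambda)/\lambda\to \mathbb{E}[E]-1>0$, with the limit possibly $+\infty$. Hence $g(\lambda)>0$ for all sufficiently small $\lambda>0$, which yields the required $\lambda\in(0,1]$.

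The main obstacle is justifying this limit exchange: $E$ may be unbounded or have infinite mean, and the integrand can approach $-\infty$ as $\lambda\to 1$ on $\{E=0\}$. Restricting to $\lambda\leq 1/2$ and using the monotonicity argument handles both issues. The infinite-mean case then causes no trouble, because $g(\lambda)=+\infty>0$ for every $\lambda\in(0,1]$ directly.
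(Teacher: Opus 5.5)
Your proof is correct. The Jensen direction is the same as the paper's, but your forward direction takes a genuinely different route.

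The paper truncates. It uses monotone convergence only to find $N$ with $\mathbb{E}[Y_N]>1$ for $Y_N=\min\{E,N\}$. It then chooses $\epsilon\in(0,1)$ with $\mathbb{E}[(Y_N-1)_+]/(1+\epsilon)-\mathbb{E}[(Y_N-1)_-]/(1-\epsilon)>0$ and invokes the two-slope minorant $\log(1+x)\geq x_+/(1+\epsilon)-x_-/(1-\epsilon)$ on $(-\epsilon,\epsilon)$. Taking $\lambda<\epsilon/N$ keeps $\lambda(Y_N-1)$ in that window, and it returns from $Y_N$ to $E$ by monotonicity of $\log$.

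You instead compute the right derivative $g'(0^+)=\mathbb{E}[E]-1$ directly, by monotone convergence of the chord slopes $\log(1+\lambda(E-1))/\lambda$. The uniform bound $-2\log 2$ on $(0,1/2]$ supplies the lower bound that monotone convergence needs. Your route is shorter and avoids the truncation and the $\epsilon$-bookkeeping. It covers $\mathbb{E}[E]=\infty$ in the same stroke, and it isolates $\mathbb{E}[E]-1$ as the initial slope of the growth rate, which is the conceptual content of the lemma. The paper's route is more explicit: it exhibits a concrete admissible range $\lambda\in(0,\epsilon/N)$, and its key estimate involves only a bounded random variable, so no extended-valued expectations appear.

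You should delete your final sentence: when $\mathbb{E}[E]=\infty$, it is not true that $g(\lambda)=+\infty$ for every $\lambda\in(0,1]$. If $\mathbb{P}(E>x)=x^{-1/2}$ for $x\geq 1$, then $\mathbb{E}[E]=\infty$. Yet $0\leq\log(1-\lambda+\lambda E)\leq\log E$ and $\mathbb{E}[\log E]=2$, so every $g(\lambda)$ is finite. Also, whenever $\mathbb{P}(E=0)>0$, $g(1)$ is $-\infty$ or undefined. Nothing in your argument depends on that sentence, because your monotone convergence step already gives $g(\lambda)/\lambda\to+\infty$ in the infinite-mean case.
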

\begin{proof}
Assume first that $0<\mathbb{E}[\log((1-\lambda)+\lambda E)]$. Since $\log$ is concave, Jensen's inequality yields that $0<\mathbb{E}[\log((1-\lambda)+\lambda E)] \leq\log(1-\lambda +\lambda \mathbb{E}[E])$, which implies $1 <\mathbb{E}[E]$.

Now, assume $\mathbb{E}[E]>1$. Let $Y_n:=\min\{E,n\}$ for $n\in \N$. Then $(Y_n)_n$ is a monotonically increasing sequence of random variables with existing first moments pointwise converging to $E$. By monotone convergence, this implies $\mathbb{E}[Y_n] \to \mathbb{E}[E]>1$. Thus, we find an $N\in \N$ such that $\mathbb{E}[Y_N]>1$. Next, note that
\begin{align*}
  0<\mathbb{E}[Y_N-1]=\mathbb{E}[(Y_N-1)_+]-\mathbb{E}[(Y_N-1)_-] \, .
\end{align*}
Since this inequality is strict, there exists an $\epsilon\in (0,1)$ such that
\begin{align}
  0<\frac{\mathbb{E}[(Y_N-1)_+]}{1+\epsilon}-\frac{\mathbb{E}[(Y_N-1)_-]}{1-\epsilon} \, .\label{GleichungLem1.1}
\end{align}
Since, for $x\in [0,\epsilon)$, we have
\[
\frac{d}{dx}\log(1+x)=\frac{1}{1+x}>\frac{1}{1+\epsilon}=\frac{d}{dx} \frac{x}{1+\epsilon}
\]
and $\log(1)=0= 0 / (1+\epsilon)$, it holds $\log(1+x)\geq x / (1+\epsilon)$ for $x\in [0,\epsilon)$ and similarly $\log(1+x)\geq x / (1-\epsilon)$ for $x\in (-\epsilon,0]$. This implies
\begin{align}
  \log(1+x)\geq \frac{x_+}{1+\epsilon}-\frac{x_-}{1-\epsilon} \qquad \forall x \in (-\epsilon,\epsilon) \, .\label{GleichungLem1.2}
\end{align}
Now, let $\lambda \in (0,\epsilon/N)$, which yields $\lambda(Y_N-1)\in (-\epsilon,\epsilon)$. Then, we have by inequalities \eqref{GleichungLem1.2} and \eqref{GleichungLem1.1}, that
\begin{align*}
  \mathbb{E}[\log(1-\lambda+\lambda E)]&\geq \mathbb{E}[\log(1-\lambda+\lambda Y_N)] \\ 
  &= \mathbb{E}[\log(1+\lambda (Y_N-1)] \\ 
  &\geq \frac{\mathbb{E}[(Y_N-1)_+]}{1+\epsilon}-\frac{\mathbb{E}[(Y_N-1)_-]}{1-\epsilon}>0,
\end{align*}
completing the proof.
\end{proof}
\subsection{Betting processes based on growth rates}
 The equivalence  established in Lemma \ref{PosEPower} directly leads to the growth-rate optimal (GRO) criterion proposed by  \textcite{7852039d69554002af6904518991286e}.
\begin{definition}\label{GRO}
  The growth-rate optimal (GRO) betting process $\lambda^{\text{GRO}}:=\left(\lambda_n^\text{GRO}\right)_{n\in \N}$ for $(M_n)_{n\in \N}$ and a sequence $(Q_n)_{n\in \N}$ of probability distributions for $(X_n|\mathcal{F}_{n-1})_{n\in\N}$ under the alternative is given by
  \begin{align*}
  \forall n\in \N: \lambda^{\text{GRO}}_n(r_n,z_n):= \argmax_{\lambda \in [0,\gamma]}\mathbb{E}^{Q_n}[\log(1-\lambda+\lambda e(X_n,r_n,z_n))], 
  \end{align*} 
  for fixed $\gamma\in (0,1)$.
\end{definition}

\begin{bemerkung} $ $
\begin{itemize}    
\item[(i)] Notice that the function $\mathbb{E}^{Q_n}[\log(1-\lambda+\lambda e(X_n,r_n,z_n))]$ is concave in $\lambda$. Hence, the solution to the optimization problem in Definition \ref{GRO} can be calculated by a convex program. 
\item[(ii)] The purpose of $\gamma$ in Definition \ref{GRO} is to avoid the e-process $(M_n)_{n\in \N}$ becoming (nearly)  $0$. This problem does not affect \text{GRO} as much in theory, as the next lemma will show. However, it does affect all approximations of it.
\end{itemize}
\end{bemerkung}

\begin{lemma}\label{GROnieNull}
Let $n\in \N$, $Q_n$ be a probability distribution for $X_n|\mathcal{F}_{n-1}$ and $(r_n,z_n)\in \psi(\mathcal{M})$ such that $Q_n(e(X_n,r_n,z_n)=0)\geq \epsilon \in (0,1)$. Then, we have $\lambda^{\text{GRO}}_n(r_n,z_n)\leq 1-\epsilon$.
\end{lemma}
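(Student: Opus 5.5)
We have to show that the maximizer of $g(\lambda):=\mathbb{E}^{Q_n}[\log(1-\lambda+\lambda E)]$ over $[0,\gamma]$, where $E:=e(X_n,r_n,z_n)\geq 0$, never exceeds $1-\epsilon$. If $\gamma\leq 1-\epsilon$ there is nothing to prove, so assume $\gamma>1-\epsilon$. The plan is a derivative argument: $g$ is concave on $[0,\gamma]$ (as noted in the remark after Definition \ref{GRO}), so it suffices to show that the right derivative of $g$ is negative on $(1-\epsilon,\gamma)$. Then $g$ is strictly decreasing there, and the argmax lies in $[0,1-\epsilon]$.

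First compute the derivative. For $\lambda\in[0,\gamma]$ we have $1-\lambda+\lambda E\geq 1-\gamma>0$, so the logarithm is finite from below. Pointwise,
\begin{align*}
\frac{\partial}{\partial\lambda}\log(1-\lambda+\lambda E)=\frac{E-1}{1-\lambda+\lambda E}.
\end{align*}
Split according to whether $E=0$ or $E>0$. On $\{E=0\}$ the term equals $-1/(1-\lambda)$. On $\{E>0\}$ the map $E\mapsto (E-1)/(1-\lambda+\lambda E)$ is increasing with supremum $1/\lambda$ as $E\to\infty$, so the term is at most $1/\lambda$. Hence, heuristically,
\begin{align*}
g'(\lambda)\leq -\frac{\epsilon}{1-\lambda}+\frac{1-\epsilon}{\lambda}=\frac{-\epsilon\lambda+(1-\epsilon)(1-\lambda)}{\lambda(1-\lambda)}=\frac{(1-\epsilon)-\lambda}{\lambda(1-\lambda)},
\end{align*}
using $Q_n(E=0)\geq\epsilon$. (If $Q_n(E=0)=\epsilon'>\epsilon$, the same bound holds with $\epsilon'$, which is even better.) This bound is negative for $\lambda>1-\epsilon$, which gives the claim.

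The main obstacle is justifying the interchange of differentiation and expectation, since $E$ need not be integrable. I would avoid derivatives altogether by using difference quotients. For $1-\epsilon\leq\lambda_1<\lambda_2\leq\gamma$, the pointwise difference $\log(1-\lambda_2+\lambda_2E)-\log(1-\lambda_1+\lambda_1E)$ is bounded above by $(\lambda_2-\lambda_1)$ times the supremum of the derivative over $[\lambda_1,\lambda_2]$, by the mean value theorem. On $\{E>0\}$ this gives at most $\log(\lambda_2/\lambda_1)$, and on $\{E=0\}$ it equals $\log\frac{1-\lambda_2}{1-\lambda_1}$. Taking expectations (all terms are bounded above, so this is legitimate even if $g$ takes the value $+\infty$; if $g\equiv+\infty$ the argmax convention must be handled separately, but this is excluded when $\mathbb{E}[\log(1+E)]<\infty$) yields
\begin{align*}
g(\lambda_2)-g(\lambda_1)\leq \epsilon\log\frac{1-\lambda_2}{1-\lambda_1}+(1-\epsilon)\log\frac{\lambda_2}{\lambda_1}.
\end{align*}
The function $h(\lambda)=\epsilon\log(1-\lambda)+(1-\epsilon)\log\lambda$ is strictly concave and maximized exactly at $\lambda=1-\epsilon$, so it is strictly decreasing on $[1-\epsilon,1)$. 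The right-hand side is therefore negative, hence $g(\lambda_2)<g(\lambda_1)$, and every maximizer satisfies $\lambda^{\text{GRO}}_n\leq 1-\epsilon$.
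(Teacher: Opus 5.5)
Your proof is correct, but it takes a different route from the paper. The paper argues by comparison. It asserts, without proof, that $\lambda^{\text{GRO}}_n$ is increasing in the e-variable. It then replaces $e(X_n,r_n,z_n)$ by a two-point variable with mass $\epsilon$ at $0$ and mass $1-\epsilon$ at $K+1$, and maximizes $\epsilon\log(1-\lambda)+(1-\epsilon)\log(1+\lambda K)$ explicitly to get $\bigl(K(1-\epsilon)-\epsilon\bigr)/K$. Finally it lets $K\to\infty$.

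You instead bound the increments $\log(1-\lambda_2+\lambda_2E)-\log(1-\lambda_1+\lambda_1E)$ uniformly in $E\geq 0$. This gives directly the limiting surrogate $\epsilon\log(1-\lambda)+(1-\epsilon)\log\lambda$, which is, up to an additive constant, the $K\to\infty$ limit of the paper's surrogate. From it you get strict decrease of $g$ on $[1-\epsilon,\gamma]$.

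Your route needs no unproved monotonicity of the argmax in the distribution and no truncation-and-limit step; the paper's argument only literally covers bounded $e$. It also handles unbounded and even non-integrable $E$, and it flags the degenerate case $g\equiv+\infty$ on $(0,\gamma]$, which the statement tacitly excludes. It shows that every maximizer, not just a selected one, lies in $[0,1-\epsilon]$.

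The paper's route gives the sharper explicit bound $\bigl(K(1-\epsilon)-\epsilon\bigr)/K$ when $e\leq K+1$ is known. Your method recovers this too: since the ratio $(1-\lambda_2+\lambda_2E)/(1-\lambda_1+\lambda_1E)$ is increasing in $E$, on $\{E>0\}$ you can use $(1+\lambda_2K)/(1+\lambda_1K)$ in place of $\lambda_2/\lambda_1$.

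One small wording slip. The bound $\log(1-\lambda_2+\lambda_2E)-\log(1-\lambda_1+\lambda_1E)\leq\log(\lambda_2/\lambda_1)$ does not follow from the mean value theorem with the supremum of the derivative, which only gives $(\lambda_2-\lambda_1)/\lambda_1$. It follows instead by integrating the pointwise bound $(E-1)/(1-\lambda+\lambda E)<1/\lambda$ over $[\lambda_1,\lambda_2]$, or directly from $\lambda_1(1-\lambda_2+\lambda_2E)\leq\lambda_2(1-\lambda_1+\lambda_1E)$.

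Even the cruder bound would suffice. Using $\log(1-x)<-x$, the right-hand side is then strictly less than $(\lambda_2-\lambda_1)\bigl((1-\epsilon)/\lambda_1-\epsilon/(1-\lambda_1)\bigr)\leq 0$ for $\lambda_1\geq 1-\epsilon$.
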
 
\begin{proof}
First, notice that $\lambda^{\text{GRO}}_n(r_n,z_n)$ is increasing w.r.t. the random variable $e(X_n,r_n,z_n)$. Now, suppose $e(X_n,r_n,z_n)$ is upper bounded by $K+1$ for some $K>\epsilon / (1-\epsilon)$. Then, we have
\begin{align*}
\lambda^{\text{GRO}}_n(r_n,z_n)\leq \argmax_{\lambda \in [0,\gamma]} \epsilon \log(1-\lambda)+(1-\epsilon)\log(1+\lambda K).
\end{align*}
Setting the derivative of the previous expression to $0$ thus yields
\begin{align*}
 \lambda^{\textbf{GRO}}_n(r_n,z_n)\leq \min\left\{\gamma, \frac{K(1-\epsilon)-\epsilon}{K}\right\} \leq \frac{K(1-\epsilon)-\epsilon}{K} \xrightarrow{K\to \infty}1-\epsilon.
\end{align*}
\end{proof}
Thus, when backtesting the expected shortfall, the e-process $(M_n)_{n\in \N}$ will almost surely not become zero when using the GRO betting process under an alternative, where the probability of observing a loss below the estimated value-at-risk is not $0$. We can even precisely specify the conditions under which the GRO betting process becomes $0$ or $1$ by means of the following lemma.
\begin{lemma}
Let $n\in \N$, $Q_n$ be a probability distribution for $X_n|\mathcal{F}_{n-1}$ and $(r_n,z_n)\in \psi(\mathcal{M})$. Then, we have
\begin{enumerate}[(i)]
  \item 
  $\lambda_n^\text{GRO}(r_n,z_n)=0\iff \mathbb{E}^{Q_n}[e(X_n,r_n,z_n)]\leq 1$. 
  \item  
  If $\gamma=1$ in Definition \ref{GRO} and $\mathbb{E}^{Q_n}[1/e(X_n,r_n,z_n)]$ exists and is finite, then  $\lambda_n^\text{GRO}(r_n,z_n)=1\iff \mathbb{E}^{Q_n}[1/e(X_n,r_n,z_n)]\leq 1$.
\end{enumerate}
\end{lemma}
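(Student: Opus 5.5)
Write $E:=e(X_n,r_n,z_n)\geq 0$ under $Q_n$ and $g(\lambda):=\mathbb{E}^{Q_n}[\log(1-\lambda+\lambda E)]$ on $[0,\gamma]$. The plan is to use concavity of $g$: a concave function on an interval is maximized at the left endpoint $0$ if and only if its right derivative at $0$ is $\leq 0$. It is maximized at the right endpoint $1$ if and only if its left derivative at $1$ is $\geq 0$. So both parts reduce to computing one-sided derivatives and justifying differentiation under the expectation. Concavity of $g$ follows pointwise, since $\lambda\mapsto\log(1-\lambda+\lambda E)$ is the logarithm of an affine function; this is already noted in the remark after Definition~\ref{GRO}.

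For (i), I would compute the right derivative at $0$. The difference quotient $\frac{1}{\lambda}\log(1+\lambda(E-1))$ increases as $\lambda\downarrow 0$ (by concavity) to $E-1$. It is bounded below by its value at a fixed $\lambda_0$, which has a negative part bounded by $\frac{1}{\lambda_0}|\log(1-\lambda_0)|$. Monotone convergence then gives $g'(0^+)=\mathbb{E}[E]-1$, even when $\mathbb{E}[E]=\infty$. Concavity gives $\lambda^{\text{GRO}}_n=0$ iff $g'(0^+)\leq 0$ iff $\mathbb{E}[E]\le 1$. The direction ``$\mathbb{E}[E]>1\Rightarrow$ positive maximum'' is also exactly Lemma~\ref{PosEPower}, which I would cite as an alternative. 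One convention should be fixed explicitly: the argmax is a singleton, or is taken as the smallest maximizer. When $\mathbb{E}[E]\le 1$, Jensen gives $g(\lambda)\le\log(1-\lambda+\lambda\mathbb{E}[E])\le 0=g(0)$, so $0$ is a maximizer.

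For (ii), take $\gamma=1$ and assume $\mathbb{E}[1/E]<\infty$, which forces $E>0$ a.s., so $g(1)=\mathbb{E}[\log E]$ is well defined. The left derivative at $1$ comes from the quotients $\frac{\log E-\log(1-\lambda+\lambda E)}{1-\lambda}$ as $\lambda\uparrow1$. Pointwise these converge to $\frac{E-1}{E}=1-1/E$, the derivative of the logarithm at $E$, and by concavity they are monotone (decreasing) in $\lambda$. I would dominate them using the chord inequality for concave functions. This bounds each quotient between $(E-1)/E$ and $(E-1)/(1-\lambda_0+\lambda_0E)$ for a fixed $\lambda_0<1$. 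The first bound is integrable by the assumption on $\mathbb{E}[1/E]$, and the second is bounded above by a constant. Monotone convergence then yields $g'(1^-)=1-\mathbb{E}[1/E]$. By concavity, $1$ is the maximizer iff $g'(1^-)\ge 0$ iff $\mathbb{E}[1/E]\le1$.

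The main obstacle I expect is the integrability bookkeeping in (ii). One must check that $g(1)>-\infty$, i.e.\ that $\mathbb{E}[\log E]$ is finite from below; this follows from $\log E\ge 1-1/E$. One must also handle ties in the argmax, and possibly the case $g\equiv$ constant. I would treat these by the same smallest- or unique-maximizer convention as in (i), and I would note that strict concavity holds unless $E=1$ a.s.
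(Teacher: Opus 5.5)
Your proposal is correct and is essentially the paper's argument. For (ii) the paper also lets $\lambda\uparrow 1$ in the difference quotient $(f(1)-f(\lambda))/(1-\lambda)$ via dominated convergence, bounding the integrand by $(Y-1)/(1-\lambda+\lambda Y)$ with mean-value bounds on $\log$ where you use chord slopes, and it proves the converse with $\log x\le x-1$, which is your tangent-line inequality at $\lambda=1$; for (i) it combines $\max_\lambda g(\lambda)\ge g(0)=0$ with Lemma~\ref{PosEPower}, which your monotone-convergence computation $g'(0^+)=\mathbb{E}[E]-1$ replaces, and your direct observation that $\mathbb{E}[1/E]<\infty$ forces $E>0$ a.s.\ (the paper invokes Lemma~\ref{GROnieNull} here) and your tie-breaking convention are tidier than the paper's treatment.
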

\begin{proof}
Throughout the proof, let $Y_n:=e(X_n,r_n,z_n)$.

For showing Assertion (i), notice that,  
    since $ \max_{\lambda\in [0,1]}\mathbb{E}^{Q_n}[\log(1-\lambda+\lambda Y_n)]\geq 0=\mathbb{E}^{Q_n}[\log(1-0+0 Y_n)]$ always, $\lambda_n^\text{GRO}(r_n,z_n)=0$ is equivalent to no $\lambda\in [0,\gamma]$  existing such that $\mathbb{E}^{Q_n}[\log(1-\lambda+\lambda Y_n)]>0$. By Lemma \ref{PosEPower} this is equivalent to $\mathbb{E}^{Q_n}[Y_n]\leq 1$.
    
For showing Assertion (ii), assume first that
    $\lambda_n^\text{GRO}(r_n,z_n)=1$. By Lemma \ref{GROnieNull}, we then have $Q_n(Y_n=0)=0$. Now, consider the function $f:[0,1]\to \R, \lambda\to \mathbb{E}^{Q_n}[\log(1-\lambda+\lambda Y_n)]$. Since $f$ is maximized for $\lambda=1$, we have for all $\lambda \in [0,1)$
    \begin{align*}
      0\leq \frac{f(1)-f(\lambda)}{1-\lambda}&=\mathbb{E}^{Q_n}\left[\frac{\log(Y_n)-\log(1-\lambda+\lambda Y_n)}{1-\lambda}\right]\\ 
      &=\mathbb{E}^{Q_n}\left[(Y_n-1)\frac{\log(Y_n)-\log(1-\lambda+\lambda Y_n)}{Y_n-(1-\lambda+\lambda Y_n)}\right] \, .
    \end{align*}
    By the intermediate value theorem, we have \[
    \frac{1}{a}\leq\frac{\log(a)-\log(b)}{a-b}\leq \frac{1}{b}
    \] for all $0<b<a$. We infer using  $Q_n(Y_n=0)=0$ that 
    \begin{align*}
     0\leq \mathbb{E}^{Q_n}\left[\frac{Y_n-1}{1-\lambda+\lambda Y_n}\right] \, .
    \end{align*}
  Now, notice that it holds for all $\lambda\in [0.5,1]$ and $y>0$ that $\frac{y-1}{y} \leq \frac{y-1}{1-\lambda+\lambda y}\leq \max\{0,2\frac{y-1}{y}\}$. Since $\mathbb{E}^{Q_n}[\frac{y-1}{y}]=\mathbb{E}^{Q_n}[1-\frac{1}{y}]\in \R$ exists by assumption, the dominated convergence theorem implies
  \begin{align*}
    0\leq \lim_{\lambda\to 1} \mathbb{E}^{Q_n}\left[\frac{Y_n-1}{1-\lambda+\lambda Y_n}\right] =\mathbb{E}^{Q_n}\left[\lim_{\lambda \to 1}\frac{Y_n-1}{1-\lambda+\lambda Y_n}\right] =\mathbb{E}^{Q_n}\left[\frac{Y_n-1}{Y_n}\right] \,  ,
  \end{align*}
  which directly yields $\mathbb{E}^{Q_n}\left[1 / Y_n\right]\leq \mathbb{E}^{Q_n}\left[1\right]=1$. 

To establish the reverse implication, assume 
$\mathbb{E}^{Q_n}[1/Y_n]\leq 1$. Then, we have for all $\lambda \in [0,1]$ that
  \begin{align*}
   \mathbb{E}^{Q_n}[\log(1-\lambda+\lambda Y_n)-\log(Y_n)]&=\mathbb{E}^{Q_n}\left[\log\left(\frac{1-\lambda}{Y_n}+\lambda\right)\right] \\ 
   &\leq \mathbb{E}^{Q_n}\left[\frac{1-\lambda}{Y_n}+\lambda-1\right] \\ 
   &=(1-\lambda)\left(\mathbb{E}^{Q_n}\left[\frac{1}{Y_n}\right]-1\right) \leq 0 \, .
  \end{align*}
  Therefore, the maximum of the function $f(\lambda)=\mathbb{E}^{Q_n}[\log(1-\lambda+\lambda Y_n)]$ is $\log(Y_n)$ which corresponds to $\lambda_n^\text{GRO}(r_n,z_n)=1$. 
\end{proof}

The obvious problem of the GRO betting process in practice is that one needs to specify the probability distribution $Q_n$ under the alterantive, which may often be infeasible. In the sequel, we thus introduce three alternative betting processes that try to approximate $Q_n$ in different ways. The first of these betting processes is the growth rate for empirical e-statistics (GREE) betting process, which tries to estimate $Q_n$ by the behaviour of the observed e-process up to time point $n-1$.
\begin{definition}\label{GREE}
 For each $n\in \N$, let $E_n$ be a random variable following the empirical distribution of $(e(X_m,r_m,z_m))_{1\leq m\leq n-1}$. Then, the growth rate for empirical e-statistics (GREE)  betting process 
 $\lambda^{\text{GREE}}:=\left(\lambda_n^\text{GREE}\right)_{n\in \N}$ for $(M_n)_{n\in \N}$ is given by
  \begin{align*}
  \lambda^{\text{GREE}}_n(r_n,z_n):&= \argmax_{\lambda \in [0,\gamma]}\mathbb{E}[\log(1-\lambda +\lambda E_n)]\\
  &= \argmax_{\lambda \in [0,\gamma]}\frac{1}{n-1}\sum_{m=1}^{n-1}\log(1-\lambda+\lambda e(X_m,r_m,z_m)), 
  \end{align*} 
  for fixed $\gamma\in (0,1)$. 
\end{definition}
Another approach to estimating $Q_n$ lies in observing the loss process $(X_n)_{n\in \N}$ up to a time point $n-1$. This approach leads to the growth rate for empirical losses (GREL) betting process, defined as follows.
\begin{definition}\label{GREL}
For each $n\in \N$, let $L_n$ be a random variable following the empirical distribution of $(X_m)_{1\leq m\leq n-1}$. Then, the growth rate for empirical losses (GREL)  betting process $\lambda^{\text{GREL}}:=\left(\lambda_n^\text{GREL}\right)_{n\in \N}$ for $(M_n)_{n\in \N}$ is given by
  \begin{align*}
  \lambda^{\text{GREL}}_n(r_n,z_n):&= \argmax_{\lambda \in [0,\gamma]}\mathbb{E}[\log(1-\lambda +\lambda e(L_n,r_n,z_n))]\\
  &= \argmax_{\lambda \in [0,\gamma]}\frac{1}{n-1}\sum_{m=1}^{n-1}\log(1-\lambda+\lambda e(X_m,r_n,z_n)), 
  \end{align*} 
  for fixed $\gamma\in (0,1)$.   
\end{definition}
For the GREL betting process, one thus applies the current risk estimates to the losses observed in the past and optimizes the betting process with respect to the implied distribution function. This approach can obviously only work well if the past loss distributions are ``similar'' to the current loss distribution at time point $n$. However, this method can detect singular underestimation of risk (i.\ e., the risk being only underestimated at a few time points, but estimated correctly or overestimated at most time points) remarkably well.

Finally, we consider the growth rate for empirical mixture (GREM)  betting process.
\begin{definition}\label{GREM}
 The growth rate for empirical mixture (GREM)  betting process $\left(\lambda_n^\text{GREM}\right)_{n\in \N}$ for $(M_n)_{n\in \N}$ is given by
  \begin{align*}
  \lambda^{\text{GREM}}_n(r_n,z_n):=\frac{M_{n-1}(\lambda^\text{GREE})\lambda_n^\text{GREE}+M_{n-1}(\lambda^\text{GREL})\lambda_n^\text{GREL}}{M_{n-1}(\lambda^\text{GREE})+M_{n-1}(\lambda^\text{GREL})}. 
  \end{align*} 
\end{definition}
The definition of this betting process is mainly due to the following lemma, which is a consequence of Lemma 1 of \textcite{VovkWangMerging}; see also Section 5.1 of \textcite{Hauptquelle}.
\begin{lemma}\label{GREMGREEGREL}
  For all $n\in \N_0$, it holds that
  \begin{align*}
M_n(\lambda^\text{GREM})=\frac{M_n(\lambda^\text{GREE})+M_n(\lambda^\text{GREL})}{2}.
  \end{align*}
\end{lemma}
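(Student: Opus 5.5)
We prove the identity by induction on $n$. The idea is that multiplying the average of the two processes by the factor $1-\lambda_{n+1}+\lambda_{n+1}Y_{n+1}$ with the $M$-weighted average of the bets reproduces the average of the two individually updated processes. Throughout, write $A_n:=M_n(\lambda^\text{GREE})$, $B_n:=M_n(\lambda^\text{GREL})$ and $C_n:=M_n(\lambda^\text{GREM})$, and let $Y_{n}=e(X_n,r_n,z_n)$ as in Lemma \ref{ProcessisMartingale}.

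\textbf{Base case.} For $n=0$ we have $A_0=B_0=C_0=1$ by the definition of the process in Lemma \ref{ProcessisMartingale}, so $C_0=(A_0+B_0)/2$.

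\textbf{Induction step.} Assume $C_n=(A_n+B_n)/2$. By the recursion in Lemma \ref{ProcessisMartingale},
\begin{align*}
C_{n+1}=\bigl(1-\lambda^\text{GREM}_{n+1}+\lambda^\text{GREM}_{n+1}Y_{n+1}\bigr)C_n=\bigl(1+\lambda^\text{GREM}_{n+1}(Y_{n+1}-1)\bigr)\frac{A_n+B_n}{2}.
\end{align*}
Insert Definition \ref{GREM}, reading its index as $\lambda^\text{GREM}_{n+1}=(A_n\lambda^\text{GREE}_{n+1}+B_n\lambda^\text{GREL}_{n+1})/(A_n+B_n)$. The factor $A_n+B_n$ then cancels, and
\begin{align*}
C_{n+1}=\tfrac12\Bigl(A_n+B_n+\bigl(A_n\lambda^\text{GREE}_{n+1}+B_n\lambda^\text{GREL}_{n+1}\bigr)(Y_{n+1}-1)\Bigr).
\end{align*}
Grouping terms gives
\begin{align*}
C_{n+1}=\tfrac12\Bigl(A_n\bigl(1+\lambda^\text{GREE}_{n+1}(Y_{n+1}-1)\bigr)+B_n\bigl(1+\lambda^\text{GREL}_{n+1}(Y_{n+1}-1)\bigr)\Bigr)=\frac{A_{n+1}+B_{n+1}}{2},
\end{align*}
which completes the induction.

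\textbf{Main obstacle: division by zero.} The cancellation needs $A_n+B_n>0$, and this is the only delicate point. Since $\lambda^\text{GREE},\lambda^\text{GREL}\in[0,\gamma]$ with $\gamma<1$ and $Y\ge 0$, every factor $1-\lambda+\lambda Y$ is at least $1-\gamma>0$. Hence $A_n,B_n>0$ for all $n$, the weights in Definition \ref{GREM} are well defined, and the identity holds for all $n\in\N_0$.
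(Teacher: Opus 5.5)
Your induction is correct. The paper does not carry out a proof of its own: it declares the lemma a consequence of Lemma 1 of Vovk and Wang (2024). That result says convex combinations of betting processes $\prod_k(1-\lambda_k+\lambda_k Y_k)$ on the same e-values are again of this form with a wealth-weighted bet, as the remark after the lemma notes.

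Your computation is a self-contained verification of that fact in the two-process, equal-weight case. It uses nothing about GREE or GREL beyond two properties: their bets lie in $[0,\gamma]$, and they bet on the same $Y_n$. The same three lines, with weights $w_i$ and bet $\sum_i w_i M^{(i)}_n\lambda^{(i)}_{n+1}/\sum_i w_i M^{(i)}_n$, give the general mixture statement the paper imports.

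The citation buys that generality at no cost. Your version buys self-containment, and unlike the paper it explicitly checks that the GREM weights are well defined.

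One small caveat to that check. The bound $\gamma<1$ gives $A_n,B_n\geq(1-\gamma)^n>0$, but Definition \ref{BacktestESTatistic} allows e-statistics to take the value $\infty$; the variance e-statistic of Example \ref{ESVaRBeispiel} at $r=0$ is one instance. In that case $A_n$ or $B_n$ can be infinite and the weights become $\infty/\infty$. You should state that the $Y_n$ are finite, an assumption the paper also makes tacitly.
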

\begin{bemerkung}
By \textcite{VovkWangMerging}, even any arbitrary convex combinations of test supermartingales defined as in Lemma \ref{ProcessisMartingale} will be of the same form for an appropriate betting process. Thus, one could also consider weighting the \text{GREE} and \text{GREL} processes differently or include other e-processes of the same form as in Lemma \ref{ProcessisMartingale}.    
\end{bemerkung}
Another possibility for defining the betting process $\lambda$ consists in only considering the last $T$ observed losses and/or e-statistics for calculating the GREE, GREL, and GREM betting processes, instead of all observations up to a time point $n\in \N$. This leads to the following definition.
\begin{definition}\label{FiniteHorizon}
   Let $n,T\in \N$ and define $n_0:=\max\{1,n-T\}$. Furthermore, let $E_n$ be a random variable following the empirical distribution of $(e(X_m,r_m,z_m)_{n_0\leq m\leq n-1}$ and $L_n$ be a random variable following the empirical distribution of $(X_m)_{n_0\leq m\leq n-1}$.
   \begin{enumerate}[(i)]
    \item 
    The $T$-finite horizon growth rate for empirical e-statistics ($T$-FH GREE)  betting process $\lambda^{\text{$T$-FH GREE}}:= \left(\lambda_n^\text{$T$-FH GREE}\right)_{n\in \N}$ for $(M_n)_{n\in \N}$ is given by
    \begin{align*}
  \lambda^{\text{$T$-FH GREE}}_n(r_n,z_n):&= \argmax_{\lambda \in [0,\gamma]}\mathbb{E}[\log(1-\lambda +\lambda E_n)]\\
  &= \argmax_{\lambda \in [0,\gamma]}\frac{1}{n-n_0}\sum_{m=n_0}^{n-1}\log(1-\lambda+\lambda e(X_m,r_m,z_m)), 
  \end{align*} 
  for fixed $\gamma\in (0,1)$.
  \item 
  The $T$-finite horizon growth rate for empirical losses ($T$-FH GREL)  betting process $\lambda^{\text{$T$-FH GREL}}:= \left(\lambda_n^\text{$T$-FH GREL}\right)_{n\in \N}$ for $(M_n)_{n\in \N}$ is given by
  \begin{align*}
  \lambda^{\text{$T$-FH GREL}}_n(r_n,z_n):&= \argmax_{\lambda \in [0,\gamma]}\mathbb{E}[\log(1-\lambda +\lambda e(L_n,r_n,z_n))]\\
  &= \argmax_{\lambda \in [0,\gamma]}\frac{1}{n-n_0}\sum_{m=n_0}^{n-1}\log(1-\lambda+\lambda e(X_m,r_n,z_n)), 
  \end{align*} 
  for fixed $\gamma\in (0,1)$. 
  \item 
  The $T$-finite horizon growth rate for empirical mixture ($T$-FH GREM)  betting process $\lambda^{\text{$T$-FH GREM}}:= \left(\lambda_n^\text{$T$-FH GREM}\right)_{n\in \N}$ for $(M_n)_{n\in \N}$ is given by
  \begin{align*}
  &\lambda^{\text{$T$-FH GREM}}_n(r_n,z_n) :=\\ 
  &~~~~\frac{M_{n-1}(\lambda^\text{$T$-FH GREE})\lambda_n^\text{$T$-FH GREE}+M_{n-1}(\lambda^\text{$T$-FH GREL})\lambda_n^\text{$T$-FH GREL}}{M_{n-1}(\lambda^\text{$T$-FH GREE})+M_{n-1}(\lambda^\text{$T$-FH GREL})}. 
  \end{align*} 
   \end{enumerate}
\end{definition}

\subsection{Calculating the betting process}\label{sec42}
As mentioned in the previous section, the GRO betting process, as well as the GREE, GREL, and GREM betting processes, can be calculated via a convex program. However, due to the number of past time steps to be considered often becoming large, the exact computation of these betting processes can become slow in practice, especially considering that the betting process needs to be calculated at every time step. 

A possible solution to this relies on a second order Taylor approximation of the function $\log(1+x)\approx x- x^2 / 2$. This leads to the following approximate formulas for the GREE and GREL betting processes.
\begin{align*}
  \lambda_n^\text{GREE}(r_n,z_n)&\approx \argmax_{\lambda \in [0,\gamma]} \frac{1}{n-1}\sum_{m=1}^{n-1} \left\{\lambda(e(X_m,r_m,z_m)-1)-\right.\\
  &~~~~~~~~~~\left.\frac{\lambda^2 (e(X_m,r_m,z_m)-1)^2}{2}\right\} \\ 
&=\max\left\{0,\min\left\{\gamma, \frac{\sum_{m=1}^{n-1} (e(X_m,r_m,z_m)-1)}{\sum_{m=1}^{n-1}(e(X_m,r_m,z_m)-1)^2}\right\}\right\}  \\ 
\lambda_n^\text{GREL}(r_n,z_n)&\approx \argmax_{\lambda \in [0,\gamma]} \frac{1}{n-1}\sum_{m=1}^{n-1} \left\{\lambda(e(X_m,r_n,z_n)-1)-\right.\\
&~~~~~~~~~~\left.\frac{\lambda^2 (e(X_m,r_n,z_n)-1)^2}{2}\right\} \\ 
&=\max\left\{0,\min\left\{\gamma, \frac{\sum_{m=1}^{n-1} (e(X_m,r_n,z_n)-1)}{\sum_{m=1}^{n-1}(e(X_m,r_n,z_n)-1)^2}\right\}\right\}  \, .
\end{align*}
Letting $n_0:=\max\{1,n-T\}$, the $T$-FH-GREE and $T$-FH-GREL betting processes can be approximated in a similar manner, leading to the formulas
{\small
\begin{align*}
  \lambda_n^\text{$T$-FH GREE}(r_n,z_n)&\approx \argmax_{\lambda \in [0,\gamma]} \frac{1}{n-n_0}\sum_{m=n_0}^{n-1} \left\{\lambda(e(X_m,r_m,z_m)-1)-\right.\\
  &~~~~~~~~~~~~~\left.\frac{\lambda^2 (e(X_m,r_m,z_m)-1)^2}{2}\right\} \\ 
&=\max\left\{0,\min\left\{\gamma, \frac{\sum_{m=n_0}^{n-1} (e(X_m,r_m,z_m)-1)}{\sum_{m=n_0}^{n-1}(e(X_m,r_m,z_m)-1)^2}\right\}\right\},  \\ 
\lambda_n^\text{$T$-FH GREL}(r_n,z_n)&\approx \argmax_{\lambda \in [0,\gamma]} \frac{1}{n-n_0}\sum_{m=n_0}^{n-1} \left\{\lambda(e(X_m,r_n,z_n)-1)-\right.\\
&~~~~~~~~~~~~~\left.\frac{\lambda^2 (e(X_m,r_n,z_n)-1)^2}{2}\right\} \\ 
&=\max\left\{0,\min\left\{\gamma, \frac{\sum_{m=n_0}^{n-1} (e(X_m,r_n,z_n)-1)}{\sum_{m=n_0}^{n-1}(e(X_m,r_n,z_n)-1)^2}\right\}\right\}.
\end{align*}}  

The GREM as well as the $T$-FH GREM betting process can then be approximated by plugging in the approximated GREE and GREL or $T$-FH GREE and $T$-FH GREL betting processes, respectively,  into the formulas from Lemma \ref{GREMGREEGREL} or Part (iii) of Definition \ref{FiniteHorizon}, respectively. 
\begin{bemerkung}
The approximations presented here perform well if the observed terms $\lambda_n (e(X_n,r_n,z_n)-1)$ are close to $0$. Since the betting process usually tends to be small, this is typically the case. One can even improve on this approximation by reducing the bound $\gamma$ on the search space for the $\lambda_n$. However, one should keep in mind that the approximations might be quite inaccurate if one observes high values of the e-statistic $e(X_n,r_n,z_n)$. One example of this problem can be inferred from considering the e-statistic for the $99\%$ expected shortfall derived from Lemma \ref{BayesPairESVaR} and imagining an observed excess over the projected value-at-risk twice as large as projected. Then, even when considering a relatively small $\lambda$ of $0.01$, we get $\log(1+\lambda (e(x,r,n)-1))\approx \log(3)\approx 1.1$, but 
\[
\lambda (e(x,r,n)-1)-\frac{(\lambda (e(x,r,n)-1))^2}{2}\approx 0. 
\]
Thus, calculation through the approximated formula tends to prefer smaller values of $\lambda$ than in the original formula. To remedy this, one might consider Taylor approximations of the function $\log(1+x)$ which are of higher order.
\end{bemerkung}
%
\subsection{Asymptotic optimality of betting processes}
In this section, we analyze the asymptotic optimality of our proposed betting processes. In this, we call a betting process asymptotically optimal if the asymptotic growth rate of the resulting e-process is close to the asymptotic growth rate of the e-process based on the GRO betting process. 
\begin{definition}\label{AsymEqui}
 Let $\lambda=(\lambda_n)_{n\in \N}$ and $\lambda'=(\lambda'_n)_{n\in \N}$ be betting processes under the conditions of Lemma \ref{ProcessisMartingale}. If, for the true probability measure $\mathbb{P}$, we have
 \begin{align*}
  \mathbb{E}^\mathbb{P}\left[\left| \frac{1}{n} (\log M_n(\lambda)-\log M_n(\lambda'))\right|\right] \xrightarrow{n\to \infty} 0,
 \end{align*}
 we call $\lambda$ and $\lambda'$ asymptotically equivalent and denote this by $\lambda \simeq \lambda'$.
\end{definition}

\begin{definition}\label{AsymOpti}
 Let $\lambda=(\lambda_n)_{n\in \N}$ be a betting process under the conditions of Lemma \ref{ProcessisMartingale}. If $\lambda \simeq \lambda^\text{GRO}$, we call $\lambda$ asymptotically optimal.
\end{definition}
Theorem \ref{AsymOptiofAlgs} shows that the proposed betting processes are asymptotically optimal in specific cases. For technical details, we refer to Section 5.2 of \textcite{Hauptquelle}.
\begin{theorem}\label{AsymOptiofAlgs}
  Under the conditions of Lemma \ref{ProcessisMartingale}, assume further that $\sup_{n\in \N}\mathbb{E}^{Q_n}[|\log(e(X_n,r,z))]<\infty$ for all $(r,z) \in \psi(\mathcal{P})$. Then, the following assertions hold true for any $\gamma\in (0,1)$.
  \begin{enumerate}[(i)]
    \item 
   If $(e(X_n,r_n,z_n))_{n\in \N}$ are i.i.d. and $e(X_n,r_n,z_n)$ is independent of $\mathcal{F}_{n-1}$ for all $n\in \N$, $\lambda^{\text{GREE}}$ is asymptotically optimal. 
   \item 
If $(X_n)_{n\in \N}$ are i.i.d., $X_n$ is independent of $\mathcal{F}_{n-1}$ for all $n\in \N$ and either 
\begin{enumerate}[a)]
  \item $\{(r_n,z_n),n\in\N\}$ is finite, or
  \item $\{(r_n,z_n),n\in\N\}$ is compact, $(r_n,z_n)\xrightarrow{\mathbb{P}}(r_0,z_0)$ for some $(r_0,z_0)\in \R^d$, and $e(x,r,z)$ is continuous in $(r,z)$ for all $x\in \R$, 
\end{enumerate}
then $\lambda^{\text{GREL}}$ is asymptotically optimal. 
\item 
If $\lambda^{\text{GREE}}$ or $\lambda^{\text{GREL}}$ are asymptotically optimal, then $\lambda^\text{GREM}$ is also asymptotically optimal.
  \end{enumerate}
\end{theorem}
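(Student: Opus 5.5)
Part (iii) is the simplest and I would handle it first. By Lemma \ref{GREMGREEGREL}, $M_n(\lambda^{\text{GREM}})=\tfrac12\bigl(M_n(\lambda^{\text{GREE}})+M_n(\lambda^{\text{GREL}})\bigr)$. This immediately gives the two-sided bound
\[
\log M_n(\lambda^{\text{GREE}})-\log 2\le \log M_n(\lambda^{\text{GREM}})\le \max\{\log M_n(\lambda^{\text{GREE}}),\log M_n(\lambda^{\text{GREL}})\},
\]
and the same lower bound holds with GREL in place of GREE. Suppose GREE is asymptotically optimal. Dividing by $n$, the $\log 2/n$ term vanishes, so it suffices to control $\frac1n\bigl(\log M_n(\lambda^{\text{GREL}})-\log M_n(\lambda^{\text{GRO}})\bigr)_+$ in $L^1$.

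The tool for this is that GRO maximises the conditional expected log-increment at every step. The increments $\log(1-\lambda_k+\lambda_k Y_k)-\log(1-\lambda^{\text{GRO}}_k+\lambda^{\text{GRO}}_kY_k)$ therefore have nonpositive conditional mean. Their centred parts form a martingale-difference sequence, and its $L^1$-norm divided by $n$ tends to $0$ by an $L^2$ (or uniform-integrability) law of large numbers. This is where the moment condition $\sup_n\mathbb{E}^{Q_n}|\log e|<\infty$ and the truncation $\lambda\le\gamma<1$ enter: they give $|\log(1-\lambda+\lambda Y)|\le |\log(1-\gamma)|+|\log Y|+\log 2$-type bounds. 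Consequently no betting process beats GRO asymptotically in positive part, and (iii) follows.

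For (i), the conditional law of $Y_n=e(X_n,r_n,z_n)$ is a fixed law $G$, so $\lambda^{\text{GRO}}_n\equiv\lambda^*$ maximises the concave function $g(\lambda)=\mathbb{E}_G[\log(1-\lambda+\lambda Y)]$ on $[0,\gamma]$. The GREE bet is the maximiser of the empirical version $g_{n-1}$. The plan has three steps. First, show $\sup_{\lambda\in[0,\gamma]}|g_{n}(\lambda)-g(\lambda)|\to0$ a.s. and in $L^1$ via a uniform LLN: the integrands are concave in $\lambda$ and dominated by an integrable envelope, so pointwise convergence on a dense set upgrades to uniform convergence. Second, deduce $g(\lambda^{\text{GREE}}_n)\to g(\lambda^*)$. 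Third, write
\[
\tfrac1n\bigl(\log M_n(\lambda^{\text{GREE}})-\log M_n(\lambda^{*})\bigr)=\tfrac1n\sum_{k\le n}\bigl(g(\lambda^{\text{GREE}}_k)-g(\lambda^*)\bigr)+\tfrac1n\sum_{k\le n}D_k,
\]
where $D_k$ are martingale differences; this works because $\lambda^{\text{GREE}}_k$ is predictable and $Y_k$ is independent of $\mathcal F_{k-1}$. The Cesàro mean of the first sum tends to $0$ and the second is handled as in (iii). Dominated convergence then turns this into the $L^1$ statement of Definition \ref{AsymEqui}.

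For (ii), the losses are i.i.d. with law $F$, and $\lambda^{\text{GRO}}_n=\lambda^*(r_n,z_n)$ maximises $h(\lambda,r,z)=\mathbb{E}_F[\log(1-\lambda+\lambda e(X,r,z))]$. GREL maximises the empirical analogue evaluated at the current $(r_n,z_n)$. In case a), I apply the uniform LLN of (i) separately to each of the finitely many $(r,z)$ and take the maximum over the finite set. In case b), I need uniformity jointly over $\lambda\in[0,\gamma]$ and $(r,z)$ in the compact set, which would come from a Glivenko–Cantelli argument for continuous-in-parameter integrands with an integrable envelope; alternatively, one can use $(r_n,z_n)\to(r_0,z_0)$ together with continuity to reduce to a neighbourhood of $(r_0,z_0)$. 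Once that is in place, the decomposition into a Cesàro term plus a martingale term proceeds exactly as in (i).

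The main obstacle is this uniform convergence in case b). The envelope there must be integrable uniformly in $(r,z)$, but the hypothesis only gives finiteness of $\sup_n\mathbb{E}|\log e(X_n,r,z)|$ for each fixed $(r,z)$. I would first try compactness plus monotonicity or continuity of $e$ in $r$ to obtain a finite-cover bracketing argument. Failing that, I would defer to the technical conditions of Section 5.2 of \textcite{Hauptquelle}.
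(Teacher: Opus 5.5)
The paper gives no argument of its own here; it only refers to Section~5.2 of Wang, Wang and Ziegel, so your outline has to stand on its own. For (i) and (ii)a) it does. The uniform LLN on $[0,\gamma]$, the argmax inequality $g(\lambda^*)-g(\hat\lambda_n)\le 2\sup_{\lambda\in[0,\gamma]}|g_n(\lambda)-g(\lambda)|$, and the split into a bounded Ces\`aro term plus a martingale term all go through. The reason is that the envelope $|\log(1-\gamma)|+\log^+ e(X_k,r_k,z_k)$ is dominated by an i.i.d.\ integrable sequence, hence is uniformly integrable: in (i) this is $\log^+Y_k$ (reading the moment hypothesis as $\mathbb{E}\log^+Y_1<\infty$), and in (ii)a) it is $\max_{(r,z)\in S}\log^+e(X_k,r,z)$, with $S$ the finite forecast set. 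Only $\log^+e$ is needed, so write the bound as $|\log(1-\lambda+\lambda Y)|\le|\log(1-\gamma)|+\log^+Y$; for $e^{\ES}_p$ one has $Y=0$ with probability $p$, so $|\log Y|$ is infinite.

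The genuine gap is in (iii). Since $M_n(\lambda^{\text{GREM}})\ge\frac12 M_n(\lambda^{\text{GREL}})$, Definition~\ref{AsymEqui} forces you to prove $\frac1n\mathbb{E}\bigl(\log M_n(\lambda^{\text{GREL}})-\log M_n(\lambda^{\text{GRO}})\bigr)_+\to0$. Your route to this is $\frac1n\sum_{k\le n}D_k\to0$ in $L^1$, which requires uniform integrability (or an $L^2$ bound) of $\log^+e(X_k,r_k,z_k)$ along the random predictable forecasts. The stated hypothesis supplies neither: it bounds first moments only, and only at fixed $(r,z)$, while (iii) assumes no i.i.d.\ structure that could compensate. $L^1$-boundedness really is insufficient for this step. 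Independent $D_k=\pm k$ with probability $1/(2k)$ each (and $0$ otherwise) satisfy $\mathbb{E}|D_k|=1$, yet $\frac1n\sum_{k\le n}D_k$ does not tend to $0$ even in probability.

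As written, then, (iii) is established only in the settings of (i) and (ii)a), where your argument applies verbatim because all processes share the same realized e-values $Y_k$. In general you need an extra assumption, such as uniform integrability of $(\log^+e(X_k,r_k,z_k))_{k\in\N}$. If optimality were instead measured by $\frac1n\bigl(\mathbb{E}\log M_n(\lambda)-\mathbb{E}\log M_n(\lambda^{\text{GRO}})\bigr)\to0$, then (iii) would follow at once. This uses the tower property, the fact that GREM bets lie in $[0,\gamma]$, and $M_n(\lambda^{\text{GREM}})\ge\frac12M_n(\lambda^{\text{GREE}})$. But Definition~\ref{AsymEqui} is the stronger $L^1$ notion.

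The same missing envelope blocks (ii)b), as you anticipate, and localising at $(r_0,z_0)$ does not remove it. A uniform LLN near $(r_0,z_0)$ still calls for $\mathbb{E}\sup_{(r,z)\in U}\log^+e(X,r,z)<\infty$ on some neighbourhood $U$, and the martingale term needs the same envelope. Pointwise integrability plus continuity does not provide it. For $e^{\ES}_p$ on a compact forecast set $K$ with $r-z\ge\delta>0$, set $\underline z:=\min_{(r,z)\in K}z$; then $e^{\ES}_p(x,r,z)\le(x-\underline z)_+/((1-p)\delta)$ does give such an envelope, so your bracketing idea works there. In the stated generality, however, your (ii)b), like the paper's, ends in a deferral rather than a proof.
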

%
\subsection{Practical comparison of betting processes}
In this section, we illustrate the concept of asymptotic optimality of the proposed betting processes for different scenarios by looking at appropriate examples.

\begin{beispiel}
First, we consider a sequence of independent, homoscedastic normally distributed random variables, the expectation of which changes periodically. To this end, let $(X_t)_{t\in \N}$ be independent and $X_t\sim \mathcal{N}(\sin(\frac{2 \pi t}{250}),1) $. Here, the period is chosen such that it represents a full banking year (in daily units). This behaviour of loss processes is common in the financial world where expected profits often fluctuate according to certain annual patterns. Now assume that the true values for the $99\%$ expected shortfall and value-at-risk at each time point $t\in \N$ are underestimated by a constant amount $c$, e.\ g., $0.5$, and the predictions of the expected shortfall are to be backtested by our e-backtesing procedure described in Section \ref{sec3}. By Theorem \ref{AsymOptiofAlgs}, the GREE as well as the GREM betting processes are asymptotically optimal, since 
\[
e^{\ES}_\alpha(X_t,r_t,z_t)=\frac{(Z_t-\Phi^{-1}(0.99)+c)_+}{\phi(\Phi^{-1}(0.99))-0.01\Phi^{-1}(0.99)},
\]
where $(Z_t)_{t\in \N}$ are i.i.d and standard normally distributed.
 Thus, we expect these processes to outperform the GREL betting process. To test this, we have run $10{,}000$ Monte Carlo simulations for each of the betting processes and applied our backtesting procedure using the respective betting process for a maximum amount of $1{,}000$ trading days. To calculate the betting processes, we use the approximate formulas introduced in Section \ref{sec42}. We have recorded for each run, whether the e-process exceeds the values $2.2$, $3.5$, $9.0$, and $20$, respectively. (For more details on why we chose these thresholds, see Section \ref{sec5}). We have also tracked the log-transformed e-process $(\log(M_t))_{1\leq t \leq 1000}$ during each Monte Carlo run.  

Surprisingly, the percentage of threshold violations using the GREL betting process does not differ much from the percentage of violations when using the GREE or GREM betting processes; see Table \ref{TabellePeriode}. However, looking at the log-transformed e-processes (see Figure \ref{periodic_log_processes_fig}), we see that on average the GREL betting process performs worst, which is in line with Theorem \ref{AsymOptiofAlgs}.
\end{beispiel}

\setlength{\tabcolsep}{1pt}
\begin{table}[!htb]
    \centering
    \resizebox{\textwidth}{!}{
    \begin{tabular}{c|c|c|c|c}
    Threshold & \textbf{GREE} & \textbf{GREL} & \textbf{GREM} & \textbf{$50$-FH GREM}  \\
         \hline
         2.2 & 99.78  (99.69, 99.87) & 99.83 (99.75, 99.91) & 99.84 (99.76, 99.92) & 99.72 (99.62,99.82) \\ 
         3.5 & 99.46 (99.32, 99.60) &  99.51 (99.37, 99.65) &99.67 (99.58, 99.78) & 99.25 (99.08,99.42)\\ 
         9 & 98.01 (97.74, 98.28) & 97.35 (97.04, 97.66) & 98.40 (98.15, 98.65) & 96.48 (96.12, 96.84)\\ 
         20 & 95.32 (94.91, 95.73) & 92.75 (92.24, 93.26) & 95.31 (94.90, 95.72) & 91.94 (91.41, 92.47)
    \end{tabular}
    }
    \caption[Percentages of threshold violations for forecasts underestimating $\ES_{0.99}$ and $\VaR_{0.99}$ by $0.5$ for normally distributed RVs with periodic mean]{Percentage of threshold violations for forecasts underestimating $\ES_{0.99}$ and $\VaR_{0.99}$ by $0.5$ for normally distributed RVs with periodic mean. All values are based on $10{,}000$ Monte Carlo simulation runs each and a maximum sample size of $1{,}000$. $95\%$ confidence intervals are given in brackets.}
    \label{TabellePeriode}
\end{table}
\setlength{\tabcolsep}{4pt}

\begin{figure}[!htb]
  \centering
  \includegraphics[width=0.8\textwidth]{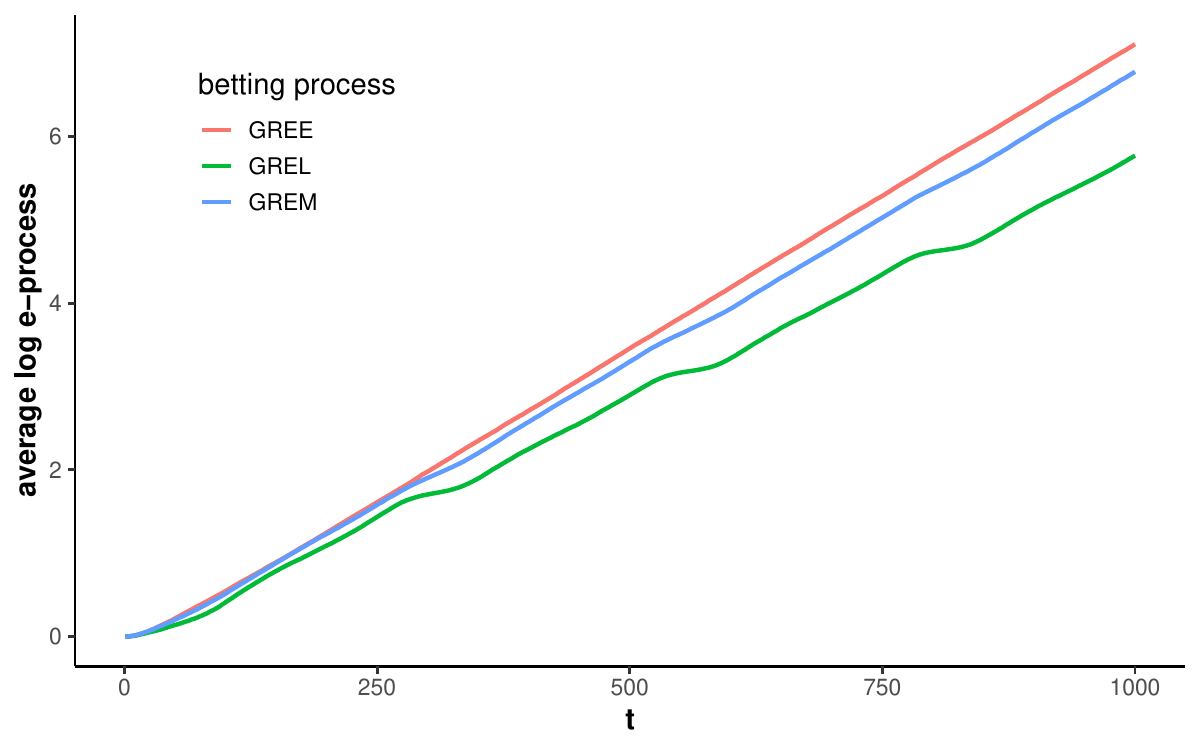}
  \caption[Average log-transformed e-processes for constant underestimation of $ES_{0.99}$ and $\VaR_{0.99}$ for normally distributed RVs with periodic means]{Average log-transformed e-processes for constant underestimation of $ES_{0.99}$ and $\VaR_{0.99}$ for normally distributed RVs with periodic means using different betting processes based on $10,000$ Monte Carlo simulations each.}
  \label{periodic_log_processes_fig}
\end{figure}

We also compared the GREM betting process with the $T$-FH GREM betting process, for $T=50$. The results for this comparison are provided in Table \ref{TabellePeriode} and Figure \ref{periodic_log_processes_FH_fig}. Here, we notice that the $T$-FH GREM betting process performs worse than all other considered betting processes. However, this underperformance is only marginal and, in particular, the e-process when using this betting process does not tend to degenerate (see Figure \ref{periodic_log_processes_FH_fig}). Using the $T$-FH GREM betting process thus might still yield acceptable performances under this scenario. 

\begin{figure}[!htb]
  \centering
  \includegraphics[width=0.8\textwidth]{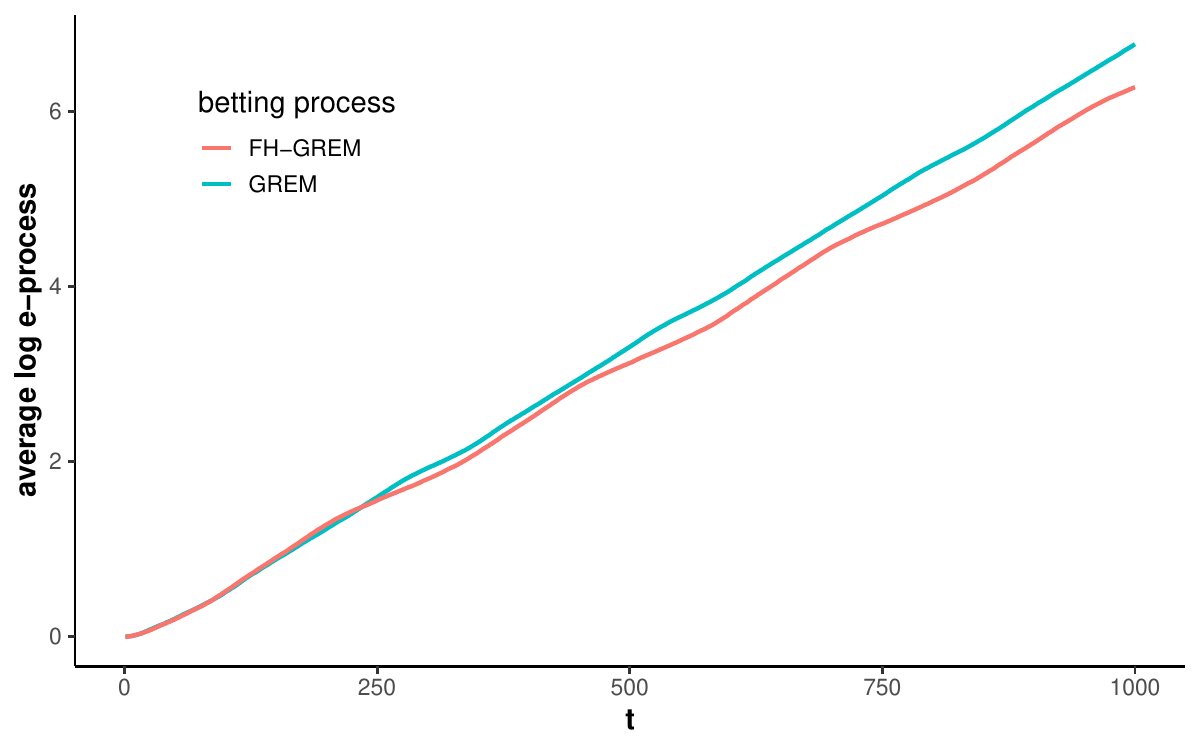}
  \caption[Average log-transformed e-processes using \textbf{GREM} or \textbf{$50$-FH GREM} betting processes for constant underestimation of $ES_{0.99}$ and $\VaR_{0.99}$ for normally distributed RVs with periodic means]{Average log-transformed e-processes using \textbf{GREM} or \textbf{$50$-FH GREM} betting processes for constant underestimation of $ES_{0.99}$ and $\VaR_{0.99}$ for normally distributed RVs with periodic means based on $10,000$ Monte Carlo simulations each.}
  \label{periodic_log_processes_FH_fig}
\end{figure}

\begin{beispiel}
Next, we consider a sequence of standard normally distributed i.i.d. loss variables $(X_t)_{t\in \N}$ and assume that the predictions $z_t$ and $r_t$ for the $99 \%$ value-at-risk and expected shortfall at each time point $t\in \N$ are drawn such that $z_t\sim 2+2\cdot\operatorname{Ber}(0.5)$ and $r_t\sim z_t+ 0.2+0.8\cdot \operatorname{Ber}(0.5)$, where $\operatorname{Ber}(p)$ denotes the Bernoulli distribution with success parameter $p$. Since $\VaR_{0.99}(X_1)\approx 2.33$ and $\ES_{0.99}(X_1)\approx 2.67$, the expected shortfall will be overestimated in most cases. When using the described e-backtesting procedure for this example, we thus suspect the GREE betting process to perform worse than the GREL and GREM betting processes. Namely, by Part (ii).a) of Theorem \ref{AsymOptiofAlgs}, the GREL and GREM betting processes are asymptotically optimal for this example. 

Indeed, our results demonstrate that the threshold violation percentages are substantially higher when using the GREL and GREM betting processes; see Table \ref{TabelleDiscrPreds}. The superiority of these betting processes is even more pronounced when looking at the average log-transformed e-processes; see Figure \ref{discr_preds_fig}. 

\setlength{\tabcolsep}{0.5pt}
\begin{table}[!htb]
    \centering
    \resizebox{\textwidth}{!}{
    \begin{tabular}{c|c|c|c|c}
    Threshold & \textbf{GREE} & \textbf{GREL} & \textbf{GREM} & \textbf{$50$-FH GREM}   \\
         \hline
         2.2 & 14.71  (14.02, 15.40) & 66.16(65.23, 67.09) & 48.11 (47.13, 49.09) & 43.24 (42.27, 44.21) \\ 
         3.5 & 7.48 (6.96, 8.00) &  47.15 (46.17, 48.13) & 29.70 (28.80, 30.60) & 27.64 (26.76, 28.52)\\ 
         9 & 1.71 (1.46, 1.96) & 17.89 (17.14, 18.64) & 10.04 (9.45, 10.63) & 9.39 (8.82, 9.96)\\ 
         20 & 0.36 (0.24, 0.48) & 6.39 (5.91, 6.87) & 3.03 (2.69, 3.37) & 3.52 (3.16, 3.88)
    \end{tabular}
    }
    \caption[Percentage of threshold violations for forecasts, where estimates for $\ES_{0.99}$ and $\VaR_{0.99}$ are drawn from discrete distributions]{Percentage of threshold violations for forecasts, where estimates for $\ES_{0.99}$ and $\VaR_{0.99}$ are drawn from discrete distributions for standard normally distributed RVs based on $10,000$ Monte Carlo simulations each and a maximum sample size of $1,000$. $95\%$ confidence intervals in brackets.}
    \label{TabelleDiscrPreds}
\end{table}
\setlength{\tabcolsep}{4pt}

\begin{figure}[!htb]
  \centering
  \includegraphics[width=0.8\textwidth]{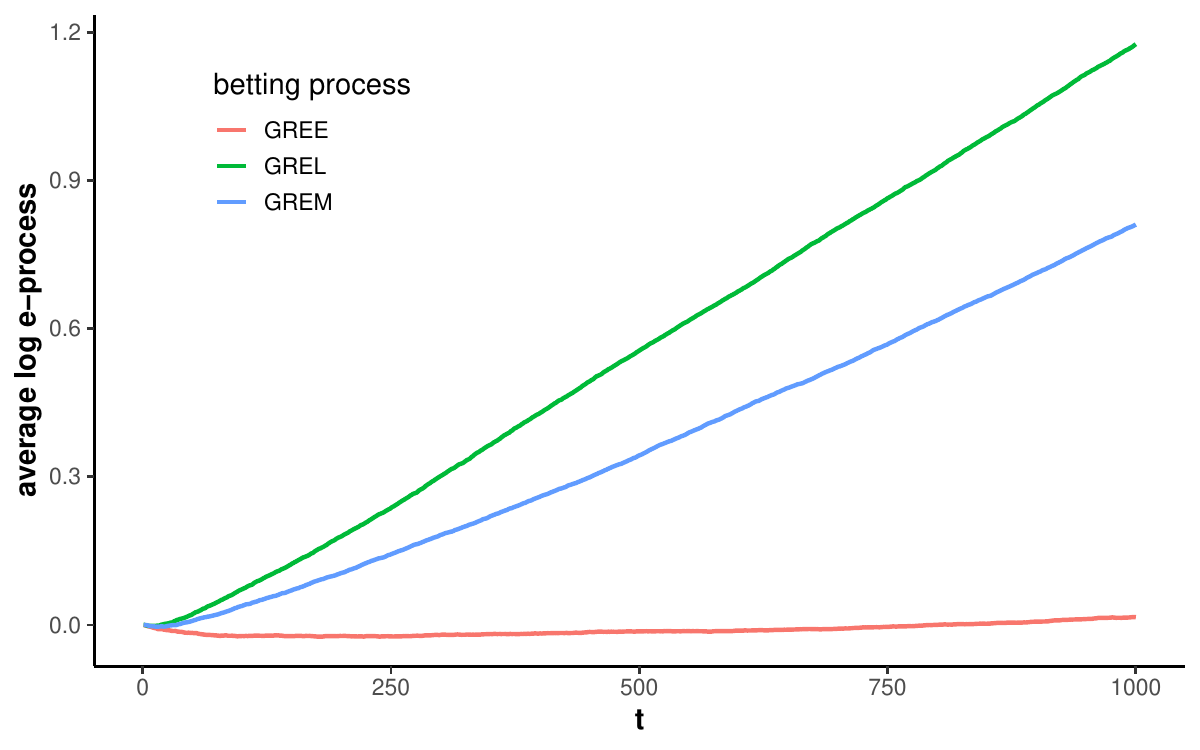}
  \caption[Average log-transformed e-processes where predictions for $ES_{0.99}$ and $\VaR_{0.99}$ are drawn from discrete distributions]{Average log-transformed e-processes where predictions for $ES_{0.99}$ and $\VaR_{0.99}$ are drawn from discrete distributions for standard normally distributed RVs using different betting processes based on $10,000$ Monte Carlo simulations each.}
  \label{discr_preds_fig}
\end{figure}

Again, we also compared the percentage of threshold violations as well as the average log-transformed e-processes when using the GREM betting process with those when using the $50$-FH GREM betting processes. These results are summarized in Table \ref{TabelleDiscrPreds} and Figure \ref{discr_preds_FH_fig}. Here, the $50$-FH GREM betting process performs very similar to the GREM betting process in terms of the threshold violation percentages, even exceeding those resulting from the GREM betting process for the threshold $20$. 

\begin{figure}[!htb]
  \centering
  \includegraphics[width=0.8\textwidth]{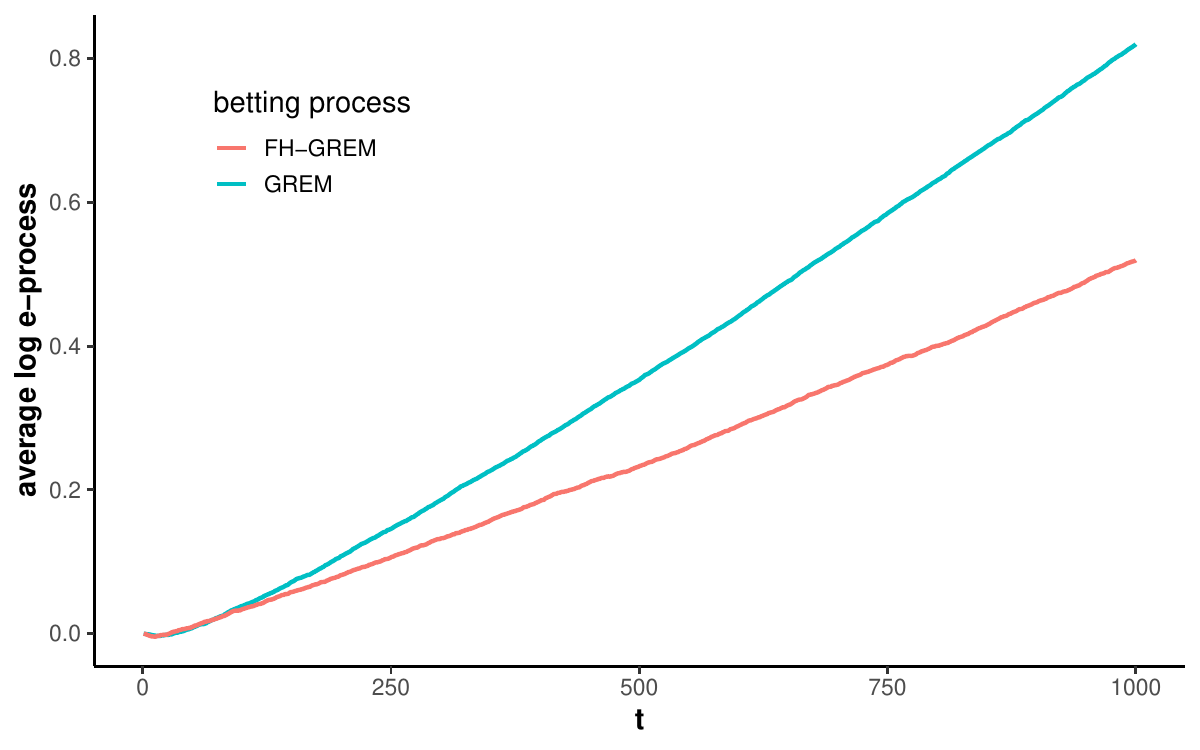}
  \caption[Average log-transformed e-processes using \textbf{GREM} or \textbf{$50$-FH GREM} betting processes where predictions for $ES_{0.99}$ and $\VaR_{0.99}$ are drawn from discrete distributions]{Average log-transformed e-processes using \textbf{GREM} or \textbf{$50$-FH GREM} betting processes where predictions for $ES_{0.99}$ and $\VaR_{0.99}$ are drawn from discrete distributions for standard normally distributed RVs based on $10,000$ Monte Carlo simulations each.}
  \label{discr_preds_FH_fig}
\end{figure}
\end{beispiel}

\begin{beispiel}
We conclude this section with an example illustrating a weakness of the GREE, GREL, and GREM betting processes that is addressed by their finite-horizon counterparts. For this, we consider a sequence $(X_t)_{t\in \N}$ of independent random variables such that $X_t\sim \mathcal{N}(0,\sin^2(\pi t/500))$ for all $t\in \N$. The value-at-risk and the expected shortfall at level $p=0.99$ are then estimated by assuming $X_t\sim\mathcal{N}(0,\sigma^2)$ with $\sigma=0.8$ for all $t\in \N$ and applying the well-known formulas for the value-at-risk and for the expected shortfall under a normal distribution (see, e.\ g., Example 2.18 in \cite{mcneil2005quantitative}). As the resulting estimates $r_t$ and $z_t$ for $\ES_{0.99}(X_t)$ and $\VaR_{0.99}(X_t)$ do not depend on $t$, the GREE, GREL, and GREM betting processes are all equal. We therefore backtest these predictions with our e-backtesting procedure using the GREM betting process and using the $50$-FH GREM betting process. Since, when considering this scenario, underestimations of the risk measures in question appear clustered in time, we expect the $50$-FH GREM betting process to outperform the GREM betting process, because  the former can optimize the bet over a small horizon where underestimations are more severe, while the latter always has to consider all observations up to time point $t\in \N$. Our results from our experiments based on $10{,}000$ Monte Carlo simulation runs are provided in Table \ref{TabelleVariance_FH} and Figure \ref{periodic_varaince_FH_fig}.

\begin{table}[!htb]
    \centering
    \begin{tabular}{c|c|c}
    Threshold & \textbf{GREM}/\textbf{GREE}/\textbf{GREL} & \textbf{$50$-FH GREM}   \\
         \hline
         2.2 & 55.58 (54.61, 56.55) & 78.36 (77.55, 79.17)\\ 
         3.5 & 35.66 (34.72, 36.60) & 65.37 (64.44, 66.30)\\ 
         9 &  10.47 (9.87, 11.07) & 37.88 (36.93, 38.83)\\ 
         20 & 2.78 (2.46, 3.10)& 18.80 (18.03, 19.57)
    \end{tabular}
    \caption[Percentage of threshold violations for forecasts calculating $ES_{0.99}$ and $\VaR_{0.99}$ under the assumption of homoscedasticity]{Percentage of threshold violations for forecasts calculating $ES_{0.99}$ and $\VaR_{0.99}$ under the assumption of homoscedasticity for normally distributed random variables with periodically changing variance based on $10{,}000$ Monte Carlo simulations each and a maximum sample size of $1{,}000$. $95\%$ confidence intervals are given in brackets.}
    \label{TabelleVariance_FH}
\end{table}

\begin{figure}[!htb]
  \centering
  \includegraphics[width=0.8\textwidth]{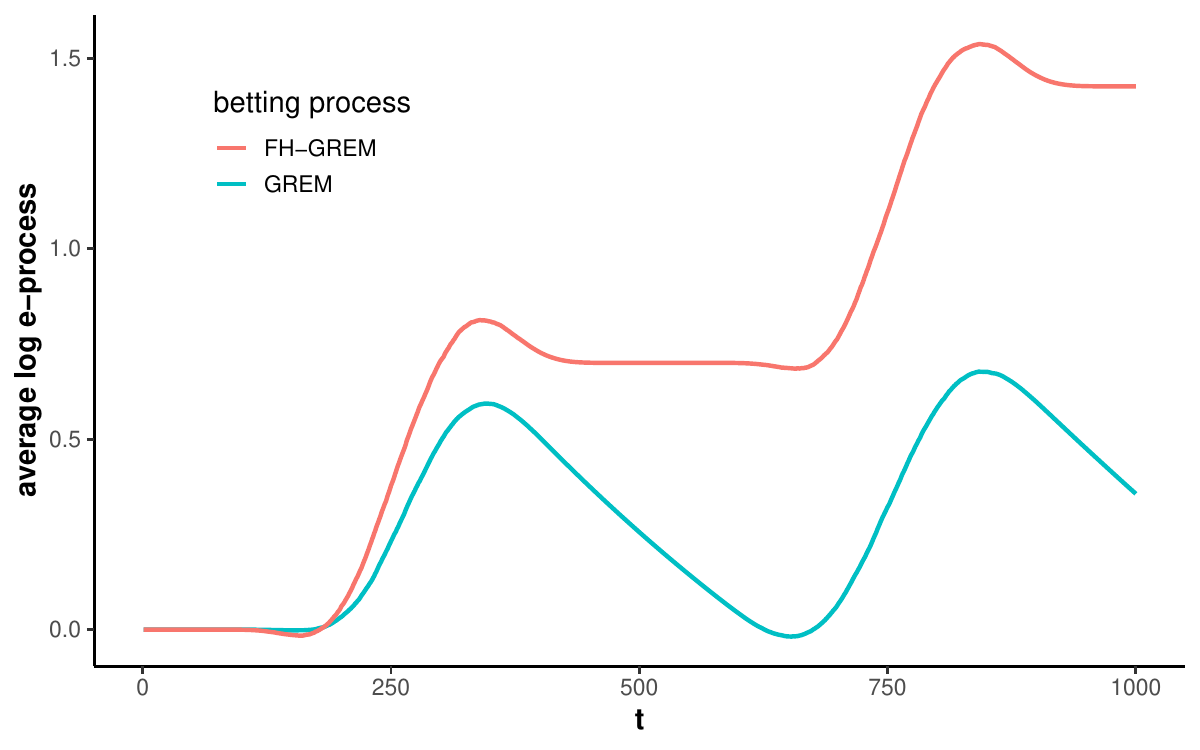}
  \caption[Average log-transformed e-processes using \textbf{GREM} or \textbf{$50$-FH GREM} betting processes for forecasts calculating $ES_{0.99}$ and $\VaR_{0.99}$ under the assumption of homoscedasticity]{Average log-transformed e-processes using \textbf{GREM} or \textbf{$50$-FH GREM} betting processes for forecasts calculating $ES_{0.99}$ and $\VaR_{0.99}$ under the assumption of homoscedasticity for normally distributed random variables with periodically changing variance based on $10,000$ Monte Carlo simulations each.}
  \label{periodic_varaince_FH_fig}
\end{figure}

As expected, here backtesting using the $50$-FH GREM betting process performed much better than using the infinite horizon betting process, especially when considering higher thresholds. When looking at the average log-transformed e-processes (see Figure \ref{periodic_varaince_FH_fig}), we see that the betting process when using the $50$-FH GREM e-process becomes mostly stable at time periods where the expected shortfall and value-at-risk are not being underestimated, while it decreases when using the GREM process, almost nullifying its increase when the respective risks are underestimated. This is due to the $50$-FH GREM betting process being able to optimize the bet based only on recent observations where no underestimations were present, while the GREM betting process always takes all observations into account, including those made under underestimations of the risk measures.
\end{beispiel}

Concluding from our experiments, we can identify scenarios where the GREE or the GREL  betting processes perform poorly. However, the GREM betting process, despite often yielding slightly worse results than the best process for a given scenario, has shown acceptable performances under all considered scenarios. Regarding the finite-horizon betting processes (in particular the $50$-FH GREM betting process), the performance using these processes is often slightly worse than their infinite-horizon counterparts. However, the usage of these betting processes can vastly improve the performance of the backtesting procedure if underestimations of the risk measures appear clustered in time. Thus, we recommend the usage of the GREM or $50$-FH GREM betting processes in practice, with the latter being suggested if one assumes underestimations of risk to be clustered, e.\ g., for autoregressive time series. One might also combine these two processes in a similar manner as in Lemma \ref{GREMGREEGREL}.

 \section{Choosing the significance threshold}\label{sec5}
An important question for our backtesting procedure and hypothesis testing with e-values in general concerns the choice of an appropriate threshold that the e-process needs to exceed in order to reject the associated null hypothesis. If one intends to limit the type-1 error probability of the resulting sequential test by some $\alpha \in (0,1)$, setting the threshold to $1/\alpha$ would guarantee this property by Corollary \ref{SupermartingaletoSequential}. However, the resulting sequential tests are often very conservative in most practical applications. This is due to these tests keeping their level under any possible distribution of the observed e-process that satisfies the null hypothesis. This also includes worst case distributions that are of high theoretical relevance but are unlikely to be observed in practice; cf. also \textcite{BlierWong2026} for a recent discussion on this topic. Thus, smaller thresholds than $1/\alpha$ are also considered. One approach to this, suggested by \cite{ShafThresh}, is to use the threshold $1 / \sqrt{\alpha}-1$ if one wants to construct an (approximate) level-$\alpha$ sequential test, leading, for example, to the (approximate) thresholds $2.2$, $3.5$, and $9.0$, respectively, for levels of $0.1$, $0.05$, and $0.01$, respectively.

We start our numerical investigation of the aforementioned suggestion of \textcite{ShafThresh} by considering a sequence of standard normally distributed i.i.d. random variables for which both the $99\%$ expected shortfall and value-at-risk are predicted correctly as described in Example 2.18 of \textcite{mcneil2005quantitative}. This prediction model is then backtested via our backtesting procedure detailed in Section \ref{sec3}. As the betting process, we use the GREM betting process, which in this case is equal to both the GREE and GREL betting processes due to the predictions not changing. We approximate the betting processes by the formulas derived in Section \ref{sec42}. We consider a maximum sample size of $1{,}000$ and have run $10{,}000$ Monte Carlo runs, examining how often the e-process exceeds the thresholds $2.2$, $3.5$, $9$, and $20$. 
The results for these experiments are detailed in Table \ref{TabelleStandNorm}. Our observed rejection percentages are roughly in line and usually slightly smaller than the levels proclaimed by \textcite{ShafThresh}. 

\begin{table}[!htb]
    \centering
    \begin{tabular}{c|c|c}
    Threshold &level according to \textcite{ShafThresh} in \% & percentage of rejections    \\
         \hline
         2.2 & 10 & 8.50  (7.95 , 9.05 ) \\ 
         3.5 & 5  & 3.70  (3.33 , 4.07 ) \\ 
         9 & 1  & 0.69  (0.53 , 0.85 ) \\ 
         20 & 0.23  & 0.16  (0.08 , 0.24 )
    \end{tabular}
    \caption[Percentage of rejections for true predictions of $\ES_{0.99}$ for normally distributed RVs]{Percentage of rejections for true predictions of $\ES_{0.99}$ for normally distributed random variables based on $10{,}000$ Monte Carlo simulation runs and a maximum sample size of $1{,}000$. $95\%$ confidence intervals are given in brackets.}
    \label{TabelleStandNorm}
\end{table}

Next, we performed the same experiments for a sequence of t-distributed i.i.d. random variables with 5 degrees of freedom. As argued by \cite{reverse-stress-testing} (see also the references therein), Student's $t$-distribution captures stylized facts about financial time series as, for instance, heavy tails. The $99\%$ expected shortfall and value-at-risk are again assumed to be predicted correctly (for their closed-form formulas, see, for example, \cite{norton2021calculating}). Our results for these experiments are summarized in Table \ref{TabelleTDistr}. Again, our observed rejection percentages are slightly smaller than the levels proclaimed by \textcite{ShafThresh}. They are also smaller than our observed rejection percentages in the standard normal case. 

\begin{table}[!htb]
    \centering
    \begin{tabular}{c|c|c}
    Threshold &level according to \textcite{ShafThresh} in \% & percentage of rejections    \\
         \hline
         2.2 & 10 & 7.12  (6.62 , 7.62 ) \\ 
         3.5 & 5  & 3.14  (2.80 , 3.48 ) \\ 
         9 & 1  & 0.50  (0.36 , 0.64 ) \\ 
         20 & 0.23  & 0.11  (0.05 , 0.17 )
    \end{tabular}
    \caption[Percentage of rejections for true predictions of $\ES_{0.99}$ for t-distributed RVs]{Percentage of rejections for true predictions of $\ES_{0.99}$ for $t$-distributed random variables with five degrees of freedom based on $10{,}000$ Monte Carlo simulation runs and a maximum sample size of $1{,}000$. $95\%$ confidence intervals are given in brackets.}
    \label{TabelleTDistr}
\end{table}

Finally, we have investigated generalized Pareto distributed excess distributions. For this, we let $p_1=0.95$ and assume a sequence $(L_t)_{t\in \N}$ of i.i.d loss variables such that $\VaR_{p_1}(L_1)=0$ and that $L_t|(L_t>0)$ is generalized Pareto distributed with shape parameter $\epsilon$ and scale parameter $\beta$. The value-at-risk and the expected shortfall at some level $p\geq p_1$ can then be calculated as described around Equations (7.18) and (7.19) in \textcite{mcneil2005quantitative}. We again conducted $10{,}000$ Monte Carlo simulation runs, testing our backtesting method using the GREM betting process for $\beta=1$ and different values for $\epsilon$. Our results are summarized in Table \ref{TabelleEVDThresh}.

\begin{table}[!htb]
    \centering
    \begin{tabular}{c|c|c}
    Threshold, $\epsilon$ &level according to \textcite{ShafThresh} in \% & percentage of rejections    \\
         \hline
         2.2, -0.2 & 10 & 8.67  (8.12 , 9.22 ) \\ 
         3.5, -0.2 & 5  & 3.72  (3.35 , 4.09 ) \\ 
         9, -0.2 & 1  & 0.65  (0.49 , 0.81 ) \\ 
         20, -0.2 & 0.23  & 0.16  (0.08 , 0.24 ) \\ 
         2.2, 0 & 10 & 7.74  (7.22 , 8.26 ) \\ 
         3.5, 0 & 5  & 3.44  (3.08 , 3.80 ) \\ 
         9, 0 & 1  & 0.55  (0.41 , 0.69 ) \\ 
         20, 0 & 0.23  & 0.18  (0.10 , 0.26 ) \\
         2.2, 0.2 & 10 & 6.50  (6.02 , 6.98 ) \\ 
         3.5, 0.2 & 5  & 2.88  (2.55 , 3.21 ) \\ 
         9, 0.2 & 1  & 0.51  (0.37 , 0.65 ) \\ 
         20, 0.2 & 0.23  & 0.07  (0.02 , 0.12 ) 
    \end{tabular}
    \caption[Percentage of rejections for true predictions of $\ES_{0.99}$ for GPD-distributed RVs]{Percentage of rejections for true predictions of $\ES_{0.99}$ for generalized Pareto (GPD)-distributed random variables with scale parameter $1$ and different shape parameters $\epsilon$ based on $10{,}000$ Monte Carlo simulations and a maximum sample size of $1{,}000$. $95\%$ confidence intervals are given in brackets.}
    \label{TabelleEVDThresh}
\end{table}

We notice again that our observed levels are close and slightly smaller compared to \textcite{ShafThresh}'s proclaimed levels. Considering the scale parameter $\epsilon$, the observed levels are smaller for smaller shapes. This is consistent with our previous observation about the cases of normally and t-distributed random variables that the observed levels are smaller for distributions with more pronounced right tails, as in the case of GPD-distributed variables with positive shape.   

Summarizing our observations, we conclude that the proposed thresholds by \textcite{ShafThresh} usually lead to slightly conservative sequential tests (keeping, but not exhausting the intended level). We thus recommend using these thresholds in practice and will also focus on these thresholds for the rest of this work.

\section{Sample size determination}\label{sec6}
In this section, we investigate methods for planning sample sizes for the e-backtesting procedure. In this, the goal is to achieve a specific power for specific regions in the space $\mathcal{P} \setminus \mathcal{P}_0$ of alternative distributions when applying a given significance threshold $\tau$ (cf.\  Section \ref{sec5}). For this, we focus on backtesting the expected shortfall and the value-at-risk by our procedure described in Section \ref{sec3}. In this, if not stated otherwise, we use the GREM betting process described in Section \ref{sec4}, which we approximate by the formulas derived in Section \ref{sec42}. 

In Sections \ref{sec61} - \ref{sec63}, we employ computer simulations for the determination of appropriate sample sizes. In Section \ref{sec64}, an analytical approach for determining sample size bounds is presented, which is applicable more generally than the simulations in Sections \ref{sec61} - \ref{sec63}.
\subsection{Underestimation of \texorpdfstring{ES$_p$}{the expected shortfall} for specific model classes}\label{sec61}
As a starting point, we assume a sequence of i.i.d loss variables $(X_t)_{t\in \N}$ where $X_1$ is distributed according to a stated distribution. For some $p \in (0,1)$ (e.\ g.,  $0.95$ or $0.99$), we assume the expected shortfall and possibly the value-at-risk at level $p$ to be misspecified. 
\subsubsection{Normal distribution}\label{sampsizesimulnormal}
First, let $X_1\sim \mathcal{N}(0,1)$. Using the formulas from Example 2.18 in \textcite{mcneil2005quantitative}, the true expected shortfall and value-at-risk at level $p$ are then readily available. Let the predictions for $\VaR_{p}$ and $\ES_{p}$ at each time point $t\in \N$ be given by $z_t:=d \cdot \VaR_{p}(X_1)$ and $r_t:=d \cdot \ES_{p}(X_1)$, where $d\in (0,1)$ is some factor. Under these specifications, we have applied our backtesting procedure for $1{,}000$ Monte-Carlo simulation runs each for $d \in \{0.5, 0.6, 0.7, 0.8, 0.9\}$ and a maximum sample size of $6{,}000$. Our simulated sample sizes under this scenario are summarized in Table \ref{TabelleNormSampleSizeFactor}.

\setlength{\tabcolsep}{3pt}
\begin{table}[!htb]
    \centering
    \begin{tabular}{c|c c c c|c c c c|c c c c}
    Power& \multicolumn{4}{c|}{70 \%} &  \multicolumn{4}{c|}{80 \%} & \multicolumn{4}{c}{90 \%}  \\
    \hline 
         \diagbox{d}{$\tau$} &2.2 & 3.5 & 9 & 20 & 2.2 & 3.5 & 9 & 20 & 2.2 & 3.5 & 9 & 20 \\ 
               \hline
               \multicolumn{13}{c}{\textbf{p=0.95}} \\
        0.5 & 22 & 28 & 41 & 51 & 27 & 33 & 47 & 59 & 35 & 42 & 56 & 72 \\ 
         0.6 & 33 & 43 & 62 & 78 & 41 & 51 & 73 & 92 & 53 & 66 & 91 & 111 \\ 
         0.7 & 58 & 73 & 107 & 135 & 75 & 92 & 130 & 163 & 103 & 124 & 164 & 201 \\ 
         0.8 & 119 & 156 & 240 & 302 & 155 & 197 & 285 & 355 & 211 & 267 & 362 & 436 \\ 
         0.9 & 559 & 762 & 1109 & 1348 & 742 & 974 & 1359 & 1610 & 1082 & 1280 & 1684 & 2072 \\
               \multicolumn{13}{c}{\textbf{p=0.99}} \\
         0.5 & 31 & 38 & 55 & 71 & 36 & 46 & 65 & 84 & 46 & 58 & 80 & 99 \\ 
         0.6 & 51 & 64 & 97 & 125 & 62 & 79 & 114 & 141 & 84 & 104 & 141 & 172 \\ 
         0.7 & 94 & 123 & 175 & 234 & 115 & 147 & 208 & 276 & 146 & 186 & 261 & 331 \\ 
         0.8 & 239 & 310 & 450 & 567 & 294 & 384 & 542 & 653 & 394 & 480 & 694 & 813 \\ 
         0.9 & 997 & 1323 & 1926 & 2465 & 1274 & 1708 & 2362 & 2861 & 1871 & 2287 & 2991 & 3617
    \end{tabular}
    \caption[Minimum sample sizes needed to achieve certain powers for forecasts 
    underestimating $\ES_{p}$ and $\VaR_{p}$ by a constant factor for normally distributed RVs]{Minimum sample sizes needed to achieve certain powers while using specific thresholds $\tau$ for forecasts 
    underestimating $\ES_{p}$ and $\VaR_{p}$ for levels $p \in \{0.95, 0.99\}$ by a constant factor $d$, for normally distributed random variables. All results are based on $1{,}000$ Monte Carlo simulation runs each and a maximum sample size of $6{,}000$.}
    \label{TabelleNormSampleSizeFactor}
\end{table}
\setlength{\tabcolsep}{4pt}

 The results presented in Table \ref{TabelleNormSampleSizeFactor} suggest that the required minimal sample size needed to achieve a certain power is roughly proportional to $(1-d)^{-2}$, which is in accordance to common sample size formulas. However, the required sample sizes might grow even faster for $d\to 1$. Focusing on the threshold $3.5$, we suggest a sample size of at least $14 / (1-d)^2$ for a power of $70\%$, of $18 / (1-d)^2$ for a power of $80 \%$,  and of $23 / (1-d)^2$ for a power of $90\%$.

Next, we look at the case of $z_t=\VaR_{p}(X_1)$, i.\ e., the value-at-risk being estimated correctly, and $r_t-z_t=d \cdot (\ES_{p}(X_1)-\VaR_{p}(X_1))$ for some $d\in (0,1)$, meaning that the difference between the expected shortfall and the value-at-risk is underestimated by a constant factor $d$. The results for this scenario are provided in Table \ref{TabelleNormSampleSizeVaRcorrect} and Figure \ref{fig_normal_VaR_correct}. 

\setlength{\tabcolsep}{1pt}
\begin{table}[!htb]
    \centering
    \resizebox{\textwidth}{!}{
    \begin{tabular}{c|c c c c|c c c c|c c c c}
    Power& \multicolumn{4}{c|}{70 \%} &  \multicolumn{4}{c|}{80 \%} & \multicolumn{4}{c}{90 \%}  \\
    \hline 
         \diagbox{d}{$\tau$} &2.2 & 3.5 & 9 & 20 & 2.2 & 3.5 & 9 & 20 & 2.2 & 3.5 & 9 & 20 \\ 
               \hline
         \multicolumn{13}{c}{\textbf{p=0.95}} \\      
        0.4 & 273 & 363 & 543 & 684 & 363 & 469 & 652 & 807 & 515 & 626 & 820 & 1006 \\ 
         0.5 & 408 & 550 & 791 & 1002 & 545 & 722 & 1005 & 1223 & 768 & 957 & 1321 & 1556 \\ 
         0.6 & 691 & 929 & 1382 & 1755 & 918 & 1205 & 1713 & 2089 & 1293 & 1631 & 2196 & 2579 \\ 
         0.7 & 1214 & 1671 & 2387 & 3016 & 1636 & 2204 & 3042 & 3856 & 2401 & 3090 & 4110 & 4950 \\ 
         0.8 & 2839 & 4037 & 4999 & >6000 & 4009 & 5412 & >6000 & >6000 & >6000 & >6000 & >6000 & >6000 \\
        \multicolumn{13}{c}{\textbf{p=0.99}} \\
         0.2 & 685 & 868 & 1311 & 1659 & 861 & 1069 & 1547 & 1949 & 1240 & 1464 & 1915 & 2402 \\ 
         0.3 & 1056 & 1353 & 1963 & 2494 & 1325 & 1639 & 2341 & 2865 & 1801 & 2158 & 2878 & 3546 \\ 
         0.4 & 1342 & 1847 & 2733 & 3494 & 1829 & 2307 & 3311 & 4194 & 2518 & 3122 & 4378 & 5230 \\ 
         0.5 & 2063 & 2880 & 4296 & 5347 & 2811 & 3670 & 5203 & >6000 & 4017 & 4967 & >6000 & >6000 \\ 
         0.6 & 3190 & 4569 & >6000 & >6000 & 4508 & 5896 & >6000 & >6000 & >6000 & >6000 & >6000 & >6000
    \end{tabular}
    }
    \caption[Minimum sample sizes needed to achieve certain powers for forecasts estimating $\VaR_{p}$ correctly and underestimating $\ES_{p}-\VaR_{p}$ by a constant factor for normally distributed RVs]{Minimum sample sizes needed to achieve certain powers while using specific thresholds $\tau$ for forecasts 
    estimating $\VaR_{p}$ correctly and underestimating $\ES_{p}-\VaR_{p}$ for levels $p=0.95,0.99$ by a constant factor $d$, for normally distributed random variables. All results are based on $1{,}000$ Monte Carlo simulation runs each and a maximum sample size of $6{,}000$.}
    \label{TabelleNormSampleSizeVaRcorrect}
\end{table}
\setlength{\tabcolsep}{4pt}

\begin{figure}[!htb]
  \centering
  \includegraphics[width=0.8\textwidth]{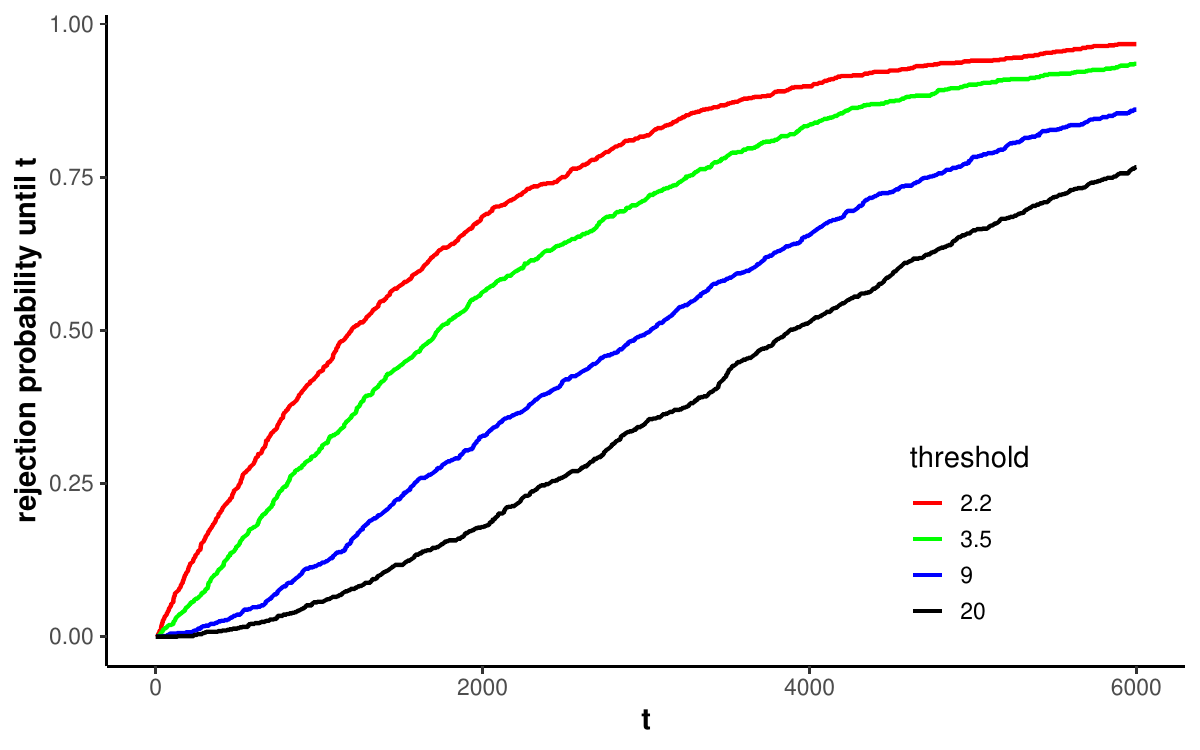}
  \caption[Cumulative rejection probabilities for forecasts estimating $\VaR_{0.99}$ correctly and underestimating $\ES_{0.99}-\VaR_{0.99}$ by a constant factor for normally distributed RVs]{Cumulative rejection probabilities for forecasts estimating $\VaR_{0.99}$ correctly and underestimating $\ES_{0.99}-\VaR_{0.99}$ by a constant factor $0.5$ for normally distributed random variables. The results are based on $1{,}000$ Monte Carlo simulations for different significance thresholds.}
  \label{fig_normal_VaR_correct}
\end{figure}

Comparing these results to those of our previous simulations, we notice that even seemingly large underestimations of the difference between the expected shortfall and the value-at-risk are hard to detect and require large sample sizes. However, when comparing the results, one needs to account for the true difference in this case being much smaller then the true value-at-risk and expected shortfall themselves. Thus, the actual estimates $r_t$ for the expected shortfall do not differ much from the true expected shortfall, even in the case of $d=0.2$. Accounting for this and evaluating the results under the different scenarios in dependence on the estimates $r_t$, we notice that the required sample sizes do not differ much. This suggests that the power of our backtesting procedure does not strongly depend on a possible misspecification of the value-at-risk.

Our results suggest that the required sample sizes needed to achieve specific powers are considerably smaller when backtesting the $95\%$ expected shortfall. This is due to the probability of some loss variable $X_t$ exceeding the projected value-at-risk being approximately five times as large as when backtesting at level $0.99$. Since our e-statistic is equal to zero when the projected value-at-risk is not exceeded, the loss variables exceeding their respective value-at-risk estimate $z_t$ are the only truly informative loss variables. Focussing again on the required sample sizes in dependence on the expected shortfall estimates $r_t$, we notice that the method performs similarly under all considered scenarios.

\subsubsection{Student's \texorpdfstring{$t$}{t}-distribution} 
Next, we investigate the case of $X_1\sim t(\nu)$ for some degrees of freedom $\nu >2$ (to ensure a finite variance of $X_1$). We first examine the scenario of the random variables being assumed as normally distributed with matching mean and variance which, using the formulas from Example 2.18 in \textcite{mcneil2005quantitative}, leads to the predictions $z_t:=\sqrt{\frac{\nu}{\nu-2}} \Phi^{-1}(p)$ for $\VaR_{p}(X_t)$ and $r_t:= \sqrt{\frac{\nu}{\nu-2}} \phi(\Phi^{-1}(p)) / (1-p)$ for $\ES_{p}(X_t)$. Again, we calculate the minimum required sample sizes to achieve certain powers when using the described e-backtesing procedure. Our results for the levels $p=0.99$ and $p=0.95$, degrees of freedom $\nu \in \{3, 5, 7, 10\}$, and common powers and thresholds are summarized in Table \ref{TabelletNorm99} and Figure \ref{fig_t_assumed_norm}.

\setlength{\tabcolsep}{1pt}
\begin{table}[!htb]
    \centering
    \resizebox{\textwidth}{!}{
    \begin{tabular}{c|c c c c|c c c c|c c c c}
    Power& \multicolumn{4}{c|}{70 \%} &  \multicolumn{4}{c|}{80 \%} & \multicolumn{4}{c}{90 \%}  \\
    \hline 
         \diagbox{$\nu$}{$\tau$} &2.2 & 3.5 & 9 & 20 & 2.2 & 3.5 & 9 & 20 & 2.2 & 3.5 & 9 & 20 \\ 
               \hline
        \multicolumn{13}{c}{\textbf{p=0.95}} \\
        3 & 2353 & 3246 & 4806 & 5925 & 3214 & 4163 & 5969 & >6000 & 4634 & 5494 & >6000 & >6000 \\ 
         5 & 1720 & 2370 & 3581 & 4491 & 2326 & 3058 & 4433 & 5407 & 3521 & 4269 & 5839 & >6000 \\ 
         7 & 2548 & 3678 & 5416 & >6000 & 3784 & 5066 & >6000 & >6000 & 5498 & >6000 & >6000 & >6000 \\ 
         10 & 3902 & 5527 & >6000 & >6000 & 5598 & >6000 & >6000 & >6000 & >6000 & >6000 & >6000 & >6000 \\ 
        \multicolumn{13}{c}{\textbf{p=0.99}} \\
         3 & 609 & 779 & 1139 & 1492 & 759 & 973 & 1387 & 1761 & 1062 & 1266 & 1801 & 2107 \\ 
         5 & 754 & 991 & 1448 & 1822 & 975 & 1218 & 1719 & 2114 & 1259 & 1615 & 2160 & 2610 \\ 
         7 & 992 & 1340 & 1966 & 2517 & 1328 & 1680 & 2408 & 2969 & 1927 & 2285 & 3086 & 3608 \\ 
         10 & 1589 & 2162 & 3167 & 3964 &2178 & 2712 & 3814 & 4840 & 3119 & 3747 & 4863 & 5901
    \end{tabular}
    }
    \caption[Minimum sample sizes needed to achieve certain powers for forecasts 
    estimating $\ES_{p}$ and $\VaR_{p}$ under the assumption of normality]{Minimum sample sizes needed to achieve certain powers while using specific thresholds $\tau$ for forecasts 
    estimating $\ES_{p}$ and $\VaR_{p}$ for levels $p \in \{0.95, 0.99\}$ under the assumption of normality for $t$-distributed random variables with $\nu$ degrees of freedom. All results are based on $1{,}000$ Monte Carlo simulation runs each and a maximum sample size of $6{,}000$.}
    \label{TabelletNorm99}
\end{table}
\setlength{\tabcolsep}{4pt}


\begin{figure}[!htb]
  \centering
  \includegraphics[width=0.8\textwidth]{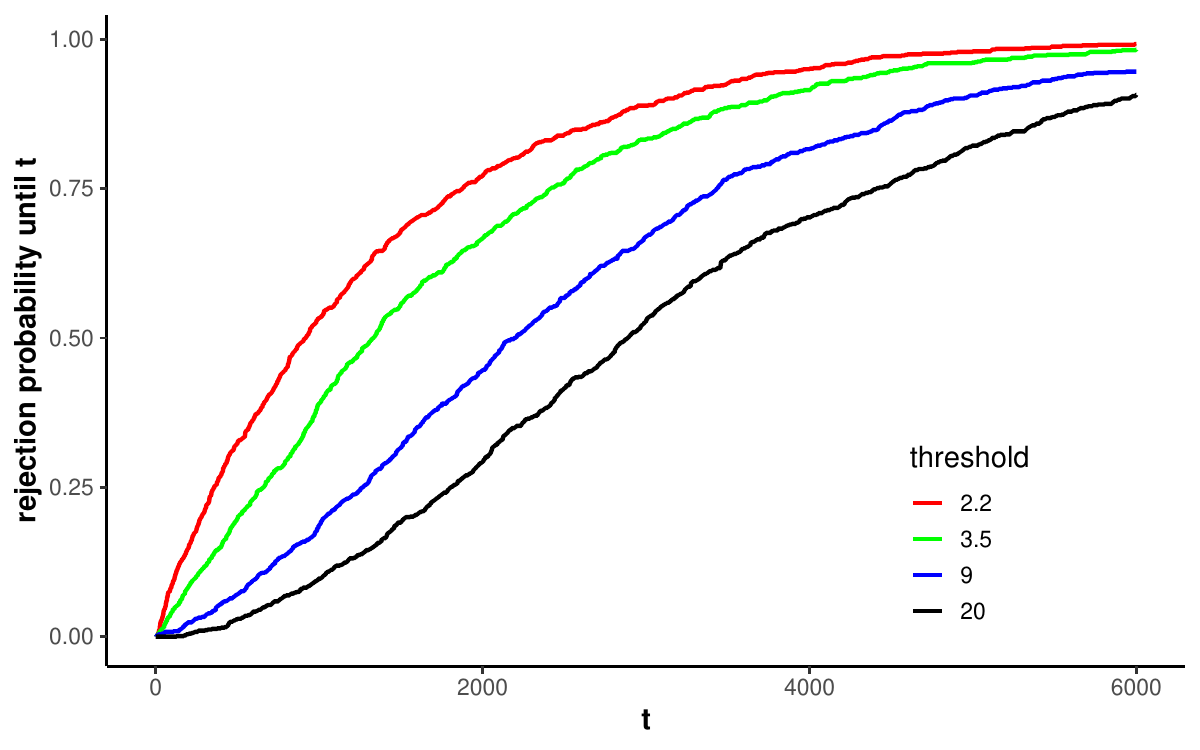}
  \caption[Cumulative rejection probabilities for forecasts 
    estimating $\ES_{0.99}$ and $\VaR_{0.99}$ under the assumption of normality for t-distributed RVs]{Cumulative rejection probabilities for forecasts 
    estimating $\ES_{0.99}$ and $\VaR_{0.99}$ under the assumption of normality for $t$-distributed random variables with $10$ degrees of freedom. All results are based on $1{,}000$ Monte Carlo simulation runs for different thresholds.}
  \label{fig_t_assumed_norm}
\end{figure}

These results suggest that the minimum required sample sizes are substantially higher when backtesting the expected shortfall at level $0.95$ than when testing at level $0.99$. This is likely due to the true expected shortfall of an accurately scaled normally distributed random variable not being much higher than the true expected shortfall of the respective t-distributed random variable. This is not the case when considering the $99\%$ expected shortfall; see Table \ref{TabelleESVaRtNormVergleich}. This table also explains the unexpected result of the required sample sizes being smaller for $\nu=5$ than for $\nu=3$ when considering the expected shortfall and value-at-risk at level $0.95$, since the true expected shortfall is about $8.65 \%$ percent higher than the estimated value for $\nu=5$, which is slightly larger than the ratio for $\nu=3$ ($8.40\%$). Apart from that, we observe (in general) a monotone relationship between the required sample sizes and $\nu$, $\tau$, and the targeted power. 

\setlength{\tabcolsep}{1pt}
\begin{table}[!htb]
    \centering
    \resizebox{\textwidth}{!}{
    \begin{tabular}{c|c c c c c c c c}
         $\nu$   & $\ES_{0.99}(T)$ & $\ES_{0.99}(N)$ & $\VaR_{0.99}(T)$ & $\VaR_{0.99}(N)$ & $\ES_{0.95}(T)$ & $\ES_{0.95}(N)$ & $\VaR_{0.95}(T)$ & $\VaR_{0.95}(N)$ \\ 
      \hline 
      3 & 7.00 & 4.62 & 4.54 & 4.03 & 3.87 & 3.57 & 2.35 & 2.85 \\ 
      5 & 4.45 & 3.44 & 3.36 & 3.00 & 2.89 & 2.66 & 2.02 & 2.12 \\ 
      7 & 3.77 & 3.15 & 3.00 & 2.75 & 2.59 & 2.44 & 1.89 & 1.95 \\ 
      10 & 3.36 & 2.98 & 2.76 & 2.60 & 2.41 & 2.31 & 1.81 & 1.84
    \end{tabular}
    }
    \caption[Expected shortfalls and value-at-risks for t-distributed RVs and normally distributed RVs with variance $\nu/(\nu-2)$]{Expected shortfalls and values-at-risks for a Student's $t$-distributed random variable $T$ with different  degrees of freedom $\nu$, and for a normally distributed random variable $N$ with variance $\nu/(\nu-2)$}
    \label{TabelleESVaRtNormVergleich}
\end{table}
\setlength{\tabcolsep}{4pt}

We have also run the same experiments as reported in Section \ref{sampsizesimulnormal}, i.\ e., underestimating the expected shortfall and the value-at-risk by a constant factor, as well as only underestimating the expected shortfall, for a sequence of $t$-distributed random variables with five degrees of freedom, in order to verify that our conclusions and approximate formulas for minimal required sample sizes  from Section \ref{sampsizesimulnormal} are comparable when considering other distributions. The corresponding results are provided in Tables \ref{TabelleTSampleSizeFactor} and \ref{TabelleTSampleSizeVaRcorrect99} as well as in Figure \ref{fig_T_VaR_correct}.

\setlength{\tabcolsep}{1pt}
\begin{table}[!htb]
    \centering
    \resizebox{\textwidth}{!}{
    \begin{tabular}{c|c c c c|c c c c|c c c c}
    Power& \multicolumn{4}{c|}{70 \%} &  \multicolumn{4}{c|}{80 \%} & \multicolumn{4}{c}{90 \%}  \\
    \hline 
         \diagbox{d}{$\tau$} &2.2 & 3.5 & 9 & 20 & 2.2 & 3.5 & 9 & 20 & 2.2 & 3.5 & 9 & 20 \\ 
               \hline
        \multicolumn{13}{c}{\textbf{p=0.95}} \\
        0.5 & 32 & 41 & 60 & 78 & 41 & 50 & 71 & 90 & 52 & 64 & 87 & 109 \\ 
         0.6 & 58 & 77 & 112 & 147 & 74 & 93 & 137 & 171 & 104 & 125 & 174 & 204 \\ 
         0.7 & 107 & 137 & 211 & 264 & 137 & 175 & 251 & 318 & 188 & 242 & 326 & 399 \\ 
         0.8 & 296 & 385 & 565 & 711 & 387 & 491 & 693 & 842 & 533 & 667 & 901 & 1063 \\ 
         0.9 & 1250 & 1831 & 2738 & 3400 & 1815 & 2446 & 3357 & 4118 & 2713 & 3303 & 4342 & 5217 \\
        \multicolumn{13}{c}{\textbf{p=0.99}} \\
         0.5 & 64 & 82 & 123 & 153 & 80 & 100 & 143 & 176 & 109 & 134 & 186 & 222 \\ 
         0.6 & 124 & 163 & 238 & 311 & 165 & 203 & 288 & 360 & 217 & 275 & 366 & 432 \\ 
         0.7 & 289 & 372 & 560 & 717 & 375 & 465 & 678 & 868 & 502 & 629 & 852 & 1060 \\ 
         0.8 & 779 & 1028 & 1605 & 2008 & 1018 & 1371 & 1929 & 2343 & 1453 & 1758 & 2450 & 2987 \\ 
         0.9 & 4035 & 5495 & >6000 & >6000 & 5540 & >6000 & >6000 & >6000 & >6000 & >6000 & >6000 & >6000
    \end{tabular}
    }
    \caption[Minimum sample sizes needed to achieve certain powers for forecasts 
    underestimating $\ES_{p}$ and $\VaR_{p}$ by a constant factor for t-distributed RVs]{Minimum sample sizes needed to achieve certain powers while using specific thresholds $\tau$ for forecasts 
    underestimating $\ES_{p}$ and $\VaR_{p}$ for levels $p \in \{0.95, 0.99\}$ by a constant factor $d$, for Student's $t$-distributed random variables with five degrees of freedom. All results are based on $1{,}000$ Monte Carlo simulations each and a maximum sample size of $6{,}000$.}
    \label{TabelleTSampleSizeFactor}
\end{table}
\setlength{\tabcolsep}{4pt}

\setlength{\tabcolsep}{1pt}
\begin{table}[!htb]
    \centering
    \resizebox{\textwidth}{!}{
    \begin{tabular}{c|c c c c|c c c c|c c c c}
    Power& \multicolumn{4}{c|}{70 \%} &  \multicolumn{4}{c|}{80 \%} & \multicolumn{4}{c}{90 \%}  \\
    \hline 
         \diagbox{d}{$\tau$} &2.2 & 3.5 & 9 & 20 & 2.2 & 3.5 & 9 & 20 & 2.2 & 3.5 & 9 & 20 \\ 
               \hline
        \multicolumn{13}{c}{\textbf{p=0.95}} \\
        0.2 & 157 & 201 & 292 & 370 & 200 & 248 & 346 & 433 & 265 & 319 & 433 & 531 \\ 
         0.3 & 201 & 269 & 407 & 519 & 260 & 351 & 496 & 623 & 386 & 464 & 640 & 811 \\ 
         0.4 & 341 & 443 & 654 & 820 & 438 & 547 & 790 & 966 & 585 & 749 & 1009 & 1255 \\ 
         0.5 & 477 & 637 & 1019 & 1279 & 649 & 867 & 1268 & 1525 & 920 & 1124 & 1565 & 1879 \\ 
         0.6 & 777 & 1063 & 1648 & 2063 & 1070 & 1382 & 2020 & 2498 & 1614 & 1950 & 2637 & 3120 \\ 
        \multicolumn{13}{c}{\textbf{p=0.99}} \\
         0.2 & 751 & 993 & 1444 & 1881 & 950 & 1209 & 1736 & 2169 & 1236 & 1580 & 2121 & 2648 \\ 
         0.3 & 1073 & 1479 & 2261 & 2817 & 1440 & 1924 & 2864 & 3311 & 1985 & 2433 & 3330 & 4096 \\ 
         0.4 & 1691 & 2240 & 3348 & 4154 & 2199 & 2807 & 3964 & 4818 & 3024 & 3667 & 4949 & 5993 \\ 
         0.5 & 2707 & 3488 & 5289 & >6000 & 3455 & 4406 & >6000 & >6000 & 4972 & >6000 & >6000 & >6000 \\ 
         0.6 & 4597 & >6000 & >6000 & >6000 & 5909 & >6000 & >6000 & >6000 & >6000 & >6000 & >6000 & >6000
    \end{tabular}
    }
    \caption[Minimum sample sizes needed to achieve certain powers for forecasts 
    estimating $\VaR_{p}$ correctly and underestimating $\ES_{p}-\VaR_{p}$ by a constant factor for t-distributed RVs]{Minimum sample sizes needed to achieve certain powers while using specific thresholds $\tau$ for forecasts 
    estimating $\VaR_{p}$ correctly and underestimating $\ES_{p}-\VaR_{p}$ for levels $p \in \{0.95,0.99\}$ by a constant factor $d$, for Student's $t$-distributed random variables with five degrees of freedom. All results are based on 
    $1{,}000$ Monte Carlo simulation runs each and a maximum sample size of $6{,}000$.}
    \label{TabelleTSampleSizeVaRcorrect99}
\end{table}
\setlength{\tabcolsep}{4pt}

\begin{figure}[!htb]
  \centering
  \includegraphics[width=0.8\textwidth]{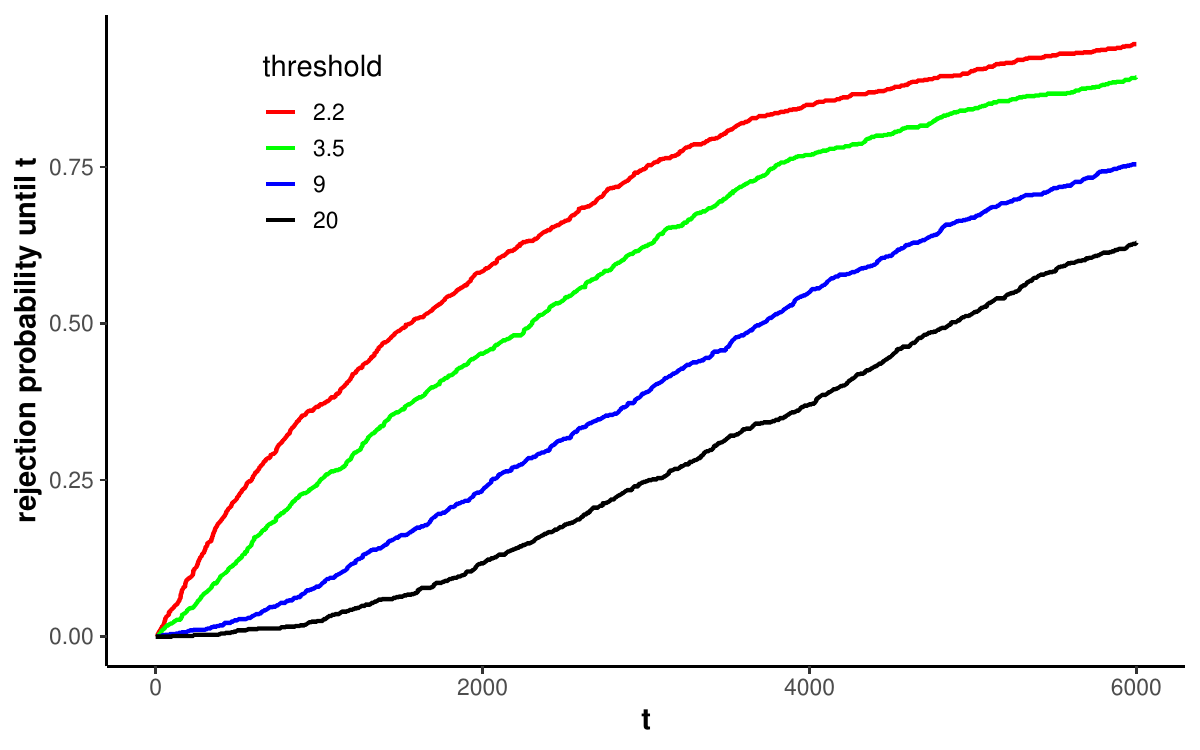}
  \caption[Cumulative rejection probabilities for forecasts estimating $\VaR_{0.99}$ correctly and underestimating $\ES_{0.99}-\VaR_{0.99}$ by a constant factor for t-distributed RVs]{Cumulative rejection probabilities for forecasts estimating $\VaR_{0.99}$ correctly and underestimating $\ES_{0.99}-\VaR_{0.99}$ by a constant factor $0.5$, for Student's $t$-distributed random variables with five degrees of freedom. All results are based on $1{,}000$ Monte Carlo simulations for different significance thresholds.}
  \label{fig_T_VaR_correct}
\end{figure}

Comparing these results to the ones for normally distributed random variables, we notice that the required sample sizes are often considerably larger. In many of our simulations, especially when considering small underestimations by  a factor of $0.9$, we were not able to exceed the examined thresholds even after a considerably large sample size of $6{,}000$, which, if every time point represented a banking day, would represent a time span of about $24$ years. 
The results also suggest that the required sample sizes in dependence on $d$ grow faster than in the order of $1/(1-d)^2$. One thus needs to be careful when trying to extrapolate our results to even smaller underestimations.  

\subsection{Extreme value distributions}\label{sec62}
In this section, we investigate sample sizes needed to reject estimation models for the expected shortfall and the value-at-risk in cases where extreme value theory is employed to model the extreme value behavior of the loss variables in question. 
To this end, let $p_1\in (0,1)$ (in this section we will consider $p_1=0.95$) and $(L_t)_{t\in \N}$ be an i.i.d. sequence of loss variables with $\VaR_{p_1}(L_1)=0$ such that $L_1|L_1>0$ is generalized Pareto distributed for some scale parameter $\beta$ and shape parameter $\epsilon$. Then, for $p\in [p_1,1)$, the true values of $\VaR_p(L_1)$ and $\ES_p(L_1)$ can be calculated via the formulas (7.18) - (7.19) in \textcite{mcneil2005quantitative}. 

\subsubsection{Correctly specified excess distribution}
We start by considering, in analogy to Section \ref{sec61}, a prediction model that underestimates either both the expected shortfall and the value-at-risk by a given constant factor $d$, or the difference between the expected shortfall and the value-at-risk for levels $0.99$ and $0.999$. (Here, we consider larger levels $p$ than in the previous section, because approximations based on extreme value theory are less biased when exploring more extreme tails.) 

First, we simulated the case of a forecaster underestimating both the expected shortfall and value-at-risk by a constant factor $d\in (0,1)$. Here, we considered the factors $d\in \{0.4,0.5,0.6,0.7,0.8\}$ for the level $p=0.99$ and $d\in \{0.4,0.5,0.6,0.7\}$ for the level $0.999$, for all investigated shape parameters $\epsilon$. Our results are summarized in Table \ref{TabelleGPDSampleSizeFactor}. These simulation results reveal that in most of our observed scenarios a large sample size is needed in order to detect even quite large underestimations of the expected shortfall and the value-at-risk. This is especially the case when considering these risk measures at level $0.999$. This effect is again due to the loss variables needing to exceed the projected value-at-risk in order for the e-statistic to be non-zero. Since such events are rare when considering levels of this magnitude, the e-processes grow only slowly. We thus recommend to consider even larger sample sizes as we did in this work (we only tested sample sizes up to $6{,}000$) when backtesting the expected shortfall and the value-at-risk at levels larger than $0.99$. 
We also noticed that the required sample sizes were considerably smaller when considering smaller shapes, which correspond to loss variables with less heavy tails. This is especially evident when backtesting $\ES_{0.999}$ and $\VaR_{0.999}$. Here, when letting $\epsilon=0.2$, even a quite large underestimation of the true risk measures by a factor of $0.6$ requires sample sizes of about $3{,}000-5{,}000$ observations in order to detect the underestimation for most investigated thresholds and powers. In the worst case ($\tau=20$ and 90\% power), the required sample size is even larger than the maximum considered value of $6{,}000$. However, when letting $\epsilon=-0.2$, the required sample sizes are considerably smaller.
\setlength{\tabcolsep}{1pt}
\begin{table}[!htb]
    \centering
    \resizebox{\textwidth}{!}{
    \begin{tabular}{c|c c c c|c c c c|c c c c}
    Power& \multicolumn{4}{c|}{70 \%} &  \multicolumn{4}{c|}{80 \%} & \multicolumn{4}{c}{90 \%}  \\
    \hline 
         \diagbox{d, $\epsilon$}{$\tau$} &2.2 & 3.5 & 9 & 20 & 2.2 & 3.5 & 9 & 20 & 2.2 & 3.5 & 9 & 20 \\ 
               \hline
        \multicolumn{13}{c}{\textbf{p=0.99}} \\
         0.4, 0.2 & 282 & 360 & 525 & 688 & 358 & 448 & 656 & 792 & 486 & 616 & 826 & 974 \\ 
         0.5, 0.2 & 462 & 627 & 855 & 1100 & 582 & 751 & 1059 & 1314 & 765 & 941 & 1296 & 1582 \\ 
         0.6, 0.2 & 779 & 999 & 1479 & 1801 & 1005 & 1258 & 1747 & 2090 & 1371 & 1646 & 2170 & 2609 \\ 
         0.7, 0.2 & 1421 & 1965 & 2855 & 3458 & 1870 & 2409 & 3333 & 4080 & 2754 & 3267 & 4200 & 4932 \\ 
         0.8, 0.2 & 3780 & 4932 & >6000 & >6000 & 4887 & >6000 & >6000 & >6000 & >6000 & >6000 & >6000 & >6000 \\ 
         0.4, 0 & 238 & 309 & 452 & 575 & 305 & 382 & 537 & 661 & 398 & 496 & 664 & 809 \\ 
         0.5, 0 & 345 & 458 & 672 & 845 & 442 & 565 & 808 & 997 & 574 & 722 & 994 & 1211 \\ 
         0.6, 0 & 555 & 743 & 1057 & 1368 & 696 & 900 & 1249 & 1619 & 954 & 1189 & 1607 & 2033 \\ 
         0.7, 0 & 1025 & 1380 & 2043 & 2537 & 1376 & 1756 & 2463 & 3006 & 1940 & 2347 & 3202 & 3759 \\ 
         0.8, 0 & 2483 & 3224 & 4732 & 5872 & 3342 & 4286 & 5761 & >6000 & 4818 & 5573 & >6000 & >6000 \\ 
         0.4, -0.2 & 208 & 267 & 368 & 475 & 259 & 313 & 449 & 553 & 330 & 401 & 559 & 672 \\ 
         0.5, -0.2 & 290 & 398 & 564 & 713 & 386 & 481 & 678 & 847 & 507 & 634 & 866 & 1045 \\ 
         0.6, -0.2 & 470 & 611 & 876 & 1122 & 590 & 749 & 1050 & 1297 & 767 & 950 & 1297 & 1556 \\ 
         0.7, -0.2 & 865 & 1091 & 1592 & 1989 & 1114 & 1414 & 1871 & 2382 & 1536 & 1853 & 2543 & 3051 \\ 
         0.8, -0.2 & 1676 & 2369 & 3392 & 4352 & 2344 & 3036 & 4226 & 5270 & 3282 & 4137 & 5483 & >6000 \\
        \multicolumn{13}{c}{\textbf{p=0.999}} \\
          0.4, 0.2 & 668 & 838 & 1192 & 1564 & 829 & 1044 & 1410 & 1828 & 1116 & 1358 & 1835 & 2279 \\ 
         0.5, 0.2 & 1193 & 1512 & 2228 & 2829 & 1498 & 1896 & 2687 & 3373 & 2069 & 2577 & 3319 & 4132 \\ 
         0.6, 0.2 & 2174 & 2792 & 3994 & 5089 & 2810 & 3455 & 4677 & 5873 & 3614 & 4455 & 5837 & >6000 \\ 
         0.7, 0.2 & 4463 & >6000 & >6000 & >6000 & 5830 & >6000 & >6000 & >6000 & >6000 & >6000 & >6000 & >6000 \\ 
         0.4, 0 & 377 & 491 & 708 & 890 & 475 & 599 & 832 & 1044 & 618 & 767 & 1036 & 1290 \\ 
         0.5, 0 & 650 & 845 & 1211 & 1564 & 837 & 1025 & 1459 & 1845 & 1111 & 1360 & 1798 & 2185 \\ 
         0.6, 0 & 1070 & 1409 & 2148 & 2736 & 1345 & 1786 & 2510 & 3196 & 1898 & 2266 & 3192 & 3925 \\ 
         0.7, 0 & 2183 & 3015 & 4526 & 5701 & 2885 & 3686 & 5304 & >6000 & 3895 & 4851 & >6000 & >6000 \\
         0.4, -0.2 & 257 & 331 & 477 & 626 & 316 & 407 & 578 & 708 & 433 & 528 & 689 & 867 \\ 
         0.5, -0.2 & 408 & 517 & 733 & 948 & 496 & 618 & 855 & 1069 & 626 & 752 & 1055 & 1317 \\ 
         0.6, -0.2 & 661 & 832 & 1251 & 1592 & 806 & 1059 & 1469 & 1896 & 1156 & 1363 & 1834 & 2259 \\ 
         0.7, -0.2 & 1184 & 1513 & 2207 & 2948 & 1547 & 1983 & 2721 & 3388 & 2076 & 2551 & 3397 & 4124 
    \end{tabular}
    }
    \caption[Minimum sample sizes needed to achieve certain powers for forecasts 
    underestimating $\ES_{p}$ and $\VaR_{p}$ by a constant factor for RVs with GPD distributed excesses]{Minimum sample sizes needed to achieve certain powers when using specific thresholds $\tau$ for forecasts 
    underestimating $\ES_{p}$ and $\VaR_{p}$ for levels $p\in \{0.99,0.999\}$ by a constant factor $d$, for random variables with generalized Pareto (GPD) excess distributions with scale $1$ and different shapes $\epsilon$. All results are based on $1{,}000$ Monte Carlo simulation runs each and a maximum sample size of $6{,}000$.}
    \label{TabelleGPDSampleSizeFactor}
\end{table}
\setlength{\tabcolsep}{4pt}
Second, we investigated predictions where the value-at-risk is estimated correctly, but the difference between the expected shortfall and the value-at-risk is underestimated by a constant factor. For this, we considered the factors $d\in \{0.2,0.3,0.4,0.5,0.6\}$ for the level $p=0.99$ and $d\in \{0.25,0.5,0.1\}$ for the level $p=0.999$, for all considered shape parameters $\epsilon$. Our results are summarized in Table \ref{TabelleGPDSampleSizeVaRCorrect}  as well as in Figure \ref{fig_GPD_VaR_correct}. In this scenario, rejections were vastly more difficult than in previously considered experiments, in particular when backtesting at level $0.999$. This is likely due to the value-at-risk not being underestimated leading to a mean amount of six loss variables exceeding the predicted (true) value-at-risk for the maximum considered sample size of $6{,}000$ and the level $0.999$. Since our e-statistic is only non-zero for these exceedance events, our e-process can only increase in the presence of such events, which explains the high sample sizes required to achieve certain powers. In the future, we recommend repeating our experiments for higher maximum sample sizes than those considered here, especially if one is interested in detecting underestimations of smaller magnitude, since we were only able to make meaningful statements for considerably large underestimations. 
For the level $0.99$ we noticed again that rejections were more difficult for larger shapes, which corresponds again to loss variables with heavier right tails, although this is less pronounced as for predictions that also underestimate the value-at-risk.
\setlength{\tabcolsep}{1pt}
\begin{table}[!htb]
    \centering
    \resizebox{\textwidth}{!}{
    \begin{tabular}{c|c c c c|c c c c|c c c c}
    Power& \multicolumn{4}{c|}{70 \%} &  \multicolumn{4}{c|}{80 \%} & \multicolumn{4}{c}{90 \%}  \\
    \hline 
         \diagbox{d, $\epsilon$}{$\tau$} &2.2 & 3.5 & 9 & 20 & 2.2 & 3.5 & 9 & 20 & 2.2 & 3.5 & 9 & 20 \\ 
               \hline
        \multicolumn{13}{c}{\textbf{p=0.99}} \\
         0.2, 0.2 & 827 & 1040 & 1542 & 2049 & 1021 & 1268 & 1871 & 2348 & 1305 & 1637 & 2311 & 2827 \\ 
         0.3, 0.2 & 1227 & 1644 & 2360 & 2950 & 1632 & 2008 & 2819 & 3486 & 2140 & 2651 & 3458 & 4218 \\ 
         0.4, 0.2 & 1812 & 2395 & 3570 & 4419 & 2368 & 3105 & 4200 & 5156 & 3378 & 4061 & 5343 & >6000 \\ 
         0.5, 0.2 & 2710 & 3678 & 5371 & >6000 & 3756 & 4754 & >6000 & >6000 & 4923 & >6000 & >6000 & >6000 \\ 
         0.6, 0.2 & 4623 & >6000 & >6000 & >6000 & >6000 & >6000 & >6000 & >6000 & >6000 & >6000 & >6000 & >6000 \\ 
         0.2, 0 & 701 & 942 & 1385 & 1739 & 899 & 1137 & 1630 & 2050 & 1213 & 1486 & 2002 & 2475 \\ 
         0.3, 0 & 1048 & 1342 & 2030 & 2525 & 1348 & 1700 & 2403 & 2915 & 1956 & 2201 & 3083 & 3716 \\ 
         0.4, 0 & 1531 & 1976 & 3009 & 3758 & 1962 & 2545 & 3606 & 4413 & 2727 & 3402 & 4633 & 5595 \\ 
         0.5, 0 & 2269 & 2935 & 4329 & 5413 & 2884 & 3826 & 5278 & >6000 & 3911 & 5199 & >6000 & >6000 \\ 
         0.6, 0 & 3606 & 5024 & >6000 & >6000 & 5040 & >6000  & >6000  & >6000 & >6000  & >6000  & >6000 & >6000 \\ 
         0.2, -0.2 & 639 & 885 & 1259 & 1624 & 850 & 1077 & 1543 & 1880 & 1103 & 1418 & 1911 & 2300 \\ 
         0.3, -0.2 & 873 & 1202 & 1773 & 2231 & 1167 & 1510 & 2164 & 2596 & 1530 & 1983 & 2693 & 3187 \\ 
         0.4, -0.2 & 1265 & 1691 & 2483 & 3204 & 1661 & 2183 & 2991 & 3780 & 2401 & 2916 & 3909 & 4698 \\ 
         0.5, -0.2 & 2010 & 2667 & 3998 & 5081 & 2744 & 3487 & 5056 & >6000 & 3996 & 4986 & >6000 & >6000 \\ 
         0.6, -0.2 & 3419 & 4499 & >6000 & >6000 & 4432 & 5930 & >6000 & >6000 & >6000 & >6000 & >6000 & >6000 \\ 
        \multicolumn{13}{c}{\textbf{p=0.999}} \\
         0.025, 0.2 & 3917 & 4952 & >6000 & >6000 & 4770 & >6000 & >6000 & >6000 & >6000 & >6000 & >6000 & >6000 \\ 
         0.05, 0.2 & 4601 & 5651 & >6000 & >6000 & 5789 & >6000 & >6000 & >6000 & >6000 & >6000 & >6000 & >6000 \\ 
         0.1, 0.2 & 5487 & >6000 & >6000 & >6000 & >6000 & >6000 & >6000 & >6000 & >6000 & >6000 & >6000 & >6000 \\  
         0.025, 0 & 3823 & 4839 & >6000 & >6000 & 4673 & 5947 & >6000 & >6000 & >6000 & >6000 & >6000 & >6000 \\ 
         0.05, 0 & 4345 & 5616 & >6000 & >6000 & 5485 & >6000 & >6000 & >6000 & >6000 & >6000 & >6000 & >6000 \\ 
         0.1, 0 & 4979 & >6000 & >6000 & >6000 & >6000 & >6000 & >6000 & >6000 & >6000 & >6000 & >6000 & >6000 \\ 
         0.025, -0.2 & 3792 & 4933 & >6000 & >6000 & 4744 & 5985 & >6000 & >6000 & >6000 & >6000 & >6000 & >6000 \\ 
         0.05, -0.2 & 4151 & 5200 & >6000 & >6000 & 5053 & >6000 & >6000 & >6000 & >6000 & >6000 & >6000 & >6000 \\ 
         0.1, -0.2 & 4780 & >6000 & >6000 & >6000 & 5921 & >6000 & >6000 & >6000 & >6000 & >6000 & >6000 & >6000
    \end{tabular}
    }
    \caption[Minimum sample sizes needed to achieve certain powers for forecasts estimating $\VaR_{p}$ correctly and underestimating $\ES_{p}-\VaR_{p}$ by a constant factor for RVs with GPD distributed excesses]{Minimum sample sizes needed to achieve certain powers while using specific thresholds $\tau$ for forecasts estimating $\VaR_{p}$ correctly and underestimating $\ES_{p}-\VaR_{p}$ for levels $p \in \{0.99,0.999\}$ by a constant factor $d$, for random variables with generelized Pareto (GPD) excess distributions with scale $1$ and different shapes $\epsilon$. All results are based on $1{,}000$ Monte Carlo simulation runs each and a maximum sample size of $6{,}000$.}
    \label{TabelleGPDSampleSizeVaRCorrect}
\end{table}
\setlength{\tabcolsep}{4pt}

\begin{figure}[!htb]
  \centering
  \includegraphics[width=0.8\textwidth]{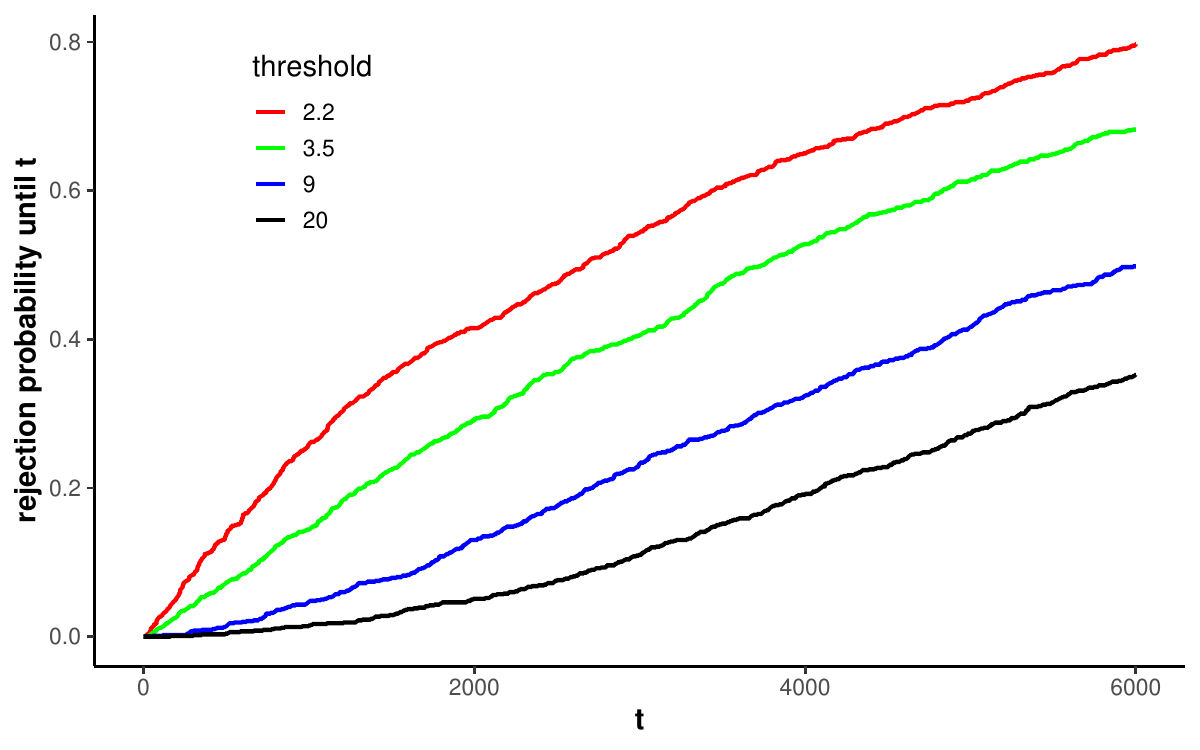}
  \caption[Cumulative rejection probabilities for forecasts estimating $\VaR_{0.99}$ correctly and underestimating $\ES_{0.99}-\VaR_{0.99}$ by a constant factor for RVs with GPD distributed excesses]{Cumulative rejection probabilities for forecasts estimating $\VaR_{0.99}$ correctly and underestimating $\ES_{0.99}-\VaR_{0.99}$ by a constant factor $0.6$, for random variables with generalized Pareto 
  (GPD)-distributed excesses with scale $1$ and shape $0.2$. All results are based on $1{,}000$ Monte Carlo simulations for different significance thresholds.}
  \label{fig_GPD_VaR_correct}
\end{figure}

\subsubsection{Misspecified excess distribution}
Lastly, we investigated forecasts misspecifying the extreme value behaviour of the involved random variables. For this, we let $(L_t)_{t\in \N}$ be an i.i.d. sequence of loss variables with $\VaR_{0.95}(L_1)=0$ such that $L_1|L_1>0$ is generalized Pareto distributed with scale parameter $1$ and shape parameter $0.2$. It is assumed that the forecaster misspecifies these parameters as $\hat{\beta}$ and $\hat{\epsilon}$ and calculates the predicted value-at-risk and expected shortfall according to the formulas (7.18) - (7.19) in \textcite{mcneil2005quantitative} using the misspecified parameters. We then backtest these predictions using the discussed backtesting procedure. Our results are summarized in Table \ref{TabelleGPDMisspecification99} and in Figure \ref{fig_GPD_misspecified}. Our results suggest that the backtesting method is more likely to detect misspecifications of the shape parameter $\epsilon$ than to detect misspecifications of the scale parameter $\beta$. This is especially the case when considering backtesting at level $0.999$. Here, a misspecification of the shape parameter as $\hat{\epsilon}=0.0$ with the scale parameter being estimated correctly can be detected quite reliably, while single underestimations of the scale parameter $\beta$ need to be quite severe in order to be detectable.
\setlength{\tabcolsep}{1pt}
\begin{table}[!htb]
    \centering
    \resizebox{\textwidth}{!}{
    \begin{tabular}{c|c c c c|c c c c|c c c c}
    Power& \multicolumn{4}{c|}{70 \%} &  \multicolumn{4}{c|}{80 \%} & \multicolumn{4}{c}{90 \%}  \\
    \hline 
         \diagbox{$(\hat{\beta},\hat{\epsilon})$}{$\tau$} &2.2 & 3.5 & 9 & 20 & 2.2 & 3.5 & 9 & 20 & 2.2 & 3.5 & 9 & 20 \\ 
         \hline
         \multicolumn{13}{c}{\textbf{p=0.99}} \\
         (0.5,0.2) & 493 & 653 & 980 & 1295 & 651 & 809 & 1157 & 1515 & 875 & 1059 & 1503 & 1866 \\
         (0.6,0.2) & 846 & 1130 & 1692 & 2166 & 1172 & 1477 & 2053 & 2587 & 1558 & 1944 & 2653 & 3233 \\
         (0.7,0.2) & 1764 & 2278 & 3444 & 4365 & 2300 & 2865 & 4210 & 5184 & 3173 & 3766 & 5325 & >6000 \\
         (0.8,0.2) & 4595 & >6000 & >6000 & >6000 & >6000 & >6000 & >6000 & >6000 & >6000 & >6000 & >6000 & >6000 \\ 
         (0.7,0.0) & 552 & 703 & 1027 & 1321 & 711 & 874 & 1254 & 1566 & 974 & 1176 & 1590 & 1943 \\
         (0.8,0.0) & 778 & 1049 & 1551 & 1995 & 1004 & 1357 & 1885 & 2338 & 1418 & 1805 & 2356 & 2902 \\ 
         (0.9,0.0) & 1285 & 1717 & 2502 & 3082 & 1699 & 2109 & 3010 & 3753 & 2357 & 2922 & 3871 & 4553 \\ 
         (1.0,0.0) & 2084 & 2836 & 4196 & 5311 & 2786 & 3563 & 5101 & >6000 & 3843 & 4742 & >6000 & >6000 \\
         (1.0,-0.2) & 708 & 919 & 1305 & 1681 & 901 & 1112 & 1539 & 1935 & 1156 & 1422 & 1936 & 2411 \\
         (1.1,-0.2) & 906 & 1186 & 1736 & 2155 & 1140 & 1446 & 2076 & 2527 & 1571 & 1943 & 2651 & 3220 \\
         (1.2,-0.2) & 1230 & 1638 & 2372 & 3018 & 1577 & 2021 & 2829 & 3651 & 2252 & 2753 & 3705 & 4632 \\
         (1.3,-0.2) & 1763 & 2376 & 3587 & 4532 & 2283 & 3153 & 4362 & 5385 & 3357 & 4193 & 5617 & >6000 \\ 
    \multicolumn{13}{c}{\textbf{p=0.999}} \\
    (0.3,0.2) & 409 & 515 & 732 & 973 & 503 & 628 & 903 & 1177 & 666 & 829 & 1123 & 1393 \\
         (0.4,0.2) & 687 & 880 & 1292 & 1688 & 885 & 1108 & 1533 & 1973 & 1130 & 1391 & 1894 & 2415 \\
         (0.5,0.2) & 1300 & 1655 & 2513 & 3196 & 1589 & 2056 & 2993 & 3781 & 2181 & 2666 & 3710 & 4678 \\
         (0.6,0.2) & 2451 & 3294 & 4946 & >6000 & 3295 & 4249 & 5894 & >6000 & 4497 & 5489 & >6000 & >6000 \\
         (0.7,0.0) & 862 & 1098 & 1593 & 2011 & 1101 & 1363 & 1897 & 2396 & 1465 & 1813 & 2357 & 2981 \\
         (0.8,0.0) & 1155 & 1464 & 2185 & 2899 & 1474 & 1769 & 2682 & 3345 & 1929 & 2372 & 3381 & 4122 \\
         (0.9,0.0) & 1716 & 2157 & 3240 & 4193 & 2150 & 2747 & 3791 & 4865 & 3023 & 3687 & 4815 & 5912 \\
         (1.0,0.0) & 2482 & 3223 & 4795 & >6000 & 3203 & 4064 & 5749 & >6000 & 4370 & 5403 & >6000 & >6000 \\
         (1.0,-0.2) & 722 & 930 & 1316 & 1724 & 909 & 1131 & 1608 & 2016 & 1206 & 1497 & 1963 & 2450 \\
         (1.1,-0.2) & 919 & 1113 & 1654 & 2118 & 1154 & 1371 & 1956 & 2481 & 1523 & 1788 & 2509 & 3037 \\
         (1.2,-0.2) & 1157 & 1443 & 2062 & 2734 & 1467 & 1775 & 2490 & 3238 & 1927 & 2281 & 3122 & 3958 \\
         (1.3,-0.2) & 1358 & 1780 & 2632 & 3416 & 1753 & 2161 & 3149 & 3905 & 2281 & 2881 & 3922 & 4726
    \end{tabular}
    }
    \caption[Minimum sample sizes needed to achieve certain powers for forecasts of $\ES_{p}$ and $\VaR_{p}$ while misspecifying the excess distributions of RVs]{Minimum sample sizes needed to achieve certain powers when using specific thresholds $\tau$ for forecasts of $\ES_{p}$ and $\VaR_{p}$ for levels $p \in \{0.99,0.999\}$ while misspecifying the excess distributions of random variables with GPD($1,0.2$)-distributed excesses as GPD($\hat{\beta},\hat{\epsilon}$). All results are based on $1{,}000$ Monte Carlo simulation runs each and a maximum sample size of $6{,}000$.}
    \label{TabelleGPDMisspecification99}
\end{table}
\setlength{\tabcolsep}{4pt}

\begin{figure}[!htb]
  \centering
  \includegraphics[width=0.8\textwidth]{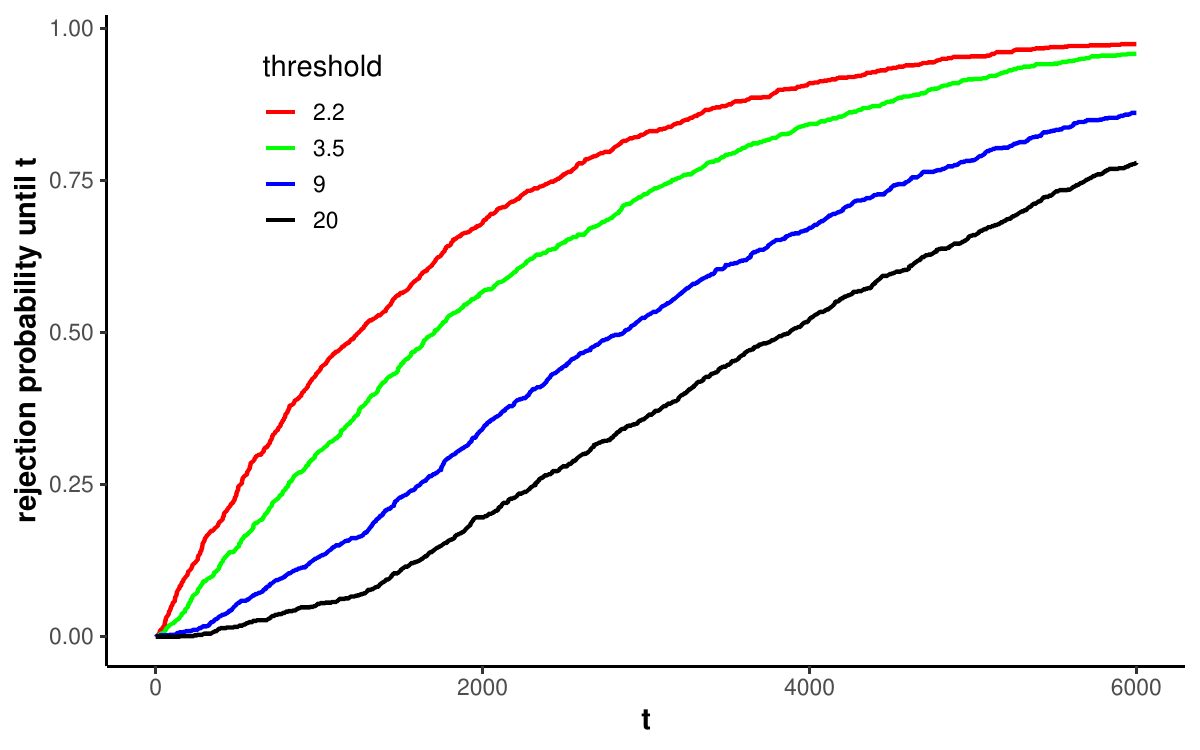}
  \caption[Cumulative rejection probabilities for forecasts estimating $\ES_{0.99}$ and $\VaR_{0.99}$ assuming GPD($1,0.0$)-distributed excesses for random variables with GPD$(1,0.2)$-distributed excesses]{Cumulative rejection probabilities for forecasts estimating $\ES_{0.99}$ and $\VaR_{0.99}$ assuming GPD($1,0.0$)-distributed excesses for random variables with GPD$(1,0.2)$-distributed excesses. All results are based on $1{,}000$ Monte Carlo simulation runs for different thresholds.}
  \label{fig_GPD_misspecified}
\end{figure}

\subsection{Misspecified temporal dynamics}\label{sec63}
In this section, we investigate sample sizes needed in order to detect misspefications of the temporal dynamics underlying the considered sequence of loss variables. For this purpose, we consider particular mixed ARMA-GARCH time series, which are defined as follows.
\begin{model}\label{arma-garch}
We assume a sequence of loss variables $(L_t)_{t\in \N}$ following an AR(1)-GARCH(1,1) model, meaning that
\begin{align*}
  L_t&=\mu_t+\sigma_t Z_t, \quad \mu_t=\alpha'_0+\alpha_1' L_{t-1}, \\ 
  \sigma^2_t&=\alpha_0+\alpha_1 (L_{t-1}-\mu_{t-1})^2+\beta_1 \sigma^2_{t-1} \, ,
\end{align*}
for parameters $\alpha'_0,\alpha_1',\alpha_0,\alpha_1,\beta_1 \in \R$ (in our case, $\alpha'_0=-0.05,\alpha_1'=0.3,\alpha_0=0.01,\alpha_1=0.1$ and $\beta_1=0.85$) and an i.i.d. sequence of centered and normalized innovation variables $(Z_t)_{t\in \N}$.
\end{model}
Under Model \ref{arma-garch}, we consider misspecifications that lead to underestimations of the expected shortfall. Since the true values for the value-at-risk and the expected shortfall are mostly dependent on the distribution of the innovation variables, and we only scale and shift them by a time-specific standard deviation $\sigma_t$ and a time-specific mean $\mu_t$, we will not consider underestimation by a given amount or factor here. Instead, we consider (i) a misspecification of the innovation distribution, (ii) a misspecification of the time series model, and (iii) constant (in time) forecasts of the value-at-risk and the expected shortfall, while their actual values change over time.
\subsubsection{Misspecified innovation distribution}
First, we focus on a misspecification of the distribution of the innovation variables. For this, we have generated $500$ data points assuming a certain distribution of $(Z_t)_{t\in \N}$, and we simulated that the forecaster fits a time series assuming another class of distribution for these innovation variables. We then backtest the forecaster's predictions for $(\ES_p(L_t))_{t\in \N}$ and  $(\VaR_p(L_t))_{t\in \N}$ using our described backtesting procedure with the GREM betting process. We begin with forecaster assuming normally distributed innovations, while the true innovations are Student's $t$-distributed with $\nu=5$ degrees of freedom. Our corresponding results, based on $1,000$ Monte-Carlo simulation runs, are provided in Table \ref{TabelleTimeSeriesTNorm99}.

\setlength{\tabcolsep}{1pt}
\begin{table}[!htb]
    \centering
    \begin{tabular}{c|c c c c|c c c c|c c c c}
    Power& \multicolumn{4}{c|}{70 \%} &  \multicolumn{4}{c|}{80 \%} & \multicolumn{4}{c}{90 \%}  \\
    \hline 
         \diagbox{p}{$\tau$} &2.2 & 3.5 & 9 & 20 & 2.2 & 3.5 & 9 & 20 & 2.2 & 3.5 & 9 & 20 \\ 
               \hline
         0.99 & 577 & 818 & 1215 & 1552 & 814 & 1030 & 1477 & 1873 & 1106 & 1407 & 2014 & 2528 \\ 
      0.95 & 1300 & 1924 & 2956 & 3922 & 2119 & 2839 & 4456 & >6000 & 4679 & >6000 & >6000 & >6000
    \end{tabular}
    \caption[Minimum sample sizes needed to achieve certain powers for forecasts 
    estimating $\ES_{p}$ and $\VaR_{p}$ of an AR-GARCH time series with Student's $t$-distributed innovations under the assumption of normally distributed innovations]{Minimum sample sizes needed to achieve certain powers for forecasts 
    estimating $\ES_{p}$ and $\VaR_{p}$ of an AR(1)-GARCH(1,1) time series with Student's $t$-distributed innovations under the working assumption of normally distributed innovations, for different thresholds $\tau$ and levels $p$. All results are based on $1{,}000$ Monte Carlo simulation runs each and a maximum sample size of $6{,}000$.}
    \label{TabelleTimeSeriesTNorm99}
\end{table}
\setlength{\tabcolsep}{4pt}

%
The results in Table \ref{TabelleTimeSeriesTNorm99} reveal that detections are considerably easier when backtesting the expected shortfall at level $0.99$ than at level $0.95$. This again illustrates that the expected shortfall and the value-at-risk at the latter level for $t$-distributed random variables do not differ much from the corresponding values for normally distributed random variables (see also Table \ref{TabelleESVaRtNormVergleich}). Examining the results for $p=0.99$, it can be observed that a sample size of about $2{,}000$ leads to very high detection probabilities when considering the thresholds $2$ and $3.5$ and reasonable high detection probabilities when applying the thresholds $9$ and $20$. We therefore recommend this sample size to detect misspecifications of this type, in particular when applying large thresholds. If one considers a smaller threshold, a sample size of $1{,}000-1{,}500$ might also be considered. 

Next, we assume the innovation variables to be skewed $t$-distributed with five degrees of freedom and skewness parameter $1.5$; see, e.\ g., \textcite{reverse-stress-testing} for details regarding the skewed $t$-distribution and its applications in risk management. Under this specification, we have simulated the forecaster assuming the innovation variables to be normally distributed or $t$-distributed, respectively. Our corresponding results are provided in Table \ref{TabelleTimeSeriesSkewedTNorm99} and Figure \ref{fig_Time_Series_skewed_T_T}. 

\begin{table}[!htb]
    \centering
    \resizebox{\textwidth}{!}{
    \begin{tabular}{c|c c c c|c c c c|c c c c}
    Power& \multicolumn{4}{c|}{70 \%} &  \multicolumn{4}{c|}{80 \%} & \multicolumn{4}{c}{90 \%}  \\
    \hline 
         \diagbox{p}{$\tau$} &2.2 & 3.5 & 9 & 20 & 2.2 & 3.5 & 9 & 20 & 2.2 & 3.5 & 9 & 20 \\ 
               \hline
          \multicolumn{13}{c}{\textbf{normal distribution}} \\
         0.99 & 218 & 295 & 426 & 563 & 278 & 367 & 515 & 655 & 388 & 477 & 667 & 846 \\ 
      0.95 & 256 & 348 & 516 & 669 & 343 & 461 & 659 & 837 & 527 & 653 & 882 & 1085 \\ 
  \multicolumn{13}{c}{\textbf{(unskewed) t-distribution}} \\
  0.99 & 853 & 1161 & 2014 & 2617 & 1232 & 1703 & 2696 & 3525 & 2261 & 2802 & 4723 & >6000 \\ 
      0.95 & 319 & 462 & 712 & 902 & 434 & 607 & 869 & 1113 & 635 & 869 & 1263 & 1508  
    \end{tabular}
    }
    \caption[Minimum sample sizes needed to achieve certain powers for forecasts 
    estimating $\ES_{p}$ and $\VaR_{p}$ of an AR-GARCH time series with skewed t-distributed innovations under the assumption of normally or t-distributed innovations]{Minimum sample sizes needed to achieve certain powers for forecasts 
    estimating $\ES_{p}$ and $\VaR_{p}$ of an AR(1)-GARCH(1,1) time series with skewed t-distributed innovations under the working assumption of normally or (unskewed) $t$-distributed innovations, respectively, for different thresholds $\tau$ and levels $p$. All results are based on $1{,}000$ Monte Carlo simulation runs each and a maximum sample size of $6{,}000$.}
    \label{TabelleTimeSeriesSkewedTNorm99}
\end{table}

\begin{figure}[!htb]
  \centering
  \includegraphics[width=0.8\textwidth]{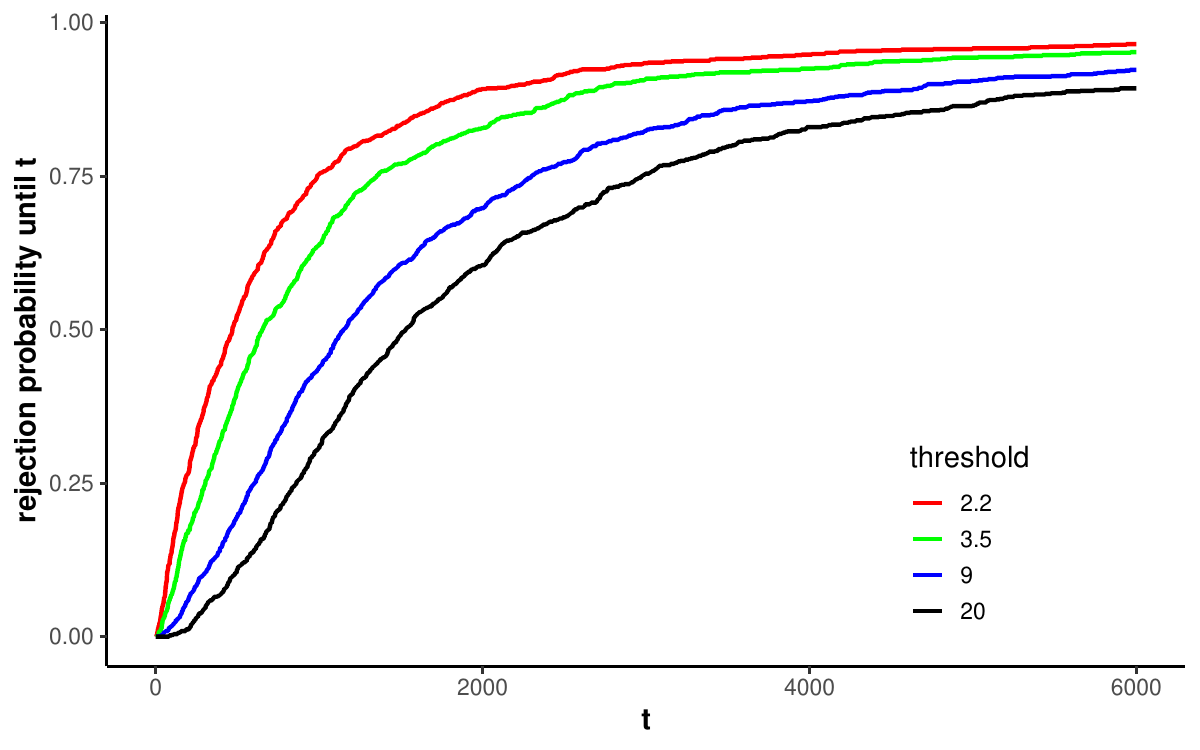}
  \caption[Cumulative rejection probabilities for forecasts estimating $\ES_{0.99}$ and $\VaR_{0.99}$ of an AR(1)-GARCH(1,1) time series with skewed t-distributed innovations under the working assumption of $t$-distributed innovations]{Cumulative rejection probabilities for forecasts estimating $\ES_{0.99}$ and $\VaR_{0.99}$ of an AR(1)-GARCH(1,1) time series with skewed $t$-distributed innovations under the working assumption of (unskewed) t-distributed innovations. All results are based on $1{,}000$ Monte Carlo simulation runs, for different thresholds.}
  \label{fig_Time_Series_skewed_T_T}
\end{figure}

From these results, we conclude that forecasts assuming normally distributed innovation variables are considerably easy to detect under all considered scenarios. When backtesting estimates at level $0.95$, this also holds true (although to a lesser degree) for forecasts assuming (unskewed) $t$-distributed innovations. When considering the level $p=0.99$, this is not the case. Here, sample sizes of at least $2{,}000$ or even higher are required in order to achieve certain detection probabilities.   
\subsubsection{Misspecified time series model}
We have also considered scenarios, where the structure of the underlying ARMA-GARCH model is estimated incorrectly. For this, we first consider an AR(1)-ARCH(1,1) time series with skewed t-distributed innovations and the same parameters as described in the previous scenario, and we have simulated that the forecaster fits a GARCH(1,1), AR(1)-ARCH(0,1),  or AR(1)-ARCH(1,0) model, respectively, and calculates estimates for the value-at-risk and the expected shortfall using this misspecified time series.

\begin{figure}[!htb]
  \centering
  \includegraphics[width=0.8\textwidth]{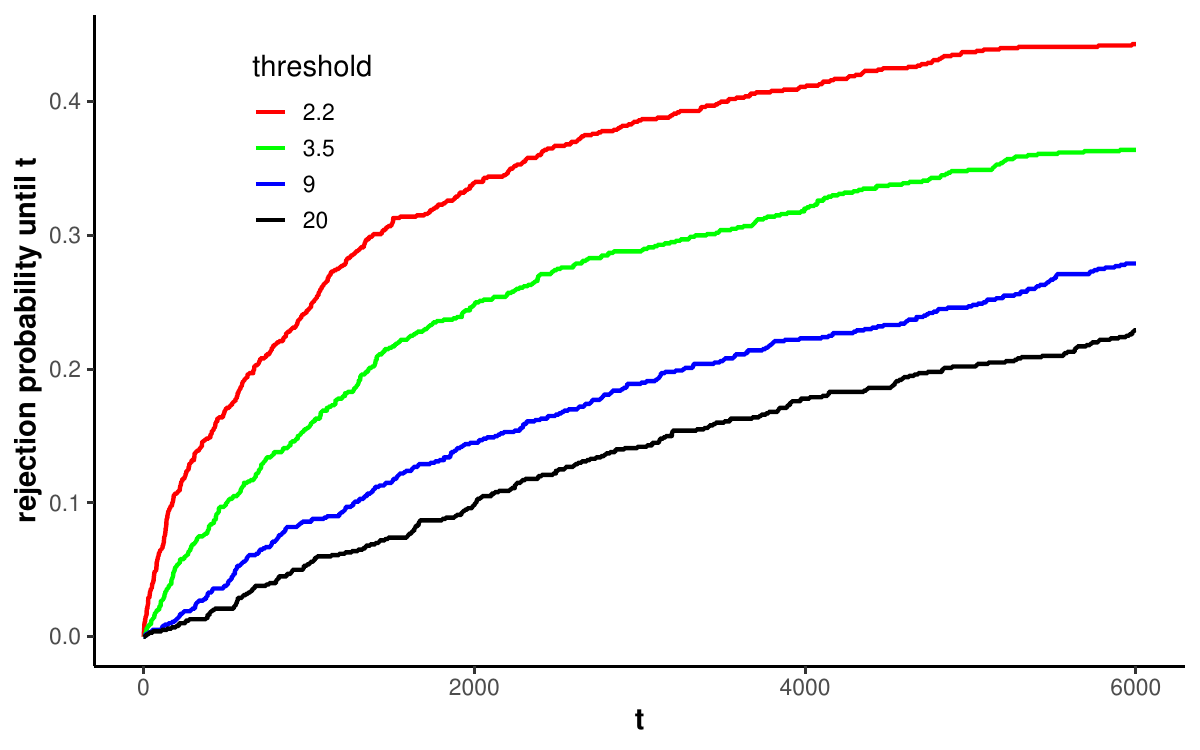}
  \caption[Cumulative rejection probabilities for forecasts estimating $\ES_{0.95}$ and $\VaR_{0.95}$ of an AR(1)-GARCH(1,1) time series by fitting a GARCH(1,1) time series]{Cumulative rejection probabilities for forecasts estimating $\ES_{0.95}$ and $\VaR_{0.95}$ of an AR(1)-GARCH(1,1) time series with skewed $t$-distributed innovations by fitting a GARCH(1,1) time series. All results are based on $1{,}000$ Monte Carlo simulation runs for different thresholds.}
  \label{fig_without_AR}
\end{figure}
For the case of fitting a GARCH(1,1) model, we notice that detections are very difficult; see Figure \ref{fig_without_AR} for the case of $p=0.95$. In the case of $p=0.99$, rejections were even less frequent. This is likely due to the average projected risk measures not differing much from the average risk measures when considering the true autoregressive model. Since both the GREE and the GREL betting processes do not increase bets if e-statistics were large only recently, but consider the whole time horizon, these betting processes do not utilize the autoregressive structure of the true time series optimally.

\begin{figure}[!htb]
  \centering
  \includegraphics[width=0.8\textwidth]{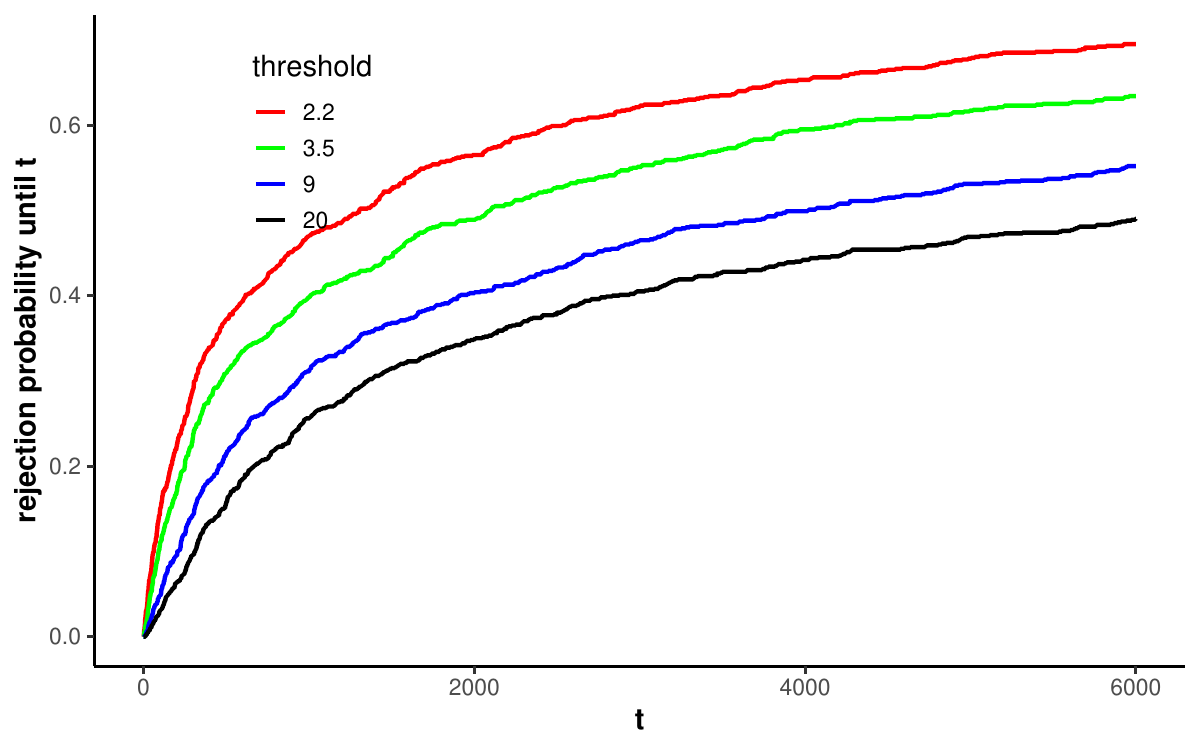}
  \caption[Cumulative rejection probabilities for forecasts estimating $\ES_{0.95}$ and $\VaR_{0.95}$ of an AR(1)-GARCH(1,1) time series derived by fitting an AR(1)-GARCH(0,1) time series]{Cumulative rejection probabilities for forecasts estimating $\ES_{0.95}$ and $\VaR_{0.95}$ of an AR(1)-GARCH(1,1) time series with skewed t-distributed innovations derived by fitting an AR(1)-GARCH(0,1) time series. All results are based on $1{,}000$ Monte Carlo simulations for different thresholds.}
  \label{fig_without_ARCH}
\end{figure}

\begin{figure}[!htb]
  \centering
  \includegraphics[width=0.8\textwidth]{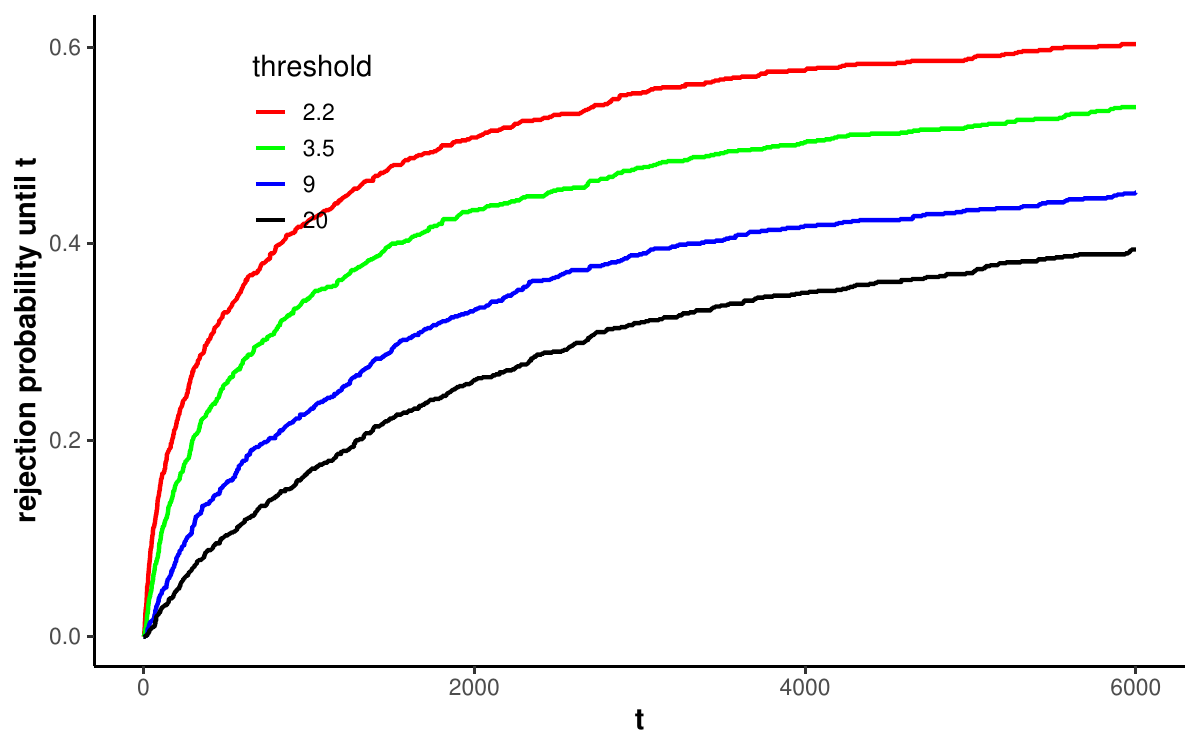}
  \caption[Cumulative rejection probabilities for forecasts estimating $\ES_{0.95}$ and $\VaR_{0.95}$ of an AR(1)-GARCH(1,1) time series derived by fitting an AR(1)-GARCH(1,0) time series]{Cumulative rejection probabilities for forecasts estimating $\ES_{0.95}$ and $\VaR_{0.95}$ of an AR(1)-GARCH(1,1) time series with skewed $t$-distributed innovations derived by fitting an AR(1)-GARCH(1,0) time series. All results are based on $1{,}000$ Monte Carlo simulation runs for different thresholds.}
  \label{fig_without_G}
\end{figure}

%
Misspecification of the GARCH part of the time series also led to surprisingly small detection probabilities, although to a lesser degree, when compared to detection probabilities under changes in the AR part. The largest rejection probabilities occured when omitting the coefficient $\alpha_1$ (see Figure \ref{fig_without_ARCH}) or both the $\alpha_1$ and the $\beta_1$ coefficient compared to just omitting $\beta_1$ (see Figure \ref{fig_without_G}). This again illustrates that the GREE and the GREL betting processes are unable to handle autoregressive structures well, also with regard to the GARCH part of the time series.

Therefore, we also tried using the $T$-FH GREM betting process with $T=50$ for the aforementioned scenarios of misspecification of the true AR-GARCH structure. The corresponding results are provided in Figures \ref{fig_without_AR_FH}, \ref{fig_without_ARCH_FH}, and \ref{fig_without_G_FH}, as well as in Table \ref{TabelleTimeSeriesFH}. We notice a substantial increase in the rejection probabilities when using the $50$-FH GREM betting process, especially in the presence of misspecification of the GARCH part of the time series. For practice, we thus recommend the usage of a $T$-FH GREM betting process for these types of scenarios. 

\begin{figure}[!htb]
  \centering
  \includegraphics[width=0.8\textwidth]{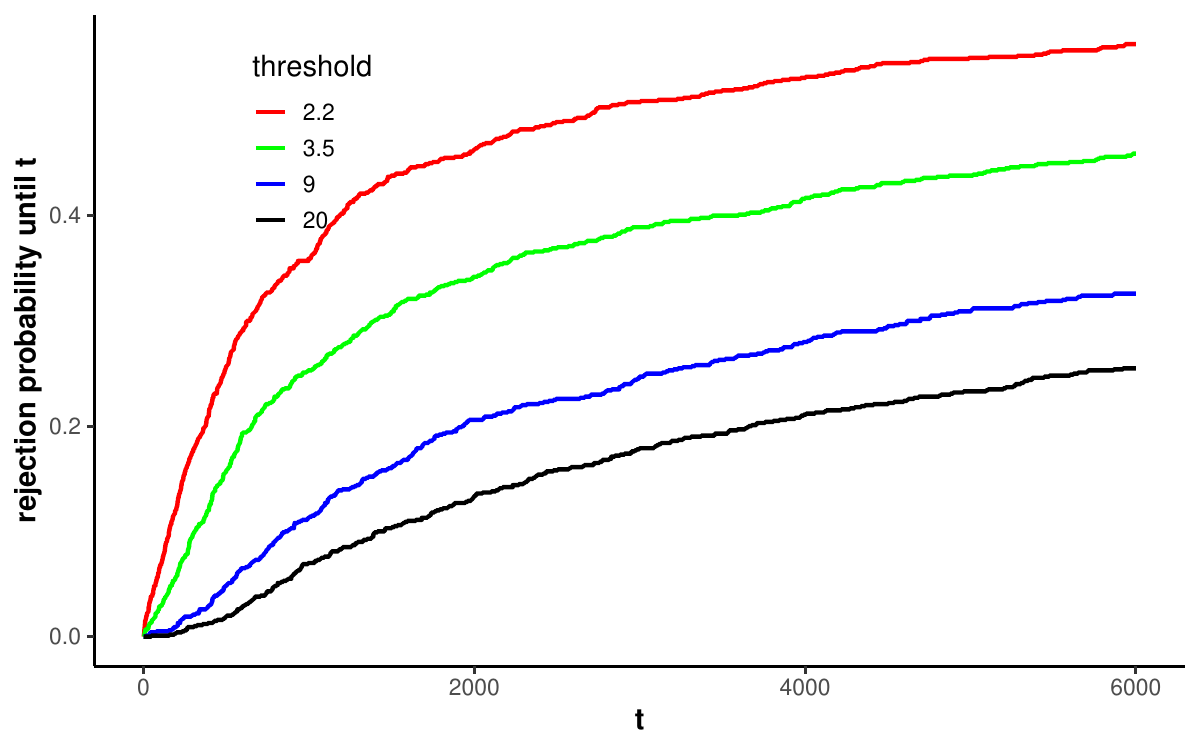}
  \caption[Cumulative rejection probabilities for forecasts estimating $\ES_{0.95}$ and $\VaR_{0.95}$ of an AR(1)-GARCH(1,1) time series derived by fitting a GARCH(1,1) time series using the \textbf{$50$-FH GREM} betting process]{Cumulative rejection probabilities for forecasts estimating $\ES_{0.95}$ and $\VaR_{0.95}$ of an AR(1)-GARCH(1,1) time series with skewed $t$-distributed innovations derived by fitting a GARCH(1,1) time series using the $50$-FH GREM betting process. All results are based on $1{,}000$ Monte Carlo simulation runs for different thresholds.}
  \label{fig_without_AR_FH}
\end{figure}

\begin{figure}[!htb]
  \centering
  \includegraphics[width=0.8\textwidth]{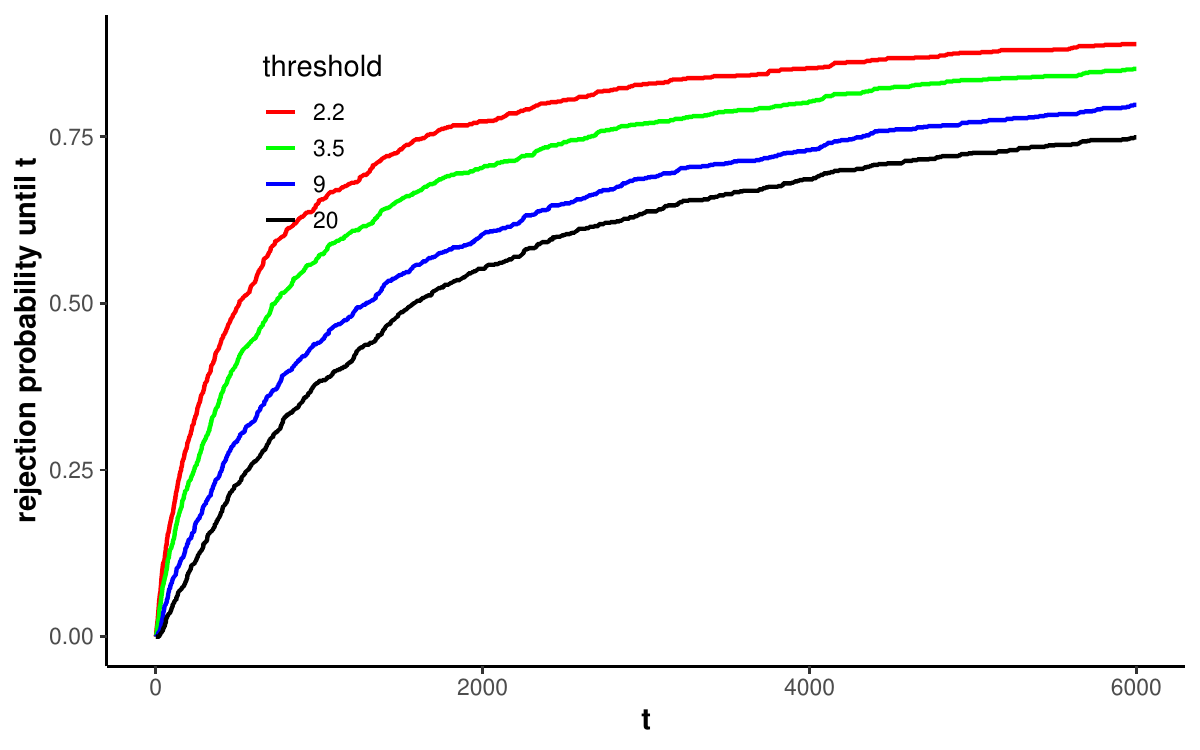}
  \caption[Cumulative rejection probabilities for forecasts estimating $\ES_{0.95}$ and $\VaR_{0.95}$ of an AR(1)-GARCH(1,1) time series derived by fitting an AR(1)-GARCH(0,1) time series using the \textbf{$50$-FH GREM} betting process]{Cumulative rejection probabilities for forecasts estimating $\ES_{0.95}$ and $\VaR_{0.95}$ of an AR(1)-GARCH(1,1) time series with skewed $t$-distributed innovations derived by fitting an AR(1)-GARCH(0,1) time series using the $50$-FH GREM betting process. All results are based on $1{,}000$ Monte Carlo simulation runs for different thresholds.}
  \label{fig_without_ARCH_FH}
\end{figure}

\begin{figure}[!htb]
  \centering
  \includegraphics[width=0.8\textwidth]{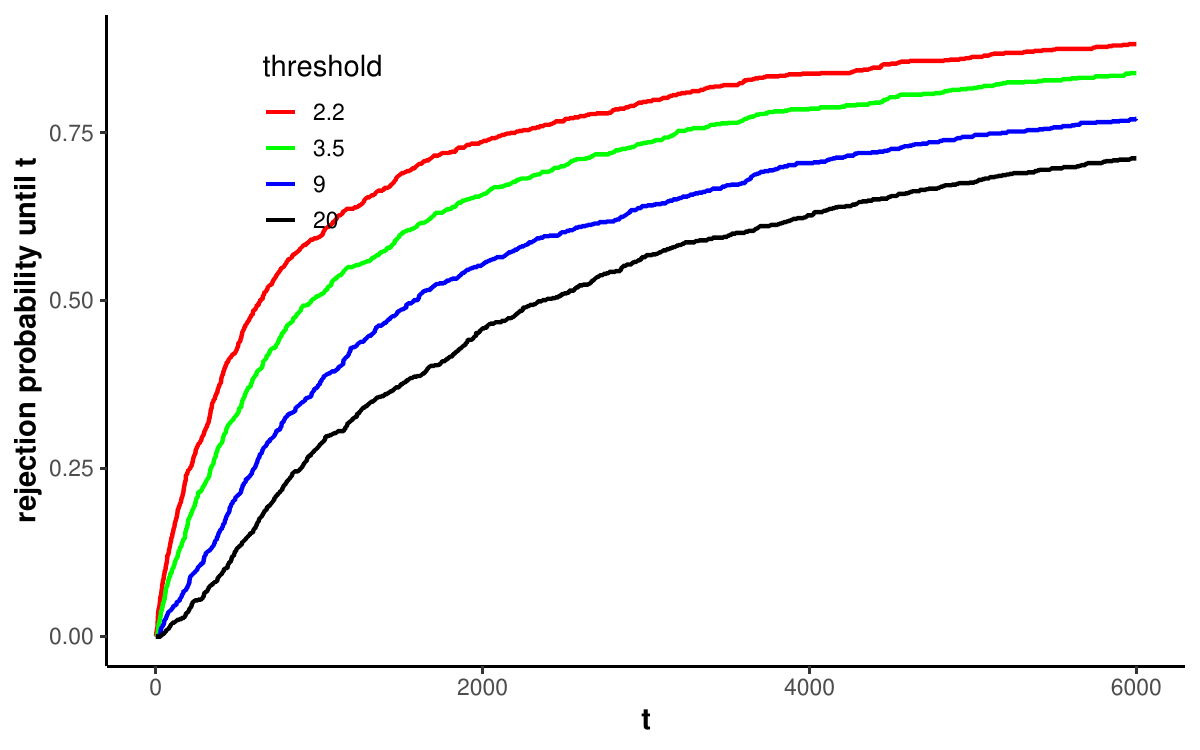}
  \caption[Cumulative rejection probabilities for forecasts estimating $\ES_{0.95}$ and $\VaR_{0.95}$ of an AR(1)-GARCH(1,1) time series derived by fitting an AR(1)-GARCH(1,0) time series using the \textbf{$50$-FH GREM} betting process]{Cumulative rejection probabilities for forecasts estimating $\ES_{0.95}$ and $\VaR_{0.95}$ of an AR(1)-GARCH(1,1) time series with skewed $t$-distributed innovations derived by fitting an AR(1)-GARCH(1,0) time series using the $50$-FH GREM betting process. All results are based on $1{,}000$ Monte Carlo simulation runs for different thresholds.}
  \label{fig_without_G_FH}
\end{figure}


\setlength{\tabcolsep}{1pt}
\begin{table}[!htb]
    \centering
    \resizebox{\textwidth}{!}{
    \begin{tabular}{c|c c c c|c c c c|c c c c}
    Power& \multicolumn{4}{c|}{70 \%} &  \multicolumn{4}{c|}{80 \%} & \multicolumn{4}{c}{90 \%}  \\
    \hline 
         \diagbox{(p,q)}{$\tau$} &2.2 & 3.5 & 9 & 20 & 2.2 & 3.5 & 9 & 20 & 2.2 & 3.5 & 9 & 20 \\ 
               \hline
               (0,1) & 1312 & 1949 & 3198 & 4181 &2355 & 3914 & >6000 & >6000 & >6000 & >6000 & >6000 & >6000 \\ 
               (1,0) &1593 & 2475 & 3843 & 5634 &3086 & 4470 & >6000 & >6000 & >6000 & >6000 & >6000 & >6000 \\ 
               (0,0) & 1097 & 1527 & 2502 & 3388 & 1696 & 2529 & 4282 & 5148 & 3886 & 5581 & >6000 & >6000
    \end{tabular}
    }
    \caption[Minimum sample sizes needed to achieve certain powers for estimating $\ES_{0.95}$ and $\VaR_{0.95}$ of an AR(1)-GARCH(1,1) time series derived by fitting a misspecified AR(1)-GARCH(p,q) model using the \textbf{$50$-FH GREM} betting process]{Minimum sample sizes needed to achieve certain powers when using specific thresholds $\tau$ for estimating $\ES_{0.95}$ and $\VaR_{0.95}$ of an AR(1)-GARCH(1,1) time series with skewed $t$-distributed innovations derived by fitting a misspecified AR(1)-GARCH(p,q) model using the $50$-FH GREM betting process. All results are based on $1{,}000$ Monte Carlo simulation runs each.}
    \label{TabelleTimeSeriesFH}
\end{table}
\setlength{\tabcolsep}{4pt}
\subsubsection{Constant forecasts of the value-at-risk and the expected shortfall}
As our final scenario, we consider forecasts, where the estimates for the value-at-risk $z_t$ and expected shortfall $r_t$ are constant for every $t\in \N$ and calculated using the unconditional mean $\mu$ and the unconditional variance $\sigma^2$ of the considered AR(1)-GARCH(1,1) time series. The latter quantities are given by
\[
\mu=\frac{\alpha'_0}{1-\alpha_1'} \text{~~and~~} 
\sigma^2=\frac{\alpha_0}{1-\alpha_1-\beta_1}.
\]
Since the actual values $\mu_t$ and $\sigma_t$ vary over time, the resulting risk estimates can severely underestimate the true expected shortfall or value-at-risk at certain time points. However, the reported risk will be roughly the same as the true risk averaged over time. We thus expect our backtesting procedure using the GREM betting process to perform poorly, since both the GREE and the GREL processes (which are equal under this scenario) do not increase bets if the true risk of some loss variable $L_t$ at time point $t\in \N$ is large. Our corresponding simulation results are summarized in Table \ref{TabelleTimeSeriesConstant} and Figure \ref{fig_Time_Series_Skewed_T_constant}.

\setlength{\tabcolsep}{1pt}
\begin{table}[!htb]
    \centering
    \resizebox{\textwidth}{!}{
    \begin{tabular}{c|c c c c|c c c c|c c c c}
    Power& \multicolumn{4}{c|}{70 \%} &  \multicolumn{4}{c|}{80 \%} & \multicolumn{4}{c}{90 \%}  \\
    \hline 
         \diagbox{p}{$\tau$} &2.2 & 3.5 & 9 & 20 & 2.2 & 3.5 & 9 & 20 & 2.2 & 3.5 & 9 & 20 \\ 
               \hline
          \multicolumn{13}{c}{\textbf{normal distribution}} \\
         0.99 & 977 & 1271 & 1850 & 2305 & 1403 & 1841 & 2459 & 2995 & 2295 & 2735 & 3455 & 4083 \\ 
      0.95 & 826 & 1103 & 1763 & 2195 & 1327 & 1684 & 2462 & 3022 & 2166 & 2695 & 3675 & 4309 \\ 
      \multicolumn{13}{c}{\textbf{t-distribution}} \\
         0.99 & 4002 & 5499 & >6000 & >6000 & >6000 & >6000 & >6000 & >6000 & >6000 & >6000 & >6000 & >6000 \\ 
      0.95 & 2141 & 3213 & 5388 & >6000 & 3917 & 5397 & >6000 & >6000 & >6000 & >6000 & >6000 & >6000 \\
      \multicolumn{13}{c}{\textbf{skewed t-distribution}} \\
         0.99 & >6000 & >6000 & >6000 & >6000 & >6000 & >6000 & >6000 & >6000 & >6000 & >6000 & >6000 & >6000 \\ 
      0.95 & 5658 & >6000 & >6000 & >6000 & >6000 & >6000 & >6000 & >6000 & >6000 & >6000 & >6000 & >6000 \\ 
    \end{tabular}
    }
    \caption[Minimum sample sizes needed to achieve certain powers for forecasts 
    estimating $\ES_{p}$ and $\VaR_{p}$ of an AR-GARCH time series as constant]{Minimum sample sizes needed to achieve certain powers for constant forecasts 
    of $\ES_{p}$ and $\VaR_{p}$ of an AR(1)-GARCH(1,1) time series with different innovation distributions, for different thresholds $\tau$ and levels $p$. All results are based on $1{,}000$ Monte Carlo simulation runs each and a maximum sample size of $6{,}000$.}
    \label{TabelleTimeSeriesConstant}
\end{table}
\setlength{\tabcolsep}{4pt}



\begin{figure}[!htb]
  \centering
  \includegraphics[width=0.8\textwidth]{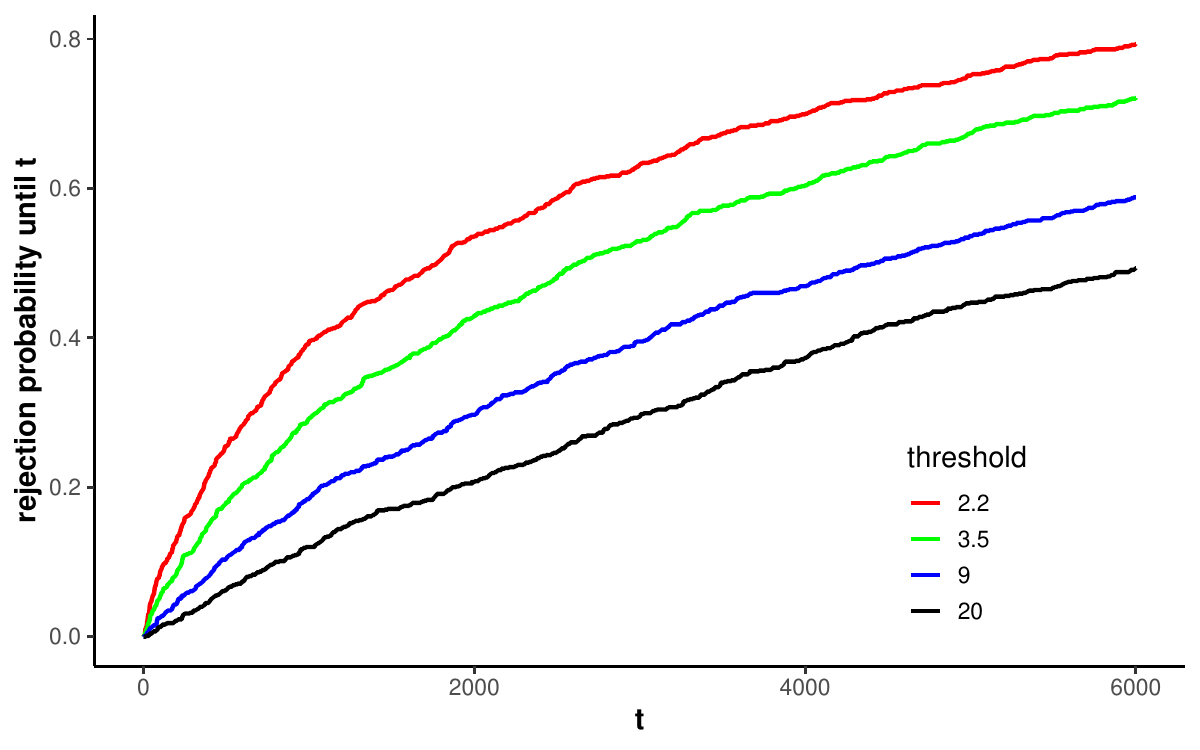}
  \caption[Cumulative rejection probabilities for forecasts estimating $\ES_{0.99}$ and $\VaR_{0.99}$ of an AR-GARCH time series with skewed t-distributed innovations as constant]{Cumulative rejection probabilities for constant forecasts of $\ES_{0.99}$ and $\VaR_{0.99}$ of an AR(1)-GARCH(1,1) time series with skewed $t$-distributed innovations with five degrees of freedom and skewness parameter $1.5$. All results are based on $1{,}000$ Monte Carlo simulation runs for different thresholds.}
  \label{fig_Time_Series_Skewed_T_constant}
\end{figure}

Indeed, we find detections to be considerably difficult. In the case of normally distributed innovations, this is not as pronounced as in the case of the other considered innovation distributions. For the normal distribution case, a sample size of $2{,}000-2{,}500$ will yield acceptable rejection probabilities for most of the considered scenarios. However, considering $t$- and especially skewed $t$-distributed innovation variables, detections often do not occur even after $6{,}000$ observations. Thus, one might consider other betting processes than the GREM betting process, for example a $T$-FH GREM betting process, in order to detect (more reliably) forecasting models for $\ES_p$, where the average risk forecasts are accurate (or overestimating the true risks), but large risks at certain time points are underestimated.  One might also combine the finite-horizon betting processes with the GREE and the GREL betting processes in a similar manner as suggested by Lemma \ref{GREMGREEGREL}, in order to keep the asymptotic optimality of these betting processes in the scenarios mentioned in Theorem \ref{AsymOptiofAlgs}. 
\subsection{Analytic derivation of sample size bounds}\label{sec64}
In the previous sections, we mostly derived the required sample sizes to achieve certain powers by means of numerical computer simulations. In most practical applications, this is necessary since the models are usually too complex to be precisely tractable. In this section, however, we propose an analytical approach for determining sample size bounds. 

For this, let $p\in (0,1)$ and $(X_t)_{t\in \N}$ be a sequence of i.i.d. Bernoulli-distributed loss variables with success parameter $1-p$, defined on some probability space $(\Omega^\prime,\mathcal{F}^\prime,\mathbb{Q})$. Obviously, $\VaR_{p}(X_1)=0$ and $\ES_p(X_1)=1$. Furthermore, as before, denote by
$(\Omega,\mathcal{F},(\mathcal{F}_n)_{n\in T},\mathbb{P})$ the filtered probability space
on which the observed losses $(L_t)_{t\in T}$ are defined. 
Now, consider $r\in (0,1)$ to be a constant (under-)estimate of $\ES_p(X_t)$, where we assume that $\VaR_{p}(X_t)$ is estimated correctly. Our proposed sample size bounds are then based on the following result.
\begin{theorem}\label{SSBBBestCase}
  Under the aforementioned specifications, let $(L_t)_{t\in \N}$ be a sequence of loss variables and $(r_t,z_t)_{t\in \N}$ be a sequence of estimates for $\ES_p(L_t|\mathcal{F}_{t-1})$ and $\VaR_p(L_t|\mathcal{F}_{t-1})$ such that $z_t=\VaR_p(L_t|\mathcal{F}_{t-1})$ and $(r_t-z_t)\geq r \cdot \mathbb{E}[L_t-z_t|L_t>z_t, \mathcal{F}_{t-1}]$ for all $t\in \N$. Then, it holds for all $t\in \N$, that
  \begin{align*}
    \max_{\lambda\in [0,1]} \mathbb{E}^\mathbb{P}[\log(1-\lambda+\lambda e^{\ES}_p(L_t,r_t,z_t))|\mathcal{F}_{t-1}] \leq \max_{\lambda \in [0,1]} \mathbb{E}^\mathbb{Q}[\log(1-\lambda+\lambda e^{\ES}_p(X_t,r,0))].
  \end{align*}
\end{theorem}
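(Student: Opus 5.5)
The plan is to compare the two expected log-growth functions pointwise in $\lambda$ and then take the maximum over $\lambda\in[0,1]$ on both sides. Write $c:=1/((1-p)r)$. Since $r\in(0,1)$ and $1-p\in(0,1)$, we have $c>1$.

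\textbf{Step 1: the Bernoulli side.} For the Bernoulli model we have $e^{\ES}_p(X_t,r,0)=X_t/((1-p)r)$. This equals $c$ with $\mathbb{Q}$-probability $1-p$ and $0$ otherwise. Hence
\[
g(\lambda):=\mathbb{E}^\mathbb{Q}[\log(1-\lambda+\lambda e^{\ES}_p(X_t,r,0))]=p\log(1-\lambda)+(1-p)\log(1-\lambda+\lambda c),
\]
with the convention $\log 0=-\infty$ at $\lambda=1$. It therefore suffices to show that for every fixed $\lambda\in[0,1]$ the conditional expectation on the left-hand side is at most $g(\lambda)$ almost surely.

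\textbf{Step 2: the observed losses, for fixed $\lambda$.} Fix $t$ and $\lambda$, work conditionally on $\mathcal{F}_{t-1}$, and write $Y:=e^{\ES}_p(L_t,r_t,z_t)=(L_t-z_t)_+/((1-p)(r_t-z_t))$. Let $q:=\mathbb{P}(L_t>z_t\mid\mathcal{F}_{t-1})$ and $m:=\mathbb{E}[L_t-z_t\mid L_t>z_t,\mathcal{F}_{t-1}]$.
\begin{itemize}
\item Since $z_t=\VaR_p(L_t\mid\mathcal{F}_{t-1})$, the definition of the quantile gives $\mathbb{P}(L_t\le z_t\mid\mathcal{F}_{t-1})\ge p$, hence $q\le 1-p$.
\item On $\{L_t\le z_t\}$ we have $Y=0$.
\item On $\{L_t>z_t\}$, the hypothesis $r_t-z_t\ge r m$ gives $Y\le (L_t-z_t)/((1-p)rm)$.
\item Applying the conditional Jensen inequality to the concave function $\log$ on the event $\{L_t>z_t\}$ yields
\[
\mathbb{E}[\log(1-\lambda+\lambda Y)\mid L_t>z_t,\mathcal{F}_{t-1}]\le \log\Bigl(1-\lambda+\lambda\tfrac{m}{(1-p)rm}\Bigr)=\log(1-\lambda+\lambda c).
\]
\end{itemize}
Combining the two events, the conditional expectation is at most $(1-q)\log(1-\lambda)+q\log(1-\lambda+\lambda c)$.

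\textbf{Step 3: monotonicity in $q$.} Since $c>1$, we have $\log(1-\lambda+\lambda c)\ge\log(1-\lambda)$. So the bound from Step 2 is nondecreasing in $q$. Using $q\le 1-p$, it is at most $g(\lambda)$. Maximising over $\lambda$ on both sides finishes the argument, because a pointwise bound by $g(\lambda)$ implies the bound $\max_\lambda g$.

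\textbf{Main obstacle.} I expect the delicate part to be the conditioning and degenerate cases rather than the inequality chain itself.
\begin{itemize}
\item If $q=0$, then $Y=0$ almost surely and the left-hand side equals $\log(1-\lambda)\le g(\lambda)$ directly.
\item We need $m\in(0,\infty)$, which follows from $L_t>z_t$ on the event and integrability. If $m=\infty$, the hypothesis forces $r_t=\infty$, so $Y=0$ and the claim is trivial.
\item At $\lambda=1$ both sides may equal $-\infty$, or $Y$ may vanish; this is handled by the extended-real convention.
\end{itemize}
Finally, Jensen should be applied to regular conditional distributions given $\mathcal{F}_{t-1}$, so that all the inequalities hold almost surely.
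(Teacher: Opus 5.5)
Your proof is correct and follows the paper's argument. Both split on the exceedance event $\{L_t>z_t\}$, apply Jensen's inequality there together with $r_t-z_t\ge r\,\mathbb{E}[L_t-z_t\mid L_t>z_t,\mathcal{F}_{t-1}]$, use $\mathbb{P}(L_t>z_t\mid\mathcal{F}_{t-1})\le 1-p$, and compare pointwise in $\lambda$ with the explicit Bernoulli expression. The only differences are minor: the paper folds the exceedance-probability bound into its first inequality before Jensen, whereas you apply it afterwards via monotonicity in $q$, and your handling of the degenerate cases and of regular conditional distributions spells out what the paper leaves implicit.
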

\begin{proof}
  Let $t\in \N$ and $\lambda \in [0,1]$. By splitting the expectation pertaining to $\mathbb{P}$ and using the definition of $e^{\ES}_p$ as well as $z_t=\VaR_p(L_t|\mathcal{F}_{t-1})$, we derive that
  \begin{align}
 &\mathbb{E}^\mathbb{P}[\log(1-\lambda+\lambda e^{\ES}_p(L_t,r_t,z_t))|\mathcal{F}_{t-1}] \nonumber \\ 
 \leq & p \log(1-\lambda) + (1-p)\mathbb{E}^\mathbb{P}[\log(1-\lambda+\lambda e^{\ES}_p(L_t,r_t,z_t))|L_t> z_t,\mathcal{F}_{t-1}] \label{Gleichung91}\, .
  \end{align}
Since $\log$ is concave, we get by Jensen's inequality, that
  \begin{align}
    &\mathbb{E}^\mathbb{P}[\log(1-\lambda+\lambda e^{\ES}_p(L_t,r_t,z_t))|L_t> z_t, \mathcal{F}_{t-1}] \nonumber \\ 
    \leq& \log\left( 1-\lambda +\lambda \frac{\mathbb{E}^\mathbb{P}[(L_t-z_t)_+|L_t>z_t,\mathcal{F}_{t-1}]}{(1-p)(r_t-z_t)}\right) \nonumber\\ 
    \leq& \log\left( 1-\lambda +\lambda \frac{1}{(1-p)r}\right). \label{Gleichung92}
  \end{align}
  Considering $X_t$, we calculate straightforwardly, that
  \begin{align*}
  \mathbb{E}^\mathbb{Q}[\log(1-\lambda+\lambda e^{\ES}_p(X_t,r,0))]=p \log(1-\lambda)+(1-p)\log\left( 1-\lambda +\lambda \frac{1}{(1-p)r}\right) \, ,
  \end{align*}
  which proves the result in conjunction with inequalities \eqref{Gleichung91} and \eqref{Gleichung92}. 
\end{proof}
Theorem \ref{SSBBBestCase} shows that when using the described backtesting procedure with the GRO betting process, the e-process testing the $\ES_p$ forecasts $r$ for $(X_t)_{t\in \N}$ has no smaller e-power than the e-process testing a sequence of loss variables where the value-at-risk is always estimated correctly and the estimated expected excess $r_t-z_t$ equals $r$ times the true difference between the conditional value-at-risk (which equals $\ES_p(L_t|\mathcal{F}_{t-1})$ if $\mathbb{P}(L_t \leq\VaR_p(L_t)|\mathcal{F}_{t-1})=p$) and the value-at-risk. Thus, when determining sample sizes required to achieve certain powers, the ones for backtesting the forecasts for $(X_t)_{t\in \N}$ should be seen as a lower bound for the ones needed when backtesting the forecasts for $(L_t)_{t\in \N}$. 

 Furthermore, one can calculate the GRO betting process for backtesting in the case of $(X_t)_{t\in \N}$ exactly, leading to the following result. 
 \begin{corollary}\label{GROTheoretical}
  Under the aforementioned conditions, it holds for all $t\in \N$
  \begin{align*}
    \argmax_{\lambda \in [0,1]} \mathbb{E}^\mathbb{Q}[\log(1-\lambda+\lambda e^{\ES}_p(X_t,r,0))]= \frac{(1-p)(1-r)}{1-(1-p)r}.
  \end{align*}
 \end{corollary}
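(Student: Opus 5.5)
We will reduce the maximization to a one-dimensional concave problem, using the explicit formula for the $\mathbb{Q}$-expectation from the proof of Theorem \ref{SSBBBestCase}. Since $X_t$ is Bernoulli with success parameter $1-p$, we have $e^{\ES}_p(X_t,r,0)=X_t/((1-p)r)$. Hence
\begin{align*}
f(\lambda):=\mathbb{E}^\mathbb{Q}[\log(1-\lambda+\lambda e^{\ES}_p(X_t,r,0))]=p\log(1-\lambda)+(1-p)\log(1+\lambda K),
\end{align*}
where we set $K:=\frac{1}{(1-p)r}-1$. The function $f$ does not depend on $t$, so the assertion holds uniformly in $t$ once it is shown for one $t$.

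First we check that $K>0$. This holds because $(1-p)r<1$, given $r\in(0,1)$ and $p\in(0,1)$. Then $f$ is strictly concave on $[0,1)$, being a positive combination of strictly concave functions, and $f(\lambda)\to-\infty$ as $\lambda\to1$ since $p>0$. So the maximizer over $[0,1]$ is unique. It is the stationary point if that point lies in $[0,1)$, and $0$ otherwise.

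The main step is to solve $f'(\lambda)=0$:
\begin{align*}
-\frac{p}{1-\lambda}+\frac{(1-p)K}{1+\lambda K}=0 .
\end{align*}
This rearranges to $(1-p)K(1-\lambda)=p(1+\lambda K)$, so $\lambda^\ast=\frac{(1-p)K-p}{K}=1-p-\frac{p}{K}$. This is the same formula that appears in the proof of Lemma \ref{GROnieNull}.

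Next we substitute $K=\frac{1-(1-p)r}{(1-p)r}$. This gives $\frac{p}{K}=\frac{p(1-p)r}{1-(1-p)r}$, and therefore
\begin{align*}
\lambda^\ast=(1-p)\,\frac{1-(1-p)r-pr}{1-(1-p)r}=\frac{(1-p)(1-r)}{1-(1-p)r}.
\end{align*}

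Finally we verify that $\lambda^\ast\in(0,1)$. Positivity follows from $r<1$. For the upper bound, $\lambda^\ast<1$ is equivalent to $(1-p)(1-r)<1-(1-p)r$, which reduces to $1-p<1$. Hence $\lambda^\ast$ is an interior critical point of a strictly concave function, and so it is the argmax over $[0,1]$.

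The only real obstacle is the algebraic simplification of $1-p-p/K$. The rest follows directly from concavity.
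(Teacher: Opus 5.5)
Your proposal is correct and follows essentially the route the paper intends. The paper gives no separate proof; the corollary is meant to follow by direct optimization of the two-point expression $p\log(1-\lambda)+(1-p)\log\bigl(1-\lambda+\lambda/((1-p)r)\bigr)$ from the proof of Theorem \ref{SSBBBestCase}, and your first-order condition, strict-concavity argument and check that $\lambda^\ast\in(0,1)$ carry out exactly that calculation.
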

Corollary \ref{GROTheoretical} shows in particular that the GRO betting process when backtesting these risk forecasts is constant in $t$. Since, at each time point $t\in \N$, $e^{\ES}_p(X_t,r,0)$ can only attain two values, either $[(1-p)r]^{-1}$  if the (predicted) value-at-risk is exceeded or zero otherwise, the terms $\left(1-\lambda^\text{GRO}_t+\lambda^\text{GRO}_t e^{\ES}_p(X_t,r,0)\right)_t$ can also only attain two values. Thus, the value of our e-process at some point $n\in \N$ is only dependent on the number of value-at-risk exceedances of $(X_t)_{t\in \N}$ up to $n$. Since we are mostly interested in the e-process $(M_n)_{n\in \N}$ exceeding certain thresholds $\tau$, we state the following definition.
\begin{definition}
Let $p,r\in (0,1)$, $\tau \in (1,\infty)$, and $n \in \N$. If $\left(1+ \frac{1-r}{r}\right)^{n}\geq \tau$, the number of required value-at-risk exceedances, denoted by NORVE$(p,r,\tau,n)$ for level $p$, relative underestimation $r$, threshold $\tau$ and sample size $n$, is the smallest $n_0\in \N$ such that 
\begin{align}
  \left(1+ \frac{1-r}{r}\right)^{n_0}\left(1- \frac{(1-p)(1-r)}{1-(1-p)r}\right)^{n-n_0}\geq \tau,\label{GleichungTheoretical}
\end{align}
and $\infty$ otherwise.
\end{definition}
\begin{corollary}\label{coro-norve}
  Under the aforementioned condition, NORVE$(p,r,\tau,n)$ is the minimum number of value-at-risk exceedances up to time point $n$, when e-backtesting the estimate $r$ for $\ES_{p}(X_t)$ using the GRO betting process, such that the corresponding e-process $(M_n)_{n\in \N}$ exceeds the threshold $\tau$ at $n$.  
\end{corollary}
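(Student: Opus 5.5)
The plan is to make the GRO e-process explicit for the Bernoulli model and observe that $M_n$ depends only on the number of value-at-risk exceedances, and does so monotonically. The claim then reduces to the definition of NORVE$(p,r,\tau,n)$.

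\emph{Step 1: compute the two possible factors.} Since $z=0$, we have $e^{\ES}_p(X_t,r,0)=(X_t)_+/((1-p)r)$. This equals $[(1-p)r]^{-1}$ if $X_t=1$, i.e.\ if the value-at-risk is exceeded, and $0$ otherwise. By Corollary \ref{GROTheoretical}, the GRO bet is the constant
\[
\lambda^\ast=\frac{(1-p)(1-r)}{1-(1-p)r}.
\]
Since $p,r\in(0,1)$, we have $\lambda^\ast\in(0,1)$. In a non-exceedance step the factor $1-\lambda^\ast+\lambda^\ast e^{\ES}_p$ equals $1-\lambda^\ast\in(0,1)$. In an exceedance step it equals
\[
1+\lambda^\ast\Bigl(\frac{1}{(1-p)r}-1\Bigr)=1+\lambda^\ast\,\frac{1-(1-p)r}{(1-p)r}=1+\frac{1-r}{r}=\frac1r>1 .
\]
This short algebraic simplification is the only real computation. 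I expect it to be the main point to get right, because it produces exactly the base $1+\frac{1-r}{r}$ that appears in \eqref{GleichungTheoretical}.

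\emph{Step 2: write $M_n$ in closed form.} The recursion in Lemma \ref{ProcessisMartingale} starts from $M_0=1$ and is multiplicative, and multiplication is commutative. Hence, if $k$ denotes the number of exceedances among $X_1,\dots,X_n$, then
\[
M_n=\Bigl(1+\frac{1-r}{r}\Bigr)^{k}\bigl(1-\lambda^\ast\bigr)^{n-k},
\]
whatever the order in which the exceedances occur.

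\emph{Step 3: monotonicity in $k$ and conclusion.} Consider $g(k)=(1/r)^k(1-\lambda^\ast)^{n-k}$ for $k\in\{0,\dots,n\}$. The ratio $g(k+1)/g(k)=(1/r)/(1-\lambda^\ast)$ is strictly larger than $1$, so $g$ is strictly increasing. Therefore $\{k\le n: g(k)\ge\tau\}$ is an upper interval of $\{0,\dots,n\}$.

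If $(1/r)^n\ge\tau$, this set contains $k=n$ and is nonempty. Its smallest element is precisely the smallest $n_0$ satisfying \eqref{GleichungTheoretical}, which is NORVE$(p,r,\tau,n)$. Since $\tau>1$ and $g(0)<1$, this element is at least $1$, so it lies in $\N$ as the definition requires. Monotonicity then gives $M_n\ge\tau$ exactly when at least NORVE$(p,r,\tau,n)$ exceedances have occurred.

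If instead $(1/r)^n<\tau$, then even $k=n$ gives $M_n=g(n)<\tau$. No number of exceedances suffices, which matches the convention NORVE$=\infty$.
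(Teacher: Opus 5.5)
Your proposal is correct and follows essentially the same route as the paper: express $M_n$ as a product that depends only on the number of exceedances, insert the constant GRO bet from Corollary \ref{GROTheoretical}, and simplify the exceedance factor to $1+\frac{1-r}{r}$. Your explicit argument that this product is strictly increasing in $k$, together with your treatment of the $\infty$ case and the check that NORVE $\geq 1$, fills in steps the paper's ``straightforward calculation'' leaves implicit.
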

\begin{proof}
As argued previously, $M_n$ can only attain values of the form
\begin{align*}
  \left(1+\lambda^\text{GRO}_t \left(\frac{1}{(1-p)r}-1\right)\right)^m \left(1- \lambda^\text{GRO}_t\right) ^{n-m}
\end{align*}
for $m\in \N_0$ with $m\leq n$. Since $\lambda^\text{GRO}_t= [(1-p)(1-r)] / [1-(1-p)r]$ for all $t\in \N$ by Corollary \ref{GROTheoretical}, straightforward calculation leads to the representation on the left-hand side of \eqref{GleichungTheoretical}. 
\end{proof}
Corollary \ref{coro-norve} enables a simple way of approximating the sample sizes required to reject an estimate $r$ for $\ES_p(X_t)$ with a certain power $q$ when applying the described e-backtesting procedure employing a threshold $\tau$: One determines the smallest $n\in \N$ such that the probability of the observed  number of value-at-risk exceedances being larger than or equal to NORVE$(p,r,\tau,n)$ is larger than $q$. Since we assumed $(X_t)_{t\in \N}$ to be i.i.d., this probability can be calculated using the binomial distribution. We summarize our findings in the following definition.
\begin{definition}
  Let $p,r,q\in (0,1)$ and $\tau \in (1,\infty)$. Then, the sample size bound for backtesting, denoted SSBB$(p,r,\tau,q)$, for forecasts of $\ES_{p}$ such that $z_t=\VaR_p(L_t|\mathcal{F}_{t-1})$ and $r_t-z_t=r \cdot (\ES_p(L_t|\mathcal{F}_{t-1})-\VaR_p(L_t|\mathcal{F}_{t-1}))$ for all $t\in \N$ is given by
  \begin{align*}
  \text{SSBB}(p,r,q,\tau):=\min\{n\in \N: F_{n,p}(n-\text{NORVE}(p,r,\tau,n))>q\},
  \end{align*} 
  where $F_{n,p}$ denotes the cumulative distribution function of the binomial distribution with parameters $n$ and $p$,  which we denote by $\operatorname{Binom}(n,p)$. 
\end{definition}
\setlength{\tabcolsep}{1pt}
\begin{table}[!htb]
    \centering
    \resizebox{\textwidth}{!}{
    \begin{tabular}{c|c c c c|c c c c|c c c c}
    Power& \multicolumn{4}{c|}{70 \%} &  \multicolumn{4}{c|}{80 \%} & \multicolumn{4}{c}{90 \%}  \\
    \hline 
         \diagbox{r}{$\tau$} &2.2 & 3.5 & 9 & 20 & 2.2 & 3.5 & 9 & 20 & 2.2 & 3.5 & 9 & 20 \\ 
               \hline
               \multicolumn{13}{c}{\textbf{p=0.95}} \\
               0.2 & 49 & 49 & 95 & 117 & 59 & 85 & 134 & 157 & 132 & 158 & 184 & 234 \\ 
               0.3 & 72 & 95 & 140 & 184 & 110 & 134 & 204 & 249 & 209 & 234 & 282 & 353 \\ 
        0.4 & 95 & 140 & 227 & 292 & 157 & 226 & 294 & 383 & 306 & 353 & 446 & 538 \\ 
         0.5 & 162 & 227 & 356 & 462 & 272 & 339 & 493 & 601 & 492 & 583 & 718 & 851 \\ 
         0.6 & 292 & 398 & 609 & 797 & 471 & 601 & 837 & 1029 & 829 & 984 & 1246 & 1462 \\ 
         0.7 & 567 & 797 & 1212 & 1542 & 923 & 1177 & 1640 & 1995 & 1613 & 1891 & 2402 & 2825 \\ 
         0.8 & 1419 & 1954 & 2957 & 3773 & 2266 & 2847 & 3962 & 4867 & 3918 & 4587 & 5835 & 6830 \\ 
         0.9 & 6213 & 8463 & 12771 & 16244 & 9774 & 12298 & 17091 & 20942 & 16833 & 19631 & 24951 & 29223 \\
               \multicolumn{13}{c}{\textbf{p=0.99}} \\
               0.2 & 244 & 244 & 476 & 589 & 299 & 427 & 671 & 790 & 667 & 798 & 926 & 1175 \\
               0.3 & 361 & 476 & 700 & 920 & 551 & 671 & 1022 & 1251 & 1051 & 1175 & 1538 & 1776 \\
               0.4 & 476 & 700 & 1138 & 1462 & 906 & 1137 & 1588 & 1922 & 1538 & 1776 & 2358 & 2701 \\
         0.5 & 811 & 1246 & 1889 & 2313 & 1364 & 1811 & 2471 & 3123 & 2473 & 2929 & 3716 & 4383 \\ 
         0.6 & 1462 & 2101 & 3155 & 4096 & 2471 & 3123 & 4304 & 5368 & 4272 & 5045 & 6464 & 7547 \\ 
         0.7 & 2945 & 4096 & 6275 & 8030 & 4731 & 6004 & 8423 & 10409 & 8409 & 9802 & 12463 & 14579 \\ 
         0.8 & 7411 & 10086 & 15308 & 19593 & 11763 & 14773 & 20659 & 25287 & 20359 & 23807 & 30258 & 35440 \\ 
         0.9 & 32300 & 43956 & 66306 & 84479 & 50843 & 63871 & 88851 & 108818 & 87614 & 102115 & 129733 & 151910 \\
        \multicolumn{13}{c}{\textbf{p=0.999}} \\
         0.1 & 1204 & 2439 & 2439 & 3615 & 1609 & 2994 & 4278 & 4278 & 3889 & 5321 & 6679 & 6679 \\ 
        0.2 & 2439 & 2439 & 4762 & 5890 & 2994 & 4278 & 6720 & 7905 & 6679 & 7992 & 9273 & 11769 \\ 
        0.3 & 3615 & 4762 & 7005 & 9208 & 5514 & 6720 & 10231 & 12517 & 10530 & 11769 & 15404 & 17779 \\ 
        0.4 & 4762 & 7005 & 11387 & 14622 & 9074 & 11379 & 15896 & 19232 & 15404 & 18955 & 23603 & 27042 \\ 
         0.5 & 8110 & 12469 & 18897 & 24197 & 13649 & 18124 & 25818 & 31246 & 24753 & 29317 & 37195 & 43868 \\ 
         0.6 & 14622 & 21022 & 32612 & 40974 & 24726 & 31246 & 44133 & 53710 & 43868 & 50486 & 64688 & 75517 \\ 
         0.7 & 29464 & 40974 & 62772 & 80317 & 48397 & 61122 & 85316 & 104130 & 84136 & 98072 & 125752 & 146914 \\ 
         0.8 & 74132 & 101911 & 154127 & 196976 & 117671 & 148808 & 207669 & 254979 & 205763 & 240243 & 304751 & 357613
    \end{tabular}
    }
    \caption[\textbf{SSBB}s for levels $p=0.95,0.99,0.999$ and different powers, thresholds and proportional underestimations]{Sample size bounds for backtesting (SSBBs) for levels $p \in \{0.95,0.99,0.999\}$ and different powers $q$, thresholds $\tau$, and proportional underestimations $r$.}
    \label{SSBBs99}
\end{table}
\setlength{\tabcolsep}{4pt}

SSBBs for common choices for the required parameters are provided in Table \ref{SSBBs99}. Like the required sample sizes from our previous simulations, SSBBs for rejecting underestimates of $\ES_{0.99}$ or $\ES_{0.999}$, respectively,  are considerably higher than those needed to reject underestimates of $\ES_{0.95}$, often more than five times as high in the case of  $\ES_{0.99}$ and more than $50$ times as high in the case of $\ES_{0.999}$. These factors can also be approximately inferred from the formulas for the NORVE: Assuming $1-(1-p)r\approx 1$, the NORVE becomes the smallest integer $n_0\in \N$ such that
\begin{align*}
  \left(1+ \frac{1-r}{r}\right)^{n_0}\left(1-(1-p)(1-r)\right)^{n-n_0}\geq \tau.
\end{align*}
Furthermore, assuming that $(1-p)(1-r)$ is small, we can (for every factor $d\in (0,1)$) 
approximate $\left(1-d(1-p)(1-r)\right)^{n-n_0}\approx (\left(1-(1-p)(1-r)\right)^{n-n_0})^d$. Since $\left(1+ (1-r)/r\right)^{n_0}$ does not depend on $p$, this suggests that the NORVE, and therefore the SSBB, grows approximately linearily in $1/(1-p)$. 

Regarding the underestimation parameter $r$, we notice that the SSBBs are almost perfectly proportional to $(1-r)^{-2}$; see also Figure \ref{fig_r_to_SSBBs}. If one considers $1-r$ to be interpretable as a relative effect size for the underestimation of the expected shortfall, this is in line with other sample size formulas,  where the required sample size is usually proportional to the squared inverted effect size.

\begin{figure}[!htb]
  \centering
  \includegraphics[width=0.8\textwidth]{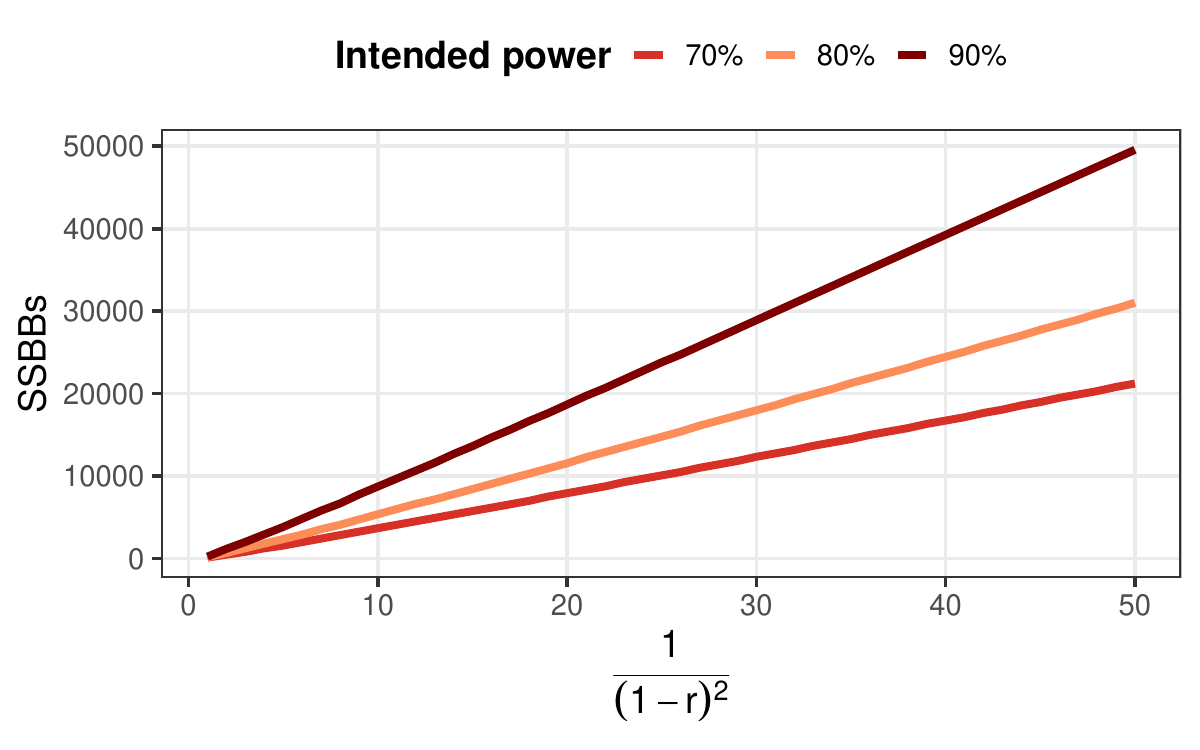}
  \caption{Sample size bounds for backtesting (SSBBs) as a function of $(1-r)^{-2}$ for the threshold $\tau=3.5$ and the level $p=0.99$.}
  \label{fig_r_to_SSBBs}
\end{figure}

One drawback of the proposed SSBBs is that they examine whether the threshold $\tau$ is exceeded precisely at the final considered time point $n$. While this is in line with conventional hypothesis testing, the backtesting procedure described in this work also allows for a rejection of the null hypothesis if the threshold in question is exceeded at some earlier time point $t < n$. Thus, the actually required sample sizes to detect underestimations of relative magnitude $r$ for $\ES_p(X_t)$ with a prescribed power are smaller than the corresponding SSBBs. We thus also conducted numerical experiments to approximate the actual required sample size. These results are summarized in Table \ref{TabelleSampleSizesX99}. We notice that the actual required sample sizes, while being of the same order of magnitude, are often considerably smaller than the corresponding SSBBs. This again illustrates the strength of anytime-valid inference. We notice that this difference is more pronounced when applying smaller thresholds. This is likely due to the probability of the value-at-risk exceedance events all occurring early being larger when considering smaller thresholds, as there are fewer exceedance events needed to pass these thresholds. Regarding the other parameters, no clear relationship is identifiable.

\setlength{\tabcolsep}{0.5pt}
\begin{table}[!htb]
    \centering
    \begin{tabular}{c|c c c c|c c c c|c c c c}
    Power& \multicolumn{4}{c|}{70 \%} &  \multicolumn{4}{c|}{80 \%} & \multicolumn{4}{c}{90 \%}  \\
    \hline 
         \diagbox{r}{$\tau$} &2.2 & 3.5 & 9 & 20 & 2.2 & 3.5 & 9 & 20 & 2.2 & 3.5 & 9 & 20 \\ 
               \hline
        \multicolumn{13}{c}{\textbf{p=0.95}} \\
        0.2 & 35 & 42 & 70 & 97 & 51 & 66 & 96 & 123 & 87 & 105 & 144 & 190  \\        
         0.3 & 44 & 66 & 108 & 152 & 70 & 95 & 144 & 205 & 112 & 149 & 231 & 283  \\  
         0.4 & 60 & 99 & 167 & 228 & 92 & 134 & 228 & 296 & 164 & 218 & 337 & 405  \\        
         0.5 & 99 & 160 & 265 & 367 & 150 & 216 & 354 & 462 & 246 & 349 & 519 & 671 \\ 
         0.6 & 161 & 266 & 459 & 644 & 241 & 363 & 609 & 822 & 410 & 569 & 962 & 1155 \\ 
         0.7 & 306 & 499 & 899 & 1219 & 454 & 728 & 1232 & 1625 & 801 & 1223 & 1760 & 2279 \\ 
         0.8 & 647 & 1118 & 2131 & 3025 & 956 & 1656 & 2843 & 3793 & 1767 & 2771 & 4118 & 5311 \\ 
         0.9 & 2651 & 4607 & 8723 & 11850 & 3829 & 6410 & 11378 & 14850 & 6528 & 9798 & 16789 & 21058 \\ 
        \multicolumn{13}{c}{\textbf{p=0.99}} \\
         0.2 & 163 & 221 & 383 & 519 & 232 & 332 & 506 & 688 & 377 & 535 & 753 & 930  \\        
         0.3 & 210 & 330 & 583 & 762 & 361 & 504 & 790 & 1003 & 567 & 821 & 1129 & 1384  \\  
         0.4 & 316 & 517 & 910 & 1182 & 546 & 773 & 1177 & 1592 & 901 & 1201 & 1875 & 2218  \\      
         0.5 & 500 & 820 & 1346 & 1877 & 797 & 1137 & 1843 & 2394 & 1298 & 1789 & 2582 & 3387 \\ 
         0.6 & 792 & 1307 & 2443 & 3296 & 1298 & 1957 & 3347 & 4328 & 2468 & 3231 & 4787 & 5985 \\ 
         0.7 & 1463 & 2362 & 4473 & 6113 & 2211 & 3427 & 5905 & 7549 & 3887 & 5351 & 8110 & 10228 \\ 
         0.8 & 3285 & 5983 & 10469 & 15148 & 4957 & 8718 & 13824 & 19323 & 8737 & 13589 & 20930 & 26876 \\ 
         0.9 & 14667 & 26224 & 49976 & 68633 & 22938 & 36946 & 66933 & 86391 & 42909 & 60961 & 94641 & >$10^5$ \\
        \multicolumn{13}{c}{\textbf{p=0.999}} \\
       0.1 & 1205 & 1470 & 2330 & 3310 & 1611 & 2358 & 3508 & 4041 & 3233 & 3535 & 4994 & 6175  \\ 
         0.2 & 1544 & 2057 & 3249 & 4708 & 2312 & 3173 & 4679 & 6217 & 4115 & 5394 & 7536 & 9083  \\        
         0.3 & 2211 & 3958 & 6385 & 8345 & 3764 & 5762 & 8370 & 10636 & 7047 & 9242 & 12324 & 15300  \\  
         0.4 & 2951 & 4931 & 8753 & 11576 & 4616 & 7006 & 11384 & 14814 & 7804 & 11955 & 16166 & 21373  \\    
         0.5 & 5228 & 8238 & 14291 & 19693 & 7929 & 12139 & 19111 & 25024 & 14499 & 20599 & 26960 & 34839 \\ 
         0.6 & 8743 & 13376 & 23382 & 32323 & 13175 & 19459 & 32206 & 42082 & 25693 & 32136 & 48002 & 58161 \\ 
         0.7 & 13652 & 23075 & 43559 & 62498 & 20554 & 32618 & 59237 & 80089 & 38636 & 55151 & 85336 & >$10^5$   
    \end{tabular}
    \caption[Minimum sample sizes needed to achieve certain powers for forecasts 
    underestimating $\ES_{p}(X_t)$]{Minimum sample sizes needed to achieve certain powers when using specific thresholds $\tau$ for forecasts 
    underestimating $\ES_{p}(X_t)$ for levels $p \in \{0.95,0.99,0.999\}$ by different relative magnitudes $r$. All results are based on $1{,}000$ Monte Carlo simulations each and a maximum sample size of $100{,}000$.}
    \label{TabelleSampleSizesX99}
\end{table}
\setlength{\tabcolsep}{4pt}

Comparing with our numerical experiments from the previous sections, we infer that the actually required sample sizes needed to achieve certain powers when estimating the value-at-risk correctly and underestimating the difference between the expected shortfall and value-at-risk by a given factor $r$ are usually larger than the corresponding SSBBs (see Tables \ref{TabelleNormSampleSizeVaRcorrect}, \ref{TabelleTSampleSizeVaRcorrect99} and \ref{TabelleGPDSampleSizeVaRCorrect} and compare them with Table \ref{SSBBs99}). Therefore, we recommend using the SSBBs as lower bounds for the sample size one needs to consider in order to reject underestimations of this type. We also infer that the ratio between the actually required sample sizes and SSBBs stays approximately constant when focusing on specific distributions over different parameters $p$, $r$, $\tau$, and $q$. Using these ratios as a factor, one might also use the SSBBs to extrapolate the actual required sample sizes to scenarios with a less pronounced underestimation of the expected shortfall, where we were unable to approximate the actually required sample sizes due to them becoming too large. 

SSBBs are derived under the assumption that the value-at-risk is estimated correctly. We thus also investigated how our findings from the study of SSBSs can be applied to situations where the value-at-risk is also misspecified. To assess this, we considered a sequence $(X_t)_{t\in \N}$ of i.i.d. random variables where the expected shortfall at some level $\alpha\in (0,1)$ is always underestimated as $r_t<\ES_{\alpha}(X_1)$ and the value-at-risk is estimated as $\VaR_{p}(X_1)$, where $p\in (0,1)$ is varied over different runs. As distributions, we considered $X_1$ to be  standard Student's $t$-distributed with five degrees of freedom (see Table \ref{TabelleESconstantT95}) and $X_1$ to have GPD(1,0.2)-distributed excess distributions over the threshold $\VaR_{0.95}(X_1)$ (see Table \ref{TabelleESconstantGPD99}). We infer that the required sample sizes do not change much if the estimate of the value-at-risk is reasonable, with small underestimations of the value-at-risk often requiring slightly larger sample sizes than when estimating it correctly. Nevertheless, these experiments indicate that the assumption of the value-at-risk being estimated correctly made when deriving the SSBBs will likely not be of high relevance in practice. Using the SSBBs to lower-bound the required sample sizes will most likely lead to an overestimation of the required sample size even if the value-at-risk is misspecified, assuming a fixed estimate for the expected shortfall.

\setlength{\tabcolsep}{3pt}
\begin{table}[!htb]
    \centering
    \begin{tabular}{c|c c c c|c c c c|c c c c}
    Power& \multicolumn{4}{c|}{70 \%} &  \multicolumn{4}{c|}{80 \%} & \multicolumn{4}{c}{90 \%}  \\
    \hline 
         \diagbox{p'}{$\tau$} &2.2 & 3.5 & 9 & 20 & 2.2 & 3.5 & 9 & 20 & 2.2 & 3.5 & 9 & 20 \\ 
          \hline
          \multicolumn{13}{c}{\textbf{p=0.95}} \\
          0.92 & 1048 & 1459 & 2113 & 2585 & 1512 & 1906 & 2608 & 3272 & 2340 & 2681 & 3378 & 4106 \\ 
          0.93 & 1120 & 1578 & 2439 & 3037 & 1574 & 2183 & 2982 & 3597 & 2365 & 2949 & 3782 & 4481 \\ 
          0.94 & 1372 & 1977 & 2880 & 3638 & 1968 & 2589 & 3626 & 4336 & 2898 & 3514 & 4619 & 5511 \\ 
          0.95 & 1252 & 1693 & 2561 & 3185 & 1747 & 2172 & 3151 & 3793 & 2425 & 3038 & 4179 & 4891 \\ 
          0.96 & 893 & 1212 & 1769 & 2320 & 1197 & 1561 & 2245 & 2712 & 1809 & 2225 & 2796 & 3434 \\ 
          \multicolumn{13}{c}{\textbf{p=0.99}} \\
          0.96 & 386 & 519 & 754 & 936 & 500 & 626 & 899 & 1120 & 666 & 845 & 1162 & 1406 \\ 
          0.97 & 582 & 744 & 1129 & 1433 & 767 & 973 & 1369 & 1713 & 1094 & 1391 & 1843 & 2193 \\       
          0.98 & 861 & 1163 & 1698 & 2124 & 1149 & 1470 & 2023 & 2485 & 1516 & 1876 & 2564 & 3078 \\ 
          0.99 & 749 & 930 & 1370 & 1804 & 953 & 1173 & 1688 & 2120 & 1243 & 1540 & 2166 & 2639 \\ 
          0.991 & 622 & 791 & 1154 & 1506 & 776 & 961 & 1362 & 1795 & 1043 & 1243 & 1787 & 2173 
    \end{tabular}
    \caption[Minimum sample sizes needed to achieve certain powers for forecasts 
    underestimating $\ES_{p}$ by a factor $0.9$ and estimating $\VaR_{p}$ as $\VaR_{p'}$ for standard t-distributed random variables]{Minimum sample sizes needed to achieve certain powers for forecasts 
    underestimating $\ES_{p}$ by a factor $0.9$ and estimating $\VaR_{p}$ as $\VaR_{p'}$ for levels $p \in \{0.95,0.99\}$ and  Student's $t$-distributed random variables with five degrees of freedom, for different thresholds $\tau$ and levels $p'$. All results are based on $1{,}000$ Monte Carlo simulation runs each and a maximum sample size of $6{,}000$.}
    \label{TabelleESconstantT95}
\end{table}
\setlength{\tabcolsep}{4pt}


\setlength{\tabcolsep}{2pt}
\begin{table}[!htb]
    \centering
    \begin{tabular}{c|c c c c|c c c c|c c c c}
    Power& \multicolumn{4}{c|}{70 \%} &  \multicolumn{4}{c|}{80 \%} & \multicolumn{4}{c}{90 \%}  \\
    \hline 
         \diagbox{p'}{$\tau$} &2.2 & 3.5 & 9 & 20 & 2.2 & 3.5 & 9 & 20 & 2.2 & 3.5 & 9 & 20 \\ 
          \hline
          \multicolumn{13}{c}{\textbf{p=0.99}} \\
          0.96 & 454 & 612 & 913 & 1178 & 598 & 762 & 1122 & 1397 & 868 & 1081 & 1404 & 1715 \\ 
          0.97 & 756 & 1006 & 1504 & 1865 & 971 & 1240 & 1785 & 2292 & 1385 & 1679 & 2315 & 2869 \\       
          0.98 & 1172 & 1607 & 2378 & 3016 & 1581 & 2076 & 2881 & 3623 & 2208 & 2718 & 3744 & 4483 \\ 
          0.99 & 1305 & 1647 & 2414 & 3034 & 1626 & 2030 & 2903 & 3598 & 2326 & 2718 & 3809 & 4513 \\ 
          0.992 & 836 & 1080 & 1594 & 2021 & 1049 & 1344 & 1901 & 2343 & 1396 & 1710 & 2274 & 2949 \\ 
          \multicolumn{13}{c}{\textbf{p=0.999}} \\
          0.994 & 1280 & 1679 & 2516 & 3247 & 1664 & 2118 & 2963 & 3821 & 2240 & 2674 & 3767 & 4821 \\ 
          0.995 & 1514 & 1944 & 2943 & 3789 & 1905 & 2428 & 3470 & 4454 & 2635 & 3182 & 4370 & 5321 \\       
          0.996 & 1994 & 2560 & 3788 & 4768 & 2484 & 3188 & 4472 & 5698 & 3506 & 4202 & 5864 & >6000 \\ 
          0.997 & 2412 & 3025 & 4534 & 5940 & 2930 & 3822 & 5495 & >6000 & 3865 & 4763 & >6000 & >6000 \\ 
          0.998 & 2316 & 3002 & 4287 & 5772 & 2942 & 3735 & 5263 & >6000 & 3951 & 4849 & >6000 & >6000 
    \end{tabular}
    \caption[Minimum sample sizes needed to achieve certain powers for forecasts 
    underestimating $\ES_{p}$ by a factor $0.7$ and estimating $\VaR_{p}$ as $\VaR_{p'}$ for random variables with GPD-distributed excesses]{Minimum sample sizes needed to achieve certain powers for forecasts 
    underestimating $\ES_{p}$ by a factor $0.7$ and estimating $\VaR_{p}$ as $\VaR_{p'}$ for levels $p \in \{0.99,0.999\}$ and random variables with GPD(1,0.2)-distributed excesses, for different thresholds $\tau$ and levels $p'$. All results are based on $1{,}000$ Monte Carlo simulations each and a maximum sample size of $6{,}000$.}
    \label{TabelleESconstantGPD99}
\end{table}
\setlength{\tabcolsep}{4pt}


As detailed in Lemma \ref{SSBBBestCase}, the SSBBs are derived from a ``best-case scenario'', i.\ e., by assuming that the distribution of the loss variables is such that the e-power of the considered e-backtesting procedure is maximized among distributions with common value-at-risk and expected shortfall. One might also ask whether there are minimum sample sizes that guarantee a prescribed rejection probability even under worst case distributions. However, this is not possible, due to the following lemma.
\begin{lemma}\label{SampleSizeWorstCase}
  Let $c, \tau>1$, $n\in \N$, $p,q\in (0,1)$ and $(z_t,r_t)_{t\in \N}$ be a sequence of $(\VaR_p,\ES_p)$ forecasts such that $r_t\geq z_t$ for all $t\in \N$. Then, there exists a sequence of loss variables $(L_t)_{t\in \N}$ with $\ES_p(L_t)\geq c r_t$ for all $t\in \N$ such that
\begin{align*}
  \mathbb{P}\left(\max_{1\leq t\leq n}M_t(\mathbf{\lambda})>\tau\right)<q \, ,
\end{align*}  
for every betting process $\lambda=(\lambda_t)_{t\in \N}$, where $(M_t)_{t\in \N}$ is defined as in Section \ref{sec33}.
\end{lemma}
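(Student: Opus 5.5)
Fix the forecasts $(z_t,r_t)$, the constants $c,\tau>1$, the horizon $n$ and $q\in(0,1)$. The idea is to make every $L_t$ a two-point distribution that puts its tail mass on an extremely rare but extremely large loss, and equals the forecast value-at-risk otherwise. Concretely, for a small $\delta\in(0,1-p)$ to be chosen later, let $L_t$ (independent over $t$, independent of the past) take the value $z_t$ with probability $1-\delta$ and a value $K_t>z_t$ with probability $\delta$. Then $e^{\ES}_p(L_t,r_t,z_t)=(L_t-z_t)_+/((1-p)(r_t-z_t))$ vanishes on the event $\{L_t=z_t\}$, whatever the value of $\lambda_t$. (If $r_t=z_t$ the e-statistic is $+\infty$ on $\{L_t>z_t\}$, but it is still $0$ on $\{L_t=z_t\}$, which is all we use.)

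\textbf{Step 1 (risk constraint).} Since $\delta<1-p$, we have $\VaR_\zeta(L_t)=z_t$ for $\zeta\le 1-\delta$ and $=K_t$ for $\zeta>1-\delta$. Hence
\[
\ES_p(L_t)=\frac{(1-\delta-p)z_t+\delta K_t}{1-p}.
\]
Choose $K_t$ large enough (depending on $\delta$) that this is at least $c\,r_t$, and also at least $c\,r_t$ when $r_t\le 0$. Such a choice is always possible because the expression is linear in $K_t$ with positive slope $\delta/(1-p)$.

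\textbf{Step 2 (no growth without exceedances).} On the event $A=\{L_t=z_t \text{ for all } 1\le t\le n\}$ every factor $1-\lambda_t+\lambda_t e^{\ES}_p(L_t,r_t,z_t)$ equals $1-\lambda_t\le 1$ for $\lambda_t\in[0,1]$. Therefore $M_t\le 1<\tau$ for all $t\le n$, for every betting process. This gives
\[
\mathbb{P}\Bigl(\max_{t\le n}M_t>\tau\Bigr)\le 1-\mathbb{P}(A)=1-(1-\delta)^n.
\]

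\textbf{Step 3 (choice of $\delta$).} Pick $\delta<\min\{1-p,\,1-(1-q)^{1/n}\}$, so that $1-(1-\delta)^n<q$. Together with Step 2 this yields the claim.

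The main point to watch is that the bound is uniform over betting processes. This relies on the convention $\lambda_t\in[0,1]$, which holds for all betting processes in Section \ref{sec4} since they take values in $[0,\gamma]$. It also relies on the exceedance probability being chosen \emph{after} $n$ is fixed; that is exactly why no uniform worst-case sample size can exist.
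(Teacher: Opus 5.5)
Your proof is correct and follows essentially the paper's route: a two-point law with an atom at $z_t$, where $e^{\ES}_p$ vanishes so that $M_t\le 1$ for every $\lambda_t\in[0,1]$, plus a far-out atom that makes the expected shortfall large. The paper instead bounds $\ES_p(L_t)\ge\mathbb{E}[L_t]=c\,r_t$ rather than computing it, so its $z_t$ need not be the true value-at-risk, whereas yours is; your calibration $(1-\delta)^n>1-q$ is also the correct one, since the paper's choice $\mathbb{P}(L_t=z_t)=q^{1/(n+1)}$ only gives $\mathbb{P}(\max_{t\le n}M_t>\tau)\le 1-q^{n/(n+1)}$, which is not below $q$ for small $q$ and large $n$ (there $q$ should be replaced by $1-q$).
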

\begin{proof}
  For $t\in \N$, let $L_t$ such that $\mathbb{P}(L_t=z_t)=\sqrt[n+1]{q}$ and 
  \begin{align*}
    \mathbb{P}\left(L_t=\frac{c r_t-\sqrt[n+1]{q}z_t}{1-\sqrt[n+1]{q}}\right)=1-\sqrt[n+1]{q} \, .
  \end{align*}
  Then, $\ES_p(L_t)\geq \mathbb{E}[L_t]=c r_t$, but $\mathbb{P}(\forall 0\leq t\leq n: e_p^{\ES}(L_t,r_t,z_t)=0)=q^{n/(n+1)}>q$, where $e_p^{\ES}$ is the backtest e-statistic as defined in Section \ref{sec3}. Since under this event all $M_t(\mathbf{\lambda})$ must be smaller than or equal to $1$ for all $t\leq n$ regardless of the chosen betting process $\mathbf{\lambda}$, the result follows immediately.
\end{proof}

\section{Conclusion and Outlook}\label{sec8} 
In this work, we have analyzed the e-backtesting procedure both from the theoretical and from the practical perspective. 

In terms of theoretical aspects, we have generalized the construction of monotone backtest e-statistics to the class of Bayes pairs, and we have provided some results on the existence and the non-existence of such statistics. In future research, the relationship of the value-at-risk and the expected shortfall constituting a Bayes pair might also lead to a method for estimating the value-at-risk and the expected shortfall with neural networks, namely by using the involved loss function as a loss function for the network. This would, of course, also be applicable to other Bayes pairs as well. 

In terms of practical aspects, we have explored the problems of choosing the betting process as well as the significance threshold, and we have considered the planning of appropriate sample sizes. While the exact numbers provided in the various tables in Sections \ref{sec61} - \ref{sec63} refer to specific scenarios and are not applicable per se to other ones, they exhibit clear and interpretable patterns. We have analyzed these patterns and deduced from this analysis recommendations which are likely to generalize to other, related scenarios. For instance, under the scope of Section \ref{sec61}, it is near at hand to assume that at least the orders of magnitude of the required sample sizes will be transferable to other elliptical distributions, too, and analogously this will most likely be true for AR-GARCH processes of higher orders under the scope of Section \ref{sec63}. For sample size determination when backtesting the expected shortfall, we have furthermore worked out a generally applicable method (termed SSBBs) to come up with reasonable sample size bounds.  

Still, these topics would likely benefit from further investigation: For example, while we discussed in Theorem \ref{AsymOptiofAlgs} certain scenarios under which the GREE, GREL, or GREM betting processes are asymptotically optimal, these scenarios are highly theoretical and unlikely to apply well to practical applications. In particular, assumptions on the e-statistics or the loss variables being i.i.d. over time are highly unlikely to hold in practice. It thus seems reasonable to search for betting processes that are asymptotically optimal under more general and/or practically more relevant scenarios. 
One type of betting process we suggest considering for this is constituted by the finite-horizon counterparts of the aforementioned processes. While we could not make assertions about these processes being asymptotically optimal, they are likely to better detect underestimations for scenarios where high risks and underestimations of these risks occur clustered in time. Since this behavior is common for many models considered in (financial) risk management, e.\ g., for ARIMA and GARCH time series, these processes are likely to perform well in practical applications.

In the computer simulations reported in Section \ref{sec6}, we were for most considered scenarios only able to consider sample sizes up to $6{,}000$. As mentioned when discussing the specific scenarios, this often does not suffice to detect less severe underestimations of the expected shortfall, especially at higher levels $p$. In these situations, we recommend using the corresponding SSBBs to approximate the sample size needed. One might also extend our simulations to higher maximum sample sizes in order to verify that our statements made for more severe underestimations also carry over to these scenarios. Several R worksheets (cf.\ \cite{rbase}) for this are available from the authors upon request.

While this work, as well as the one by \cite{Hauptquelle}, has focused on financial risk management, one might also apply the e-backtesting method to other fields of risk management. Especially, the areas of environmental risk management and meteorology offer possible further applications, as climate models are also often backtested. However, in this context backtesting is oftentimes referred to as hindcasting; see \textcite{huijnen2012hindcast}, for example. 


\nocite{artzner1999coherent}
\nocite{Shaf2025}
\printbibliography

@article{Hauptquelle, 
title = {E-backtesting},
 author = {Qiuqi Wang and Ruodu Wang and Johanna Ziegel}, 
 publisher = {Institute for Operations Research and the Management Sciences (INFORMS)}, 
 journal = {Management Science}, 
 year = {2026}, 
 volume={72},
number={6},
pages={4952-4973} 
}

@article{RamdasBuch,
year = {2025},
volume = {1},
journal = {Foundations and Trends in Statistics},
title = {Hypothesis Testing with E-values},
number = {1-2},
pages = {1-390},
author = {Aaditya Ramdas and Ruodu Wang}
}

@book{Ville1939,
  title={Etude critique de la notion de collectif},
  author={Ville, Jean},
  year={1939},
  publisher={Gauthier-Villars Paris}
}

@unpublished{Shaf2025,
  author = {Shafer, Glenn},
  title  = {An Introduction to Game-Theoretic Statistics},
  month  = {April},
  year   = {2025},
  note   = {Unpublished manuscript}
}

@article{ROCKAFELLAR20021443,
title = {Conditional value-at-risk for general loss distributions},
journal = {Journal of Banking \& Finance},
volume = {26},
number = {7},
pages = {1443-1471},
year = {2002},
author = {R.Tyrrell Rockafellar and Stanislav Uryasev}
}

@article{7852039d69554002af6904518991286e,
title = {Safe Testing},
author = {Peter Gr{\"u}nwald and {de Heide}, Rianne and Wouter Koolen},
publisher = {The Royal Statistical Society},
year = {2024},
journal = {Journal of the Royal Statistical Society. Series B: Statistical Methodology},
volume = {86},
number = {5},
pages = {1091--1128},
publisher = {Oxford University Press}
}

@article{VovkWangMerging,
author = {Vovk, Vladimir and Wang, Ruodu},
year = {2024},
pages = {1185-1205},
title = {Merging sequential e-values via martingales},
volume = {18},
journal = {Electronic Journal of Statistics}
}

@book{mcneil2015quantitative,
  title={Quantitative {R}isk {M}anagement: 
  {C}oncepts, {T}echniques and {T}ools - {R}evised {E}dition},
  author={McNeil, Alexander J and Frey, R{\"u}diger and Embrechts, Paul},
  year={2015},
  publisher={Princeton {U}niversity {P}ress, Princeton, NJ}
}

@book{mcneil2005quantitative,
  title={Quantitative {R}isk {M}anagement: 
  {C}oncepts, {T}echniques and {T}ools},
  author={McNeil, Alexander J and Frey, R{\"u}diger and Embrechts, Paul},
  year={2005},
  publisher={Princeton {U}niversity {P}ress, Princeton, NJ}
}

@techreport{BCBS2019d457,
title = {Minimum Capital Requirements for Market Risk},
author = {{Basel Committee on Banking Supervision}},
institution = {Bank for International Settlements},
year = {2019},
month = jan,
note = {Revised February 2019},
}

@techreport{BCBS2013,
title = {Consultative document:
Fundamental review of the 
trading book },
author = {{Basel Committee on Banking Supervision}},
institution = {Bank for International Settlements},
year = {2013},
month = oct
}

@article{artzner1999coherent,
  title={Coherent measures of risk},
  author={Artzner, Philippe and Delbaen, Freddy and Eber, Jean-Marc and Heath, David},
  journal={Mathematical {F}inance},
  volume={9},
  number={3},
  pages={203--228},
  year={1999},
  publisher={Wiley Online Library}
}

@article{huijnen2012hindcast,
  title={Hindcast experiments of tropospheric composition during the summer 2010 fires over western Russia},
  author={Huijnen, V and Flemming, J and Kaiser, JW and Inness, A and Leit{\~a}o, J and Heil, A and Eskes, HJ and Schultz, MG and Benedetti, A and Hadji-Lazaro, Juliette and others},
  journal={Atmospheric Chemistry and Physics},
  volume={12},
  number={9},
  pages={4341--4364},
  year={2012},
  publisher={Copernicus Publications G{\"o}ttingen, Germany}
}

@article{https://doi.org/10.1111/mafi.12313,
author = {Embrechts, Paul and Mao, Tiantian and Wang, Qiuqi and Wang, Ruodu},
title = {Bayes risk, elicitability, and the Expected Shortfall},
journal = {Mathematical Finance},
volume = {31},
number = {4},
pages = {1190-1217},
year = {2021}
}

@article{follmer2010convex,
  title={Convex and coherent risk measures},
  author={F{\"o}llmer, Hans and Schied, Alexander},
  journal={Encyclopedia of Quantitative Finance},
  pages={355--363},
  year={2010},
  publisher={John Wiley \& Sons Hoboken}
}

@article{detlefsen2005conditional,
  title={Conditional and dynamic convex risk measures},
  author={Detlefsen, Kai and Scandolo, Giacomo},
  journal={Finance and {S}tochastics},
  volume={9},
  number={4},
  pages={539--561},
  year={2005},
  publisher={Springer}
}

@book{FolStoc2025, 
title = {Stochastic finance: {A}n introduction in discrete time},
author = {Hans Föllmer and Alexander Schied},
series = {De Gruyter {G}raduate},
edition = {5th revised and extended edition},
publisher = {De Gruyter},
address = {Berlin},
year = {2025}
}

@article{ShafThresh,
author = {Shafer, Glenn},
title = {Testing by betting: A strategy for statistical and scientific communication},
journal = {Journal of the Royal Statistical Society: Series A (Statistics in Society)},
volume = {184},
number = {2},
pages = {407-431},
year = {2021}
}

@article{norton2021calculating,
  title={Calculating {C}{V}a{R} and b{P}{O}{E} for common probability distributions with application to portfolio optimization and density estimation},
  author={Norton, Matthew and Khokhlov, Valentyn and Uryasev, Stan},
  journal={Annals of Operations Research},
  volume={299},
  number={1},
  pages={1281--1315},
  year={2021},
  publisher={Springer}
}

@article{BlierWong2026,
 author = {Blier-Wong, Christopher and Wang, Ruodu},
 title = {Improved thresholds for e-values},
 fjournal = {The Annals of Statistics},
 journal = {Ann. Stat.},
 volume = {54},
 number = {4},
 pages = {1819--1842},
 year = {2026},
 doi = {10.1214/26-AOS2626},
 zbMATH = {8244284}
}

@article{reverse-stress-testing,
 author = {{von Schroeder}, Jonathan and Dickhaus, Thorsten and Bodnar, Taras},
 title = {Reverse stress testing in skew-elliptical models},
 fjournal = {Theory of Probability and Mathematical Statistics},
 journal = {Theory Probab. Math. Stat.},
 volume = {109},
 pages = {101--127},
 year = {2023},
 doi = {10.1090/tpms/1199},
 zbMATH = {7748862}
}

@misc{rbase,
author="{R Development Core Team}", 
title="R: A Language and Environment for Statistical Computing.",
year="2026",
howpublished="Available from: \url{http://www.R-project.org}."
}
\begin{appendices}
\section{Characterization of backtest e-statistics}\label{sec7}
In this section, we take a more general look at backtest e-statistics and focus on assertions concerning the general (non)-existence of backtest e-statistics in general as well as necessary properties of backtest e-statistics for specific risk measures.

\subsection{Existence of backtest e-statistics}
First, we focus on conditions a risk measure $\rho$ has to fulfill in order for a backtest e-statistic to exist. We start with some definitions:
\begin{definition}
  Let $\mathcal{M}$ be a set of distribution functions and $\rho:\mathcal{M}\to \R$. 
  \begin{enumerate}[(i)]
    \item 
    If $\rho(F)\leq \rho(G)$ for all $F,G\in \mathcal{M}$ with $F\leq G$, we call $\rho$ \textbf{monotone}.
    \item  
    If, for all $F\in \mathcal{M}$ and $r>\rho(F)$, there exists $G\in \mathcal{M}$ with $G\geq F$ and $\rho(G)=r$, we call $\rho$ \textbf{uncapped}.
    \item 
    If $\mathcal{M}$ is convex and $\rho(\lambda F+(1-\lambda)G)\leq \max\{\rho(F),\rho(G)\}$ for all $F,G\in \mathcal{M}$ and $\lambda \in [0,1]$, we call $\rho$ \textbf{quasi-convex}.
    \item 
    If $-\rho$ is quasi-convex, we call $\rho$ \textbf{quasi-concave}.
    \item 
    If $\rho$ is both quasi-convex and quasi-concave, we call $\rho$ \textbf{quasi-linear}.
  \end{enumerate}
\end{definition}
The following two lemmata connect the previous definitions nicely with our theory of backtest e-statistics:
\begin{lemma}\label{BacktestIsQuasiConvex}
  Let $\mathcal{M}$ be a convex set of distribution functions and $\rho:\mathcal{M}\to \R$. If there exists a $\mathcal{M}$-backtest e-statistic for $\rho$, then $\rho$ is quasi-convex. 
\end{lemma}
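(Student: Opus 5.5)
Let $e$ be an $\mathcal{M}$-backtest e-statistic for $\rho$, here in the sense $\psi=\rho$ with no auxiliary component, i.e.\ $e:\R\times\rho(\mathcal{M})\to[0,\infty]$. The plan is a contradiction argument that uses linearity of $F\mapsto\int e(x,r)\,dF(x)$ under mixtures. Suppose $\rho$ is not quasi-convex. Then there are $F,G\in\mathcal{M}$ and $\lambda\in[0,1]$ with
\[
\rho(\lambda F+(1-\lambda)G)>\max\{\rho(F),\rho(G)\}.
\]
Convexity of $\mathcal{M}$ ensures $H:=\lambda F+(1-\lambda)G\in\mathcal{M}$. Necessarily $\lambda\in(0,1)$, since otherwise $H\in\{F,G\}$.

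Set $r:=\max\{\rho(F),\rho(G)\}$. This value lies in $\rho(\mathcal{M})$ because it equals $\rho(F)$ or $\rho(G)$, so $e(\cdot,r)$ is defined. Since $\rho(F)\le r$, the one-sided e-statistic property gives $\int e(x,r)\,dF(x)\le 1$; likewise $\int e(x,r)\,dG(x)\le 1$. Because $e\ge 0$ is measurable, the integral against the mixture splits:
\[
\int e(x,r)\,dH(x)=\lambda\int e(x,r)\,dF(x)+(1-\lambda)\int e(x,r)\,dG(x)\le 1 .
\]
This splitting is valid even when $e$ takes the value $\infty$, since everything is nonnegative. On the other hand $\rho(H)>r$ with $r\in\rho(\mathcal{P})$, so the backtest property requires $\int e(x,r)\,dH(x)>1$, a contradiction. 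Hence $\rho$ is quasi-convex.

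The one delicate point is matching the definitions. The definition of a backtest e-statistic quantifies over $(r,z)\in\psi(\mathcal{P})$. In the pure $\rho$ case this means $r\in\rho(\mathcal{M})$, which is exactly why I take $r$ to be an attained value rather than some arbitrary number between $\max\{\rho(F),\rho(G)\}$ and $\rho(H)$.

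If the lemma is instead read with an auxiliary $\phi$, the same argument needs $\phi(F)=\phi(G)=\phi(H)$, which is not available in general. So I would state explicitly that the lemma concerns $\psi=\rho$ alone. Beyond this, no step is hard: the argument is just linearity of the integral under mixing.
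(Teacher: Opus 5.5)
Your proof is correct and is essentially the paper's argument. The paper fixes $r=\rho(F)\ge\rho(G)$, which is your attained value $r=\max\{\rho(F),\rho(G)\}$, uses linearity of the integral under mixing to get $\int e(x,r)\,d(\lambda F+(1-\lambda)G)(x)\le 1$, and then applies the backtest property directly to conclude $\rho(\lambda F+(1-\lambda)G)\le r$, which is just the contrapositive form of your contradiction.
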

\begin{proof}
  Let $e:\R\times\rho(M)\to \R$ be a  $\mathcal{M}$-backtest e-statistic for $\rho$. Further, let $r\in \R, \lambda \in [0,1]$ and $F,G\in \mathcal{M}$ with $\rho(F)=r$ and $\rho(G)\leq r$. Since $e$ is a one-sided e-statistic, we have
  \begin{align*}
    &\int_\R e(x,r)dF(x)\leq 1\textrm{ and } \int_\R e(x,r)dG(x)\leq 1 \\ 
    \implies &\int_\R e(x,r)d(\lambda F+(1-\lambda)G)(x)\leq 1 \, .
  \end{align*}
  Since $e$ is a backtest e-statistic, this yields $\rho(\lambda F+(1-\lambda)G)\leq r=\max\{\rho(F),\rho(G)\}$.
\end{proof}
\begin{bemerkung}
Regarding the previous lemma, in the original paper regarding E-Backtesting by Wang et al, $\rho$ was also required to be monotone and uncapped (see Proposition 3 in \cite{Hauptquelle}). As shown in our proof, these assumptions are unnecessary.
\end{bemerkung}
The next lemma concerns monotone backtest e-statistics:
\begin{lemma}
   Let $\mathcal{M}$ be a convex set of distribution functions and $\rho:\mathcal{M}\to \R$. If there exists a monotone $\mathcal{M}$-backtest e-statistic for $\rho$, then $\rho$ is quasi-linear. 
\end{lemma}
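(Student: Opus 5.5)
By Lemma \ref{BacktestIsQuasiConvex}, $\rho$ is already quasi-convex, since a monotone backtest e-statistic is in particular a backtest e-statistic. It therefore remains to show that $\rho$ is quasi-concave. Concretely, for all $F,G\in\mathcal{M}$ and $\lambda\in[0,1]$ we need
\[
\rho(\lambda F+(1-\lambda)G)\geq \min\{\rho(F),\rho(G)\}.
\]
This is the mirror image of the argument in Lemma \ref{BacktestIsQuasiConvex}. There, the ``$\leq 1$'' side of the e-statistic was preserved under mixtures. Here we use the strict ``$>1$'' side, which is also linear in the distribution and hence preserved under mixtures. Monotonicity in $r$ lets us evaluate both components at a common level.

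Fix $F,G$ and $\lambda$, and set $H=\lambda F+(1-\lambda)G$. Without loss of generality $\rho(F)\leq\rho(G)$, so we must show $\rho(H)\geq\rho(F)$. Suppose instead that $\rho(H)<\rho(F)$. We work at any level $r\in\rho(\mathcal{M})$ with $\rho(H)\leq r<\rho(F)$; the natural choice is $r=\rho(H)$, which lies in the domain.

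Since $\rho(F)>r$ and $\rho(G)\geq\rho(F)>r$, the backtest property gives
\[
\int e(x,r)\,dF>1 \quad\text{and}\quad \int e(x,r)\,dG>1 .
\]
Integration is linear in the measure, so $\int e(x,r)\,dH>1$. On the other hand, $\rho(H)\leq r$, so the one-sided e-statistic property gives $\int e(x,r)\,dH\leq 1$. This is a contradiction, and hence $\rho(H)\geq\min\{\rho(F),\rho(G)\}$.

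The main point to check is whether monotonicity is needed at all. In the one-dimensional setting ($\psi=\rho$, no auxiliary $\phi$), the argument above only uses the backtest inequality at the single level $r=\rho(H)$. If the backtest definition only guarantees the strict inequality for $(r,z)\in\psi(\mathcal{P})$, the choice $r=\rho(H)$ is still admissible. So monotonicity appears unnecessary for the bare statement, much as the earlier remark observed for monotone/uncapped.

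Monotonicity would become essential if one could only use levels $r'$ in some restricted set, for example $r'<r$ with $r'$ different from $\rho(H)$. In that case one would bound $e(x,r')\geq e(x,r)$ by monotonicity (since $e$ is decreasing in $r$) to transfer the inequality between levels. I would present the proof with $r=\rho(H)$ and note explicitly where monotonicity could be invoked to move between levels.

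Finally, combining quasi-convexity from Lemma \ref{BacktestIsQuasiConvex} with this quasi-concavity yields that $\rho$ is quasi-linear.
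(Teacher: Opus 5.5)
Your proof is correct, and it differs from the paper's in a useful way. The paper also takes quasi-convexity from Lemma \ref{BacktestIsQuasiConvex} and argues quasi-concavity by contradiction, but it works at an intermediate level. For $F,G$ with $\rho(F),\rho(G)\geq r$ and $q=\rho(H)<r$, it chooses $\epsilon>0$ with $q+\epsilon<r$. It then uses monotonicity to get $\int e(x,q+\epsilon)\,dH\leq\int e(x,q)\,dH\leq 1$, and contrasts this with $\int e(x,q+\epsilon)\,dF>1$ and $\int e(x,q+\epsilon)\,dG>1$. You instead evaluate everything at the level $\rho(H)$ itself: one-sidedness gives the bound for $H$, and the backtest property gives the strict bounds for $F$ and $G$ because $\rho(F),\rho(G)>\rho(H)$.

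Your route gains two things. First, it shows the conclusion holds for every backtest e-statistic, monotone or not. In fact the paper's monotonicity step is redundant, since $q+\epsilon\geq\rho(H)$ already gives $\int e(x,q+\epsilon)\,dH\leq 1$ by one-sidedness; your remark that any level $r$ with $\rho(H)\leq r<\rho(F)$ works shows exactly this. Second, your level $\rho(H)$ automatically lies in $\rho(\mathcal{M})$, the domain of $e$. The paper's level $q+\epsilon$ need not lie there, for instance if $\rho(\mathcal{M})$ has a gap just above $\rho(H)$, so your version is also more robust.

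You can drop your closing speculation about where monotonicity ``could'' enter. It is not needed, and it is slightly off target: the paper uses monotonicity to push the bound $\leq 1$ for $H$ upward from $\rho(H)$, not to move the strict bounds to lower levels.
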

\begin{proof}
  Quasi-Convexity of $\rho$ follows from Lemma \ref{BacktestIsQuasiConvex}. For Quasi-Concavity, let $e:\R\times\rho(M)\to \R$ be a monotone $\mathcal{M}$-backtest e-statistic for $\rho$. Further, let $r\in \R,\lambda \in [0,1]$ and $F,G\in \mathcal{M}$ with $\rho(F),\rho(G)\geq r$. Assume now that $\rho(\lambda F+(1-\lambda)G):=q<r$. Then, there exists $\epsilon>0$ such that $q+\epsilon<r$. Since monotonicity of $e$ implies that $e(x,r)$ is decreasing in $r$, we have
  \begin{align*}
    &\int_\R e(x,q+\epsilon)d(\lambda F+(1-\lambda)G)(x)\leq \int_\R e(x,q)d(\lambda F+(1-\lambda)G)(x)\leq 1  \\ 
    \textrm{and } & \int_\R e(x,q+\epsilon)dF(x)>1 \textrm{ and } \int_\R e(x,q+\epsilon)dG(x)>1 \, ,
  \end{align*}
  which constitutes a contradiction. Thus, $\rho(\lambda F+(1-\lambda)G)\geq r$, and as this holds for all $r\in\R$ lower bounding both $\rho(F)$and $\rho(G)$, this yields $-\rho(\lambda F+(1-\lambda)G)\leq - \min\{\rho(F),\rho(G)\}=\max\{-\rho(F),-\rho(G)\}$. 
\end{proof}

\subsection{Backtest e-statistics for specific risk measures}
Next, we want to characterize possible one-sided (monotone) backtest e-statistics for common risk measures. We will notice that these e-statistics often have to be upper bounded by some specific e-statistic. Since a sequential test based on the latter e-statistic will trivially have universally higher power than the former, these results substantially limit the possibilities for one to derive different e-statistics that do not perform uniformly worse than another one. \\ 

We start our analysis by looking at the expected value as a risk measure (see Proposition 5 in \cite{Hauptquelle}):
\begin{theorem}\label{CharacMean}
  Let $\alpha \in \R$ and $\mathcal{P}:=\{ F \in \mathcal{M}_1| \operatorname{supp}(X)\in [\alpha,\infty)\}$. Further, let $\rho: \mathcal{P}\to \R$ with $\rho(F)=\mathbb{E}[X], X\sim F$ and $e:\R \times [\alpha,\infty)\to [0,\infty)$ be a $\mathcal{P}$-one-sided e-statistic for $\rho$. 
  \begin{enumerate}[(i)]
    \item 
    There exists some function $h:[\alpha, \infty)\to[0,1]$ such that 
    \begin{align*}
      e(x,r)\leq e'(x,r):=\left\{\begin{array}{ll}
        1+h(r)\frac{x-r}{r-\alpha}, & x\geq \alpha \\ 
        \infty , & x<\alpha 
      \end{array}\right. \qquad \forall x\in \R, r\in [\alpha,\infty) \, .
    \end{align*}
    Further, $e'$ is a  $\mathcal{P}$-one-sided e-statistic for $\rho$. 
    \item 
    $e'$ is a backtest e-statistic for $\rho$ if and only if $h(r)>0$ for all $r\geq \alpha$.
    \item 
    Under the conditions of (ii), $e'$ is a monotone backtest e-statistic for $\rho$ if and only if $h$ and $r\to (r-a)/h(r)$ are increasing.
  \end{enumerate}
\end{theorem}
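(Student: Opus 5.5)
The plan is to follow the classical argument for the mean (as in Proposition 5 of \textcite{Hauptquelle}): test the one-sided e-statistic against two-point distributions, use concavity-type inequalities to get the linear upper bound, and then read off (ii) and (iii) from explicit integrals.

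\textbf{Part (i).} Fix $r>\alpha$ (the case $r=\alpha$ is degenerate: only $\delta_\alpha$ has mean $\alpha$, so $e(\alpha,\alpha)\le 1$ and the bound holds with the convention $(x-r)/(r-\alpha)=\infty$ for $x>r$). Since $\delta_r\in\mathcal{P}$ with $\rho(\delta_r)=r$, we get $e(r,r)\le 1$. Next I would consider, for $x\in[\alpha,r)$ and $y>r$, the two-point law putting mass $\frac{y-r}{y-x}$ on $x$ and $\frac{r-x}{y-x}$ on $y$; it has mean $r$. The one-sided property then gives
\begin{align*}
(y-r)\,e(x,r)+(r-x)\,e(y,r)\le y-x .
\end{align*}
Writing $g(x)=e(x,r)-1$, this becomes $(y-r)g(x)+(r-x)g(y)\le 0$, that is,
\begin{align*}
\frac{g(y)}{y-r}\le \frac{-g(x)}{r-x}\qquad\text{for all } x<r<y .
\end{align*}
Set $h(r):=\max\bigl\{0,\,(r-\alpha)\sup_{y>r} g(y)/(y-r)\bigr\}$. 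For $y>r$, the definition gives directly $e(y,r)\le 1+h(r)(y-r)/(r-\alpha)$. For $x<r$, the displayed inequality gives $g(x)\le -(r-x)\,\sup_{y>r} g(y)/(y-r)$, and this is the required bound as long as the supremum is nonnegative. Taking $x=\alpha$ and using $e\ge 0$, i.e.\ $g(\alpha)\ge -1$, shows $h(r)\le 1$.

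The main obstacle is the case where the supremum is negative, so that $h$ is clipped to $0$. Then the bound $e\le 1$ on $x<r$ does not follow from the two-point argument alone. Here I would instead use $\delta_x$ for $x\le r$: it has mean $x\le r$, so $e(x,r)\le 1$ directly. With that, the bound $e\le e'$ holds on $x<r$ in every case, and it holds trivially on $x<\alpha$, where $e'=\infty$. Finally, $e'$ is a one-sided e-statistic because $e'(\cdot,r)$ is affine on $[\alpha,\infty)$: for any $F$ with mean $m\le r$, its integral is $1+h(r)(m-r)/(r-\alpha)\le 1$.

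\textbf{Part (ii).} For $F$ with mean $m>r$, the same computation gives $\int e'(x,r)\,dF(x)=1+h(r)(m-r)/(r-\alpha)$. This exceeds $1$ for every such $F$ if and only if $h(r)>0$, and such an $F$ always exists, e.g.\ a point mass at some $m>r$.

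\textbf{Part (iii).} Monotonicity means $e'(x,r)$ is decreasing in $r$ for every $x\ge\alpha$. Write
\begin{align*}
e'(x,r)=1-\frac{h(r)\,(r-x)}{r-\alpha}=\bigl(1-h(r)\bigr)+h(r)\,\frac{x-\alpha}{r-\alpha}.
\end{align*}
I would check the extreme points of $x$. At $x=\alpha$ the value is $1-h(r)$, which is decreasing in $r$ if and only if $h$ is increasing. As $x\to\infty$ the value is dominated by the slope $h(r)/(r-\alpha)$, which must be decreasing, i.e.\ $(r-\alpha)/h(r)$ must be increasing. Conversely, for each fixed $x$ the expression is a combination of these two monotone pieces, with coefficient $(x-\alpha)\ge 0$ on the second. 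So both conditions together give decreasingness for all $x$. This is exactly the stated criterion, where $a$ in the statement stands for $\alpha$.
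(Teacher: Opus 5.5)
Your proof is correct, but for (i) it takes a more direct route than the paper. The paper first proves the auxiliary Lemma~\ref{HilfsLemmaMean} in the normalized case ($\alpha=0$, $r=1$) and obtains $h$ by contradiction. It introduces the sets $\Lambda_0,\Lambda_1$ of slopes that are violated to the right resp.\ left of the mean. It shows they are intervals, not containing their inner endpoints, whose union is $[0,1]$, so they must intersect. It then builds a mean-$r$ two-point law that violates the e-property, and only afterwards rescales and shifts by $\alpha$.

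You instead read off from the same two-point laws the separation inequality $g(y)/(y-r)\le -g(x)/(r-x)$ for all $\alpha\le x<r<y$. Hence any slope between the supremum of the left side and the infimum of the right side is admissible. Taking the supremum gives an explicit $h(r)$, which is in fact the smallest admissible choice. This avoids the endpoint bookkeeping and the normalization. The paper's packaging, in return, gives a reusable lemma that is invoked again for the expected shortfall (Lemma~\ref{HilfsLemmaES}) and imitated for the variance and higher moments.

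Your ``main obstacle'' never actually arises. Since $e\ge 0$ gives $g\ge -1$, you have $g(y)/(y-r)\ge -1/(y-r)\to 0$ as $y\to\infty$. So the supremum is always nonnegative and no clipping occurs; the $\delta_x$ fallback is harmless but unnecessary.

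Two small points are worth stating explicitly. For $r=\alpha$ you need $h(\alpha)>0$ (say $h(\alpha)=1$) for your convention to yield $e'=\infty$ on $x>\alpha$. Nonnegativity of $e'$ follows from $h\le 1$ via your rewriting $e'(x,r)=(1-h(r))+h(r)(x-\alpha)/(r-\alpha)$.

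In (iii), that same rewriting gives sufficiency without the paper's case split $x\ge r$ versus $x\le r$. Your necessity argument via $x=\alpha$ and $x\to\infty$ is equivalent to the paper's specific choice of $x$.
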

The proof of this theorem largely relies on the following quite general lemma:
\begin{lemma}\label{HilfsLemmaMean}
  Let $r\geq 0$ and $g:\R\to [0,\infty)$ such that, for all non-negative random variables $X$ with $\mathbb{E}[X]\leq r$, it holds $\mathbb{E}[g(X)]\leq 1$. Then, there exists $h\in [0,1]$ such that $g(x)\leq 1+h\frac{x-r}{r}$ for all $x\geq 0$.
\end{lemma}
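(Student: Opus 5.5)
The idea is to test the constraint on two-point distributions and read off a linear bound. For $r>0$, the condition applied to the constant $X\equiv r$ gives $g(r)\le 1$. Next take $x\ge 0$, $y\ge 0$ with $x\le r\le y$, $x<y$, and let $X$ equal $y$ with probability $\pi=(r-x)/(y-x)$ and $x$ otherwise. Then $\mathbb{E}[X]=r$, so
\begin{align*}
\frac{y-r}{y-x}\,g(x)+\frac{r-x}{y-x}\,g(y)\le 1 .
\end{align*}
Rearranged, this says that every chord joining $(x,g(x))$ to $(y,g(y))$ across $r$ lies below or passes through $(r,1)$. In particular
\begin{align*}
\frac{g(y)-1}{y-r}\le \frac{1-g(x)}{r-x}\qquad\text{for all } 0\le x<r<y ,
\end{align*}
which follows by taking the convex combination equal to $1$ at $r$ and using $g\ge0$.

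The slope is chosen by setting $s:=\sup_{y>r}\frac{g(y)-1}{y-r}$ and $t:=\inf_{0\le x<r}\frac{1-g(x)}{r-x}$. The display gives $s\le t$. Using $x=0$ and $g(0)\ge0$ we get $t\le 1/r$. We also have $t\ge0$ whenever all $g(x)\le1$ for $x<r$, and the display shows this holds in any case. Indeed, if $g(x)>1$ for some $x<r$, then $g(y)<1$ for every $y>r$ with a uniformly negative slope, which forces $g(y)<0$ for large $y$; that is a contradiction. So we pick any $k\in[\max\{s,0\},t]\subset[0,1/r]$ and put $h:=kr\in[0,1]$. For $y>r$ we then have $g(y)\le 1+s(y-r)\le 1+k(y-r)$. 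For $x<r$ we have $g(x)\le 1-t(r-x)\le 1-k(r-x)=1+k(x-r)$. At $x=r$ the bound is $g(r)\le1$. Together these give $g(x)\le 1+h\frac{x-r}{r}$ for all $x\ge0$.

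The main obstacle is the bookkeeping when $s=-\infty$ or $t=+\infty$, and showing $s\le t$ with $\max\{s,0\}\le t$. The case $t=+\infty$ cannot occur because $x=0$ is admissible, so $t\le 1/r$. If $s<0$ we simply take $k=0$ whenever $t\ge0$, and $t\ge0$ was established above. The degenerate case $r=0$ must be handled separately: then $X\equiv0$ is the only admissible variable, the condition gives only $g(0)\le1$, and the claimed bound should be read with the convention $(x-r)/r=+\infty$ for $x>0$. So the substantive case is $r>0$.
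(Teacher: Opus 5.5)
Your proof is correct and uses the same mechanism as the paper's, namely two-point distributions with mean exactly $r$ straddling $r$. The paper normalizes to $r=1$ and argues by contradiction through the sets $\Lambda_0,\Lambda_1$ of slopes that fail to the right resp.\ left of $r$; for $r=1$ these are just $\{\lambda<s\}\cap[0,1]$ and $\{\lambda>t\}\cap[0,1]$ in your notation. You instead read off the separating slope directly from $s\le t$, which is cleaner. One simplification: since the hypothesis only requires $\mathbb{E}[X]\le r$, the constants $X\equiv x$ with $0\le x\le r$ give $g(x)\le 1$ immediately, so your indirect argument for $t\ge 0$ is unnecessary. In fact $s\ge 0$ always holds, because $g\ge 0$ gives $\frac{g(y)-1}{y-r}\ge -\frac{1}{y-r}\to 0$.
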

\begin{proof}
  First, assume $r=1$. Then, $g(y)\leq y$ for all $y>1$, since otherwise the random variable $X$ with $\mathbb{P}(X=y)=1/y$ and $\mathbb{P}(X=0)=1-1/y$ contradicts the assumptions. Moreover, we have $g(y)\leq 1$ for $y\leq 1$ by using constant random variables. Now, define the following sets:
  \begin{align*}
    \Lambda_0&:=\{\lambda \in [0,1]:\exists y\in (1,\infty):g(y)>1+\lambda y-\lambda\} \\ 
    \Lambda_1&:=\{\lambda \in [0,1]:\exists y\in [0,1):g(y)>1+\lambda y-\lambda\} \, .
  \end{align*}
  Next, assume the assertion does not hold. Then, $\Lambda_0\cup \Lambda_1=[0,1]$ since $g(1)\leq 1$ by our previous argument. Since further $g(y)\leq 1$ for $y<1$, we have $0\in \Lambda_0$ and since $g(y)\leq y$ for all $y>1$, we have $1\in \Lambda_1$. Because $1+\lambda y-\lambda$ is strictly increasing (decreasing) in $\lambda$ for all $y>1$ ($y<1$), $\Lambda_0$ and $\Lambda_1$ are intervals. Assume now that $\sup \Lambda_0\neq 1$ and $\sup \Lambda_0 \in \Lambda_0$. Then, there exists $y_0>1$ such that $g(y_0)>1+\sup \Lambda_0 y_0 - \sup \Lambda_0$. Since $1+\lambda y_0-\lambda$ is continuous in $\lambda$, there exists $\epsilon>0$ such that $g(y_0)>1+\lambda y_0 - \lambda$ for all $\lambda<\sup \Lambda_0+\epsilon$. Since this contradicts $\sup \Lambda_0$ being the supremum of $\Lambda_0$, we derive $\sup \Lambda_0 \not \in \Lambda_0$. If $\sup \Lambda_0=1$, this also holds by our previous argument. Similarly, $\inf \Lambda_1 \not \in \Lambda_1$. Since $\Lambda_0\cup \Lambda_1=[0,1]$, this implies $\inf \Lambda_1 \leq \sup \Lambda_0$ and since $\Lambda_0$ and $\Lambda_1$ are intervals, we derive $\Lambda_0\cap\Lambda_1\neq \emptyset$.

  Let $\lambda_0\in \Lambda_0\cap\Lambda_1$ and $x_0\in [0,1)$ and $y_0\in (1,\infty)$ such that $g(x_0)>1+\lambda_0 x_0-\lambda_0$ and $g(y_0)>1+\lambda_0 y_0-\lambda_0$. Then, define the random variable $X$ by $\mathbb{P}(X=y_0)=\frac{1-x_0}{y_0-x_0}$ and $\mathbb{P}(X=x_0)=\frac{y_0-1}{y_0-x_0}$. Then, we have
  \begin{align*}
    \mathbb{E}[X]=y_0\frac{1-x_0}{y_0-x_0}+x_0 \frac{y_0-1}{y_0-x_0}=\frac{y_0-x_0}{y_0-x_0}=1 \, ,
  \end{align*}
  but also
  \begin{align*}
    \mathbb{E}[g(X)]&=g(y_0)\frac{1-x_0}{y_0-x_0}+g(x_0) \frac{y_0-1}{y_0-x_0}\\ 
    &>(1+\lambda_0 y_0 -\lambda_0)\frac{1-x_0}{y_0-x_0}+(1+\lambda_0 x_0 -\lambda_0)\frac{y_0-1}{y_0-x_0} \\ 
    &=(1-\lambda_0)+\lambda_0 \frac{y_0-y_0x_0+x_0y_0-x_0}{y_0-x_0}=1 \, ,
  \end{align*}
  which contradicts our assumptions. \\ 

  For $r> 0$ notice that for any non-negative random variable $X$ with $\mathbb{E}[X]\leq 1$ we have $\mathbb{E}[r X]\leq r$ and thus $\mathbb{E}[g(rX)]\leq 1$. By our previous argument, there exists $h\in [0,1]$ such that
  \begin{align*}
    g(rx)\leq 1+h x-h \qquad \forall x\geq 0 \, .
  \end{align*}
  Scaling by $\frac{1}{r}$ then yields $g(x)\leq 1+h \frac{x}{r}-h=1+h\frac{x-r}{r}$ \, . \\ 

  Lastly, in the case of $r=0$, notice that $g(0)\leq 1$ by using $X=0$. Since the right-hand side of the desired inequality is infinite for $x,h>0$ and at least $1$ if $x=0$, this implies the assertion.
\end{proof}
Now, we can prove Theorem \ref{CharacMean}:
\begin{proof}[Proof of Theorem \ref{CharacMean}]
  Let $r\geq \alpha$ and $X$ be a non-negative random variable with $\mathbb{E}[X]\leq r-\alpha$. Then, $X+\alpha\in \mathcal{P}$ and $\mathbb{E}[X+\alpha]\leq r$. Since $e$ is a $\mathcal{P}$-one-sided e-statistic, this implies $\mathbb{E}[e(X+\alpha,r)]\leq 1$. By Lemma \ref{HilfsLemmaMean}, there exists $h(r)\in [0,1]$ such that 
  \begin{align*}
    e(x+\alpha,r)\leq 1+h(r)\frac{x-r+\alpha}{r-\alpha} \qquad \forall x\geq 0 \, .
  \end{align*}
  Shifting $x$ by $-\alpha$ then yields the desired upper bound for $e$. \\ 

  To see that $e'$ is a $\mathcal{P}$-one-sided e-statistic, notice that $e'$ is non-negative. Further, for any random variable $X$ with $\operatorname{supp}(X)\in [\alpha,\infty)$ and $\mathbb{E}[X]\leq r$, we have
  \begin{align*}
    \mathbb{E}[e'(X,r)]=1+h(r)\frac{\mathbb{E}[X]-r}{r-\alpha}\leq 1 \, .
  \end{align*} 
  
  For (ii), let  $X$ be a random variable with $\operatorname{supp}(X)\in [\alpha,\infty)$ and $\mathbb{E}[X]> r$. Similarly to the previous argument, we have 
  \begin{align*}
    \mathbb{E}[e'(X,r)]=1+h(r)\frac{\mathbb{E}[X]-r}{r-\alpha}> 1 \, ,
  \end{align*} 
  if and only if $h(r)>0$, which is equivalent to $e'$ being a backtest e-statistic. \\ 

  For (iii), let $\alpha\leq r< r'$ and let first $h$ and $r\to (r-a)/h(r)$ be increasing . For $x\geq r$, we then have
  \begin{align*}
    e'(x,r)=1+h(r)\frac{x-r}{r-\alpha} \geq 1+h(r')\frac{x-r}{r'-\alpha}\geq 1+h(r')\frac{x-r'}{r'-\alpha}=e'(x,r') \, ,
  \end{align*}
  where the second step is due to $r\to (r-a)/h(r)$ being increasing. For $x\leq r$, we have
  \begin{align*}
    e'(x,r)=1+h(r)\frac{x-r}{r-\alpha} \geq 1+h(r')\frac{x-r}{r-\alpha}\geq 1+h(r')\frac{x-r'}{r'-\alpha}=e'(x,r') \, ,
  \end{align*}
  where the second step is due to $h$ being increasing, and the third step is due to $r\to \frac{x-r}{r-\alpha}$ (always) being decreasing for $x\geq \alpha$. Thus, $e'$ is monotone. For the other direction, let $e'$ be monotone. Then, we derive
  \begin{align*}
    h(r)\frac{x-r}{r-\alpha}\geq h(r')\frac{x-r'}{r'-\alpha} \qquad \forall x\geq \alpha \, .
  \end{align*}
  This especially holds for $x=\alpha$ which yields $-h(r)\geq -h(r')\implies h(r)\leq h(r')$. Thus, $h$ is increasing. Now, assume $ \frac{h(r)}{r-\alpha}<\frac{h(r')}{r'-\alpha}$. Then, there exists an $0<\epsilon<0.5$ such that $ \frac{h(r)}{r-\alpha}<(1-\epsilon)\frac{h(r')}{r'-\alpha}$. Setting $x:=\frac{r'-(1-\epsilon)r}{\epsilon}$ and multiplying by $(x-r)$ then yields $ (x-r)\frac{h(r)}{r-\alpha}<(x-r')\frac{h(r')}{r'-\alpha}$, which contradicts the monotonicity of $e'$. Thus, $ \frac{h(r)}{r-\alpha}\geq\frac{h(r')}{r'-\alpha}$, which implies $r\to \frac{r-\alpha}{h(r)}$ being increasing.
\end{proof}

Next, we look at the variance (see Proposition 6 in \cite{Hauptquelle}):
\begin{theorem}\label{CharacVar}
  Let $\psi=(\rho,\phi):= \mathcal{M}_2\to[0,\infty) \times \R$ with $\rho(F):=\operatorname{Var}(X),\phi(F):=\mathbb{E}[X], X\sim F$ and $e:\R \times [0,\infty)\times \R\to [0,\infty)$ be a $\mathcal{M}_2$-one-sided e-statistic for $\psi$. 
  \begin{enumerate}[(i)]
    \item 
    There exists some function $h:[0, \infty)\times \R\to[0,1]$ such that 
    \begin{align*}
      e(x,r,z)\leq e'(x,r,z):=1+h(r,z)\frac{(x-z)^2-r}{r}\qquad \forall x,z\in \R, r\in [0,\infty) \, .
    \end{align*}
    Further, $e'$ is a  $\mathcal{M}_2$-one-sided e-statistic for $\psi$. 
    \item 
    $e'$ is a backtest e-statistic for $\psi$ if and only if $h(r,z)>0$ for all $r\geq 0$ and $z\in \R$.
    \item 
    Under the conditions of (ii), $e'$ is a monotone backtest e-statistic for $\rho$ if and only if $r\to h(r,z)$ and $r\to r/h(r,z)$ are increasing for all $z\in \R$.
  \end{enumerate}
\end{theorem}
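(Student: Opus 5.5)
The plan is to mirror the proof of Theorem \ref{CharacMean}, reducing part (i) to Lemma \ref{HilfsLemmaMean}. Fix $r>0$ and $z\in\R$, and define $g(y):=\max\{e(z+\sqrt{y},r,z),\,e(z-\sqrt{y},r,z)\}$ for $y\geq 0$. The key observation is the following. For any non-negative random variable $Y$ with $\mathbb{E}[Y]\leq r$, I construct $X$ such that $(X-z)^2=Y$, $X$ has mean $z$, and $e(X,r,z)$ equals $g(Y)$ in expectation. To do this, let $X=z+S\sqrt{Y}$, where $S$ is a $\pm1$ sign chosen as follows. On the event $\{Y=y\}$, put weight $1/2$ on each of $z\pm\sqrt y$. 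This gives $\mathbb{E}[X]=z$ and $\operatorname{Var}(X)=\mathbb{E}[Y]\leq r$, so $X\sim F\in\mathcal M_2$ with $\psi(F)=(\operatorname{Var}(X),z)$ and $\operatorname{Var}(X)\leq r$. The one-sided property then gives $\mathbb{E}[\tfrac12(e(z+\sqrt Y,r,z)+e(z-\sqrt Y,r,z))]\leq1$.

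With only this symmetrized version, Lemma \ref{HilfsLemmaMean} applied to $\tilde g(y):=\tfrac12(e(z+\sqrt y,r,z)+e(z-\sqrt y,r,z))$ yields $\tilde g(y)\leq 1+h\frac{y-r}{r}$. That does not bound each summand separately, which is the main obstacle. To get the pointwise bound on $e$ itself, I would rerun the proof of Lemma \ref{HilfsLemmaMean} directly on $e(\cdot,r,z)$ instead of on $\tilde g$. That proof only uses two-point laws for $Y$, namely mass at $x_0<r$ and $y_0>r$ with mean exactly $r$ (after rescaling). Each such law can be realized by a two-point law of $X$ taking the values $z+s_0\sqrt{x_0}$ and $z+s_1\sqrt{y_0}$ with the signs chosen freely. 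Such an $X$ need not have mean $z$, so I would use three- or four-point distributions instead. Concretely, take atoms at $z+\sqrt{y_0}$ and $z-\sqrt{x_0}$, possibly supplemented by atoms at the reflected points, and adjust the weights so that $\mathbb{E}[X]=z$ and $\mathbb{E}[(X-z)^2]=r$. Using the adjusted weights, I would show that the sets $\Lambda_0,\Lambda_1$ (now defined with $g$ replaced by $x\mapsto e(x,r,z)$ and $y$ replaced by $(x-z)^2$) are intervals with the same open/closed endpoint behaviour, and that they cannot intersect. The contradiction comes from a law with mean $z$, variance at most $r$, and $\mathbb{E}[e]>1$. To obtain it I would first try the trick of pairing any violating point $x$ with its reflection $2z-x$ through a small extra mass: $e\geq0$, and the constraint is only $\leq1$. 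The degenerate case $r=0$ is handled as in the lemma, via $X\equiv z$ and the convention that $h\frac{(x-z)^2-r}{r}=\infty$ for $x\neq z$, $h>0$.

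The remaining claims are direct computations. If $\operatorname{Var}(X)\leq r$ and $\mathbb{E}[X]=z$, then $\mathbb{E}[e'(X,r,z)]=1+h\frac{\operatorname{Var}(X)-r}{r}\leq1$ and $e'\geq 1-h\geq0$. This gives (i). For (ii), the decomposition from Example \ref{ESVaRBeispiel} yields $\mathbb{E}[(X-z)^2]=\operatorname{Var}(X)+(\mathbb{E}X-z)^2$. Hence $\mathbb{E}[e'(X,r,z)]=1+h(r,z)\frac{\operatorname{Var}(X)+(\mathbb{E}X-z)^2-r}{r}$, which exceeds $1$ whenever $\operatorname{Var}(X)>r$ if and only if $h(r,z)>0$. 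For (iii), I copy part (iii) of Theorem \ref{CharacMean} with $u=(x-z)^2\in[0,\infty)$ playing the role of $x-\alpha$ and $\alpha=0$. Evaluating at $u=0$ gives that $h$ is increasing, and letting $u\to\infty$ gives that $h(r,z)/r$ is decreasing. Conversely, these two monotonicities give monotonicity in $r$ on both regions $u\leq r$ and $u\geq r$, exactly as before.
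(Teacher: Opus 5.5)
The step you flagged as the main obstacle cannot be overcome, because the pointwise bound in (i) is false for general one-sided e-statistics. No choice of three- or four-point laws, reflection trick included, can yield the contradiction you want. The null only fixes $\mathbb{E}[X-z]=0$, so any term linear in $x-z$ integrates to zero under every null law and is invisible to the constraint. Concretely, for $r>0$ let
\[
e(x,r,z):=\frac{(x-z+\sqrt r)^2}{2r}=1+\frac{x-z}{\sqrt r}+\frac12\cdot\frac{(x-z)^2-r}{r},\qquad e(x,0,z):=1.
\]
If $\mathbb{E}[X]=z$ and $\operatorname{Var}(X)\le r$, then $\mathbb{E}[e(X,r,z)]=(\operatorname{Var}(X)+r)/(2r)\le 1$. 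So $e$ is an $\mathcal M_2$-one-sided e-statistic for $(\operatorname{Var},\mathbb{E})$. But $e(z+\sqrt r,r,z)=2$, while $e'(z+\sqrt r,r,z)=1$ whatever the value of $h(r,z)$. The reflected point has $e(z-\sqrt r,r,z)=0$, so the symmetric law on $z\pm\sqrt r$ gives exactly $\tfrac12(2+0)=1$. This is consistent with your symmetrized bound, since $\tfrac12\{e(z+u,r,z)+e(z-u,r,z)\}=1+\tfrac12\frac{u^2-r}{r}$.

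Following the paper does not close this gap either. Its Lemma \ref{HilfsLemmaVar} also tests $g$ only with laws symmetric about $0$, so it only controls $\tfrac12(g(y)+g(-y))$. It nevertheless states the conclusion for $g$ itself, and only for $x\ge 0$.

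What survives is exactly what your first paragraph already establishes, more cleanly than the paper, via Lemma \ref{HilfsLemmaMean}: the reflection average of $e$ is dominated by $e'$. This gives (i) verbatim whenever $e(z+u,r,z)=e(z-u,r,z)$ for all $u$. For general $e$, a correct pointwise envelope must include an extra term $a(r,z)(x-z)$. Such a term can be obtained, for example, from a supergradient at $(0,r)$ of the concave value function $(m,s)\mapsto\sup\{\mathbb{E}[e(X,r,z)]:\mathbb{E}[X-z]=m,\ \mathbb{E}[(X-z)^2]=s\}$. Your computations for the second half of (i) and for (ii) and (iii) are fine, since they hold for $e'$ with an arbitrary $h$.
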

The proof of the aforementioned theorem relies on the following technical lemma:
\begin{lemma}\label{HilfsLemmaVar}
  Let $r\geq 0$ and $g:\R\to [0,\infty)$ such that, for all random variables $X$ with $\mathbb{E}[X]=0$ and $\operatorname{Var}(X)\leq r$, it holds $\mathbb{E}[g(X)]\leq 1$. Then, there exists $h\in [0,1]$ such that $g(x)\leq 1+h\frac{x^2-r}{r}$ for all $x\geq 0$.
\end{lemma}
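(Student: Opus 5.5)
We reduce Lemma \ref{HilfsLemmaVar} to Lemma \ref{HilfsLemmaMean} via the map $X\mapsto X^2$, with one extra symmetrization step. We treat $r>0$ first; the case $r=0$ is handled at the end as in Lemma \ref{HilfsLemmaMean}, since then only $X=0$ is admissible and $g(0)\le 1$.

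Define $\tilde g:[0,\infty)\to[0,\infty)$ by
\[
\tilde g(y):=\max\{g(\sqrt{y}),\,g(-\sqrt{y})\}.
\]
The goal is to show that $\tilde g$ satisfies the hypothesis of Lemma \ref{HilfsLemmaMean}. That is, for every non-negative random variable $Y$ with $\mathbb{E}[Y]\le r$ we want $\mathbb{E}[\tilde g(Y)]\le 1$. Given such a $Y$, we build a centred $X$ with $X^2=Y$ in such a way that $g(X)$ attains the larger of the two values whenever possible.

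\textbf{Symmetrization step.} Two constructions are available.

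\emph{Symmetric sign.} Take $X=\varepsilon\sqrt{Y}$ with an independent Rademacher sign $\varepsilon$. Then $\mathbb{E}[X]=0$ and $\operatorname{Var}(X)=\mathbb{E}[Y]\le r$, and $\mathbb{E}[g(X)]$ equals the average $\mathbb{E}\bigl[\tfrac12(g(\sqrt Y)+g(-\sqrt Y))\bigr]$. This only yields the bound for the average of the two values, not for their maximum, so it is not sufficient.

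\emph{Maximizing sign with recentring.} Let $s(y)\in\{\pm1\}$ be a sign at which the maximum in $\tilde g(y)$ is attained. Setting $X'=s(Y)\sqrt{Y}$ gives $\mathbb{E}[g(X')]=\mathbb{E}[\tilde g(Y)]$, but in general $\mathbb{E}[X']\neq 0$. We therefore correct the mean by mixing. With probability $\tfrac12$ use $X'$; with probability $\tfrac12$ use an auxiliary variable $X''$ with $X''^2=Y'$ for an independent copy $Y'$ of $Y$, with its sign chosen so that the total mean vanishes. This adds only non-negative terms to $\mathbb{E}[g]$ and keeps the second moment at most $r$.

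\textbf{Main obstacle.} It remains to make the mean-zero adjustment work for arbitrary $Y$. If this mixing argument becomes too messy, we fall back on two-point distributions, mirroring the proof of Lemma \ref{HilfsLemmaMean}. That proof only ever uses random variables with at most two atoms. For a two-point $Y$ on $\{x_0^2,y_0^2\}$ with mean $1$ (after scaling to $r=1$), choose signed atoms $a\in\{\pm x_0\}$ and $b\in\{\pm y_0\}$, each realizing $\tilde g$ at its atom. A centred variable supported on $\{a,b\}$ exists only when $a$ and $b$ have opposite signs. When they have the same sign, we use a three-point distribution instead, adding a small atom on the other side.

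Because of these sign complications, we plan to rewrite the argument of Lemma \ref{HilfsLemmaMean} directly for $g$ rather than for $\tilde g$:
\begin{enumerate}[(a)]
\item Show $g(x)\le 1$ for $|x|\le\sqrt r$ by using symmetric two-point variables $\pm x$, which have mean zero and variance $x^2\le r$. (This gives the bound for the average $\tfrac12(g(x)+g(-x))$; the individual bound must come from a suitable asymmetric two-point law.)
\item Show $g(x)\le x^2/r$ for $|x|>\sqrt r$ by using three-point variables with atoms at $0$ and $\pm x$.
\item Define $\Lambda_0,\Lambda_1$ exactly as in the proof of Lemma \ref{HilfsLemmaMean}, with $x^2$ in place of $x$, and derive a contradiction from a point $\lambda_0\in\Lambda_0\cap\Lambda_1$.
\end{enumerate}

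In step (c), given $x_0$ with $x_0^2<r$ and $y_0$ with $y_0^2>r$ violating the affine bound $1+\lambda_0(x^2-r)/r$, we need a centred law on $\{x_0,y_0\}$ together with possibly a reflected atom. Since the bound is affine in $x^2$, its expectation depends only on $\mathbb{E}[X^2]=r$. The construction must therefore only satisfy $\mathbb{E}[X]=0$ and $\mathbb{E}[X^2]=r$, while placing weight on violating points. The hard part is doing this when $x_0$ and $y_0$ have the same sign. There we use the atoms $y_0$, $x_0$ and $-x_0$, where the $\pm x_0$ pair is placed so that it cancels the mean. This requires the violation at $x_0$ to hold at both $\pm x_0$, or else a perturbation argument exploiting the strict inequality.

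Once this is done, scaling from $r=1$ to general $r>0$ follows exactly as in Lemma \ref{HilfsLemmaMean}.
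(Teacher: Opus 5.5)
There is a genuine gap, and it cannot be closed, because every step you single out as ``the hard part'' targets a false claim. The lemma as literally stated, a bound on $g(x)$ itself for $x\ge 0$, is false for every $r>0$. Take $g(x)=\tfrac12\bigl(1+x/\sqrt r\bigr)^2\ge 0$. For every $X$ with $\mathbb{E}[X]=0$ and $\operatorname{Var}(X)\le r$ one gets $\mathbb{E}[g(X)]=\tfrac12+\tfrac{\operatorname{Var}(X)}{2r}\le 1$. Yet $g(\sqrt r)=2>1=1+h\frac{r-r}{r}$ for every $h$.

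The same $g$ refutes each of your intermediate goals:
\begin{itemize}
\item (a) fails at $x=\sqrt r$.
\item (b) fails at $x=2\sqrt r$, since $g(2\sqrt r)=\tfrac92>4$.
\item $\mathbb{E}[\tilde g(Y)]\le 1$ fails for $Y\equiv r$, since $\tilde g(r)=2$.
\item For $\lambda_0=\tfrac12$ one has $g(x)-\bigl(1+\tfrac12\frac{x^2-r}{r}\bigr)=x/\sqrt r$, so violations occur at all $x>0$ and at no $x\le 0$. This is exactly your same-sign case in (c), and no reflected atom or perturbation can rescue it.
\end{itemize}
The underlying reason is that a centred law can put an atom at $x>0$ only together with mass on the negative side. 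By Cantelli's inequality the best pointwise bound obtainable this way is $g(x)\le 1+x^2/r$, which the example attains at $x=\sqrt r$.

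Your recentring mixture does not escape this either. The natural mean-cancelling choice $X''=-s(Y')\sqrt{Y'}$ turns $\mathbb{E}[g]$ back into the symmetric average. Discarding the second half instead only gives $\mathbb{E}[\tilde g(Y)]\le 2$. Note also that $g(x)=1+\tfrac12\frac{x^2-r}{r}+\frac{x}{\sqrt r}$, so any correct unsymmetrized characterization must allow a term linear in $x$, analogous to the $k$-term in Lemma \ref{HilfsLemmaES}.

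What is true, and what the paper's proof actually establishes, is the symmetrized bound $\frac{g(x)+g(-x)}{2}\le 1+h\frac{x^2-r}{r}$ for $x\ge 0$. The paper works only with $\frac{g(y)+g(-y)}{2}$ throughout: the bounds by $y^2$ and by $1$, the $\Lambda$-argument, and the symmetric four-point law. Its concluding ``the assertion holds'' is therefore valid only in that form. The same example shows that Theorem \ref{CharacVar}(i), which rests on this lemma, needs the same correction.

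For the symmetrized statement, the construction you discarded is already a complete proof, and shorter than the paper's. Let $\bar g(y):=\tfrac12\bigl(g(\sqrt y)+g(-\sqrt y)\bigr)$. For $Y\ge 0$ with $\mathbb{E}[Y]\le r$ and an independent Rademacher sign $\varepsilon$, the variable $X=\varepsilon\sqrt Y$ is centred with $\operatorname{Var}(X)=\mathbb{E}[Y]\le r$, and $\mathbb{E}[g(X)]=\mathbb{E}[\bar g(Y)]$. Hence $\bar g$ satisfies the hypothesis of Lemma \ref{HilfsLemmaMean}, which gives $h\in[0,1]$ with $\bar g(y)\le 1+h\frac{y-r}{r}$ for all $y\ge 0$. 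Substituting $y=x^2$ finishes the proof without re-running the $\Lambda_0,\Lambda_1$ argument.
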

\begin{proof}
  First, let $r=1$. Next, let $y>1$ and assume $\frac{g(y)+g(-y)}{2}> y^2$. Consider the RV $X$ with $\mathbb{P}(X=y)=\mathbb{P}(X=-y)=\frac{y^{-2}}{2}$ and $\mathbb{P}(X=0)=1-y^{-2}$. Then, $\mathbb{E}[X]=0,\operatorname{Var}(X)=1$ but $\mathbb{E}[g(X)]=(1-y^{-2})g(0)+\frac{y^{-2}}{2}(g(y)+g(-y))> 1$, contradicting our assumptions. Thus, $\frac{g(y)+g(-y)}{2}\leq y^2$ for $y>1$. By the usage of random variables equally distributed on $y$ and $-y$, we further derive $\frac{g(y)+g(-y)}{2}\leq 1$ for $0\leq y\leq 1$. Next, assume the assertion does not hold. By a similar argumentation to the one in the proof of Lemma \ref{HilfsLemmaMean}, there exists $\lambda_0\in (0,1)$ such that $\frac{g(y_0)+g(-y_0)}{2}>1+\lambda_0 y_0^2-\lambda_0$ for some $y_0>1$ and $\frac{g(x_0)+g(-x_0)}{2}>1+\lambda_0 x_0^2-\lambda_0$ for some $x_0<1$. \\ 

  Next, let $X$ be an RV with $\mathbb{P}(X=y_0)=\mathbb{P}(X=-y_0)=\frac{1-x_0^2}{2(y_0^2-x_0^2)}$ and $\mathbb{P}(X=x_0)=\mathbb{P}(X=-x_0)=\frac{y_0^2-1}{2(y_0^2-x_0^2)}$. Then, $\mathbb{E}[X]=0$ and
  \begin{align*}
    \operatorname{Var}(X)=y_0^2\frac{1-x_0^2}{y_0^2-x_0^2}+x_0^2 \frac{y_0^2-1}{y_0^2-x_0^2}=1 \, .
     \end{align*}
    Further, we have
\begin{align*}
  \mathbb{E}[g(X)]&=\frac{g(y_0)+g(-y_0)}{2}\frac{1-x_0^2}{y_0^2-x_0^2}+\frac{g(x_0)+g(-x_0)}{2}\frac{y_0^2-1}{y_0^2-x_0^2} \\ 
  &>(1+\lambda_0 y_0^2-\lambda_0)\frac{1-x_0^2}{y_0^2-x_0^2}+(1+\lambda_0 x_0^2-\lambda_0)\frac{y_0^2-1}{y_0^2-x_0^2} \\ 
  &=1-\lambda_0+\lambda_0\frac{y_0^2-x_0^2y_0^2+x_0^2y_0^2-x_0^2}{y_0^2-x_0^2}=1 \, ,
\end{align*}
contradicting our assumptions. Thus, the assertion holds. \\ 

For $r>0$ notice that for any centered RV $X$ with $\operatorname{Var}(X)\leq 1$, we have $\operatorname{Var}(\sqrt{r}X)\leq r$ and therefore $\mathbb{E}[g(\sqrt{r}X)]\leq 1$. By our previous argumentation, this implies the existence of an $h\in [0,1]$ such that
\begin{align*}
  g(\sqrt{r}x)\leq 1+ hx^2-h \qquad \forall x\geq 0\, .
\end{align*}
Scaling by $\frac{1}{\sqrt{r}}$ yields $g(x)\leq 1+h\frac{x^2}{r}-h=1+h\frac{x^2-r}{r}$.  \\ 

For $r=0$ notice again that the right-hand side of the desired inequality is infinite for $|x|,h>0$ and at least $1$ if $x=0$. Since $g(0)\leq 1$ by using $X=0$, the inequality holds for all $r>0$.
\end{proof}

Now to the proof of Theorem \ref{CharacVar}:
\begin{proof}[Proof of Theorem \ref{CharacVar}]
  Let $r\geq 0,z\in \R$ and $X$ be a centered random variable with $\operatorname{Var}(X)\leq r$. Then, $X+z$ has the same variance and mean $z$ and since $e$ is a $\mathcal{M}_2$-one-sided e-statistic, we have $\mathbb{E}[e(X+z,r,z)]\leq 1$. By Lemma \ref{HilfsLemmaVar}, there exists $h(r,z)\in [0,1]$ such that
  \begin{align*}
    e(x+z,r,z)\leq 1+h(r,z)\frac{x^2-r}{r} \qquad \forall z\in \R \, .
  \end{align*}
  Shifting $x$ by $-z$ then yields the desired upper bound. \\ 

  To show that $e'$ is a $\mathcal{M}_2$-one-sided e-statistic, notice again that $e'$ is non-negative. Further, let $X$ be an RV with $\mathbb{E}[X]=z$ and $\operatorname{Var}(X)\leq r$. Then, it holds
  \begin{align*}
    \mathbb{E}[e'(X,r,z)]=1+h(r,z)\frac{\operatorname{Var}(X)-r}{r}\leq 1 \, .
  \end{align*}

  For (ii), let $X$ be an RV with $\mathbb{E}[X]=z_0$ and $\operatorname{Var}(X)>r$. Then, we calculate
  \begin{align*}
    \mathbb{E}[e'(X,r,z)]=1+h(r,z)\frac{\mathbb{E}[(X-z)^2]-r}{r}\geq 1+h(r,z)\frac{\operatorname{Var}(X)-r}{r}\, ,
  \end{align*}
  where the second step is due to $z_0=\argmin_{z\in \R}\mathbb{E}[(X-z)^2]$ which also implies that this inequality becomes an equality for $z=z_0$. Then, notice that $1+h(r,z)\frac{\operatorname{Var}(X)-r}{r}>1$ if and only if $h(r,z)>0$. \\  

  For (iii), let $z\in \R, 0\leq r<r'$ and first let $r\to h(r,z)$ and $r\to r/h(r,z)$ be increasing. For $(x-z)^2\geq r$ we then have
  \begin{align*}
    e'(x,r,z)&=1+h(r,z)\frac{(x-z)^2-r}{r}\geq 1+h(r',z)\frac{(x-z)^2-r}{r'} \\ 
    &\geq  1+h(r',z)\frac{(x-z)^2-r'}{r'}=e'(x,r',z) \, ,
  \end{align*}
  where the second step is due to $r\to r/h(r,z)$ being increasing. For $(x-z)^2< r$, we have on the other hand
  \begin{align*}
    e'(x,r,z)&=1+h(r,z)\frac{(x-z)^2-r}{r}\geq 1+h(r',z)\frac{(x-z)^2-r}{r} \\ 
    &\geq 1+h(r',z)\frac{(x-z)^2-r'}{r'}=e'(x,r'z) \, ,
  \end{align*}
  where the second step is due to $r\to h(r,z)$ being increasing, and the third step is due to $r\to \frac{C-r}{r}$ being decreasing for any $C\geq 0$. Thus, $e'$ is monotone. For the back direction, let $e'$ be monotone. This yields
  \begin{align*}
    h(r,z)\frac{(x-z)^2-r}{r}\geq h(r',z)\frac{(x-z)^2-r}{r} \qquad \forall x \in \R \, .
  \end{align*}
  In particular, this holds for $x=z$, which directly implies that $r\to h(r,z)$ is increasing. Next, assume $\frac{h(r,z)}{r}<\frac{h(r',z)}{r'}$. Then, there exists $\epsilon>0$ such that $\frac{h(r,z)}{r}<(1-\epsilon)\frac{h(r',z)}{r'}$. Choosing $x=\sqrt{\frac{r'-(1-\epsilon)r}{\epsilon}}+z$ and multiplying both sides by $(x-z)^2-r$ yields $h(r,z)\frac{(x-z)^2-r}{r}<h(r',z)\frac{(x-z^2)-r'}{r'}$, contradicting monotonicity of $e'$. Therefore, $\frac{h(r,z)}{r}\geq \frac{h(r',z)}{r'}$, which implies $r\to r/h(r,z)$ being increasing.
\end{proof}

By a similar argument, Theorem \ref{CharacMean} can also be generalized to e-statistics for higher moments:

\begin{theorem}\label{CharacHigherMoments}
  Let $k\in \N$ and $\mathcal{P}:=\{ F \in \mathcal{M}_k| \operatorname{supp}(X)\in [0,\infty)\}$. Further, let $\rho: \mathcal{P}\to \R$ with $\rho(F)=\mathbb{E}[X^k], X\sim F$ and $e:\R \times [0,\infty)\to [0,\infty)$ be a $\mathcal{P}$-one-sided e-statistic for $\rho$. 
  \begin{enumerate}[(i)]
    \item 
    There exists some function $h:[\alpha, \infty)\to[0,1]$ such that 
    \begin{align*}
      e(x,r)\leq e'(x,r):=\left\{\begin{array}{ll}
        1+h(r)\frac{x^k-r}{r}, & x\geq 0 \\ 
        \infty , & x<\alpha 
      \end{array}\right. \qquad \forall x\in \R, r\in [\alpha,\infty) \, .
    \end{align*}
    Further, $e'$ is a  $\mathcal{P}$-one-sided e-statistic for $\rho$. 
    \item 
    $e'$ is a backtest e-statistic for $\rho$ if and only if $h(r)>0$ for all $r\geq \alpha$.
    \item 
    Under the conditions of (ii), $e'$ is a monotone backtest e-statistic for $\rho$ if and only if $h$ and $r\to r/h(r)$ are increasing.
  \end{enumerate}
\end{theorem}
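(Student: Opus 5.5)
We follow the argument of Theorem \ref{CharacMean}, replacing the first moment by the $k$th moment. The key reduction is the substitution $Y:=X^k$: for non-negative $X$, the map $x\mapsto x^k$ is a bijection of $[0,\infty)$. Hence non-negative random variables $X$ with $\mathbb{E}[X^k]\leq r$ correspond exactly to non-negative random variables $Y$ with $\mathbb{E}[Y]\leq r$, and $F\in\mathcal{M}_k$ with support in $[0,\infty)$ corresponds to $Y$ having finite mean. Throughout we read the lower bound $\alpha$ in the statement as $0$, since $\mathcal{P}$ is supported on $[0,\infty)$.

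\textbf{Part (i).} Fix $r\geq 0$ and set $g(y):=e(y^{1/k},r)$ for $y\geq 0$, and $g(y):=0$ for $y<0$ (the values there are irrelevant). For every non-negative $Y$ with $\mathbb{E}[Y]\leq r$, the variable $X:=Y^{1/k}$ lies in $\mathcal{P}$ with $\rho(F_X)\leq r$. The one-sided e-statistic property then gives $\mathbb{E}[g(Y)]=\mathbb{E}[e(X,r)]\leq 1$. Lemma \ref{HilfsLemmaMean} now yields $h(r)\in[0,1]$ with $g(y)\leq 1+h(r)(y-r)/r$ for all $y\geq 0$. Substituting $y=x^k$ gives the bound $e(x,r)\leq 1+h(r)(x^k-r)/r$ for $x\geq 0$; for $x<0$ the bound $e'=\infty$ is trivial.

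For the reverse direction we check that $e'$ is a one-sided e-statistic. It is non-negative, since $x^k\geq 0$ gives $1+h(r)(x^k-r)/r\geq 1-h(r)\geq 0$. Moreover, $\mathbb{E}[e'(X,r)]=1+h(r)(\mathbb{E}[X^k]-r)/r\leq 1$ whenever $\mathbb{E}[X^k]\leq r$, because $X\geq 0$ almost surely means the $\infty$ branch carries no mass. The case $r=0$ is handled as at the end of Lemma \ref{HilfsLemmaMean}, using the convention that the bound is infinite for $x>0$ and $h>0$.

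\textbf{Part (ii).} If $\mathbb{E}[X^k]>r$, then $\mathbb{E}[e'(X,r)]=1+h(r)(\mathbb{E}[X^k]-r)/r$. This exceeds $1$ if and only if $h(r)>0$.

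\textbf{Part (iii).} We copy the monotonicity argument from Theorem \ref{CharacMean}(iii) with $\alpha=0$ and $x$ replaced by $c:=x^k\in[0,\infty)$.

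For sufficiency, let $r<r'$ and suppose $h$ and $r\mapsto r/h(r)$ are increasing. If $c\geq r$, then $h(r)/r\geq h(r')/r'$ and $c-r\geq c-r'$ together give $e'(x,r)\geq e'(x,r')$. If $c<r$, we use that $h$ is increasing and that $r\mapsto (c-r)/r$ is decreasing.

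For necessity, taking $x=0$ (so $c=0$) forces $-h(r)\geq -h(r')$, so $h$ is increasing. Now suppose $h(r)/r<(1-\epsilon)h(r')/r'$. Choose $c=(r'-(1-\epsilon)r)/\epsilon$, which is attainable as $x=c^{1/k}$ since $c>0$. This contradicts monotonicity exactly as in the mean case.

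The only point needing care is that every value $c\geq 0$ is realised as $x^k$. This is why the support restriction to $[0,\infty)$ matters, in particular for even $k$. Beyond that, the proof is a direct transfer of the earlier argument and presents no real obstacle.
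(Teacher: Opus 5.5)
Your proof is correct. Parts (ii) and (iii) follow the paper's argument essentially verbatim, but part (i) takes a more economical route.

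The paper first proves a separate auxiliary result, Lemma \ref{HilfsLemmaMoments}. It does so by re-running the entire argument of Lemma \ref{HilfsLemmaMean} (the intervals $\Lambda_0,\Lambda_1$ and the two-point counterexamples) with $y^k$ in place of $y$, and then reads off the bound. You instead note that $x\mapsto x^k$ is a bijection of $[0,\infty)$ that turns the constraint $\mathbb{E}[X^k]\leq r$ into $\mathbb{E}[Y]\leq r$. You then apply Lemma \ref{HilfsLemmaMean} directly to $g(y)=e(y^{1/k},r)$. This shows that Lemma \ref{HilfsLemmaMoments} is really a corollary of Lemma \ref{HilfsLemmaMean}, so no new construction has to be checked.

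That is a genuine gain here, because the paper's re-derivation contains misprinted two-point distributions, which your reduction sidesteps entirely. The mass at $y$ should be $y^{-k}$ rather than $1/\sqrt[k]{y}$. The weights in the second construction should be $(1-x_0^k)/(y_0^k-x_0^k)$ and $(y_0^k-1)/(y_0^k-x_0^k)$.

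The paper's version is self-contained but buys nothing beyond that. Your substitution also makes transparent that non-negativity of the support is exactly what the change of variables requires, in particular for even $k$. Both treatments handle the degenerate case $r=0$ equally informally.
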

The proof of this theorem is based on the following auxiliary lemma.
\begin{lemma}\label{HilfsLemmaMoments}
  Let $k\in \N,r\geq 0$ and $g:\R\to [0,\infty)$ such that, for all non-negative random variables $X$ with $\mathbb{E}[X^k]\leq r$, it holds $\mathbb{E}[g(X)]\leq 1$. Then, there exists $h\in [0,1]$ such that $g(x)\leq 1+h\frac{x^k-r}{r}$ for all $x\geq 0$.
\end{lemma}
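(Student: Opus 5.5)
The plan is to follow the proof of Lemma \ref{HilfsLemmaMean} almost verbatim, with $x$ replaced by $x^k$ in every test distribution. The key observation is that for non-negative $X$ the map $x\mapsto x^k$ is a bijection of $[0,\infty)$. The constraint $\mathbb{E}[X^k]\leq r$ is therefore a linear moment constraint on $Y:=X^k$. Accordingly, I would set $\tilde g(y):=g(y^{1/k})$ for $y\geq 0$. The hypothesis then says that $\mathbb{E}[\tilde g(Y)]\leq 1$ for every non-negative random variable $Y$ with $\mathbb{E}[Y]\leq r$; such a $Y$ comes from $X:=Y^{1/k}$, which is non-negative with $\mathbb{E}[X^k]=\mathbb{E}[Y]\leq r$.

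Now $\tilde g:[0,\infty)\to[0,\infty)$ satisfies exactly the hypothesis of Lemma \ref{HilfsLemmaMean}, restricted to non-negative arguments. That lemma only evaluates $g$ on $[0,\infty)$, so $\tilde g$ can be extended arbitrarily, say by $0$, to $\R$. Lemma \ref{HilfsLemmaMean} then yields an $h\in[0,1]$ with $\tilde g(y)\leq 1+h\frac{y-r}{r}$ for all $y\geq 0$. Substituting $y=x^k$ for $x\geq 0$ gives
\[
g(x)=\tilde g(x^k)\leq 1+h\frac{x^k-r}{r},
\]
which is the claim.

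The only points needing care are the edge case $r=0$ and making sure the reduction is legitimate. For $r=0$ the paper's convention applies, as in Lemma \ref{HilfsLemmaMean}: the right-hand side is $+\infty$ for $x>0$ and $h>0$, and $g(0)\leq1$ follows from $X\equiv0$. I would choose $h=1$ there and pass this case through unchanged. For the reduction, the only fact used is that $y\mapsto y^{1/k}$ maps non-negative random variables to non-negative random variables with the corresponding $k$th moment, which is immediate.

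If a self-contained argument is preferred over the reduction, I would instead repeat the two-point construction directly. Use $\mathbb{P}(X=y_0^{1/k})=\frac{1-x_0}{y_0-x_0}$ and $\mathbb{P}(X=x_0^{1/k})=\frac{y_0-1}{y_0-x_0}$ in the case $r=1$, together with the same interval argument on $\Lambda_0,\Lambda_1$. Then rescale via $X\mapsto r^{1/k}X$. I expect no real obstacle; the main thing is noting that the interval/supremum argument of Lemma \ref{HilfsLemmaMean} carries over unchanged.
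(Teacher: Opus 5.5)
Your proof is correct, but it takes a genuinely different route from the paper. The paper does not reduce to Lemma \ref{HilfsLemmaMean}. Instead it re-runs that lemma's argument with $y^k$ in place of $y$, in these steps:
\begin{enumerate}[(i)]
\item a two-point variable on $\{0,y\}$ gives $g(y)\leq y^k$ for $y>1$, and constants give $g\leq 1$ on $[0,1]$;
\item the same $\Lambda_0,\Lambda_1$ interval argument is carried out with $1+\lambda y^k-\lambda$;
\item a two-point variable on $\{x_0,y_0\}$ with weights $\frac{1-x_0^k}{y_0^k-x_0^k}$ and $\frac{y_0^k-1}{y_0^k-x_0^k}$ yields the contradiction;
\item finally, a rescaling by $\sqrt[k]{r}$ and a separate treatment of $r=0$.
\end{enumerate}

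Your substitution $\tilde g(y)=g(y^{1/k})$ turns all of this into one black-box application of Lemma \ref{HilfsLemmaMean}. That lemma already contains the rescaling step and the case $r=0$. The only thing left to verify is that $Y\mapsto Y^{1/k}$ maps every non-negative $Y$ with $\mathbb{E}[Y]\leq r$ to an admissible $X$, and you did check this.

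Your route buys three things:
\begin{enumerate}[(i)]
\item It is much shorter.
\item It shows that the only property of $x\mapsto x^k$ being used is that it is an (increasing) bijection of $[0,\infty)$. So the same argument gives $g(x)\leq 1+h\frac{\varphi(x)-r}{r}$ under any constraint $\mathbb{E}[\varphi(X)]\leq r$ with $\varphi$ such a bijection.
\item It avoids re-deriving the test distributions, which are misprinted in the paper's write-up. For example, the variable used to show $g(y)\leq y^k$ should put mass $y^{-k}$ on $y$, not $1/\sqrt[k]{y}$.
\end{enumerate}
The paper's version gains only self-containedness.
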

\begin{proof}
  First, let $r=1$. Since for all $y>1$, we obtain a contradiction if $g(y)>y^k$ by considering the random variable $Y$ with $\mathbb{P}(Y=y)=\frac{1}{\sqrt[k]{y}}$ and $\mathbb{P}(Y=0)=1-\frac{1}{\sqrt[k]{y}}$, we infer $g(y)\leq y^k$ for all $y>1$. By considering constant random variables, we also infer $g(y)\leq 1$ for all $y\leq 1$, in particular $g(1)\leq 1$. \\ 

  Next, assume the assertion does not hold. Then, for all $\lambda\in [0,1)$, there exists some $y\in [0,1)$ or $y>1$ such that $g(y)>1+\lambda y^k-\lambda$. Now, define the following sets:
\begin{align*}
    \Lambda_0&:=\{\lambda \in [0,1]:\exists y\in (1,\infty):g(y)>1+\lambda y^k-\lambda\} \\ 
    \Lambda_1&:=\{\lambda \in [0,1]:\exists y\in [0,1):g(y)>1+\lambda y^k-\lambda\} \, .
  \end{align*}
  By a similar argument as in the proof of Lemma \ref{HilfsLemmaMean}, we get that $\Lambda_0\cap \Lambda_1\neq \emptyset$. \\ 

  Let $\lambda_0 \in \Lambda_0\cap \Lambda_1$ and $x_0\in [0,1)$ and $y_0\in (1,\infty)$ such that $g(x_0)>1+\lambda_0 x_0^k-\lambda_0$ and $g(y_0)>1+\lambda_0 y_0^k-\lambda_0$. Next, let $X$ be a random variable with $\mathbb{P}(X=y_0)=\frac{1-y_0^k}{y_0^k-x_0^k}$ and $\mathbb{P}(X=x_0)=\frac{y_0^k-1}{y_0-x_0}$. Then, $\mathbb{E}[X^k]=1$ but also 
\begin{align*}
    \mathbb{E}[g(X)]&=g(y_0)\frac{1-x_0^k}{y_0^k-x_0^k}+g(x_0) \frac{y_0^k-1}{y_0^k-x_0^k}\\ 
    &>(1+\lambda_0 y_0^k -\lambda_0)\frac{1-x_0^k}{y_0^k-x_0^k}+(1+\lambda_0 x_0^k -\lambda_0)\frac{y_0^k-1}{y_0^k-x_0^k} \\ 
    &=(1-\lambda_0)+\lambda_0 \frac{y_0^k-y_0^kx_0^k+x_0^ky_0^k-x_0^k}{y_0^k-x_0^k}=1 \, ,
  \end{align*}
  contradicting our assumptions. \\

Now, let $r>0$ and $X$ be a non-negative random variable with $\mathbb{E}[X^k]\leq 1$. Then, $\mathbb{E}[(\sqrt[k]{r}X)^k]\leq r$ and thus $\mathbb{E}[g(\sqrt[k]{r}(X))]\leq 1$. By our argumentation from before, there exists $h\in [0,1]$ such that
\begin{align*}
  g(\sqrt[k]{r}x)\leq 1+hx^k-h \qquad \forall x\geq 0 \, .
\end{align*}
Scaling by $\frac{1}{\sqrt[k]{r}}$ then yields the assertion. \\  

For $r=0$ notice again that $g(0)\leq 1$ by the usage of $X=0$. Since the right-hand side of the desired inequality is infinite for $x,h \neq 0$ and at least $1$ for $x=0$, the assertion is immediate in this case. 
\end{proof}
We can now prove Theorem \ref{CharacHigherMoments}:

\begin{proof}[Proof of Theorem \ref{CharacHigherMoments}]
  The existence of the proclaimed function $e'$ is immediate from Lemma \ref{HilfsLemmaMoments}. To see that $e'$ is a $\mathcal{P}$-one-sided e-statistic, let $r\geq 0$ and $X$ be a non-negative random variable such that $\mathbb{E}[X^k]\leq r$. Then, we have
  \begin{align*}
    \mathbb{E}[e'(X,r)]=1+h(r)\frac{\mathbb{E}[X^k]-r}{r}\leq 1 \, ,
  \end{align*}
  which proves that $e'$ is an e-statistic as it is also non-negative. \\ 

  Similarly, notice that for any $r\in \R$ and non-negative random variable $X$ with $\mathbb{E}[X^k]\geq r$, we have $\mathbb{E}[e'(X,r)]=1+h(r)\frac{\mathbb{E}[X^k]-r}{r}> 1$ exactly if $h(r)>0$, which shows assertion (ii). \\ 

For (iii), let $0\leq r<r'$ and assume first that $h$ and $r\to \frac{r}{h(r)}$ are increasing. Then, we have for $x^k\geq r$
\begin{align*}
  e'(x,r)= 1+h(r) \frac{x^k-r}{r}\geq 1+h(r') \frac{x^k-r}{r'}\geq 1+h(r') \frac{x^k-r'}{r'}=e'(x,r') \, ,
\end{align*}
where the second step is due to $r\to \frac{r}{h(r)}$ being increasing. For $x^k \leq r$, it holds
\begin{align*}
  e'(x,r)=1+h(r) \frac{x^k-r}{r} \geq 1+h(r') \frac{x^k-r}{r} \geq 1+h(r') \frac{x^k-r'}{r'}=e'(x,r') \, , 
\end{align*}
where the second step is due to $h$ being increasing, and the third step is due to $x^k\geq 0$. Therefore, $e'$ is monotone.\\ 

Finally, assume $e'$ to be monotone. This implies 
\begin{align*}
  h(r)\frac{x^k-r}{r}\geq h(r')\frac{x^k-r'}{r'} \qquad \forall x\geq 0 \, .
\end{align*}
In particular, this holds for $x=0$, which directly implies monotonicity of $h$. Next, assume $\frac{h(r)}{r}<\frac{h(r')}{r'}$. Since the inequality is strict, there exists $\epsilon>0$ such that  $\frac{h(r)}{r}<(1-\epsilon)\frac{h(r')}{r'}$. Letting $x:=\left(\frac{r'-(1-\epsilon)r}{\epsilon}\right)^{1/k}$ and multiplying both sides by $(x^k-r)$ then yields $(x^k-r)\frac{h(r)}{r}<(x^k-r')\frac{h(r')}{r'}$ which contradicts monotonicity of $e'$. Thus, $r\to \frac{r}{h(r)}$ must be increasing.
\end{proof}

\begin{bemerkung}
One difference between Theorems \ref{CharacMean} and \ref{CharacHigherMoments} is that the former can be used for e-statistics backtesting loss variables that are lower bounded by arbitrary constants $\alpha \in \R$ while the latter can only be used for e-statistics backtesting non-negative loss variables. This is due to the proof of Theorem \ref{CharacMean} utilizing the linearity of the expectation in order to convey the assertions for non-negative loss variables to loss variables lower bounded by an arbitrary constant $\alpha$ (see the beginning of the proof of Theorem \ref{CharacMean}). As the $k$-th moment operators are not linear for $k>1$, this argument does not apply when considering higher moments. However, this loss of generality can be remedied by considering the random variable $X^+:=\max\{0,X\}$ instead of an arbitrary random variable $X$.
\end{bemerkung}

Returning to the topic of backtesting the value-at-risk and the expected shortfall, we now address the value-at-risk (see Theorem 4 in \cite{Hauptquelle}):

\begin{theorem}\label{CharacVaR}
  Let $p\in (0,1)$ and $\rho:\mathcal{M}_0\to \R$ with $\rho(F):=\VaR_p(X),X\sim F$ and $e:\R^2 \to [0,\infty)$ be a $\mathcal{M}_0$-one-sided e-statistic for $\rho$.
  \begin{enumerate}[(i)]
    \item 
    There exists some function $h:\R\to [0,1]$ such that
    \begin{align*}
      e(x,r)\leq e'(x,r):=1+h(r)\frac{p-\mathbf{1}_{x\leq r}}{1-p} \qquad \forall x,r \in \R \, .
    \end{align*}
    Further, $e'$ is a $\mathcal{M}_0$-one-sided e-statistic for $\rho$.
    \item 
    $e'$ is a backtest e-statistic for $\rho$ if and only if $h(r)>0$ for all $r\in \R$.
    \item 
    Under the conditions of $(ii)$, $e'$ is a monotone backtest e-statistic for $\rho$ if and only if $h$ is constant.
  \end{enumerate} 
\end{theorem}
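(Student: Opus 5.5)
The plan follows the template of Theorems \ref{CharacMean} and \ref{CharacHigherMoments}. Fix $r\in\R$ and set $g(x):=e(x,r)$. The one-sided property says that $\mathbb{E}[g(X)]\leq 1$ for every $X$ with $\VaR_p(X)\leq r$. Membership in this class depends only on the mass placed above $r$: $\VaR_p(X)\leq r$ holds exactly when $\mathbb{P}(X\leq r)\geq p$.

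The argument proceeds in three reductions.

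\begin{enumerate}[(a)]
\item Point masses at $x\leq r$ are admissible, so $g(x)\leq 1$ for all $x\leq r$. Set $a:=\sup_{x\leq r}g(x)\leq 1$ and $b:=\sup_{x>r}g(x)$.
\item Take a two-point law with mass $p$ at some $x\leq r$ and mass $1-p$ at some $y>r$. It is admissible, so $p\,g(x)+(1-p)g(y)\leq 1$. Taking suprema gives $pa+(1-p)b\leq 1$.
\item Since $a\leq 1$, the two constraints on $(a,b)$ are exactly the constraints on a point of the segment from $(1,1)$ to $(0,1/(1-p))$ and below it. Define $h(r):=1-a\in[0,1]$. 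Then $e'(x,r)=1-h(r)$ for $x\leq r$ and $e'(x,r)=1+h(r)p/(1-p)$ for $x>r$. The inequality $g(x)\leq a=1-h(r)$ on $x\leq r$ is immediate, and $b\leq (1-pa)/(1-p)=1+h(r)p/(1-p)$ gives the bound for $x>r$.
\end{enumerate}

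This construction avoids the $\Lambda_0,\Lambda_1$ interval argument of Lemma \ref{HilfsLemmaMean}, because here the constraint set is effectively two-dimensional.

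To check that $e'$ is itself one-sided, compute $\mathbb{E}[e'(X,r)]=1+h(r)\bigl(p-\mathbb{P}(X\leq r)\bigr)/(1-p)$. This is at most $1$ whenever $\mathbb{P}(X\leq r)\geq p$, and $e'$ is nonnegative because $h\leq 1$.

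For (ii), note that $\VaR_p(X)>r$ is equivalent to $\mathbb{P}(X\leq r)<p$. In that case the same formula is strictly greater than $1$ if and only if $h(r)>0$. If $h(r_0)=0$ for some $r_0$, any $F$ with $\rho(F)>r_0$ gives expectation exactly $1$, so $e'$ fails to be a backtest e-statistic.

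For (iii), first suppose $h\equiv c>0$. Then $e'(x,r)$ equals $1+cp/(1-p)$ when $r<x$ and $1-c$ when $r\geq x$. This is decreasing in $r$, so $e'$ is monotone.

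Conversely, assume $e'$ is monotone and take $r<r'$. Choose $x\leq r$, so both indicators equal $1$. Monotonicity gives $1-h(r)\geq 1-h(r')$, that is, $h(r)\leq h(r')$. Next choose $x>r'$, so both indicators equal $0$. Monotonicity gives $h(r)p/(1-p)\geq h(r')p/(1-p)$, that is, $h(r)\geq h(r')$. Together these force $h(r)=h(r')$, so $h$ is constant.

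The main subtlety I expect is step (b), which must hold without any regularity of $g$. It is handled by taking the two suprema separately, which works because $x$ and $y$ range independently over $(-\infty,r]$ and $(r,\infty)$. The case where $b$ might be infinite is excluded by the inequality in (b) itself.
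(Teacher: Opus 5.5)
Your proposal is correct and follows essentially the same route as the paper: its auxiliary Lemma \ref{LemCharVaR} uses your two-point law (mass $p$ at a point $\leq r$, mass $1-p$ at a point $>r$), takes the two suprema separately, and sets $h(r)=1-\sup_{x\le r}e(x,r)$. Parts (ii) and (iii) are then handled by the same computation of $\mathbb{E}[e'(X,r)]$ and the same comparison of $x\le r$ with $x>r'$, which you write out more explicitly than the paper's one-line argument for (iii).
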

This theorem relies on the following lemma:
\begin{lemma}\label{LemCharVaR}
  Let $r\in \R$ and $p\in (0,1)$. Further, let $g:\R\to \R_{\geq 0}$ with $\mathbb{E}[g(L)]\leq 1$ for all random variables $L$ with $\VaR_p(L)\leq r$. Then, there exists $h\in [0,1]$ such that
  \begin{align*}
    g(x)\leq 1+ h \frac{p-\mathbf{1}_{x\leq r}}{1-p} \qquad \forall x\in \R \, .
  \end{align*}
\end{lemma}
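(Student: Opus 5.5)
Set $a:=\sup_{x\le r} g(x)$ and $b:=\sup_{x>r} g(x)$. The claimed bound says exactly that there is an $h\in[0,1]$ with $a\le 1-h$ and $b\le 1+h\,p/(1-p)$. Because $g$ enters only through these two suprema, I will extract two linear constraints on $(a,b)$ from two-point test distributions and then pick $h$.

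First I collect the constraints. Any distribution placing mass at least $p$ on $(-\infty,r]$ satisfies $\VaR_p(L)\le r$, since $F(r)\ge p$.

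\begin{enumerate}[(i)]
\item Taking $L$ constant at some $x\le r$ gives $g(x)\le 1$, hence $a\le 1$.
\item Taking $L$ with $\mathbb{P}(L=x)=p$ and $\mathbb{P}(L=y)=1-p$, where $x\le r<y$, gives $p\,g(x)+(1-p)g(y)\le 1$.
\item Taking suprema over $x$ and $y$ separately yields $pa+(1-p)b\le 1$.
\end{enumerate}

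Next I choose $h$. Let $h:=1-a$, which lies in $[0,1]$ by (i) and because $g\ge 0$. Then the bound for $x\le r$ holds automatically, since $g(x)\le a=1-h$.

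It remains to handle $x>r$. Constraint (iii) gives
\[
b\le \frac{1-pa}{1-p}=\frac{1-p+p(1-a)}{1-p}=1+h\frac{p}{1-p},
\]
which is the required bound. I should also check the case where some supremum is infinite. If $b=\infty$, then (ii) with any fixed $x$ gives a contradiction, because $g(x)\ge 0$. So $b<\infty$, and $a\le 1$ by (i), so both suprema are finite.

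There is no deep obstacle; the only points needing care are technical. One is that a sup over $x>r$ need not be attained, but (ii) holds pointwise, so passing to suprema is fine. The other is confirming that the two-point law has $\VaR_p$ at most $r$, which follows from $F(r)\ge p$ and the definition $\VaR_p(L)=\inf\{x: F(x)\ge p\}$. The structure mirrors Lemma \ref{HilfsLemmaMean}, but it is simpler here because the test function is piecewise constant, so no interval argument on $\Lambda_0,\Lambda_1$ is needed.
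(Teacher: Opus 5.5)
Your proof is correct and follows essentially the same route as the paper's. Both use two-point laws with mass $p$ at a point $\le r$ and mass $1-p$ at a point $>r$ to obtain $p\sup_{x\le r}g(x)+(1-p)\sup_{x>r}g(x)\le 1$, use constant laws to get $\sup_{x\le r}g(x)\le 1$, and then set $h:=1-\sup_{x\le r}g(x)$.
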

\begin{proof}
 First, let $\epsilon>0$ and $\epsilon'\geq 0$ and consider a random variable $X$ with $\mathbb{P}(X=r-\epsilon')=p$ and $\mathbb{P}(X=r+\epsilon)=1-p$. Then, $\VaR_p(X)=r-\epsilon'\leq r$ and therefore $p g(r-\epsilon')+(1-p)g(r+\epsilon)\leq 1$. By rearranging this inequality and using that $\epsilon'$ is arbitrary, we derive
  \begin{align*}
    g(r+\epsilon) \leq \frac{1-\sup_{x\leq r}g(x)p}{1-p} \, .
  \end{align*}
  Rearranging this inequality again and using that $\epsilon$ is arbitrary as well, we infer
  \begin{align*}
    p \sup_{x\leq r} g(x)+(1-p)\sup_{x> r} g(x)\leq 1 \, .
  \end{align*}
  Defining $h:=1-\sup_{x\leq r} g(x)$, this is equivalent to $(1-h)p+(1-p) \sup_{x> r} g(x)\leq 1$. As $\sup_{x\leq r} g(x)\leq 1$ by considering constant random variables and our assumptions, we have $h\in[0,1]$. Solving for $\sup_{x> r} g(x)$ yields
  \begin{align*}
    \sup_{x> r} g(x)\leq \frac{1-(1-h)p}{1-p}=1+h\frac{p}{1-p}\, .
  \end{align*}
  This already proves the result, as for $x\leq r$, we have $g(x)\leq \sup_{x\leq r} g(x)=1-h$ by definition. 
\end{proof}

Now to the proof of Theorem \ref{CharacVaR}:
\begin{proof}[Proof of Theorem \ref{CharacVaR}]
  \begin{enumerate}[(i)]
    \item 
    The existence of $h:\R\to [0,1]$ such that $e(x,r)\leq e(x,r')$ is immediate from Lemma \ref{LemCharVaR}. To see that $e'$ is a one-sided e-statistic, notice first that $e'$ is non-negative. Next, let $r\in \R$ and $X$ be an RV with $\VaR_p(X)\leq r$. Then, we have
    \begin{align*}
      \mathbb{E}[e'(X,r)]\leq p (1-h(r))+(1-p)(1+h(r)\frac{p}{1-p})=1 \, .
    \end{align*}
    \item 
    First, consider $h>0$ and let $r\in \R$ and $X$ be an RV such that $\VaR_p(X)>r$. Then, since  $\mathbb{P}(X\leq r)< p$, we derive
    \begin{align*}
      \mathbb{E}[e'(X,r)]> p (1-h(r))+(1-p)\left(1+h(r)\frac{p}{1-p}\right)=1 \, .
    \end{align*}
    For the back direction, suppose there exists $r\in \R$ such that $h(r)=0$. Then, $\mathbb{E}[e'(X,r)]=1$ for all RVs $X$, contracting $e'$ being a backtest e-statistic.
    \item 
    Let $x\in \R$ and consider the function $f:\R\to \R, r\to e'(x,r)$. For $r> x$, $f$ is decreasing if and only if $h$ is increasing. For $r<x$, $f$ is decreasing if and only if $h$ is decreasing. Since $x$ is arbitrary, this shows that $e'$ is monotone exactly if $h$ is constant. 
  \end{enumerate}
\end{proof}
Finally, we look at the expected shortfall (see Theorem 5 in \cite{Hauptquelle}):
\begin{theorem}\label{CharacES}
  Let $p \in (0,1)$ and $\psi=(\rho,\phi):=\mathcal{M}_1\to \R\times \R$ with $\rho(F):=\ES_p(X),\phi(X):=\VaR_p(X), X\sim F$ and $e:\R^3\to [0,\infty)$ be a $\mathcal{M}_1$-one-sided e-statistic for $\psi$ such that $e(x,r,z)\leq e(z,r,z)$  for all $r,z \in \R$ and $x\leq z$. 
  \begin{enumerate}[(i)]
    \item 
    There exist some functions $h,k:\R^2\to[0,1]$ with $h(r,z)+k(r,z)\leq 1$ for all $r,z\in \R$ such that for all $x,r,z\in \R$
    \begin{align*}
      e(x,r,z)\leq e'(x,r,z):=\left\{\begin{array}{ll}
        1+h(r,z)\left(\frac{(x-z)_+}{(1-p)(r-z)}-1\right)+k(r,z)\frac{p-\textbf{1}_{x\leq z}}{1-p}, & z\leq r  \\ 
        \infty , & z> r 
      \end{array}\right.  \, .
    \end{align*}
    Further, $e'$ is a $\mathcal{M}_1$-one-sided e-statistic for $\psi$. 
    \item 
    $e'$ is a backtest e-statistic for $\psi$ if and only if $h(r,z)>0$ and $k(r,z)=0$ for all $r,z\in \R$.
    \item 
    Under the conditions of (ii), $e'$ is a monotone backtest e-statistic for $\psi$ if and only if $r\to h(r,z)$ and $r\to (r-z)/h(r,z)$ are increasing for all $z\in \R$ and $r>z$.
  \end{enumerate}
\end{theorem}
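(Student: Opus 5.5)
Throughout, fix $r,z\in\R$ and write $g(x):=e(x,r,z)$. If $z>r$ the bound $e'=\infty$ holds trivially, so assume $z\le r$ and put $s:=r-z$. (The degenerate case $s=0$ is handled as in the $r=0$ cases of Lemmas \ref{HilfsLemmaMean}--\ref{HilfsLemmaMoments}; I treat $s>0$ below.)

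The plan for (i) is to reduce to Lemma \ref{HilfsLemmaMean}. By hypothesis $\sup_{x\le z}g(x)=g(z)=:a$, and $a\le 1$ by using the constant variable $z$. Consider any $X$ with mass $p$ at $z$ and the remaining mass $1-p$ distributed as $z+Y$ with $Y\ge 0$ and $\mathbb{E}[Y]\le s$. Then $\VaR_p(X)=z$ and $\ES_p(X)=z+\mathbb{E}[Y]\le r$. The one-sided property therefore gives
\[
\mathbb{E}[\varphi(Y)]\le c:=\frac{1-pa}{1-p},\qquad \varphi(y):=g(z+y).
\]
Applying Lemma \ref{HilfsLemmaMean} to $\varphi/c$ yields $h'\in[0,1]$ with
\[
\varphi(y)\le c\,(1-h')+c\,h'\,\frac{y}{s}\qquad\text{for all } y\ge 0 .
\]
Now define $h:=c\,h'(1-p)$ and $k:=1-a-h$. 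A direct computation shows:
\begin{itemize}
\item[$\bullet$] for $x\le z$, the expression $e'(x,r,z)$ reduces to $1-h-k=a$;
\item[$\bullet$] for $x>z$, it reduces to $1-h+\frac{kp}{1-p}+\frac{h(x-z)}{(1-p)s}$, which coincides with $c(1-h')+ch'(x-z)/s$.
\end{itemize}
So $g\le e'$. The remaining requirements are $h\ge 0$, which is clear, and $k\ge 0$, i.e.\ $a+h\le 1$.

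I expect $k\ge0$ to be the main obstacle, since Lemma \ref{HilfsLemmaMean} alone does not give it. I would exploit distributions with \emph{more} than mass $p$ at $z$. Take mass $q\in(p,1)$ at $z$ and mass $1-q$ at $z+y_q$ with $y_q:=s(1-p)/(1-q)$. Then still $\VaR_p=z$ and $\ES_p=z+(1-q)y_q/(1-p)=r$, so
\[
qa+(1-q)\,g(z+y_q)\le 1 .
\]
This alone is only an upper constraint on $g$. To extract $a+h\le1$, I would instead observe that the optimal $h'$ can be chosen minimal, i.e.\ take $h'=\inf$ of admissible slopes. Then along $y\to\infty$ one has $g(z+y)\approx c h' y/s$ up to lower-order terms, which is the supremum-type statement implicit in the proof of Lemma \ref{HilfsLemmaMean}. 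Plugging in $y_q$ and letting $q\to1$ would give $a+c h'(1-p)\le1$. If this tightness step fails, the fallback is to redo the interval argument of Lemma \ref{HilfsLemmaMean} directly with the two-parameter family of test distributions (mass $q\ge p$ at or below $z$, two atoms above $z$), so that the resulting supporting line automatically satisfies $a+h\le1$.

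The fact that $e'$ is one-sided is then easy, and the arguments for (ii) and (iii) are as follows.
\begin{itemize}
\item[$\bullet$] One-sidedness: if $\VaR_p(X)=z$ and $\ES_p(X)\le r$, Lemma \ref{BayesPairESVaR} gives $\mathbb{E}[(X-z)_+]=(1-p)(\ES_p(X)-z)$, and $\mathbb{P}(X\le z)\ge p$. Hence
\[
\mathbb{E}[e']\le 1+h\left(\frac{\ES_p(X)-z}{r-z}-1\right)\le 1 .
\]
\item[$\bullet$] (ii), sufficiency: if $k=0$ and $h>0$, then for arbitrary $z$ and $\ES_p(F)>r$, the minimizing property in Lemma \ref{BayesPairESVaR} gives $z+\mathbb{E}[(X-z)_+]/(1-p)\ge\ES_p(F)$. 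Therefore $\mathbb{E}[e']\ge1+h\bigl(\frac{\ES_p(F)-z}{r-z}-1\bigr)>1$.
\item[$\bullet$] (ii), necessity: if $h=0$ (and $k=0$), then $\mathbb{E}[e']\le1$ always. If $k>0$, take $X$ with $\VaR_p(X)=z$, $\mathbb{P}(X\le z)=q>p$, and $\ES_p(X)$ only slightly above $r$. The negative term $k(p-q)/(1-p)$ then dominates, so $\mathbb{E}[e']\le1$, contradicting the backtest property.
\item[$\bullet$] (iii): with $k=0$, $e'$ has exactly the form of Theorem \ref{CharacMean} in the variable $(x-z)_+$, with $r-z$ in place of $r-\alpha$. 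I would copy that proof, splitting into $x\le z$, $z<x\le r$, and $x>r$, and using $x=z$ and a large $x$ for the converse implications.
\end{itemize}
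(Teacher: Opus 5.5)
Your approach is the paper's. The test distributions with mass $p$ at $z$ and $z+Y$ on the remaining mass are the paper's $z+BX$. Your $h=(1-pa)h'$ and $k=1-a-h$ are exactly the paper's $h=(1-p\theta)\lambda$ and $k=1-\lambda-\theta+p\theta\lambda$, and your one-sidedness and (ii) arguments match the paper's.

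\textbf{The gap: your argument for $k\ge 0$ fails.} Choosing $h'$ minimal does not give $g(z+y)\approx c h' y/s$ as $y\to\infty$. The minimal admissible slope is fixed by $\sup_{y>s}\frac{\varphi(y)/c-1}{y/s-1}$, and this supremum can be attained at a finite $y$. Concretely, take $g\equiv 0$ on $(-\infty,z]$ and $g(z+y)=\frac{1}{1-p}\min\{y/s,2\}$ for $y>0$.
\begin{itemize}
\item It satisfies all hypotheses, because $g(x)\le \frac{(x-z)_+}{(1-p)s}$ and $\mathbb{E}[(X-z)_+]=(1-p)(\ES_p(X)-z)$ whenever $\VaR_p(X)=z$.
\item Here $a=0$ and $c=\frac{1}{1-p}$, so the bound at $y=2s$ reads $2\le 1+h'$. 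This forces $h'=1$, hence $h=1$.
\item Yet $g$ is bounded. Your two-point distributions with $q\to 1$ therefore give only $a\le 1$, not $a+h\le 1$.
\end{itemize}
The fallback (redoing the interval argument with a two-parameter family) is not yet an argument.

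\textbf{The fix is one line.} No new test distributions are needed, and your remark that Lemma \ref{HilfsLemmaMean} alone does not give $k\ge 0$ is mistaken. Its conclusion holds for all $y\ge 0$, in particular at $y=0$, where it reads $a=\varphi(0)\le c(1-h')=\frac{(1-pa)(1-h')}{1-p}$. Multiplying by $1-p$ and rearranging gives $a+(1-pa)h'\le 1$, i.e.\ $a+h\le 1$, i.e.\ $k\ge 0$. This is exactly how the paper concludes, by evaluating \eqref{HilfGleichChar} at $x=0$. With this, your part (i) is complete.

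\textbf{Two smaller points.}
\begin{itemize}
\item One-sidedness of $e'$ also requires $e'\ge 0$. This follows from $h+k\le 1$, since both brackets are $\ge -1$.
\item In (iii), a split at $x=r$ does not match the break-even point $x=z+(1-p)(r-z)$, so the two inequalities from Theorem \ref{CharacMean} do not apply uniformly on $z<x\le r$. No split is needed: with $k=0$ you have $e'=1-h(r,z)+\frac{h(r,z)}{r-z}\cdot\frac{(x-z)_+}{1-p}$, and each $r$-dependent term is nonincreasing under the two monotonicity hypotheses, as in the paper.
\end{itemize}
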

\begin{bemerkung}
The condition $e(x,r,z)\leq e(z,r,z)$  for all $r,z \in \R$ and $x\leq z$ in the previous theorem is especially fulfilled if $x\to e(x,r,z)$ is increasing for all $r,z\in \R$.
\end{bemerkung}
The proof of Theorem \ref{CharacES} uses the following Lemma:
\begin{lemma}\label{HilfsLemmaES}
  Let $r,z\in \R$ with $r\geq z$ and $p\in (0,1)$. Further, let $g:\R\to [0,\infty)$ be a function such that $g(x)\leq g(z)$ for all $x\leq z$ and $\mathbb{E}[g(X)]\leq 1$ for all random variables $X$ with $\VaR_p(X)=z$ and $\ES_p(X)\leq r$. Then, there exist $h,k\in [0,1]$ with $h+k\leq 1$ and 
  \begin{align*}
    g(x)\leq 1+h\left(\frac{(x-z)_+}{(1-p)(r-z)}-1\right)+k\frac{p-\textbf{1}_{x\leq z}}{1-p} \qquad \forall x\in \R \, .
  \end{align*} 
\end{lemma}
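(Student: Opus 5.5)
We need to prove Lemma \ref{HilfsLemmaES}, mirroring the proofs of Lemmas \ref{HilfsLemmaMean} and \ref{LemCharVaR}. The case $r=z$ is degenerate: the distributions in the constraint class are then concentrated at or below $z$ with no excess. It can be handled separately by taking $h=0$ and using the VaR-type bound, with the convention that the $h$-term is $0\cdot\infty$. So assume $r>z$ and, after translating and scaling, normalize to $z=0$ and $r-z=1$. The hypothesis $g(x)\le g(z)$ for $x\le z$ reduces everything on $(-\infty,z]$ to the single value $a:=g(z)$. The target inequality on that half-line reads $a\le 1-h-kp/(1-p)$, where the $h$-term contributes $-h$ and the indicator term contributes $-k$ when multiplied out; precisely, $1-h+k\frac{p-1}{1-p}=1-h-k$. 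Above $z$, we need $g(x)\le 1-h+h\frac{x}{1-p}+k\frac{p}{1-p}$.

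\textbf{Step 1: reduce to a two-dimensional problem.} Consider admissible distributions of the form: mass $p$ at $z$, and mass $1-p$ distributed on $(z,\infty)$ with conditional mean of the excess at most $(1-p)(r-z)/(1-p)=r-z$. More generally, the mass in the upper tail may be at most $1-p$, and it may sit partly at $z$. Such $X$ satisfy $\VaR_p(X)=z$ and $\ES_p(X)\le r$, so $\mathbb{E}[g(X)]\le 1$. Setting $u:=x-z>0$ and $G(u):=g(z+u)$, the constraint becomes
\[
p\,a+(1-p)\,\mathbb{E}[G(U)]\le 1
\]
for every nonnegative $U$ with $\mathbb{E}[U]\le r-z$. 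Here $U$ is allowed to put atoms at $0^+$; this is handled by taking limits, since only values $u>0$ and the value $a$ at $u=0$ enter.

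\textbf{Step 2: apply the mean characterization.} The function $\tilde G:=(1-p a)G/(1-p)$, taken on $u>0$ and extended by $\tilde G(0)$ suitably, satisfies $\mathbb{E}[\tilde G(U)]\le 1$ whenever the right normalizing constant is used. This requires $1-pa>0$; if $pa=1$, then $G\equiv0$ a.s.\ along the tail and a trivial choice works. I would invoke Lemma \ref{HilfsLemmaMean} in the slightly modified form where $g$ is only constrained on $(0,\infty)$ together with the value at $0$ bounded by $a$. Its proof (the $\Lambda_0,\Lambda_1$ interval argument with two-point distributions) goes through verbatim. This yields $\eta\in[0,1]$ with
\[
G(u)\le \frac{1-pa}{1-p}\Big(1+\eta\frac{u-(r-z)}{r-z}\Big).
\]

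\textbf{Step 3: match the parameters.} Set $h:=\eta\frac{1-pa}{1-p}\,(1-p)$, adjusted so that the slope in $u$ matches $\frac{h}{(1-p)(r-z)}$, and choose $k$ so that $1-h-k=a$, i.e.\ $k=1-a-h$. It then remains to check three things: $h,k\ge0$, $h+k\le1$, and that the constant terms agree on $(z,\infty)$. For the constants, on $x>z$ we need $1-h+\frac{kp}{1-p}=\frac{1-pa}{1-p}(1-\eta)$ after the substitutions. With $1-a=h+k$ this reduces to $\frac{1-pa}{1-p}=1+\frac{p(1-a)}{1-p}$, which is an identity. Nonnegativity of $k$ requires $a\le 1-h$. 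I expect this to be the main obstacle, since it does not follow from the single-tail bound alone.

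\textbf{Handling the obstacle.} The missing inequality should come from the bound $a\le1$ (constant $X=z$) combined with the mean-constraint bound at $U$ concentrated near its maximal admissible spread, tested via two-point distributions mixing the atom at $z$. Concretely, I would first try the choice $h:=(1-p)\,\eta\,\frac{1-pa}{1-p}$ and verify $h\le 1-a$ using $\eta\le1$; this reduces to $(1-pa)\le 1-a$, which fails. So instead I would cap the construction: take $h:=\min\{\eta(1-pa),\,1-a\}$. If the cap is active, the bound with the smaller $h$ must still hold, since $1-h+\frac{hu}{(1-p)(r-z)}$ is increasing in $h$ only for large $u$. The cap case therefore needs the extra atom argument: mix mass at $z$ with a far point, as in Lemma \ref{LemCharVaR}, to show that slopes exceeding $(1-a)/((1-p)(r-z))$ are impossible. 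I would verify this case carefully, as it is where the proof is most likely to need adjustment.
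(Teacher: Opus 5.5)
Your Steps 1--3 follow the paper's proof. You mix an independent Bernoulli$(1-p)$ switch with a nonnegative $U$ at $z$, apply Lemma~\ref{HilfsLemmaMean} to $\frac{1-p}{1-pa}G$ (your $\tilde G$ has this constant inverted), and set $h=\eta(1-pa)$, $k=1-a-h$. Your matching of slope and intercept on $(z,\infty)$ is correct, and $h+k=1-a\le 1$ is automatic.

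The genuine gap is $k\ge 0$. You leave it open and claim it ``does not follow from the single-tail bound alone.'' It does follow. An atom of $U$ at exactly $0$ is admissible, because $X=z+BU$ still has $\VaR_p(X)=z$ and $\ES_p(X)=z+\mathbb{E}[U]\le r$. No limits at $0^+$ are needed, and they would be unavailable anyway, since $g$ is an arbitrary function.

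So the rescaled function is constrained on all of $[0,\infty)$, with value $\frac{(1-p)a}{1-pa}$ at $0$. The unmodified Lemma~\ref{HilfsLemmaMean} therefore gives its bound at $u=0$ as well: $\frac{(1-p)a}{1-pa}\le 1-\eta$. This rearranges to $a\le 1-\eta(1-pa)=1-h$, i.e.\ $k\ge 0$. This is exactly how the paper obtains it. Your ``modified form'' of Lemma~\ref{HilfsLemmaMean}, which only constrains $(0,\infty)$, is precisely what discards this information. Also, $a\le 1$ (from $X\equiv z$) makes $1-pa>0$ automatic.

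Consequently the capping detour is unnecessary, and as submitted it is unverified. Your monotonicity heuristic also ignores that $k=1-a-h$ depends on $h$. With that dependence, the right-hand side on $x>z$ is $\frac{1-pa}{1-p}+\frac{h}{1-p}\bigl(\frac{x-z}{r-z}-1\bigr)$, so decreasing $h$ weakens the bound exactly for $x\le r$ (not for $x-z\le (1-p)(r-z)$).

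The detour could be completed:
\begin{itemize}
\item For $x\le r$, the capped bound follows from the uncapped one.
\item For $u=x-z>r-z$, take the two-point law with mass $(1-p)(r-z)/u$ at $z+u$ and the rest at $z$. It has $\VaR_p=z$ and $\ES_p=r$, and it yields $g(z+u)\le a+\frac{(1-a)u}{(1-p)(r-z)}$, which is the target with $h=1-a$, $k=0$.
\end{itemize}
However, neither this argument nor the $u=0$ observation appears in your proposal, so the key inequality $k\ge 0$ is missing.

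Finally, your treatment of $r=z$ via $h=0$ and the VaR-type bound fails. In that case every admissible $X$ satisfies $X\le z$ a.s., so $g$ is unconstrained on $(z,\infty)$ and no finite bound there can be derived. That case only makes sense with a suitable convention for the division by $r-z=0$.
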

\begin{proof}
  First, consider $z=0$. Using a constant $0$ RV, we first get $\theta:=g(0)\in[0,1]$ from our assumptions. Next, let $X$ be a non-negative RV with $\mathbb{E}[X]\leq r$ and $B$ be a Bernoulli distributed RV with mean $1-p$ that is independent of $X$. Then, $\VaR_p(BX)=0$ and $\ES_p(BX)=\mathbb{E}[X]\leq r$. By our assumptions, this yields $\mathbb{E}[g(BX)]\leq 1$. Independence of $B$ and $X$ then implies
  \begin{align*}
    &p \theta +(1-p) \mathbb{E}[g(X)]&&\leq 1 \\ 
\implies & \mathbb{E}\left[\frac{1-p}{1-p\theta} g(X)\right]&&\leq 1 \, .
  \end{align*}
  By Lemma \ref{HilfsLemmaMean}, there exists $\lambda \in [0,1]$ such that, for all $x\geq 0$, we have
  \begin{align}
    &\frac{1-p}{1-p\theta} g(x)&&\leq 1+\lambda \frac{x-r}{r} \nonumber\\ 
    \implies & g(x)&&\leq \frac{1-p\theta}{1-p} \left(1+\lambda \frac{x-r}{r}\right) \, .\label{HilfGleichChar}
  \end{align}
  Using $g(x)\leq g(0)=\theta$ for $x\leq 0$ implies for all $x\in \R$
  \begin{align}
    g(x)&=g(x)\textbf{1}_{x\leq 0}+g(x)\textbf{1}_{x>0} \nonumber\\ 
    &\leq \theta \textbf{1}_{x\leq 0}+\frac{1-p\theta}{1-p} \left(1+\lambda \frac{x-r}{r}\right)\textbf{1}_{x>0} \nonumber\\ 
    &=\theta+\left(\frac{1-p\theta}{1-p} \left(1-\lambda+\lambda \frac{x}{r} \right)-\theta\right)\textbf{1}_{x>0}\nonumber \\ 
    &=\theta+\frac{1-p\theta}{1-p}\lambda \frac{x_+}{r}+\left(\frac{1-p\theta}{1-p} \left(1-\lambda \right)-\theta\right)\textbf{1}_{x>0} \nonumber\\ 
    &=\theta+(1-p\theta)\lambda\frac{x_+}{(1-p)r}+(1-\lambda-\theta+p\theta \lambda) \frac{\textbf{1}_{x>0}}{1-p} \, .\label{HilfX}
  \end{align} 
  Now, let $h:=(1-p\theta)\lambda$ and $k:=1-\lambda-\theta+p\theta \lambda$. Obviously $h\in [0,1]$. For $k$, notice $k=1-\lambda-\theta+p\theta \lambda=1-\theta-(1-p\theta)\lambda \leq 1$. Further, we have by inequality \eqref{HilfGleichChar}
  \begin{align*}
    &\theta=g(0)&&\leq \frac{1-p\theta}{1-p}(1-\lambda) \\ 
    \implies & 0 &&\leq 1-\lambda +p\theta \lambda -\theta=k \, .
  \end{align*}
  Also, notice $h+k=1-\theta \leq 1$. Plugging this into inequality \eqref{HilfX} yields
  \begin{align*}
    g(x)&\leq 1-k-h +h \frac{x_+}{(1-p)r} +k \frac{\textbf{1}_{x>0}}{1-p} \\ 
&=1+h\left(\frac{x_+}{(1-p)r}-1 \right )+k\frac{p-\textbf{1}_{x\leq 0}}{1-p} \, .
  \end{align*}
  For arbitrary $z\in \R$, let $X$ be an RV with $\VaR_p(X)=0$ and $\ES_p(X)=r-z$. Then, $\VaR_p(X+z)=z$ and $\ES_p(X+z)=r$ and thus  $\mathbb{E}[g(X+z)]\leq 1$ by assumption. Further, it holds $g(x+z)\leq g(0+z)=g(z)$ for all $x\leq 0$. By our previous argumentation, there exist $h,k\in [0,1]$ with $h+k\leq 1$ and 
  \begin{align*}
    g(x+z)\leq 1+h\left(\frac{x_+}{(1-p)(r-z)}-1 \right )+k\frac{p-\textbf{1}_{x\leq 0}}{1-p} \qquad \forall x\in \R\, .
  \end{align*}
  Shifting the argument by $-z$ then yields the desired inequality. 
\end{proof}
Now to the proof of Theorem \ref{CharacES}:
\begin{proof}[Proof of Theorem \ref{CharacES}]
  The boundedness of $e$ by $e'$ is immediate from Lemma \ref{HilfsLemmaES} since $e$ is a one-sided e-statistic for $\psi$. To see that $e'$ is a $\mathcal{M}_1$-one-sided e-statistic for $\psi$, notice first that for $z\leq r$ we have $\frac{(x-z)_+}{(1-p)(r-z)}-1\geq -1$ and $\frac{p-\textbf{1}_{x\leq z}}{1-p}\geq -1$ and thus non-negativity of $e'$ is implied by $h(r,z)+k(r,z)\leq 1$. Next, let $r,z\in \R, z\leq r$ and $X\sim F\in \mathcal{M}_1$ with $\VaR_p(X)=z$ and $\ES_p(X)\leq r$. Then, we have
  \begin{align*}
    \mathbb{E}[e(X,r,z)]&=1+h(r,z)\left(\frac{\mathbb{E}[(X-z)_+]}{(1-p)(r-z)}-1\right)+k(r,z)\frac{p-\mathbb{E}[\textbf{1}_{X\leq z}]}{1-p} \\ 
    &=1+h(r,z)\left(\frac{(1-p)(\ES_p[X]-z)}{(1-p)(r-z)}-1\right)+k(r,z)\frac{p-\mathbb{P}(X\leq \VaR_p(X))}{1-p} \\
    &\leq 1+h(r,z)\frac{r-z}{r-z}+0=1 \, .
  \end{align*} 

  For (ii), let first $h(r,z)>0$ and $k(r,z)=0$ for all $r,z\in \R$. Then, let $r,z\in \R$ and $X\sim F\in \mathcal{M}_1$ with $\ES_p(X)>r$. Obviously, $\mathbb{E}[e'(X,r,z)]=\infty>1$ if $z>r$. For $z\leq r$, notice that $\mathbb{E}[e'(X,r,z)]>1$ is equivalent to 
  \begin{align*}
    &\frac{\mathbb{E}[(X-z)_+]}{(1-p)(r-z)}-1&&>0  \\ 
    \iff &z+\frac{\mathbb{E}[(X-z)_+]}{1-p}&&>r  \, .
  \end{align*}
  By Lemma \ref{BayesPairESVaR}, the left-hand side is lower bounded by $\ES_p(X)>r$. Thus, $e'$ is a backtest e-statistic. \\  

  For the back direction, suppose that $e'$ is a backtest e-statistic for $\psi$. Let $r,z\in \R,z\leq r$ and consider for $\epsilon>0$ and $q\in [p,1)$ the random variable
  \begin{align*}
    X_{\epsilon,q}:=z+\left(\frac{(r-z)(1-p)}{1-q}+\epsilon\right)B_q \, ,
  \end{align*}
   where $B_q$ is a Bernoulli-distributed RV with mean $1-q$. Since $q\geq p$, we have $\VaR_p(X_{\epsilon,q})=z$ and 
  \begin{align*}
    \ES_p(X_{\epsilon,q})=\frac{q-p}{1-p}z+\frac{1-q}{1-p}\left(z+ \frac{(r-z)(1-p)}{1-q}+\epsilon\right)=r+\epsilon\frac{1-q}{1-p}>r \, ,
  \end{align*}
  for all $\epsilon>0$ and $q\in [p,1)$.
  Since $e'$ is a backtest e-statistic, this yields
  \begin{align*}
    1&<\mathbb{E}[e'(X_{\epsilon,q},r,z)] \\ 
    &= 1+h(r,z)\left(\frac{\mathbb{E}[(X_{\epsilon,q}-z)_+]}{(1-p)(r-z)}-1\right)+k(r,z)\frac{p-\mathbb{P}(X_{\epsilon,q}\leq z)}{1-p} \\ 
    &=1+h(r,z)\left(\frac{(1-p)(r-z)+(1-q)\epsilon}{(1-p)(r-z)}-1\right)+k(r,z)\frac{p-q}{1-p} \\ 
    &=1+h(r,z)\frac{(1-q)\epsilon}{(1-p)(r-z)}+k(r,z)\frac{p-q}{1-p} \, .
  \end{align*}
  Fixing $\epsilon>0$ and letting $q\to 1$ yields $k(r,z)=0$. Similarly, setting $q=p$ yields $h(r,z)>0$.  \\ 

  For (iii), let $z,x\in \R$ and assume first that $r\to h(r,z)$ and $r\to (r-z)/h(r,z)$ are increasing for all $r>z$. Let $r,r'\in \R, r\leq r'$ Obviously, $e'(x,r,z)\geq e'(x,r',z)$ if $r<z$. For $r\geq z$, we have 
\begin{align*}
  e'(x,r,z)=1+\frac{(x-z)_+h(r,z)}{(1-p)(r-z)}-h(r,z)\geq 1+\frac{(x-z)_+h(r',z)}{(1-p)(r'-z)}-h(r',z)=e'(x,r',z) \, ,
\end{align*}
due to $r\to h(r,z)$ and $r\to (r-z)/h(r,z)$ being increasing. \\ 

For the back direction, let $e'$ be monotone. Further, let $z\leq r\leq r'$. By monotonicity of $e'$ we have
\begin{align*}
  1+\frac{(x-z)_+h(r,z)}{(1-p)(r-z)}-h(r,z)\geq1+\frac{(x-z)_+h(r',z)}{(1-p)(r'-z)}-h(r',z) \qquad \forall x\in \R \, .
\end{align*} 
Since this especially holds for $x\leq z$, we directly derive that $r\to h(r,z)$ is increasing. Now, assume $\frac{h(r,z)}{r-z}<\frac{h(r',z)}{r'-z}$. Then, there exists $\epsilon>0$ such that $\frac{h(r,z)}{r-z}<(1-\epsilon)\frac{h(r',z)}{r'-z}$. Setting $x=z+(1-p)\frac{r'-(1-\epsilon) r- \epsilon z}{\epsilon}$ (notice that $r'-(1-\epsilon)r-\epsilon z\geq 0$ since $r'\geq r\geq z$) and multiplying both sides by $\frac{(x-z)_+}{1-p}-(r-z)$ yields 
\begin{align*}
\iff   &\frac{(x-z)_+h(r,z)}{(1-p)(r-z)}-h(r,z)&&<\frac{(x-z)_+h(r',z)}{(1-p)(r'-z)}-h(r',z)\\
\iff   &1+\frac{(x-z)_+h(r,z)}{(1-p)(r-z)}-h(r,z)&&<1+\frac{(x-z)_+h(r',z)}{(1-p)(r'-z)}-h(r',z)\, ,
\end{align*}
contradicting monotonicity of $e'$.
\end{proof}
 \begin{bemerkung}
Theorem \ref{CharacES} shows that if we construct our e-process to backtest the expected shortfall according to Lemma \ref{ProcessisMartingale}, the monotone backtest e-statistic suggested in Section \ref{sec3}, i.e. setting $h=1$ and $k=0$ in the formula for $e'$, is in essence unique since different choices for $h$ can be accounted for by varying the betting process accordingly. 
 \end{bemerkung}
\end{appendices}
\end{document}